\documentclass[lettersize,journal]{IEEEtran}

\usepackage{newpxtext,newpxmath}

\let\coloneqq\relax
\let\eqqcolon\relax

\usepackage[latin1]{inputenc}
\usepackage{amsthm}
\usepackage{amssymb}
\usepackage{amsmath}
\usepackage[normalem]{ulem}
\usepackage{bbm}
\usepackage{nonfloat}
\usepackage[pdftex, backref=true]{hyperref}
\usepackage{braket}
\usepackage{dsfont}
\usepackage{mathdots}
\usepackage{mathtools}
\usepackage{enumerate}
\usepackage[shortlabels]{enumitem}
\usepackage{csquotes}
\usepackage{stmaryrd}
\usepackage{graphicx}
\usepackage{stackengine}
\usepackage{scalerel}
\usepackage{array}
\usepackage{makecell}
\newcolumntype{x}[1]{>{\centering\arraybackslash}p{#1}}
\usepackage{tikz}
\usepackage{pgfplots}
\usetikzlibrary{shapes.geometric, shapes.misc, positioning, arrows, arrows.meta, decorations.pathreplacing, decorations.pathmorphing, patterns, angles, quotes, calc}
\usepackage{booktabs}
\usepackage{xfrac}
\usepackage{siunitx}
\usepackage{centernot}
\usepackage{comment}
\usepackage{chngcntr}
\usepackage{caption}

\newcommand{\cN}{\mathcal{N}}
\newcommand{\cH}{{\pazocal{H}}}

\newcommand{\cT}{{\pazocal{T}}}
\newcommand{\cB}{{\pazocal{B}}}
\newcommand{\cD}{{\pazocal{D}}}

\newtheorem{thm}{Theorem}
\newtheorem*{thm*}{Theorem}
\newtheorem{prop}[thm]{Proposition}
\newtheorem*{prop*}{Proposition}
\newtheorem{lemma}[thm]{Lemma}
\newtheorem*{lemma*}{Lemma}

\newtheorem*{cor*}{Corollary}

\newtheorem*{cj*}{Conjecture}

\newtheorem*{Def*}{Definition}

\newtheorem*{question*}{Question}

\newtheorem*{problem*}{Problem}

\makeatletter
\def\thmhead@plain#1#2#3{%
  \thmname{#1}\thmnumber{\@ifnotempty{#1}{ }\@upn{#2}}%
  \thmnote{ {\the\thm@notefont#3}}}
\let\thmhead\thmhead@plain
\makeatother

\theoremstyle{definition}
\newtheorem{rem}[thm]{Remark}

\newenvironment{manuallemma}[1]{%
  \manuallemmainner \it
}{\endmanuallemmainner}

\newcommand{\bb}{\begin{equation}\begin{aligned}\hspace{0pt}}
\newcommand{\bbb}{\begin{equation*}\begin{aligned}}
\newcommand{\ee}{\end{aligned}\end{equation}}
\newcommand{\eee}{\end{aligned}\end{equation*}}
\newcommand*{\coloneqq}{\mathrel{\vcenter{\baselineskip0.5ex \lineskiplimit0pt \hbox{\scriptsize.}\hbox{\scriptsize.}}} =}
\newcommand*{\eqqcolon}{= \mathrel{\vcenter{\baselineskip0.5ex \lineskiplimit0pt \hbox{\scriptsize.}\hbox{\scriptsize.}}}}

\newcommand{\texteq}[1]{\stackrel{\mathclap{\scriptsize \mbox{#1}}}{=}}
\newcommand{\textleq}[1]{\stackrel{\mathclap{\scriptsize \mbox{#1}}}{\leq}}

\newcommand{\ketbra}[1]{\ket{#1}\!\!\bra{#1}}
\newcommand{\Ketbra}[1]{\Ket{#1}\!\!\Bra{#1}}
\newcommand{\ketbraa}[2]{\ket{#1}\!\!\bra{#2}}

\newcommand{\sumno}{\sum\nolimits}

\newcommand{\id}{\mathds{1}}
\newcommand{\R}{\mathds{R}}
\newcommand{\N}{\mathds{N}}

\newcommand{\C}{\mathds{C}}

\newcommand{\CC}{\mathbb{C}}

\DeclareMathOperator{\Tr}{Tr}

\DeclareMathOperator{\Span}{span}
\DeclareMathAlphabet{\pazocal}{OMS}{zplm}{m}{n}

\DeclareMathOperator{\spec}{sp}

\DeclareMathOperator{\tr}{Tr}

\DeclareMathOperator{\dom}{dom}

\newcommand{\HH}{\pazocal{H}}
\newcommand{\T}{\pazocal{T}}

\newcommand{\D}{\pazocal{D}}

\newcommand{\MM}{\mathcal{M}}

\newcommand{\lsmatrix}{\left(\begin{smallmatrix}}
\newcommand{\rsmatrix}{\end{smallmatrix}\right)}

\stackMath
\newcommand\xxrightarrow[2][]{\mathrel{%
  \setbox2=\hbox{\stackon{\scriptstyle#1}{\scriptstyle#2}}%
  \stackunder[5pt]{%
    \xrightarrow{\makebox[\dimexpr\wd2\relax]{$\scriptstyle#2$}}%
  }{%
   \scriptstyle#1\,%
  }%
}}

\newcommand{\tends}[2]{\xxrightarrow[\! #2 \!]{\mathrm{#1}}}

\stackMath

\makeatletter
\newcommand*\rel@kern[1]{\kern#1\dimexpr\macc@kerna}
\newcommand*\widebar[1]{%
  \begingroup
  \def\mathaccent##1##2{%
    \rel@kern{0.8}%
    \overline{\rel@kern{-0.8}\macc@nucleus\rel@kern{0.2}}%
    \rel@kern{-0.2}%
  }%
  \macc@depth\@ne
  \let\math@bgroup\@empty \let\math@egroup\macc@set@skewchar
  \mathsurround\z@ \frozen@everymath{\mathgroup\macc@group\relax}%
  \macc@set@skewchar\relax
  \let\mathaccentV\macc@nested@a
  \macc@nested@a\relax111{#1}%
  \endgroup
}

\tikzset{meter/.append style={draw, inner sep=10, rectangle, font=\vphantom{A}, minimum width=30, line width=.8, path picture={\draw[black] ([shift={(.1,.3)}]path picture bounding box.south west) to[bend left=50] ([shift={(-.1,.3)}]path picture bounding box.south east);\draw[black,-latex] ([shift={(0,.1)}]path picture bounding box.south) -- ([shift={(.3,-.1)}]path picture bounding box.north);}}}
\tikzset{roundnode/.append style={circle, draw=black, fill=gray!20, thick, minimum size=10mm}}
\tikzset{squarenode/.style={rectangle, draw=black, fill=none, thick, minimum size=10mm}}

\definecolor{Blues5seq1}{RGB}{239,243,255}
\definecolor{Blues5seq2}{RGB}{189,215,231}
\definecolor{Blues5seq3}{RGB}{107,174,214}
\definecolor{Blues5seq4}{RGB}{49,130,189}
\definecolor{Blues5seq5}{RGB}{8,81,156}

\definecolor{Greens5seq1}{RGB}{237,248,233}
\definecolor{Greens5seq2}{RGB}{186,228,179}
\definecolor{Greens5seq3}{RGB}{116,196,118}
\definecolor{Greens5seq4}{RGB}{49,163,84}
\definecolor{Greens5seq5}{RGB}{0,109,44}

\definecolor{Reds5seq1}{RGB}{254,229,217}
\definecolor{Reds5seq2}{RGB}{252,174,145}
\definecolor{Reds5seq3}{RGB}{251,106,74}
\definecolor{Reds5seq4}{RGB}{222,45,38}
\definecolor{Reds5seq5}{RGB}{165,15,21}

\usepackage{algorithmic}
\usepackage[caption=false,font=normalsize,labelfont=sf,textfont=sf]{subfig}
\usepackage{textcomp}
\usepackage{stfloats}
\usepackage{url}
\usepackage{verbatim}
\usepackage{graphicx}
\hyphenation{op-tical net-works semi-conduc-tor IEEE-Xplore}
\def\BibTeX{{\rm B\kern-.05em{\sc i\kern-.025em b}\kern-.08em
    T\kern-.1667em\lower.7ex\hbox{E}\kern-.125emX}}
\usepackage{balance}

\allowdisplaybreaks

\usepackage{tikz}

\newcommand{\gu}[1]{U_{\!#1}^{\vphantom{\dag}}}
\newcommand{\gud}[1]{U_{\!#1}^{\dag}}

\begin{document}

\title{Classical shadow tomography for continuous variables quantum systems}

\author{Simon~Becker, Nilanjana~Datta, Ludovico~Lami, and Cambyse~Rouz\'{e}
\thanks{Simon Becker is with the Department of Mathematics, ETH Z\"{u}rich, R\"amistrasse 101, 8092 Z\"urich, Switzerland. Email: simon.becker@math.ethz.ch}
\thanks{Nilanjana Datta is with the Department of Applied Mathematics and Theoretical Physics, Centre for Mathematical Sciences, University of Cambridge, Cambridge CB3 0WA, United Kingdom. Email: n.datta@damtp.cam.ac.uk}
\thanks{Ludovico Lami is with the Korteweg--de Vries Institute for Mathematics, the Institute for Theoretical Physics, and QuSoft, University of Amsterdam, Science Park 123, 1098 XG Amsterdam. Part of this work was conducted while he was at the Institute for Theoretical Physics, University of Ulm, Albert-Einstein-Allee 11, D-89069 Ulm, Germany. Email: ludovico.lami@gmail.com}
\thanks{Cambyse Rouz\'{e} is with the Zentrum Mathematik, Technische Universit\"{a}t M\"{u}nchen, 85748 Garching, Germany. Email: rouzecambyse@gmail.com}}

\maketitle

\begin{abstract}
In this article we develop a continuous variable (CV) shadow tomography scheme with wide ranging applications in quantum optics. Our work is motivated by the increasing experimental and technological relevance of CV systems in quantum information, quantum communication, quantum sensing, quantum simulations, quantum computing and error correction. We introduce two experimentally realisable schemes for obtaining classical shadows of CV (possibly non-Gaussian) quantum states using only randomised Gaussian unitaries and easily implementable Gaussian measurements such as homodyne and heterodyne detection. For both schemes, we show that 
$N=\mathcal{O}\big(\operatorname{poly}\big(\frac{1}{\epsilon},\log\big(\frac{1}{\delta}\big),M_n^{r+\alpha},\log(m)\big)\big)$
samples of an unknown $m$-mode state $\rho$ suffice to learn the expected value of any $r$-local polynomial in the canonical observables of degree $\alpha$, both with high probability $1-\delta$ and accuracy $\epsilon$, as long as the state $\rho$ has moments of order $n>\alpha$ bounded by $M_n$. By simultaneously truncating states and operators in energy and phase space, we are able to overcome new mathematical challenges that arise due to the infinite-dimensionality of CV systems. 
We also provide a scheme to learn nonlinear functionals of the state, such as entropies over any small number of modes, by leveraging recent energy-constrained entropic continuity bounds. Finally, we provide numerical evidence of the efficiency of our protocols in the case of CV states of relevance in quantum information theory, including ground states of quadratic Hamiltonians of many-body systems and cat qubit states. We expect our scheme to provide good recovery in learning relevant states of 2D materials and photonic crystals.
\end{abstract}

\section{Introduction}
Obtaining classical descriptions of states of quantum-mechanical systems is a fundamental ingredient of quantum computing. It is useful for storing and transmitting quantum information and essential for verification and benchmarking of quantum devices. 
However, the underlying quantum 
{nature of} such systems provides a huge hurdle in obtaining such a classical description: to learn anything about a quantum state one needs to measure it, but measurements in quantum mechanics are 
inherently destructive {and furthermore probabilistic, entailing that} 
individual measurement outcomes only give limited information about the state of the system. 
Consequently, in order to obtain a classical representation of a quantum state, one requires multiple identical copies of the state on which successive, appropriate (possibly adaptive) single-copy measurements can be performed. 

This is what is done in the traditional method of learning an unknown quantum state, known as {\em{quantum state tomography}}. It is the process of inferring a quantum state by using suitable measurements on many identical copies of the state. There is, however, a huge practical limitation in using quantum state tomography for many-body quantum systems. This is due to the so-called ``curse of dimensionality'': the number of parameters needed to fully specify the state of a quantum-mechanical system grows exponentially with the system size. The exponential number of measurements needed to infer these parameters makes quantum state tomography infeasible for large systems. Consequently, full quantum state tomography has only been realised in systems with few components, in particular, in a system of ten qubits, which too required millions of measurements. In a nutshell, obtaining a classical description of a $d$-dimensional quantum mixed state $\rho$, given many copies of it, via quantum state tomography, can be shown to require $\Omega(d^2)$ copies of the state. (More recently, it has been shown~\cite{ODonnell2016,Haah2017} that $\mathcal{O}(d^2)$ copies also suffice.) However, this number grows exponentially with the number $n$ of qubits (since $d=2^n$), and {the problem becomes rapidly intractable}.

In 2018~Aaronson~\cite{Aaronson2018} pointed out that for certain concrete tasks, obtaining a complete classical characterisation of the quantum state is unnecessary. Instead it is often sufficient to accurately predict many useful properties of the state. This led him to propose a novel task called {\em{shadow tomography}}, the aim of which is not to learn a complete description of the unknown quantum state but instead to simultaneously estimate the outcome probabilities associated with a list of $M$ two-outcome measurements, $E_1, \ldots, E_M$, performed on the state, up to a desired accuracy (say, $\varepsilon$). For an unknown $d$-dimensional quantum mixed state $\rho$, this requires the prediction of $M$ expectation values, $\Tr[E_i\rho]$ $i=1,2,\ldots, M$, to within an additive error $\varepsilon$. Aaronson showed that, remarkably, the number of copies of the quantum state (i.e.~the {\em{sample size}}) needed to make these predictions, scales {\em{polynomially}} in the system size. Moreover, to predict the $M$ different expectation values, only $\mathcal{O}({\rm{polylog}}( M))$ number of copies of the state are needed. In spite of this advantage with respect to the sample size, implementation of shadow tomography is impractical because it requires very expensive {\em{quantum processing}}, including exponentially long quantum circuits that act collectively across all copies of the unknown quantum state stored in a quantum memory, as well as a lot of storage and post-processing to make the desired predictions. Aaronson's work was followed by a spate of papers, with the best result on the sample complexity of shadow tomography being obtained in~\cite{aaronson2019gentle,Brandao2019}.

In 2020 Huang et al.~\cite{Huang2020} improved on the work of Aaronson, by providing an efficient and experimentally feasible method to learn an unknown quantum state from just a {\em{few simple, single-copy measurements}}. The measurement  outcomes are used to construct a minimal classical representation of the state, called its {\em{classical shadow}}, which can be efficiently stored on a classical computer. This can thereafter be used to predict many linear (and possibly polynomial) properties of the quantum state. For example, the properties could be the expectation values of a list of $M$ observables in the given quantum state, as was the goal of Aaronson~\cite{Aaronson2018}. However, in contrast with the shadow tomography method proposed by Aaronson, Huang et al.~\cite{Huang2020} have a strict divide between the quantum and classical parts of their protocol: after obtaining a classical shadow of the quantum state, all processing necessary to predict its properties are done via {\em{classical computations}}. They proved that, {\em{under certain conditions}}, it is possible to predict expectation values of a list of $M$ observables for an unknown quantum state with a small constant error, with high success probability, by using $\log(M)$ number of copies of the state. They called this method {\em{classical shadow tomography}}. This novel method provides a tractable and rigorous procedure to obtain succinct classical descriptions of quantum many-body states, using which many useful properties of these states can be predicted. 

Classical shadow tomography was originally developed for locally finite-dimensional systems. In contrast, recent years have seen a fast growth of the range of applications of infinite-dimensional, continuous variable (CV) quantum systems, e.g.~collections of electromagnetic modes travelling along an optical fibre or massive harmonic oscillators, in all areas of quantum information~\cite{Braunstein-review}. Notable applications include {quantum communication~\cite{holwer, Wolf2007, TGW, PLOB, MMMM, Rosati2018, exact-solution},} quantum sensing~\cite{aasi2013enhanced, zhang2018noon, meyer2001experimental, mccormick2019quantum}, quantum simulations~\cite{flurin2017observing}, quantum computing and error correction~\cite{Gottesman2001, mirrahimi2014dynamically, Ofek2016, michael2016new, Mazyar2019}, and have been enabled by a steady development of non-classical sources of radiation~\cite{ourjoumtsev2006generating, kurochkin2014distillation, huang2015optical, reimer2016generation, eichler2011observation, zhong2013squeezing, eichler2014quantum, toyli2016resonance}.

Hence, CV systems are of enormous technological and experimental relevance. This comes hand in hand with a pressing need for fast and efficient quantum state tomography of CV systems. The aim of this paper is precisely to devise a rigorous procedure for obtaining classical shadows of states of CV quantum systems, thus developing an efficient and experimentally feasible procedure to benchmark CV quantum technologies.

The traditional way of performing tomography of CV systems consists of measuring a functional characterizing the state, typically the characteristic function~\cite{wallentowitz1995reconstruction,zheng2000motional,fluhmann2020direct}, the Wigner~\cite{smithey1993measurement} or the Husimi Q-function~\cite{altepeter20044} from which the quantum state of the system can be reconstructed, either by inverse linear transformations or by statistical inference techniques~\cite{lvovsky2009continuous,babichev2004homodyne,fluhmann2020direct,casanova2012quantum,gerritsma2011quantum,gerritsma2010quantum,zahringer2010realization,johnson2015sensing}. This approach has also been experimentally tested in various settings~\cite{smithey1993measurement,lvovsky2009continuous,kirchmair2013observation,wang2009decoherence,vlastakis2013deterministically,bertet2002direct,deleglise2008reconstruction,hofheinz2009synthesizing,wang2016schrodinger,shen2016optimized,leibfried1996experimental,Ofek2016,ding2017quantum,lv2017reconstruction}. However, as for their discrete analogues, a naive full tomography of a CV quantum system in terms of a quasi-probability distribution is highly inefficient. To remedy this issue, more advanced tomographic schemes were proposed which involve a displacement of the state in phase space followed by parity or multiple photon measurements~\cite{hofheinz2009synthesizing,shen2016optimized,leibfried1996experimental,bertet2002direct,zhang2018noon,lutterbach1997method,vlastakis2013deterministically,guerlin2007progressive}. These measurements are then processed to reconstruct the quasi-probability distributions or their corresponding states in the Fock basis. On the down side, these schemes exhibit a trade-off between the number of measurement points in phase space and the number of operator expectation values measured at each point. More recently, a more efficient method of reconstruction of quasi-probability distributions via Lagrange interpolation was proposed in~\cite{landon2018quantitative}. However, to the best of our knowledge, a rigorous analysis of the sample and computational complexity associated to each of these methods is missing. Moreover, the latter were mostly applied to systems of a small number of modes. Here, instead, we propose a new scheme for building classical shadows of multi-mode continuous variables quantum systems. Our proposal comes with rigorous complexity bounds.
\medskip

\noindent
{\em{Related works:}} Results analogous to ours recently appeared in a concurrent and independent work by Gandhari et al.~\cite{Gandhari2022}. In it too a framework generalizing the qubit-based classical shadow tomography protocol~\cite{Huang2020} to CV systems was developed. 
A key step in their method is to express the density matrix of the reconstructed state in terms of so-called pattern functions. The latter were originally introduced in the context of optical homodyne tomography by D'Ariano et al.~\cite{D'Ariano94} and have been used extensively thereafter in quantum tomography of CV systems (see e.g.~\cite{Artiles2005} and references therein). The authors of~\cite{Gandhari2022} obtained bounds on the sample complexity for estimating quantum states for their protocol by exploiting known bounds on pattern functions~\cite{Artiles2005}. The framework of~\cite{Gandhari2022} is equivalent to ours in the settings of homodyne and heterodyne detection (see Section~\ref{compare} for an explanation). Even though pattern functions do not arise explicitly in our work, they are implicit in our results. This can be seen by a comparison of our results with known expressions~\cite{Richter2000} for the Fourier transform of pattern functions in terms of Laguerre polynomials.

Other recent works on learning CV quantum states (and quantum processes) include~\cite{Rosati2022, Gu2022}.

\subsection{CV classical shadow tomography} \label{subsec:shadow_tomography}

Let us start with a brief summary of our extension of the protocol of Huang et al.~\cite{Huang2020} for obtaining a classical shadow of the quantum state of a continuous variables quantum system. Assume that multiple (say, $N$) identical copies of an unknown quantum state, $\rho$, are available, and one has (i)~an ensemble $\cal{U}$ of unitary operators and (ii)~a quantum measurement described by the set of measurement operators $\{M_x\}_x$, satisfying $\sum_x M_x^\dag M_x = I$, such that elements $M_x U$, $U\in\mathcal{U}$, describe a tomographically complete set of measurements. In Huang et al.~\cite{Huang2020} $\rho$ was considered to be an $n$-qubit state, and $\{M_x\}_x$ was a measurement in the computational basis, in which case $x \in \{0,1\}^n$. In a practical scheme, each ensemble ${\cal{U}}$ should be realisable as an efficient quantum circuit, and also have a succinct classical description.

In analogy with the discrete setting, our proposal for a {\em{CV classical shadow tomography}} contains the following three main ingredients:
\begin{enumerate}[(i)]
\item An $m$-mode CV system in an unknown state $\rho$.
\item A random variable $S$ taking values in the group $\operatorname{Sp}(2m)$ of $2m\times 2m$ symplectic matrices, and the associated unitaries $U_S$. For homodyne measurements we consider e.g.\ random variables distributed according to the Haar measure on $\operatorname{Sp} \cap \operatorname{SO}.$ 
\item A Gaussian positive operator valued measurement  $\big\{ \D(x) \psi \D(-x)\, \frac{d^{2m}x}{(2\pi)} \big\}_{x\in\mathbb{R}^{2m}}$, where $\psi\coloneqq \ketbra{\psi}$ is a fixed pure Gaussian state and $\D(x)$ denotes the unitary displacement operator of phase-space parameter $x\in\mathbb{R}^{2m}$. In practice, we will consider two types of Gaussian measurements, namely homodyne detection along a random direction in phase space and heterodyne detection.
\end{enumerate}
A {\em{CV classical shadow}} of the quantum state $\rho$ is then created by using a randomised protocol that involves repeatedly performing the following simple steps:
\begin{enumerate}[(a)]
\item A symplectic matrix $S$ is selected randomly from $\operatorname{Sp}(2m)$ and applied to a copy of $\rho$, resulting in the unitary transformation $\rho \mapsto U_S\rho U_S^\dagger$. 
\item The Gaussian measurement of effect operators $M_x$ is performed on the output of the previous step, yielding the post-measurement state $\widetilde{\rho}_x = \psi_x\coloneqq \D(x) \psi \D(-x)$, when the measurement outcome is $x$. One can attempt to partly undo the effect of the unitary by counter-rotating $\widetilde{\rho}_x$, implementing the transformation $\widetilde{\rho}_x \mapsto U_S^\dag \widetilde{\rho}_x U_S$. Naturally, due to the measurement such transformation will not yield back the original state $\rho$ which is in general mixed.
\item Averaging over $S,X$, where $X = \mathbb R^{2m}$ is the set of measurement outcomes equipped the probability measure derived from the measurement outcomes, the counter-rotated state yields a quantum channel $\mathcal{M}$ that depends on the measure $\mu$:
\bb \label{noise_M}
\mathcal{M} (\rho) \coloneqq  {\mathbb{E}}_{S,X}\left[\gud{S}\, \widetilde{\rho}_X \gu{S} \right] .
\ee
For our choice of Gaussian measurements, the effective quantum channel $\mathcal{M}$ is a simple linear bosonic channel whose action can be represented as a random displacement in phase space. 
\item Heuristically, one would like to define the {\em{classical shadow}} of $\rho$ as the random operator
\bb \label{shadow_state}
\hat{\rho}_{S,x} \coloneqq&\ \MM^{-1}\! \left( \gud{S}  \widetilde{\rho}_x \gu{S} \right) .
\ee
Once the nature of the protocol, and in particular the Gaussian measurement $\big\{ \D(x) \psi \D(-x)\, \frac{d^{2m}x}{(2\pi)} \big\}_{x\in\mathbb{R}^{2m}}$, has been specified, the classical shadow is simply a function of the particular realisations $S,x$ obtained in this round of the protocol. Importantly, it is \emph{not} a function of the unknown state $\rho$. The information on $\rho$ is at this point stored \emph{only} in the probability distribution associated with the measurement outcome $x$. As a matter of fact, $\hat{\rho}_{S,x}$ is an unbiased estimator of $\rho$: 
\bb \label{averageshadowstate}
\mathbb{E}_{S,X}\big[\hat{\rho}_{S,X}\big] &=\mathcal{M}^{-1}\left(\mathbb{E}_{S,X}\big[\gud{S}\,\widetilde{\rho}_X \gu{S} \big]\right)\\
&=\big(\mathcal{M}^{-1}\!\!\circ\!\mathcal{M}\big)(\rho) = \rho\,.
\ee
\end{enumerate}
This is the final output of one iteration of the protocol. A purely classical description of the operator $\hat{\rho}_{S,x}$ --- which, we stress again, does not depend on the unknown state $\rho$ --- is stored in a classical memory for future processing. Note that such a description can be produced given $S$ and $x$ alone, by simply computing the operator defined by~\eqref{shadow_state}.

Let us summarise the whole procedure. The measurement yields $\widetilde{\rho}_x$ as an approximation for the state $U_S\rho\, U_S^\dag$; then one attempts to find an approximation for the original state, $\rho$, by counter-rotating $\widetilde{\rho}_x$ to obtain ${U_S^\dag} \widetilde{\rho}_x {U_S}$. Finally, by applying $\MM^{-1}$ one eliminates the effect of the average noise (represented by the quantum channel $\MM$) that the state undergoes in the protocol. This yields the classical shadow
$\hat{\rho}_{S,X}$ 
whose description is stored as a string of classical data in a classical memory for future processing. Repeating the above protocol on $N$ independent identical copies of $\rho$ yields a string of classical shadows $\{\hat{\rho}^{(1)}, \ldots, \hat{\rho}^{(N)}\}$, where for simplicity we introduced the shorthand notation $\hat{\rho}^{(i)}\coloneqq \hat{\rho}_{S_i, x_i}$. This string can be further processed 
for instance by computing the empirical average
\begin{align*}
\hat{\sigma}^{(N)} \coloneqq \frac{1}{N} \sum_{i=1}^N \hat{\rho}^{(i)}\, .
\end{align*}
The advantage of our CV quantum tomography protocol is most clearly visible from the simple structure of the map $\mathcal{M}$ and its inverse at the level of characteristic functions: for any trace class operator $Z$ with characteristic function $\chi_Z:u\in\mathbb{R}^{2m}\mapsto \tr(Z \D(u))$ and $|\psi\rangle \coloneqq  U_T^\dagger |0\rangle$, for any $T \in \operatorname{Sp}(2m)$,
\begin{align}
&\chi_{\MM(Z)}  = \chi_Z\, f_{\mu, T}\, ,\quad \text{ where } \\
&f_{\mu, T}(u) \coloneqq\ \int \!\! d\mu(S)\, 
e^{-\frac12\, \|TS u\|^2} \label{fmuT}
\end{align}
depends on the Gaussian state $\ket{\psi}$. Therefore, at least formally the classical shadows $\hat{\rho}^{(i)}$ can be equivalently reconstructed by their characteristic functions, which take the form
\begin{equation}\label{characteristicfunctionshadow}
    \chi_{\hat{\rho}^{(i)}} = \chi_{U_{S_i}^\dagger \widetilde{\rho}_{x_i}U_{S_i}^{\vphantom{\dag}}}\,f_{\mu,T}^{-1}\,.
\end{equation}
\subsection{Moment constraints} When trying to implement the above strategy, one faces however two issues that are specific to the continuous variables setting. 
First, as we will see, the quantum channel $\mathcal{M}$ is in general not surjective on the space of trace class operators, which implies that the classical shadow $\hat{\rho}_{S,X}$ will typically not be a trace class --- and, for that matter, not even a bounded --- operator. In other words, the right-hand side of~\eqref{characteristicfunctionshadow}, although well defined as a function on $\R^{2m}$, is not the characteristic function of a quantum state. Second, the parameters $S$ and $X$ that need to be stored in the classical memory at each round of the protocol belong to continuous and unbounded sets.

To overcome the first problem, we focus on a suitable characterisation of the classical shadow $\hat{\rho}_{S,X}$ that is well defined, namely, its characteristic function $\chi_{\hat{\rho}_{S,X}}$, and construct the operator $\hat{\rho}_{S,X}$ itself as defined only on a restricted domain. To obtain effective estimates of $\rho$ from the above scheme, we will further need to assume that $\rho$ has controlled moments of low degree. Such an assumption allows us to show that the projection of the state $\rho$ onto the finite subspace of Hilbert--Schmidt operators supported on the space of low energy Fock states is sufficient for obtaining a good enough approximation of the state via the Gaussian shadow tomography protocol.

Let us make these considerations more precise. We consider the maps 
\bb
\mathcal P_{M}(T) \coloneqq  \sum_{\mathbf{n}_1,\mathbf{n}_2 \in \{0,..,M\}^m}  \langle \mathbf{n}_1|T|\mathbf{n}_2\rangle\, \ketbraa{\mathbf{n}_1}{\mathbf{n}_2}\,,
\ee
where $\ket{\mathbf{n}}$ is a multivariate Fock state, and $\mathbf{n}\in\{0,..., M\}^m$ is a multi-index. Using a simple approximation scheme, we also approximate the Schwartz operators $\ketbraa{\mathbf{n}_1}{\mathbf{n}_2}$ by Schwartz operators $\widetilde Z_{\mathbf{n}_1\mathbf{n}_2}$ with smooth compactly supported characteristic functions and define an auxiliary map
\begin{equation}
 \widetilde{\mathcal{P}}_{M}(T)\coloneqq \sum_{\mathbf{n}_1,\mathbf{n}_2 \in \{0,..,M\}^m}  \Tr(\widetilde  Z_{\mathbf{n}_2\mathbf{n}_1}T) \, \ketbraa{\mathbf{n}_1}{\mathbf{n}_2}\,.
 \end{equation}
Our goal is to approximate possibly unbounded observables of an $m$-mode CV quantum system. For this, we introduce a norm which captures approximations of such operators: fix an arbitrary positive integer $n$; then given $0<\alpha<n$ and two states $\rho,\rho'$ with $\tr(\rho (I+N_m)^n),\tr(\rho' (I+N_m)^n)<\infty$, where $N_m$ stands for the $m$-mode number operator, we denote $H_m\coloneqq I+N_m$ and
\begin{align*}
\|X\|^{(\alpha)}_1 \coloneqq \left\|H_m^{\alpha/2}X H_m^{\alpha/2}\right\|_1\,
\end{align*}
for any trace-class $X$ or which the right-hand side is finite. 
Then the following bounds hold which put the above approximation scheme on a rigorous footing:
\begin{prop}\label{approxmomentconstraints}
\label{prop:EC2}
Let $N_m$ be the number operator over $m$ modes and let $\rho$ be an $m$-mode state such that $\tr(N_m^n\rho)\eqqcolon  mE<\infty$. 
Then, for any $\eta^2\ge 2M^2$,
\begin{align*}
&\|\rho-\mathcal{P}_{M}(\rho)\|_1^{(\alpha)}\le 2(1+M)^{\frac{\alpha-n}{2}}E\, , \\
&\|\rho-\widetilde{\mathcal{P}}_{M}(\rho)\|^{(\alpha)}_1 \le 2(1+M)^{\frac{\alpha-n}{2}} E \\
&\hspace{18.6ex} +(3mM)^{6mM+\alpha} e^{-\frac{\eta^2m}{2}} \eta^{2Mm}.
\end{align*}
\end{prop}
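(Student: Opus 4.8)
The plan is to bound the two errors in sequence; the second differs from the first only by the extra step of replacing the ideal truncation $\mathcal{P}_M$ by its regularised version $\widetilde{\mathcal{P}}_M$.

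\emph{First bound.} Writing $\Pi_M\coloneqq\sum_{\mathbf{n}\in\{0,\dots,M\}^m}\ketbra{\mathbf{n}}$, so that $\mathcal{P}_M(\rho)=\Pi_M\rho\,\Pi_M$, I would decompose
\[
\rho-\mathcal{P}_M(\rho)=(I-\Pi_M)\rho+\Pi_M\rho\,(I-\Pi_M)\,.
\]
Sandwiching with $H_m^{\alpha/2}$, using that $H_m$ commutes with $\Pi_M$, the triangle inequality, adjoint-invariance of $\|\cdot\|_1$ and $\|\Pi_M\|_\infty\le1$ reduce the problem to estimating $\big\|H_m^{\alpha/2}(I-\Pi_M)\rho\,H_m^{\alpha/2}\big\|_1$ (with an overall factor $2$). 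The crucial ``energy-leakage'' input is: on the range of $I-\Pi_M$ at least one mode is occupied strictly above $M$, hence $N_m\ge M+1$ there, and since $\alpha<n$ this gives $\|(I-\Pi_M)H_m^{(\alpha-n)/2}\|_\infty\le (1+M)^{(\alpha-n)/2}$. Factoring $\rho=H_m^{-n/2}\big(H_m^{n/2}\rho\,H_m^{n/2}\big)H_m^{-n/2}$ (legitimate since $\Tr(\rho H_m^n)<\infty$) and rewriting
\[
H_m^{\alpha/2}(I-\Pi_M)\rho\,H_m^{\alpha/2}=\big[(I-\Pi_M)H_m^{(\alpha-n)/2}\big]\big(H_m^{n/2}\rho\,H_m^{n/2}\big)\big[H_m^{(\alpha-n)/2}\big]\,,
\]
one bounds the trace norm by the product of the operator norms of the outer factors (respectively $\le(1+M)^{(\alpha-n)/2}$ and $\le1$, as $H_m\ge I$) times $\Tr(H_m^{n/2}\rho\,H_m^{n/2})=\Tr(\rho H_m^n)$. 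This gives $\|\rho-\mathcal{P}_M(\rho)\|_1^{(\alpha)}\le 2(1+M)^{(\alpha-n)/2}\Tr(\rho H_m^n)$, from which the stated bound follows by expanding $H_m^n=(I+N_m)^n$ and controlling the lower moments via Jensen, $\Tr(\rho N_m^k)\le(\Tr(\rho N_m^n))^{k/n}$, in terms of $mE$.

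\emph{Second bound.} By the triangle inequality it suffices to add $\|\mathcal{P}_M(\rho)-\widetilde{\mathcal{P}}_M(\rho)\|_1^{(\alpha)}$ to the previous estimate. The operator $\mathcal{P}_M(\rho)-\widetilde{\mathcal{P}}_M(\rho)=\sum_{\mathbf{n}_1,\mathbf{n}_2\in\{0,\dots,M\}^m}\Tr\!\big((\ketbraa{\mathbf{n}_2}{\mathbf{n}_1}-\widetilde Z_{\mathbf{n}_2\mathbf{n}_1})\rho\big)\,\ketbraa{\mathbf{n}_1}{\mathbf{n}_2}$ is supported on the finite block of $\Pi_M$, so each $H_m^{\alpha/2}$ contributes at most $(1+mM)^{\alpha/2}$; applying the triangle inequality over the $(M+1)^{2m}$ rank-one terms it is enough to bound, uniformly, $\big|\Tr\!\big((\ketbraa{\mathbf{n}_2}{\mathbf{n}_1}-\widetilde Z_{\mathbf{n}_2\mathbf{n}_1})\rho\big)\big|\le\|\ketbraa{\mathbf{n}_2}{\mathbf{n}_1}-\widetilde Z_{\mathbf{n}_2\mathbf{n}_1}\|_\infty$. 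From the Fourier--Weyl representation and $\|\D(u)\|_\infty=1$ one has $\|Z\|_\infty\le c_m\|\chi_Z\|_{L^1}$ for a dimensional constant $c_m$; applied to the difference --- whose characteristic function is the restriction of $\chi_{\ketbraa{\mathbf{n}_2}{\mathbf{n}_1}}$ to the phase-space region excised in building $\widetilde Z_{\mathbf{n}_2\mathbf{n}_1}$, times a factor bounded by $1$ --- this reduces matters to a tail integral. Since $\chi_{\ketbraa{\mathbf{n}_2}{\mathbf{n}_1}}$ is, mode by mode, a Gaussian times (a Laguerre polynomial of degree $\le M$ in $\|u_j\|^2$ together with a prefactor of degree $\le M$ in $\|u_j\|$), the hypothesis $\eta^2\ge 2M^2$ is precisely what makes the Gaussian dominate the polynomial on the excised region, and the tail integral over the complement of a phase-space ball of radius $\sim\eta\sqrt{m}$ is bounded by $\eta^{2Mm}e^{-\eta^2 m/2}$ up to constants depending on $m,M$ alone. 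Collecting the $(M+1)^{2m}$ summands, the weight $(1+mM)^{\alpha}$, the Laguerre coefficients, the Gaussian normalisation and $c_m$ --- all of order $(3mM)^{O(mM)}$ by Stirling-type estimates --- yields the prefactor $(3mM)^{6mM+\alpha}$.

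\emph{Main obstacle.} The first bound is a routine energy-truncation estimate; the only delicate point is descending from the natural quantity $\Tr(\rho H_m^n)$ to the hypothesis $\Tr(\rho N_m^n)=mE$ while keeping the constant clean. The real work is in the second bound: one must control the growth of the associated Laguerre polynomials $L^{(\ell)}_k$ with $k,\ell\le M$ against the Gaussian tail uniformly in all $2m$ phase-space coordinates, track the multiplicative multi-mode structure of $\chi_{\ketbraa{\mathbf{n}_2}{\mathbf{n}_1}}$, and check that the condition $\eta^2\ge 2M^2$ places the truncation strictly inside the Gaussian-decay regime; the explicit prefactor then emerges from bookkeeping the accumulated combinatorial and $\Gamma$-function constants.
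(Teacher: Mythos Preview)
Your first bound follows the paper's argument almost verbatim: the same decomposition $\rho-\mathcal{P}_M(\rho)=(I-\Pi_M)\rho+\Pi_M\rho(I-\Pi_M)$, the same commutation of $H_m$ with $\Pi_M$, the same H\"older-type splitting, and the same key estimate on $\|(I-\Pi_M)H_m^{(\alpha-n)/2}\|_\infty$. The paper dresses this up as a duality argument, but the content is identical. You are also right that the passage from $\Tr(\rho H_m^n)$ to the hypothesis $\Tr(\rho N_m^n)=mE$ is where the constant gets loose; the paper's detailed version (their Proposition~\ref{prop:EC}) in fact keeps $E_m^{(n)}=\Tr(\rho H_m^n)$ in the bound rather than $E$.

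For the second bound you take a genuinely different route. The paper controls $\|\mathcal{P}_M(\rho)-\widetilde{\mathcal{P}}_M(\rho)\|_1^{(\alpha)}$ by bounding the superoperator norm $\big\|\mathcal{L}_{H_m^{\alpha/2}}\mathcal{R}_{H_m^{\alpha/2}}(\mathcal{P}_M-\widetilde{\mathcal{P}}_M)\big\|_{2\to 1}$, and inside that uses the \emph{Hilbert--Schmidt} norm $\|\widetilde{Z}_{\mathbf{n}_1\mathbf{n}_2}-\ketbraa{\mathbf{n}_1}{\mathbf{n}_2}\|_2$, which via Plancherel becomes an $L^2$ tail integral of the characteristic function. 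You instead bound the matrix entries by the \emph{operator} norm $\|\widetilde{Z}_{\mathbf{n}_2\mathbf{n}_1}-\ketbraa{\mathbf{n}_2}{\mathbf{n}_1}\|_\infty$ and pass to an $L^1$ tail integral via the Weyl reconstruction formula. Both routes reach the same endgame --- bound a Laguerre-polynomial-times-Gaussian on the excised region and let $\eta^2\ge 2M^2$ make the Gaussian win --- but the paper's $L^2$/Plancherel path is slightly cleaner: the isometry is exact (no dimensional constant $c_m$), and the single-mode Gaussian integrals are evaluated explicitly via the incomplete Gamma function, which gives the sharper form of $\delta_0$ appearing in their Lemma~\ref{double_truncation_lemma}. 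Your $L^1$ route is perfectly valid but incurs the extra combinatorial bookkeeping you acknowledge.

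One point that needs tightening: the cutoff $\prod_j\xi_{\eta,R}(u^{(j)})$ is a \emph{product} of $2$-dimensional cutoffs, not a single $2m$-dimensional ball, so ``the complement of a phase-space ball of radius $\sim\eta\sqrt m$'' is not the right geometry. You need a telescoping bound such as $1-\prod_j\xi_j\le\sum_j(1-\xi_j)$ (valid since $0\le\xi_j\le1$) to reduce to single-mode tails times full single-mode $L^1$ norms on the other modes. This is elementary but should be made explicit; it is where the factor $m$ and the mode-wise product structure of the final constant actually come from.
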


\section{Main results}

\subsection{CV classical shadows via homodyne detection}
We first consider the scenario in which one performs a homodyne detection along a random direction in phase space. More precisely, $m$ independent random matrices $S_1,\dots , S_m$ are distributed uniformly (according to the Haar measure) on the intersection $\operatorname{Sp}(2) \cap \operatorname{SO}(2)$ between the symplectic and the special orthogonal group, with corresponding angles $\theta_1,\cdots, \theta_m\in [-\pi,\pi]$. The homodyne measurement is then performed along the position axis and yields the classical outcome $x=(x_1,...,x_m)\in\mathbb{R}^{2m}$. As in the qubit setting, we now restrict ourselves to an arbitrary subset $A$ of $|A|\le r$ modes. In that case, the characteristic functions of the reduced shadows defined in~\eqref{characteristicfunctionshadow} are distributions of the form
\begin{align*}
\chi_{\hat{\rho}_{A}^{(i)}}(u_A)=\prod_{j\in A}\sqrt{2}\pi \|u_j\|\, \delta \left( (S_j u_j)_2 \right) e^{- i u_j^\intercal \Omega S_j x_j}\,,
\end{align*}
for any $u_A\coloneqq \{u_j\}_{j\in A}\in\mathbb{R}^{2|A|}$, where $\delta(x)$ is the Dirac distribution at $x\in\mathbb{R}$, where $(S_ju_j)_2\coloneqq \sum_{k} (S_j)_{2k} (u_j)_k$, and where $\Omega$ stands for the canonical symplectic form as defined in Equation~\eqref{Omega}. As foreseen in the previous paragraph, this characteristic function is not square integrable, and therefore cannot be associated to a quantum state. Indeed, we recall that, by Plancherel's theorem, any trace class operator, and thereby any quantum state, gives rise to a square integrable characteristic function, 
and for any two such operators $T_1,T_2$,
\begin{align}\label{Plancherel0}
    \tr[T_1^\dagger T_2]=\int \frac{d^{2|A|}x}{(2\pi)^{|A|}}\, \chi_{T_1}(x)^* \chi_{T_2}(x)\,\,\equiv \braket{\chi_{T_1},\,\chi_{T_2}}\,.
\end{align}
Instead, we construct a random matrix $\hat{\rho}^{(i)}_A(M)$ in the set
$\mathbb{M}_{(M+1)^{ |A|}}(\mathbb{C})$ of matrices of size $(M+1)^{ |A|}$ by simply extending~\eqref{Plancherel0}:
\begin{align*}
   \hat{\rho}^{(i)}_{A}(M)\coloneqq \sum_{\mathbf{n}_1,\mathbf{n}_2\in\{0,...,M\}^A}\,\langle \chi_{\ket{\mathbf{n}_1}\!\bra{\mathbf{n}_2}}, \chi_{\hat{\rho}^{(i)}_A}\rangle\,\ketbraa{\mathbf{n}_1}{\mathbf{n}_2}\,.
\end{align*}
By~\eqref{averageshadowstate}, we know that the expected matrix $\mathbb{E}[\hat{\rho}^{(i)}_{A}(M)]$ coincides with the unknown state $\rho$. Moreover, by Bernstein's matrix concentration inequality, we have that, with high probability the empirical average $\sigma_A^{(N)}(M)\coloneqq \frac{1}{N}\sum_{i=1}^N\hat{\rho}^{(i)}_{A}(M)$ will well-approximate $\mathcal{P}_M(\rho_A)$ on all sets $A$ of size $|A|\le r$ for $N=\mathcal{O}\big(\operatorname{poly}\big(e^r,\,\log(m)\big)\big)$. Combining this with the approximation bounds under moment constraints derived in Proposition~\eqref{approxmomentconstraints} leads us to our first main result:
\begin{thm}[(CV classical shadows via homodyne detection)] 
With the notation introduced above, given $0\le \alpha <n$ and assuming that $E^{(n)}_r\coloneqq \max_{|A|\le r}\tr(\rho_A H_r^n)<\infty$ for all $A$ of size $|A|\le r$, we have that for integer $M=\Big\lceil \Big(\frac{4E_r^{(n)}}{\epsilon}\Big)^{\frac{2}{n-\alpha}}\Big\rceil$, $N=\mathcal{O}\Big(\operatorname{poly}\Big(\frac{1}{\epsilon^2},\,M^{r+\alpha},\log\big(\frac{1}{\delta}\big),\log(m)\Big)\Big)$ and any region $A$ of size $|A|\le r$,
\begin{equation}
      \left\|\sigma^{(N)}_A(M)-\rho_A\right\|^{(\alpha)}_1\le \epsilon
\end{equation}
 with probability at least $1-\delta$. Similarly, for $N=\mathcal{O}\Big(\operatorname{poly}\Big(\frac{1}{\epsilon^2},\,M^{r+\alpha},\log\big(\frac{1}{\delta}\big),\log(L)\Big)\Big)$ we have that for any set of $L$ observables $O_j$ on regions $A_j$ of size at most $r$ and with $\|H_{r}^{-\frac{\alpha}{2}}O_jH_{r}^{-\frac{\alpha}{2}}\|_\infty\le 1$, 
\begin{align*}
    \max_j \big|\tr[O_j\,(\sigma_A^{(N)}(M)-\rho_{A_j})]\big|\le \epsilon
\end{align*}
with probability at least $1-\delta$.

\end{thm}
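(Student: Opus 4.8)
The plan is to split the error $\|\sigma^{(N)}_A(M)-\rho_A\|^{(\alpha)}_1$ into a \emph{bias} term and a \emph{statistical fluctuation} term via the triangle inequality:
\begin{equation*}
\big\|\sigma^{(N)}_A(M)-\rho_A\big\|^{(\alpha)}_1 \le \big\|\sigma^{(N)}_A(M)-\mathcal P_M(\rho_A)\big\|^{(\alpha)}_1 + \big\|\mathcal P_M(\rho_A)-\rho_A\big\|^{(\alpha)}_1 .
\end{equation*}
The second (bias) term is controlled directly by Proposition~\ref{approxmomentconstraints}: it is at most $2(1+M)^{(\alpha-n)/2}E^{(n)}_r$, and the stated choice $M=\lceil (4E^{(n)}_r/\epsilon)^{2/(n-\alpha)}\rceil$ is exactly what makes this $\le \epsilon/2$. (One has to check that the Proposition, stated for a global state, applies to each reduced state $\rho_A$ with the appropriate $m\leftarrow|A|\le r$ and $E\leftarrow E^{(n)}_r$; this is immediate since the moment bound $\tr(\rho_A H_r^n)<\infty$ is assumed uniformly.)

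For the fluctuation term I would first record that each $\hat\rho^{(i)}_A(M)$ is an unbiased estimator of $\mathcal P_M(\rho_A)$ — this follows from~\eqref{averageshadowstate} together with the definition of $\hat\rho^{(i)}_A(M)$ as the Plancherel pairing~\eqref{Plancherel0} projected onto the $(M+1)^{|A|}$-dimensional Fock block, since projecting a trace-class operator onto this block and then averaging commute. Then I would apply the matrix Bernstein inequality to the i.i.d.\ centered matrices $H_r^{\alpha/2}(\hat\rho^{(i)}_A(M)-\mathcal P_M(\rho_A))H_r^{\alpha/2}$ acting on the $(M+1)^{|A|}$-dimensional space. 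This requires two inputs: (a) a bound on the operator norm of a single summand, and (b) a bound on the matrix variance $\big\|\sum_i \E[\,\cdot\,^2]\big\|_\infty$. Both reduce to estimating the "shadow norm" $\|H_r^{\alpha/2}\hat\rho^{(i)}_A H_r^{\alpha/2}\|_\infty$ (and its square) averaged over $S,X$; because of the delta-function structure of $\chi_{\hat\rho^{(i)}_A}$, the pairing $\langle\chi_{\ketbraa{\mathbf n_1}{\mathbf n_2}},\chi_{\hat\rho^{(i)}_A}\rangle$ is an explicit integral over the homodyne outcome against (Fourier transforms of) Hermite/Laguerre-type functions, and the relevant moments over the Haar angle and the outcome distribution are finite and grow at most polynomially in $M$ — this is where the $M^{r+\alpha}$ factor in the sample complexity comes from. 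Bernstein then gives: for $N=\Omega\big(\epsilon^{-2} M^{r+\alpha}\log(1/\delta')\big)$ (possibly with extra polynomial factors), $\|\sigma^{(N)}_A(M)-\mathcal P_M(\rho_A)\|^{(\alpha)}_1\le \|H_r^{\alpha/2}(\sigma^{(N)}_A(M)-\mathcal P_M(\rho_A))H_r^{\alpha/2}\|_\infty\cdot\!\big((M+1)^{|A|}\big)^{?}\le\epsilon/2$ with probability $\ge 1-\delta'$ — here one must be careful about the conversion from operator norm (what Bernstein bounds) to trace norm on the block, which costs a factor polynomial in $(M+1)^{|A|}\le (M+1)^r$, again absorbed into $M^{r+\alpha}$.

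Finally, to get the bound \emph{simultaneously} over all regions $A$ of size $\le r$ (resp.\ all $L$ observables $O_j$), I would take a union bound: there are $\binom{m}{\le r}=\mathcal O(m^r)$ such regions, so choosing $\delta'=\delta/\mathcal O(m^r)$ turns $\log(1/\delta')$ into $\log(1/\delta)+\mathcal O(r\log m)$, whence the $\log(m)$ dependence; for the observable version one replaces $m^r$ by $L$, giving $\log(L)$. For the observable statement I would moreover use that $\|H_r^{-\alpha/2}O_jH_r^{-\alpha/2}\|_\infty\le1$ lets one write $|\tr[O_j(\sigma^{(N)}_{A_j}(M)-\rho_{A_j})]| = |\tr[(H_r^{-\alpha/2}O_jH_r^{-\alpha/2})\,H_r^{\alpha/2}(\sigma^{(N)}_{A_j}(M)-\rho_{A_j})H_r^{\alpha/2}]|\le \|\sigma^{(N)}_{A_j}(M)-\rho_{A_j}\|^{(\alpha)}_1$, reducing the second claim to the first.

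The main obstacle I expect is step (a)–(b): obtaining \emph{quantitative, $M$-polynomial} bounds on the single-shot shadow norm and its variance. The characteristic function of $\hat\rho^{(i)}_A$ is a genuine distribution (a product of $\delta$'s times $\|u_j\|$), not a function, so the pairing against $\chi_{\ketbraa{\mathbf n_1}{\mathbf n_2}}$ must be interpreted carefully; and one needs uniform-in-$\mathbf n_1,\mathbf n_2$ control of integrals involving Fock-state characteristic functions (which in the homodyne direction reduce to Hermite functions, and in general to associated Laguerre polynomials), integrated against the — heavy-tailed — homodyne outcome distribution and the Haar measure on $\operatorname{Sp}(2)\cap\operatorname{SO}(2)$. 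Establishing that these are finite and bounded by $\operatorname{poly}(M)^{\,r+\alpha}$, rather than blowing up super-polynomially, is the technical crux; the energy weights $H_r^{\alpha/2}$ on both sides are what ultimately tame the otherwise unbounded tails, which is precisely why the moment assumption $\alpha<n$ and the energy-weighted norm $\|\cdot\|^{(\alpha)}_1$ were introduced.
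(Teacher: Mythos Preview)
Your proposal is correct and follows essentially the same route as the paper: triangle-inequality split into bias (handled by Proposition~\ref{approxmomentconstraints}/\ref{prop:EC} with the stated choice of $M$) plus fluctuation (handled by matrix Bernstein on the $(M+1)^{|A|}$-dimensional block, then a union bound over the $\mathcal O(m^r)$ regions or the $L$ observables). One simplification worth noting: in the paper the single-shot bound on $\|H_r^{\alpha/2}\hat\rho_A(M)H_r^{\alpha/2}\|_\infty$ is \emph{almost sure} and equal to the deterministic constant $\Sigma^{(\alpha)}_r(M)$, because the homodyne outcome $x$ enters only through a unimodular phase $e^{-i(y,0)S_j^{-\intercal}\Omega S_j x_j}$ and the rotation $S_j$ only rotates the argument of $|\chi_{\ket{n_2}\!\bra{n_1}}|$---so your concern about heavy-tailed outcome distributions does not arise, and the Gaussian factor $e^{-\pi|\omega|^2/2}$ in the Fock characteristic functions (not the energy weights) is what makes the Laguerre-type integrals converge.
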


\subsection{CV classical shadows via heterodyne detection}
We also consider the case of a heterodyne detection
\bb
\left\{ \frac{1}{(2\pi)^{m/2}} \ketbra{x}\right\}_{x\in \R^{2m}}\, ,
\ee
i.e.\ $T=I$, and all unitaries employed are passive, i.e.\ such that $\big[ \gu{S}, \frac12 R^\intercal R\big] =0$. In that case, we show that the function defined in~\eqref{fmuT} takes the simpler form \[f_{\mu,T}(u) = e^{-\frac12 \|u\|^2}.\]
 Therefore, the classical shadow will have improper characteristic function
\begin{align*}
\chi_{\hat{\rho}^{(i)}_A}(u_A) = e^{  \frac14 \|u_A\|^2}\prod_{j\in A}e^{-iu_j^\intercal \Omega x_j}\,.
\end{align*}
In this case, we consider the matrices
\begin{align*}
    \hat{\rho}_A^{(i)}(M)\coloneqq \sum_{\mathbf{n}_1,\mathbf{n}_2\in\{0,...,M\}^A}\,\langle \chi_{\widetilde{Z}_{\mathbf{n}_1\mathbf{n}_2}}, \chi_{\hat{\rho}^{(i)}_{A}}\rangle\,\ketbraa{\mathbf{n}_1}{\mathbf{n}_2}\,.
\end{align*}

From here, repeating essentially the same argument as in the case of homodyne detection, defining the matrix $\widetilde{\sigma}_A^{(N)}(M) \coloneqq \frac{1}{N}\sum_{i=1}^N\hat{\rho}^{(i)}_{A}(M)$, we arrive at our second main result:

\begin{thm}[(CV classical shadows via heterodyne detection)] 
With the above notation, given $0\le \alpha <n$ and assuming that $E^{(n)}_r\coloneqq \max_{|A|\le r}\tr(\rho_A H_r^n)<\infty$ for all $A$ of size $|A|\le r$, we have that for $N=\mathcal{O}\Big(\operatorname{poly}\Big(\frac{1}{\epsilon^2},\,(E_r^{(n)})^{r+\alpha},\log\big(\frac{1}{\delta}\big),\log(m)\Big)\Big)$ and any region $A$ of size $|A|\le r$,
\begin{align*}
      \|\widetilde{\sigma}^{(N)}_A(M)-\rho_A\|^{(\alpha)}_1\le \epsilon
\end{align*}
 with probability at least $1-\delta$. Similarly, for $N=\mathcal{O}\Big(\operatorname{poly}\Big(\frac{1}{\epsilon^2},\,(E_r^{(n)})^{r+\alpha},\log\big(\frac{1}{\delta}\big),\log(L)\Big)\Big)$ we have that for any set of $L$ observables $O_j$ on regions $A_j$ of size at most $r$ and with $\|H_{r}^{-\frac{\alpha}{2}}O_jH_{r}^{-\frac{\alpha}{2}}\|_\infty\le 1$, 
\begin{align*}
    \max_j \big|\tr[O_j\,(\widetilde{\sigma}_A^{(N)}(M)-\rho_{A_j})]\big|\le \epsilon
\end{align*}
with probability at least $1-\delta$. 

\end{thm}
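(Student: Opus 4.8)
The plan is to mirror the homodyne argument with the modifications forced by the heterodyne measurement, so I would organize the proof in three stages: (1) identify the improper characteristic function of the reduced shadow and its second moment under the randomness; (2) control the approximation error incurred by replacing $\rho_A$ with $\widetilde{\mathcal P}_M(\rho_A)$ and then by the matrix estimator $\hat\rho^{(i)}_A(M)$; (3) apply a matrix concentration inequality with a union bound over the $\binom{m}{\le r}$ regions (resp.\ $L$ observables). The crucial input for step~(1) is the computation, stated in the excerpt, that for passive unitaries and heterodyne detection $f_{\mu,T}(u)=e^{-\tfrac12\|u\|^2}$, so that $\chi_{\hat\rho^{(i)}_A}(u_A)=e^{\tfrac14\|u_A\|^2}\prod_{j\in A}e^{-iu_j^\intercal\Omega x_j}$. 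This function grows like a Gaussian rather than being a Dirac comb, which is actually a milder singularity than in the homodyne case; this is why one can work directly with the smoothed Schwartz operators $\widetilde Z_{\mathbf n_1\mathbf n_2}$ (which have compactly supported $\chi$) and the pairing $\langle\chi_{\widetilde Z_{\mathbf n_1\mathbf n_2}},\chi_{\hat\rho^{(i)}_A}\rangle$ is a genuine absolutely convergent integral.

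For step~(1) in detail, I would first verify unbiasedness: $\mathbb E_{S,X}[\hat\rho^{(i)}_A(M)]=\widetilde{\mathcal P}_M(\rho_A)$, which follows from~\eqref{averageshadowstate} together with linearity of the pairing and the fact that $\chi_{\widetilde{\mathcal M}(\cdot)}=\chi_{(\cdot)}f_{\mu,T}$ so that the inverse exactly cancels the $e^{-\tfrac12\|u\|^2}$ factor in expectation. Then I would bound the operator norm of a single sample, $\|\hat\rho^{(i)}_A(M)\|_\infty$, and its "matrix variance'' $\mathbb E[\hat\rho^{(i)}_A(M)^2]$ (or the relevant one-sided quantities for Bernstein), using Plancherel~\eqref{Plancherel0}: each entry $\langle\chi_{\widetilde Z_{\mathbf n_1\mathbf n_2}},\chi_{\hat\rho^{(i)}_A}\rangle$ is bounded by $\|\chi_{\widetilde Z_{\mathbf n_1\mathbf n_2}}\|_{L^1}$ times $\sup|\chi_{\hat\rho^{(i)}_A}|$ on the (fixed, compact) support, which in turn is bounded since on that support $e^{\tfrac14\|u_A\|^2}$ is bounded by an explicit constant depending only on $r$ and the cutoff radius chosen for $\widetilde Z$. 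Summing over $\mathbf n_1,\mathbf n_2\in\{0,\dots,M\}^A$ gives a bound of the form $C(r)^{M r}$ (polynomial in $M^{r}$ after the logarithm is taken), which is exactly what makes $N$ scale as claimed.

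For step~(2) I would invoke Proposition~\ref{approxmomentconstraints} with $E$ replaced by $E_r^{(n)}/m$ (applied to each $\rho_A$, which by assumption has $\tr(\rho_A H_r^n)=E_r^{(n)}<\infty$), giving $\|\rho_A-\widetilde{\mathcal P}_M(\rho_A)\|_1^{(\alpha)}\le 2(1+M)^{(\alpha-n)/2}E_r^{(n)}+(\text{exponentially small correction})$; choosing $M$ as a suitable power of $E_r^{(n)}/\epsilon$ makes this $\le\epsilon/2$, so after that the only remaining error is the statistical fluctuation of $\widetilde\sigma^{(N)}_A(M)$ around its mean $\widetilde{\mathcal P}_M(\rho_A)$. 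Here I would use the same matrix-Bernstein / operator-Bernstein bound as in the homodyne theorem, measured in the $\|\cdot\|_1^{(\alpha)}$ norm (using $\|H_r^{\alpha/2}\cdot H_r^{\alpha/2}\|_1\le (M+1)^{r}\cdot\text{poly}\cdot\|\cdot\|_\infty$ on the support of the estimator, or more carefully by bounding the dual pairing against observables $O$ with $\|H_r^{-\alpha/2}OH_r^{-\alpha/2}\|_\infty\le 1$), and then take a union bound over all $\le m^r$ regions $A$ (resp.\ over the $L$ observables) to convert the per-region failure probability into the global $1-\delta$; this costs only a $\log m$ (resp.\ $\log L$) factor in $N$. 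Assembling the two error contributions via the triangle inequality yields the stated bound.

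The main obstacle I anticipate is controlling the single-shot operator norm of $\hat\rho^{(i)}_A(M)$: the factor $e^{\tfrac14\|u_A\|^2}$ in the improper characteristic function is unbounded on all of $\R^{2|A|}$, so the whole argument hinges on having arranged, in the construction of $\widetilde Z_{\mathbf n_1\mathbf n_2}$, that its characteristic function is supported in a fixed ball whose radius can be taken independent of $N$ (though it may grow with $M$), and then checking that the resulting constant $e^{(\text{radius})^2/4}$, after summing $(M+1)^{2|A|}$ entries and taking logarithms, still only contributes polynomially in $(E_r^{(n)})^{r+\alpha}$ to $N$. Getting these dependencies to line up — so that the smoothing radius, the cutoff $M$, and the energy bound interact to give exactly the advertised polynomial scaling rather than, say, an exponential in $M$ — is the delicate bookkeeping step; everything else is a routine adaptation of the homodyne proof.
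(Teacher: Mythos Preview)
Your proposal is correct and follows essentially the same route as the paper: verify unbiasedness $\mathbb{E}[\hat\rho^{(i)}_A(M)]=\widetilde{\mathcal P}_M(\rho_A)$, bound the single-sample weighted operator norm $\|H_r^{\alpha/2}\hat\rho^{(i)}_A(M)H_r^{\alpha/2}\|_\infty$ almost surely using the compact support of $\chi_{\widetilde Z_{\mathbf n_1\mathbf n_2}}$ (the paper packages this into a constant $\widetilde\Sigma^{(\alpha)}_r(M,R)$), apply matrix Bernstein plus a union bound over regions (resp.\ observables), and combine via the triangle inequality with Proposition~\ref{prop:EC} to absorb $\|\rho_A-\widetilde{\mathcal P}_M(\rho_A)\|_1^{(\alpha)}$. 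The ``delicate bookkeeping'' you flag is indeed the only real work and is handled exactly as you anticipate---the cutoff radius $R>\eta$ enters $\widetilde\Sigma^{(\alpha)}_r(M,R)$ through $e^{\|u\|^2/4}$ on the support, while $M$ and $\eta$ are tuned jointly so that the $\delta_0$ correction from Lemma~\ref{double_truncation_lemma} stays small; the paper's more precise statement of this theorem records these choices explicitly.
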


\section{Notation and basic notions}

\subsection{Operators and norms}

Given a separable Hilbert space $\cH$, we denote by $\cB(\cH)$ the space of bounded linear operators on $\cH$, and by $\cT_p(\cH)$ the \textit{Schatten $p$-class}, which is the Banach subspace of $\cB(\cH)$ formed by all bounded linear operators whose Schatten $p$-norm, defined as $\|X\|_{p}=\left(\Tr|X|^p\right)^{1/p}$,  is finite. Henceforth, we refer to $\cT_1(\cH)$ as the set of \textit{trace class} operators. The set of quantum states (or density matrices), i.e.\ positive semi-definite operators $\rho \in \cT_1(\cH)$ of unit trace, is denoted by $\cD(\cH)$. The Schatten $1$-norm, $\|\cdot\|_1$, is the {trace norm}, and the corresponding induced distance (e.g.\ between quantum states) is the {trace distance}. Note that the Schatten $2$-norm, $\|\cdot\|_2$, coincides with the \textit{Hilbert--Schmidt norm}. We also recall that a quantum channel with input system $A$ and output system $B$ is any completely positive, trace-preserving (CPTP) linear map $\cN:\cT_1(\cH_A)\to\cT_1(\cH_B)$, where $\cH_A, \cH_B$ are the Hilbert spaces corresponding to $A,B$, respectively. 

\medskip

If $\rho$ is a quantum state with spectral decomposition $\rho=\sum_i p_i \ketbra{\phi_i}$, and $A$ is a positive semi-definite operator, the \textit{expected value} of $A$ on $\rho$ is defined as
\bb
\Tr[\rho A]\coloneqq \sum_{i:\, p_i>0} p_i \big\|A^{1/2}|{\phi_i}\rangle\big\|^2 \in \mathbb{R}_+\cup \{+\infty\}\, ;
\label{expected positive}
\ee
here we use the convention that $\Tr[\rho A]=+\infty$ if the above series diverges or if there exists an index $i$ for which $p_i>0$ and $\ket{\phi_i}\notin \dom\left(A^{1/2}\right)$. This definition can be extended to a generic densely defined self-adjoint operator $A$ on $\cH$, by considering its decomposition $A=A_+-A_-$ into positive and negative parts, with $A_\pm$ being positive semi-definite operators with mutually orthogonal supports. The operator $A$ is said to have a \textit{finite expected value on $\rho$} if (i)~$\ket{\phi_i}\in \dom\big(A_+^{1/2}\big)\cap \dom\big(A_-^{1/2}\big)$ for all $i$ for which $p_i>0$, and (ii)~the two series $\sum_i p_i \big\|A_\pm^{1/2} \ket{\phi_i}\big\|^2$ both converge. In this case, the following quantity is called the \textit{expected value} of $A$ on $\rho$:
\bb
\Tr[\rho A]\coloneqq \sum_{i:\, p_i>0} p_i \big\|A_+^{1/2} \ket{\phi_i}\big\|^2 - \sum_{i:\, p_i>0} p_i \big\|A_-^{1/2} \ket{\phi_i}\big\|^2\,.
\label{expected}
\ee
Obviously, for a pair of operators $A,B$ satisfying $A\geq B$, we have that $\Tr[\rho A]\geq \Tr[\rho B]$.

\medskip

Let $A$ be an (unbounded) operator $A$ on some Banach space ${{X}}$, with domain $\dom(A)$. Such an operator is called closed if its {\em{graph}}, that is $\left\{(\ket{x},A\ket{x}); \ket{x} \in \dom(A) \right\} \subset X \times X,$ is closed. The spectrum of a closed operator $A$ is defined as the set~\cite[Definition~9.16]{HALL}
\[\spec(A)\coloneqq \left\{ \lambda \in \mathbb C:\, \lambda I-A \text{ is not continuously invertible} \right\}.\]
Henceforth, we often suppress the identity operator $I$ in the expression $(\lambda I - A)$ for notational simplicity.
We remind the reader that the spectrum of a self-adjoint positive operator is a closed subset of the positive real half-line~\cite[Proposition~9.20]{HALL}. Given a possibly unbounded operator $X$, $L_X$ stands for the left multiplication by $X$: $L_X(Y)=XY$, whereas $R_X$ stands for right multiplication by $X$: $R_X(Y)=YX$, whenever these products are well-defined.

\subsection{Continuous variable quantum systems}

A CV system with $m$ modes is defined on the Hilbert space $\cH_m\coloneqq L^2(\R^m)$, equipped with the multi-mode Fock basis $\{|\mathbf{n}\rangle\}_{\mathbf{n}\in\mathbb{N}^m}$ of eigenvectors of the number operator $N_m$:
\begin{align}\label{def:numberop}
    N_m|\mathbf{n}\rangle=\Big(\sumno_{j=1}^mn_j\Big)\,|\mathbf{n}\rangle\,.
\end{align}
We denote the canonical operators on each mode as $X_j,P_j$ ($j=1,\ldots,m$). Define the formal vector
\bb\nonumber
R\coloneqq \begin{pmatrix} X_1,\ldots, X_n,P_1, \ldots, P_n \end{pmatrix}^\intercal 
\ee
and the symplectic form
\bb
\Omega   \coloneqq \begin{pmatrix} 0 & \id \\ -\id & 0 \end{pmatrix} 
\label{Omega}
\ee
(where all blocks are $m \times m$ matrices), in terms of which the canonical commutation relations read (at least when evaluated on Schwartz functions)
\bb
[ R_j, R_k ] = i\Omega_{jk}\, .
\label{CCR}
\ee
We also introduce also the annihilation and creation operators $a_j^\dag, a_j$ ($j=1,\ldots, m$), defined by
\bb
a_j \coloneqq \frac{X_j + i P_j}{\sqrt2}\, ,\qquad a_j^\dag \coloneqq \frac{X_j - i P_j}{\sqrt2}\, .
\ee
In terms of these operators, the single-mode Fock states can be constructed as
\bb
\ket{n}\coloneqq \frac{(a^\dag)^n}{\sqrt{n!}}\ket{0}\, ,
\label{Fock_creation}
\ee
with similar formulae holding for the multi-mode case. The canonical commutation relations can also be written as $[a_j, a_k^\dag] = \delta_{jk}$.

With a slight abuse of notation, we will often denote by the same symbol $\Omega$ the symplectic form for different sets of modes. The quantum covariance matrix $V[\rho]$ and mean vector $t[\rho]$ associated with a generic state $\rho$ are defined by
\bb \label{firstsecondmoments}
t[\rho]_j \coloneqq \Tr \left(\rho R_j\right)\, ,\quad V[\rho]_{jk} \coloneqq \Tr \left(\rho \left\{R_j-t_j,\, R_k-t_k\right\}\right)\, ,
\ee
provided that these expressions are well defined. For an arbitrary $x\in \R^{2m}$, we define the associated displacement operator by
\bb
\D(x) \coloneqq e^{- i x^\intercal \Omega R}\, .
\label{D}
\ee
Note that $\D(x)^\dag = \D(x)^{-1} = \D(-x)$. By writing $x = x' \oplus x''$, where $x' \in \R^m$ groups together the first $m$ components of $x$ and $x''\in \R^m$ the last $m$, one can also introduce the complex vector
\bb
\alpha(x) \coloneqq \frac{1}{\sqrt2} \left( x' + i x''\right) ,
\label{alpha}
\ee
in terms of which we have that~\cite[Eq.~(3.3.30)--(3.3.31)]{BARNETT-RADMORE}
\begin{align}
\D(x) &= \exp \left[\sumno_{j=1}^m \left( \alpha_j(x) a_j^\dag - \alpha_j(x)^* a_j\right) \right] \label{D_complex} \\
&= e^{-\frac14 \|x\|^2} e^{\sum_j \alpha_j(x) a_j^\dag} e^{- \sum_j \alpha_j(x)^* a_j} , \label{D_split_complex}
\end{align}
where $\|x\|^2 \coloneqq \sum_j x_j^2$. In dealing with continuous variable systems, one can stick to the real notation, employing real vectors $x$, or move to the complex one, which uses the complex vectors $\alpha(x)$. In this paper we will mostly follow the former convention; however, it will be useful, occasionally, to use the latter too. In general, a prompt translation between one set of conventions and the other can be obtained by means of~\eqref{alpha}.

Another important identity involving displacement operators is 
\bb
\int \frac{d^{2m} x}{(2\pi)^m}\, \D(x) Z \D(-x) = I \Tr Z\, ,
\label{identity}
\ee
valid for all trace class operators $Z \in \cT_1(\cH_m)$, with the integral on the left-hand side converging in the weak sense (see~\cite[Proposition 3.5.1]{HOLEVO}). Using displacement operators, we can re-write~\eqref{CCR} in Weyl form as
\bb
\D(x+y) = e^{\frac{i}{2} x^\intercal\Omega y}\, \D(x) \D(y)\, .
\label{Weyl}
\ee
Coherent states are instead defined as
\bb
\ket{x} \coloneqq \D(x) \ket{0}\, ,
\label{coherent}
\ee
where $\ket{0}$ is the vacuum state. We can decompose $\ket{x}$ mode-wise as $\ket{x} = \bigotimes_{j=1}^m \ket{x^{(j)}}$, where $x^{(j)}\coloneqq (x_j, x_{m+j})^\intercal \in \R^2$ is the sub-vector of $x$ obtained by picking the coordinates corresponding to the $j^\text{th}$ coordinate and momentum, and $\ket{x^{(j)}}$ is a single-mode coherent state. In general, the latter can be represented in the single-mode Fock basis according to the identity
\bb
\ket{y} &= e^{-\|y\|^2/4} \sum_{n=0}^\infty \frac{(y_1+i y_2)^n}{\sqrt{2^n\, n!}} \ket{n} \\
&= e^{-\|y\|^2/4} \sum_{n=0}^\infty \frac{\alpha(y)^n}{\sqrt{n!}} \ket{n}\, ,
\label{coherent_Fock}
\ee
where $\alpha(y)\in \C$ is defined in~\eqref{alpha}.

For an arbitrary trace class operator $Z$, we can construct its characteristic function $\chi_Z:\R^{2m}\to \C$ by
\bb
\chi_Z(x) \coloneqq \Tr \big[Z\, \D(x)\big]\, .
\label{chi}
\ee
Characteristic functions are always bounded and furthermore continuous, because of the strong operator continuity of the mapping $x\mapsto \D(x)$. Moreover, in the sense of weak operator convergence it holds that
\bb
Z = \int \frac{d^{2m}x}{(2\pi)^m}\, \chi_Z(x)\, \D(-x)\, .
\label{reconstruction_from_chi}
\ee
By applying~\eqref{Weyl} and~\eqref{coherent_Fock}, one can prove that
\bb \label{eq:char_func}
&\chi_{\ket{x}\!\bra{y}}(u) \\ 
&= \tr[\ket{x}\!\bra{y} \D (u)] \\ &= \braket{y|\D(u)|x} \\ &= \braket{0|\D(-y) \D(u) \D (x)|0} \\
&= e^{-\frac{i}{2} u^\intercal \Omega x} \braket{0| \D(-y) \D(u+x) | 0} \\
&= e^{-\frac{i}{2} u^\intercal \Omega x} e^{\frac{i}{2} y^\intercal \Omega (u+x)} \braket{0|\D(u+x-y)|0} \\ 
&= e^{-\frac{i}{2} u^\intercal \Omega x} e^{\frac{i}{2} y^\intercal \Omega (u+x)} e^{-\frac14 \|u + x-y \|^2}.
\ee
As special cases, we conclude e.g.\ that
\bb \label{eq:char_func2}
\chi_{\ket{x}\!\bra{x}}(u) &= e^{-\frac14 \|u\|^2 - iu^\intercal\Omega x} , \\
\chi_{\ket{-x}\!\bra{x}}(u) &= e^{-\frac14 \|u - 2x\|^2} , \\
\chi_{\ket{x}\!\bra{-x}}(u) &= e^{-\frac14 \|u + 2x\|^2} . 
\ee
Given two single-mode Fock states $|k\rangle,|j\rangle$~\cite{folland2016harmonic} and $x=x'\oplus x''$: 
\begin{align}\label{eq.charfock}
\chi_{|k\rangle\langle j|}(x)=\left\{\begin{aligned}
&\sqrt{\frac{k!}{j!}}e^{-\frac{\pi}{2}|\omega|^2}(\sqrt{\pi}\omega)^{j-k}L_k^{(j-k)}(\pi|\omega|^2),~~ j\ge k\\
&\sqrt{\frac{k!}{j!}}e^{-\frac{\pi}{2}|\omega|^2}(-\sqrt{\pi}\omega)^{k-j}L_j^{(k-j)}(\pi|\omega|^2),~~ j\le k\,,
\end{aligned}\right.
\end{align}
where $\omega\coloneqq -x'+i\frac{x''}{2\pi}$. Above, the functions $L_k^{(j)}$ are the Laguerre polynomials, defined for any two integers $k,j$ as
\begin{align}\label{eq.Laguerre}
    L_k^{(j)}(x)\coloneqq \sum_{l=0}^k \frac{(k+j)!}{(k-l)!(j+l)!}\,\frac{(-x)^l}{l!}\,.
\end{align}

Interestingly, the correspondence between trace class operators and characteristic functions is injective --- even more strikingly, it can be extended to an isometry between the space of Hilbert--Schmidt operators and that of square integrable functions $\R^{2m}\to \C$~\cite[Theorem~5.3.3]{HOLEVO}. A consequence of the existence of this isometry is the quantum Plancherel theorem, which tells us that for any two trace class operators $Z, Z' \in \cT_1(\cH_m)$,
\bb
\Tr \big[Z^\dag Z'\big] = \int \frac{d^{2m} x}{(2\pi)^m}\, \chi_Z(x)^* \chi_{Z'}(x)\, .
\label{Plancherel}
\ee
The canonical commutation relations are invariant under so-called symplectic unitaries, constructed as follows. A $2m\times 2m$ real matrix $S$ such that $S\Omega S^\intercal = \Omega$ (or equivalently $S\Omega = \Omega S^{-\intercal}$) is called a symplectic matrix. From the defining relation it can be immediately seen that any symplectic matrix must satisfy $\det S=\pm 1$; however, remarkably, it turns out that in fact all symplectic matrices have determinant $1$. To any symplectic matrix we can associate a symplectic unitary $U_S$ acting on $\cH_m$. This is defined by either of the following relations
\bb
U_S^\dag R U_S^{\phantom{\dag}} = SR\, ,\qquad U_S^{\phantom{\dag}} \D(x) U_S^\dag = \D(Sx)\, ,
\label{S}
\ee
where the first identity is to be understood coordinatewise: $(SR)_j = \sum_k S_{jk} R_k$. Note that symplectic matrices form a group, denoted as $\operatorname{Sp}(2m)$, and that the correspondence $S\mapsto U_S$ is a group homomorphism.
In particular,
\bb
U_S^\dag = U_{S}^{-1} = U_{S^{-1}} = U_{\Omega\, S^{\text{\raisebox{.6pt}{$\intercal$}}} \Omega^{\text{\raisebox{.6pt}{$\intercal$}}}} .
\label{U_S_under_inversion}
\ee

Also, from~\eqref{S} we deduce that
\bb
t\left[ U_S^{\phantom{\dag}} \rho\, U_S^\dag \right] = S\, t[\rho]\, ,\qquad V\left[ U_S^{\phantom{\dag}} \rho\, U_S^\dag \right] = S\, V[\rho] S^\intercal\, ,
\label{V_transformation_S}
\ee
where we recall that $t[\omega]$ and $V[\omega]$ denote the mean vector and quantum covariance matrix of the state $\omega$ as defined in~\eqref{firstsecondmoments}. A generic Gaussian unitary is obtained as the product between a symplectic unitary and a displacement operator. States obtained by applying an arbitrary Gaussian unitary to the vacuum state $\ket{0} = \bigotimes_{j=1}^m \ket{0}_j$ are called pure Gaussian states. Often times, displacements can be ignored; we will thus write an arbitrary pure Gaussian state with zero mean as
\bb
\ket{\psi} = U_S \ket{0}\, ,
\label{pure_Gaussian_state}
\ee
where $S\in \operatorname{Sp}(2m)$ is an arbitrary symplectic matrix.

\medskip 
A quantum channel that will be particularly useful to us is the Gaussian white noise channel, defined for $\lambda > 0$ by
\bb
\cN_\lambda(\cdot) \coloneqq \int \frac{d^{2m}x}{(2\pi\lambda)^m}\, e^{-\frac{\|x\|^2}{2\lambda}} \D(x) (\cdot) \D(-x)\, .
\label{noise_random_displacement}
\ee
Using this formula one can show that
\bb
\cN_\lambda:\chi_Z \longmapsto \chi_{\cN_\lambda(Z)} (x) \coloneqq \chi_Z(x)\, e^{-\frac{\lambda}{2}\|x\|^2}\, .
\label{noise_chi}
\ee
Curiously, for $\lambda\in (0,1]$ its action can be expressed alternatively as
\bb
\cN_\lambda(\cdot) = \int\!\! \frac{d^{2m} x}{(2\pi \lambda)^m}\, \D(x)\, \tau_{\!\frac{1}{2\lambda}-\frac12}^{\otimes m} \D(x)^\dag (\cdot) \D(x)\, \tau_{\!\frac{1}{2\lambda}-\frac12}^{\otimes m} \D(x)^\dag\! ,
\label{noise_thermal_representation}
\ee
where $\tau_\nu$ is the single-mode thermal state with mean photon number $\nu$, given by
\bb
\tau_\nu = \frac{1}{\nu+1} \sum_{n=0}^\infty \left( \frac{\nu}{\nu+1}\right)^n \ketbra{n}\, ,
\label{thermal}
\ee
where $\ket{n}$ stands for the $n^{\text{th}}$ Fock state. Since we could not locate a complete proof of~\eqref{noise_thermal_representation} in the existing literature, we provide a self-contained one in Appendix~\ref{folklore_app}. A special case of~\eqref{noise_thermal_representation} is when $\lambda=1$, in which case $\tau_{\frac{1}{2\lambda}-\frac12} = \ketbra{0}$ (the vacuum state) and
\bb
\cN_1(\cdot) = \int \frac{d^{2m} x}{(2\pi)^m}\, \ketbra{x} (\cdot) \ketbra{x}\, .
\label{noise_1}
\ee
In terms of the real mean vector and quantum covariance matrix, for all $\lambda>0$ we have that
\begin{align}\label{noise_V}
\cN_\lambda:\left\{ \begin{array}{lll} V & \longmapsto & V+2\lambda I\, , \\ t & \longmapsto & t\, . \end{array}\right.
\end{align}
The above channel, $\cN_\lambda$, is just an example within the larger class of Gaussian channels. To construct the most general Gaussian channel, take two arbitrary $2m\times 2m$ real matrices $X$ and $Y$ such that
\bb
Y+i\Omega - iX\Omega X^\intercal \geq 0\, ;
\label{CP_Gaussian}
\ee
the corresponding Gaussian channel, denoted as $\mathcal{G}_{X,Y}$, then acts as
\bb
\mathcal{G}_{X,Y}: \chi_Z\mapsto \chi_{\mathcal{G}_{X,Y}(Z)}(u) \coloneqq \chi_Z\left(\Omega^\intercal X^{\intercal} \Omega\, u\right) e^{-\frac14 u^\intercal \Omega^\intercal Y \Omega u}
\label{G_channel_chi}
\ee
and
\bb
\mathcal{G}_{X,Y}:\left\{ \begin{array}{lll} V & \longmapsto & XVX^\intercal +Y\, , \\ t & \longmapsto & Xt\, . \end{array}\right.
\label{G_channel_V}
\ee
It is worth observing that~\eqref{CP_Gaussian} implies that $Y\geq 0$ is positive semi-definite.

\medskip

An important class of operators on $\cH_m$ that we consider in this paper is the set of Schwartz operators~\cite{Keyl2016}. 
They can be defined as those trace class operators whose characteristic function is a Schwartz function on $\mathbb{R}^{2m}$. We denote the set of Schwartz function as $\mathfrak{S}(\mathbb{R}^{2m})$, and that of Schwartz operators as $\mathfrak{S}(\cH_m)$. In particular, we consider the set $\mathfrak{S}(\cH_m)_0$ of Schwartz operators whose characteristic functions are compactly supported. 

As spaces of Schwartz functions are usually employed as test spaces in the rigorous theory of distributions, we can use the space $\mathfrak{S}(\cH_m)_0$ to formalise the definition of objects --- called symbols --- that would be ill-defined as operators in the traditional sense. For example, given \emph{any} smooth function $\chi$ on $\mathbb{R}^{2m}$, we can construct a symbol $\rho_\chi$ with `characteristic function' $\chi$. This is defined formally as a functional $\rho_\chi:\mathfrak{S}(\cH_m)_0 \to \C$ acting as
\begin{align}\label{symbol}
   \rho_{\chi}(Z)\coloneqq \int \frac{d^{2m}x}{(2\pi)^m}\, \chi(x)^* \chi_Z(x)\,.
\end{align}
This expression 
is justified by the fact that, when $\sigma\in \cT_1(\cH_m)$, we have $\rho_{\chi_\sigma}(Z)=\tr(\sigma Z)$ by Plancherel's theorem~\eqref{Plancherel}. 
The above functional extends to the whole space $\mathfrak{S}(\cH_m)$ 
whenever the function $x\mapsto \chi(x)^*\chi_Z(x)$ is integrable.

\subsection{Concentration inequalities}


In this paper, we make use of Bernstein's matrix inequality in order to prove that the probability that, on a well-chosen finite-dimensional subspace, the output of our shadow tomography protocol is far from the original unknown state decays exponentially fast in the number $N$ of samples used to gather statistics:

\begin{lemma}[(Bernstein's matrix inequality~\cite{Tropp2011})] Given $N$ i.i.d.\ random matrices $X_1,\dots,X_N\in \mathbb{M}_n(\mathbb{C})$ which obey $\|X_i-\mathbb{E}[X_i]\|_\infty\le R$ almost surely, for some $R >0$, the following tail bound holds:

\bb
\label{matrixBern}
&\mathbb{P}\bigg(\bigg\|\frac1N\! \sum_{i=1}^N\,(X_i\!-\!\mathbb{E}[X_i])\bigg\|_\infty\!\!\ge {\epsilon}\bigg) \le 2n\, e^{-\frac{N\epsilon^2}{2\Sigma^2 + 2R\epsilon/3}} ,
\ee
where the constant $\Sigma$ is defined as $\Sigma^2\coloneqq \|\mathbb{E}[X_1^2]\|_\infty<\infty$.
\end{lemma}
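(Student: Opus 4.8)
This is the standard matrix Bernstein inequality, and the plan is to establish it via the matrix Laplace transform method of Ahlswede--Winter in the sharpened form due to Tropp. Write $Z_i\coloneqq \frac1N\big(X_i-\mathbb{E}[X_i]\big)$, so the $Z_i$ are independent, mean-zero, and satisfy $\|Z_i\|_\infty\le R/N$ almost surely by hypothesis; we must bound $\mathbb{P}\big(\big\|\sum_{i=1}^N Z_i\big\|_\infty\ge\epsilon\big)$. Since the $X_i$ need not be Hermitian, the first step is to pass to the Hermitian dilation $\mathcal{H}(A)\coloneqq\left(\begin{smallmatrix}0 & A\\ A^\dagger & 0\end{smallmatrix}\right)$, a $2n\times 2n$ Hermitian matrix with $\lambda_{\max}(\mathcal{H}(A))=\|A\|_\infty$ and $\mathcal{H}(A)^2=\operatorname{diag}(AA^\dagger,\,A^\dagger A)$. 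It then suffices to prove a one-sided bound $\mathbb{P}\big(\lambda_{\max}\big(\sum_i Y_i\big)\ge\epsilon\big)\le 2n\,e^{-(\cdots)}$ for the independent, mean-zero Hermitian matrices $Y_i\coloneqq\mathcal{H}(Z_i)$; the dimension $2n$ of the dilated space is exactly the origin of the prefactor. (If, as in our application, the $X_i$ are already Hermitian, the dilation can be skipped and one works with $Y_i=Z_i$, using $\mathbb{E}[(X-\mathbb{E}X)^2]\preceq\mathbb{E}[X^2]$ to control the variance by $\Sigma^2$.)

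For any $\theta>0$, the matrix Markov inequality gives $\mathbb{P}\big(\lambda_{\max}\big(\sum_i Y_i\big)\ge\epsilon\big)\le e^{-\theta\epsilon}\,\mathbb{E}\operatorname{tr}\exp\big(\theta\sum_i Y_i\big)$. The crucial structural input is the subadditivity of the matrix cumulant generating function over independent summands, a consequence of Lieb's concavity theorem: $\mathbb{E}\operatorname{tr}\exp\big(\theta\sum_i Y_i\big)\le\operatorname{tr}\exp\big(\sum_i\log\mathbb{E}\,e^{\theta Y_i}\big)$. Each summand is then bounded: since $Y_i$ is mean-zero with $\|Y_i\|_\infty\le L\coloneqq R/N$, the operator inequality $e^{\theta Y_i}\preceq I+\theta Y_i+g(\theta)\,Y_i^2$ with $g(\theta)\coloneqq\frac{\theta^2/2}{1-\theta L/3}$ (valid for $0<\theta<3/L$, obtained from the scalar estimate $e^t\le 1+t+\frac{t^2/2}{1-t/3}$ through the spectral calculus), combined with $\mathbb{E}[Y_i]=0$ and operator monotonicity of the logarithm, yields $\log\mathbb{E}\,e^{\theta Y_i}\preceq g(\theta)\,\mathbb{E}[Y_i^2]$. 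Summing over $i$, and using monotonicity of $\operatorname{tr}\exp$ together with $\operatorname{tr}\exp(H)\le 2n\,e^{\lambda_{\max}(H)}$, one gets $\mathbb{E}\operatorname{tr}\exp\big(\theta\sum_i Y_i\big)\le 2n\exp\big(g(\theta)\,v\big)$ with $v\coloneqq\big\|\sum_i\mathbb{E}[Y_i^2]\big\|_\infty$; a short computation gives $v\le\Sigma^2/N$ (the $N^{-2}$ from squaring $Z_i$ cancels against the $N$ i.i.d.\ summands, with $\Sigma^2=\|\mathbb{E}[X_1^2]\|_\infty$ bounding the matrix variance of the centered summand, exactly so in the Hermitian case). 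Finally, choosing $\theta_*=\epsilon/(v+L\epsilon/3)$ makes the exponent $-\theta_*\epsilon+g(\theta_*)v$ equal to $-\frac{\epsilon^2/2}{v+L\epsilon/3}=-\frac{N\epsilon^2}{2\Sigma^2+2R\epsilon/3}$, which is precisely the claimed tail bound.

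The routine parts are the scalar estimate feeding the operator inequality and the one-variable optimization over $\theta$; these require no ingenuity. The genuine content — and the step one would spend the most effort on when reconstructing the argument from scratch — is the non-commutative probability machinery: the subadditivity of the matrix cumulant generating function, which rests on Lieb's concavity theorem and is not a formal consequence of the scalar Bernstein inequality, together with the care needed to promote the scalar moment bound to an operator inequality without commuting $Y_i$ with anything. Since all of this is standard and available off the shelf in \cite{Tropp2011}, in the paper we simply invoke the result; the sketch above records how it is proved.
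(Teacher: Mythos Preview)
Your sketch is correct and follows the standard Tropp argument, but note that the paper gives no proof of this lemma at all: it is stated with a citation to~\cite{Tropp2011} and used as a black box. You correctly anticipate this in your final paragraph. The only point worth flagging is the variance parameter: the lemma as stated uses $\Sigma^2=\|\mathbb{E}[X_1^2]\|_\infty$ rather than the centred second moment, which as you observe is justified in the Hermitian case via $\mathbb{E}[(X-\mathbb{E}X)^2]\preceq\mathbb{E}[X^2]$; since the paper applies the lemma only to Hermitian matrices of the form $H_r^{\alpha/2}\hat{\rho}_A(M)H_r^{\alpha/2}$, this is the relevant setting.
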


\section{Classical shadow tomography of a CV system} \label{sec:CVtomo}

In this section, we will have a closer look at the shadow tomography scheme sketched in Section~\ref{subsec:shadow_tomography}. The goal of the procedure is to construct a good estimator of an unknown $m$-mode state $\rho$, by measuring as few i.i.d.\ copies of $\rho$ as possible. To this end, we repeatedly sample symplectic matrices from $\operatorname{Sp}(2m)$ according to some probability distribution $\mu$, apply the corresponding symplectic unitary $U_S$ on one copy of $\rho$, implementing the transformation $\rho\mapsto U_S^{\phantom{\dag}} \rho U_S^\dag$, and subsequently perform a fixed Gaussian measurement $\big\{ \psi_x\, \frac{d^{2m}x}{(2\pi)} \big\}_{x\in\mathbb{R}^{2m}}$ on that same state. Here, $\psi_x = \ketbra{\psi_x}$ with $\ket{\psi_x}\coloneqq \D(x)\ket{\psi}$; the Gaussian state $\ket{\psi}$, which uniquely identifies the Gaussian measurement, is a fixed parameter of the shadow tomography protocol. Without loss of generality, we can take $\ket{\psi}$ to have zero mean, in which case, according to~\eqref{pure_Gaussian_state}, we can introduce a symplectic matrix $T\in \operatorname{Sp}(2m)$ satisfying that $U_T^\dag \ket{0} = \ket{\psi}$. The $\dag$ here is immaterial, thanks to~\eqref{U_S_under_inversion}. Now, acting with a displacement operator on the left and on the right yields immediately
\bb
\ket{\psi_x} = \D(x) \ket{\psi} = \gud{T} \left( \gu{T} \D(x) \gud{T} \right) \ket{0} = \gud{T} \ket{Tx}\, ,
\label{psi_x_rewrite}
\ee
where the last step is due to the action of symplectic unitaries on displacement operators, see~\eqref{S}, and to the definition~\eqref{coherent} of coherent states.

The measurement makes the system collapse into a random state $\widetilde{\rho}_x = \psi_x$, where $x$ is distributed with probability distribution
\bb
p_\rho(x|S)\, d^{2m}x = \braket{\psi_x|U_S^{\phantom{\dag}}\rho U_S^\dag|\psi_x} \frac{d^{2m}x}{(2\pi)^{m}}\, .
\ee
When combined with the probability measure $\mu$ on $\operatorname{Sp}(2m)$, this yields a joint probability distribution on $\operatorname{Sp}(2m)\times \mathbb{R}^{2m}$.

We then attempt to undo the effect of the symplectic unitary by applying $U_S^\dag$. This amounts to the mapping $\widetilde{\rho}_x = \psi_x \mapsto \gud{S}\widetilde{\rho}_x \gu{S}$. Once we average over the random variable $S$ and the random variable $X$ whore realisation we denoted with $x$, the whole process yields an effective noisy channel $\mathcal{M}$ modelled as~\eqref{noise_M}. Making this more explicit, we write the action of $\MM$ on an arbitrary trace class operator $Z\in \T(\HH_m)$ as
\bb \label{noise_M_explicit}
\mathcal{M}(Z) = \int \!\! d\mu(S) \int\!\! \frac{d^{2m}x}{(2\pi)^m}\, \braket{\psi_x| \gu{S} Z\, U_S^\dag |\psi_x}\, \gud{S} \psi_x \gu{S}\, .
\ee


Our first result allows us to express the action of $\MM$ in a form that is more easily amenable to investigation with phase space methods.

\begin{lemma}
The map $\mathcal{M}:\cT_1(\cH_m)\to\cT_1(\cH_m)$ defined in~\eqref{noise_M} can be re-expressed as 
\bb
\MM(Z) = \int \!\! d\mu(S)\, \mathcal{G}_{I,\, 2 (TS)^{-1} (TS)^{-\intercal}} (Z)\, ,
\label{M_G_expression}
\ee
where $\mathcal{G}_{X,Y}$ is the Gaussian channel defined in~\eqref{G_channel_chi}. 
Moreover, the characteristic function of $\mathcal{M}(Z)$ satisfies
\bb
\chi_{\MM(Z)} (x) &= \int \!\! d\mu(S)\, \chi_Z(x)\, 
e^{-\frac12 \left\| TS x\right\|^2} = \chi_Z(x)\, f_{\mu, T}(x)\, ,  
\label{MM_chi}
\ee
where
\bb
f_{\mu, T}(x) \coloneqq \int \!\! d\mu(S)\, e^{-\frac12 \left\| TS x\right\|^2} . \label{f_mu_T}
\ee
\end{lemma}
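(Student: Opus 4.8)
The plan is to compute the action of $\MM$ directly from the explicit integral representation~\eqref{noise_M_explicit}, pushing everything through to characteristic functions, and then to recognise the resulting operation as an average of Gaussian channels of the stated form. First I would rewrite the post-measurement state using~\eqref{psi_x_rewrite}, namely $\psi_x = \gud{T}\ketbra{Tx}\gu{T}$, and substitute this into~\eqref{noise_M_explicit}. This turns the inner $x$-integral into an integral over rotated coherent states, and the conjugations by $\gu{S}$ and $\gu{T}$ can be collected using the homomorphism property $S\mapsto U_S$ together with~\eqref{S}. The idea is that, up to the symplectic conjugations $U_S^\dag$ and $U_T$, the operation $Z\mapsto \int \frac{d^{2m}x}{(2\pi)^m}\braket{Tx|\,\cdot\,|Tx}\,\ketbra{Tx}$ is (after the change of variables $y=Tx$, which has unit Jacobian since $\det T=1$) precisely the channel $\cN_1$ of~\eqref{noise_1}. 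Keeping track of where the $S$- and $T$-conjugations land, I expect the inner integral to collapse to $U_S^\dag U_T^\dag\,\cN_1\big(U_T U_S Z\, U_S^\dag U_T^\dag\big)\,U_T U_S$, i.e.\ a fixed white-noise channel sandwiched between the Gaussian unitary implementing $TS$ and its inverse.

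Next I would convert this sandwiched form into a single Gaussian channel by computing its characteristic-function action. Conjugating $Z$ by $\gu{TS}$ acts on $\chi_Z$ by $u\mapsto \chi_Z\big((TS)^{-1}u\big)$ (by~\eqref{S}); applying $\cN_1$ multiplies by $e^{-\frac12\|u\|^2}$ (by~\eqref{noise_chi} with $\lambda=1$); and conjugating back by $\gud{TS}$ substitutes $u\mapsto TS\,u$, yielding overall $\chi_Z(u)\,e^{-\frac12\|TS u\|^2}$. This is exactly the characteristic-function action~\eqref{G_channel_chi} of the Gaussian channel $\mathcal{G}_{X,Y}$ with $X=I$ and $\frac14\,u^\intercal\Omega^\intercal Y\Omega u = \frac12\|TS u\|^2$; solving for $Y$ gives $Y = 2\,\Omega^{-\intercal}(TS)^\intercal (TS)\,\Omega^{-1}$, and using $\Omega^{-1}=-\Omega=\Omega^\intercal$ together with the symplectic identity to re-express this as $2\,(TS)^{-1}(TS)^{-\intercal}$, which matches the claim. (One should double-check the positivity condition~\eqref{CP_Gaussian} is met, but with $X=I$ it reads $Y\geq 0$, which holds since $Y$ is manifestly a positive congruence of $2I$.) Averaging over $S\sim\mu$ and using linearity then gives~\eqref{M_G_expression}, and taking characteristic functions under the integral sign gives~\eqref{MM_chi} and the definition~\eqref{f_mu_T}.

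The main obstacle I anticipate is bookkeeping rather than conceptual: getting the chain of conjugations in the right order and verifying that $TS$ (and not, say, $S^{-1}T^{-1}$ or a transpose thereof) is the symplectic matrix that ends up acting on $\chi_Z$. The subtlety is that $\gu{T}$ appears on both sides in~\eqref{psi_x_rewrite} while the $\gu{S}$-conjugation in~\eqref{noise_M_explicit} wraps the \emph{whole} measurement, so one must be careful about which unitary is innermost; the cleanest way is probably to verify the final characteristic-function identity~\eqref{MM_chi} first by a direct computation on $\braket{\psi_x|\gu{S}Z\gud{S}|\psi_x}$ using~\eqref{eq:char_func2} and~\eqref{reconstruction_from_chi}, and then read off~\eqref{M_G_expression} from it. A secondary technical point is justifying the interchange of the $S$-integral with the operator-valued $x$-integral and with the characteristic-function evaluation; this is routine given that $\mu$ is a probability measure and the $x$-integrals converge in the weak operator topology as in~\eqref{identity}, but it is worth a sentence.
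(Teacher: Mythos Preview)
Your proposal is correct and follows essentially the same route as the paper: rewrite $\psi_x$ via~\eqref{psi_x_rewrite}, change variables $y=Tx$ to recognise $\cN_1$ from~\eqref{noise_1}, and arrive at $\gud{TS}\,\cN_1\big(\gu{TS}Z\gud{TS}\big)\gu{TS}$. The only difference is the order of the final two steps: the paper first identifies this as $\mathcal{G}_{I,\,2(TS)^{-1}(TS)^{-\intercal}}$ by computing its action on covariance matrices via~\eqref{V_transformation_S} and~\eqref{noise_V}, and only afterwards reads off the characteristic function from~\eqref{G_channel_chi}; you instead compute the characteristic-function action directly and then solve for $Y$. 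Both are equivalent and equally clean, and your bookkeeping on the conjugations and the symplectic manipulation of $Y$ is correct.
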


\begin{proof}
We start by noticing that due to~\eqref{psi_x_rewrite} and thanks the fact that $S\mapsto \gu{S}$ is a group homomorphism, one obtains that
\bb
U_S^\dag \ket{\psi_x} = U_S^\dag U_T^\dag \ket{Tx} = \left(U_T U_S\right)^\dag \ket{Tx} = U_{TS}^\dag \ket{Tx}\, .
\label{U_S_psi_x}
\ee
Thus, for any trace class operator $Z \in \cT_1(\cH_m)$
\begin{align}
&\int\!\! \frac{d^{2m}x}{(2\pi)^m}\, \braket{\psi_x| \gu{S} Z U_S^\dag |\psi_x}\, \gud{S} \psi_x \gu{S} \nonumber \\
&\quad \texteq{(i)} \int\!\! \frac{d^{2m}x}{(2\pi)^m}\, \braket{Tx| \gu{TS} Z \gud{TS} |Tx}\, \gud{TS} \ketbra{Tx} \gu{TS} \nonumber\\
&\quad \texteq{(ii)} \int\!\! \frac{d^{2m}y}{(2\pi)^m}\, \braket{y| \gu{TS} Z \gud{TS} |y}\, \gud{TS} \ketbra{y} \gu{TS} \label{fixed_S_channel} \\
&\quad = \gud{TS} \left(\int\!\! \frac{d^{2m}y}{(2\pi)^m}\, \ketbra{y} \gu{TS} Z \gud{TS} \ketbra{y} \right) \gu{TS} \nonumber \\
&\quad \texteq{(iii)} \gud{TS}\, \cN_1\left(\gu{TS} Z \gud{TS} \right) \gu{TS} \nonumber
\end{align}
Here, in~(i) we applied~\eqref{U_S_psi_x} twice; in~(ii) we changed variable, defining $y\coloneqq Tx$, and used the fact that $T$, being symplectic, has determinant $1$; and in~(iii) we employed the representation in~\eqref{noise_1} for the action of $\cN_1$.

Now, let us compute the action of the above transformation at the level of covariance matrices. By applying~\eqref{V_transformation_S} and~\eqref{noise_V}, we see that
\begin{align*}
&V\left[ \gud{TS}\, \cN_1\left(\gu{TS} Z\, \gud{TS} \right) \gu{TS} \right] \\
&\qquad = (TS)^{-1} V\left[ \cN_1\left(\gu{TS} Z\, \gud{TS} \right) \right] (TS)^{-\intercal} \\
&\qquad = (TS)^{-1} \left( V\left[\gu{TS} Z\, \gud{TS} \right] + 2I \right) (TS)^{-\intercal} \\
&\qquad = (TS)^{-1} \left( TS\, V[Z] (TS)^\intercal + 2I \right) (TS)^{-\intercal} \\
&\qquad = V[Z] + 2 (TS)^{-1} (TS)^{-\intercal}\, .
\end{align*}
Comparing the above calculation with~\eqref{G_channel_V}, we see that
\bb
\gud{TS}\, \cN_1\left(\gu{TS} Z\, \gud{TS} \right) \gu{TS} = \mathcal{G}_{I,\, 2 (TS)^{-1} (TS)^{-\intercal}}(Z)\, .
\ee
Using this insight in~\eqref{fixed_S_channel} shows that
\bb
\int\!\! \frac{d^{2m}x}{(2\pi)^m}\, \braket{\psi_x| \gu{S} Z U_S^\dag |\psi_x}\, \gud{S} \psi_x \gu{S} = \mathcal{G}_{I,\, 2 (TS)^{-1} (TS)^{-\intercal}}(Z)\, .
\ee
In turn, the above identity yields~\eqref{M_G_expression} upon integration in $S$ with respect to the measure $\mu$.

We conclude by computing the characteristic function~\eqref{chi} of both sides of~\eqref{M_G_expression}. For an arbitrary $x\in \mathbb{R}^{2m}$, we obtain that
\bb
\chi_{\MM(Z)}(x) &= \int d\mu(S)\, \chi_{\mathcal{G}_{I,\, 2 (TS)^{-1} (TS)^{-\intercal}}(Z)}(x) \\
&\texteq{(iv)} \int d\mu(S)\, \chi_Z(x) e^{-\frac12 x^\intercal \Omega^\intercal (TS)^{-1} (TS)^{-\intercal} \Omega x} \\
&\texteq{(v)} \int d\mu(S)\, \chi_Z(x) e^{-\frac12 x^\intercal (TS)^{\intercal} (TS) x} \\
&= \int d\mu(S)\, \chi_Z(x) e^{-\frac12 \left\| TS x\right\|^2} .
\ee
Here, (iv)~follows from~\eqref{G_channel_chi}, and (v)~is due to the fact that since $TS$ is symplectic, $\Omega^\intercal (TS)^{-1} = (TS)^\intercal \Omega^\intercal$, or upon transposing $(TS)^{-\intercal} \Omega = \Omega (TS)$.
\end{proof}

We therefore see that the action of $\MM$ is actually very simple, amounting to a point-wise multiplication at the level of the characteristic function. Channels of this form are particular examples of so-called \emph{linear bosonic channels}, introduced and studied by Holevo and Werner~\cite{holwer}. Although we will not use this observation in this work, it is worth noting that linear bosonic channels are always approximable --- in the strong operator sense --- by \emph{Gaussian dilatable} channels, i.e.\ channels admitting a Stinespring representation in which the unitary is Gaussian and the ancilla is arbitrary~\cite{KK-VV, G-dilatable}.

As it turns out, $\MM$ is a special type of linear bosonic channel whose action is representable as a random displacement. Namely, using~\eqref{Weyl} one sees that
\begin{align}
\MM(Z) =&\ \int \frac{d^{2m}x}{(2\pi)^m}\, \widetilde{f}_{\mu,T}(x) \, \D(x) Z \D(-x)\, , \label{MM_random_displacement} \\
\widetilde{f}_{\mu,T}(x) \coloneqq&\ \int \frac{d^{2m}u}{(2\pi)^m}\, f_{\mu,T}(u) \, e^{i x^\intercal \Omega u} = f_{\mu,T}(x)\, ,
\label{Fourier_f_mu_T}
\end{align}
i.e.\ $f_{\mu,T}$ coincides with its own Fourier transform. The rigorous proof of the general validity of~\eqref{Fourier_f_mu_T} is deferred to Appendix~\ref{Fourier_f_mu_T_app}. In light of this discussion, it is easy to write down, at least formally, the inverse of $\MM$, which acts as
\bb
\MM^{-1} : \chi_Z \longmapsto \chi_{\MM^{-1}(Z)} (x) \coloneqq \frac{1}{f_{\mu,T}(x)}\, \chi_Z(x)\, .
\label{inverse_MM_chi}
\ee
Note that $f_{\mu,T}(x)>0$ for all $x\in \R^{2m}$, so the above expression is always well defined. In particular, $\MM$ is injective as a linear map, and thus it is invertible on its range.

The problem, naturally, is that such a range is in general significantly smaller than the space of trace class operators (we will shortly see an example of this). This entails that the right-hand side of~\eqref{inverse_MM_chi} is not always the characteristic function of a trace class operator. This, however, does not pose any problem since  the classical shadow obtained at the output of $\MM^{-1}$ is, as the name suggests, just a classical object which we will merely use as a computational tool.

\medskip

Incidentally, also $\MM^{-1}$, just like $\MM$, acts as a mere point-wise multiplication at the level of characteristic functions. This implies that one can also try, as done in~\eqref{MM_random_displacement}, to represent it as an affine combination of displacement operators, i.e.\ by writing
\bb
\MM^{-1} (Z) = \int \frac{d^{2m}x}{(2\pi)^m}\, \widetilde{g}_{\mu,T}(x) \, \D(x) Z \D(-x)\, ,
\ee
where $g_{\mu,T}(x) \coloneqq \frac{1}{f_{\mu,T}(x)}$, and
\bb
\widetilde{g}_{\mu,T}(x) \coloneqq \int \frac{d^{2m}u}{(2\pi)^m}\, g_{\mu,T}(u) \, e^{i x^\intercal \Omega u} 
\ee
would be its Fourier transform. The trouble, of course, is that $g_{\mu,T}$ will not be absolutely integrable --- and not even bounded --- in general, so there is little hope to define its Fourier transform unless one appeals to the theory of distributions. 

\medskip

To provide a solution to this apparent issue, let us return to our original problem. We can now formally construct the classical shadow
\bb\label{sigmaformaldef}
\hat{\rho} &= \MM^{-1}\left( \gud{S} \D(x) \psi \D(-x) \gu{S} \right) \\
&= \MM^{-1}\left( \gud{TS} \ketbra{Tx} \gu{TS} \right) \,\nonumber .
\ee
The shadow, $\hat{\rho}$, is more rigorously defined as a functional on the set $\mathfrak{S}(\cH_m)_0$ of Schwartz operators with compactly supported characteristic functions via~\eqref{symbol}. Its corresponding improper characteristic function can hence be computed as follows:
\bb
\hat{\chi}(u)\equiv \chi_{\hat{\rho}}(u) &= \chi_{\MM^{-1}\left( \gud{TS} \ket{Tx}\!\bra{Tx} \gu{TS} \right)}(u) \\
&= \frac{1}{f_{\mu,T}(u)}\, \chi_{\gud{TS} \ket{Tx}\!\bra{Tx} \gu{TS}}(u) \\
&= \frac{1}{f_{\mu,T}(u)}\, \chi_{\ket{Tx}\!\bra{Tx}}(TSu) \\
&= {\frac{1}{f_{\mu,T}(u)}\, e^{-\frac14 \left\|TSu\right\|^2 - i u^\intercal \Omega S^{-1} x}}\, ,
\label{chi_sigma}
\ee
where in the last step we used~\eqref{eq:char_func2}. In other words, for any $Z\in\mathcal{S}(\cH_m)$:
\begin{align}\label{hatrho}
    \hat{\rho}(Z)\coloneqq \int \frac{d^{2m}u}{(2\pi)^m}\,\hat{\chi}(u)^*\,\chi_Z(u)\, ,
\end{align}
whenever the function $u\mapsto \hat{\chi}(u)^* \chi_Z(u)$ is integrable. In what follows, we will also consider the reduced shadow over a subset $A$ of $|A|=r$ modes, formally given by the partial trace $``\hat{\rho}_A\coloneqq \tr_{A^c}(\hat{\rho})"$ of the shadow $\hat{\rho}$. Again, we will use the characteristic function to rigorously define it: given a region $A$ of $r$ modes, it is defined for any $u_A\in\mathbb{R}^{2r}$ as
\begin{align}\label{partialtraceshadow}
    \hat{\chi}_A(u_A)\equiv \chi_{\hat{\rho}_A}(u_{A})\coloneqq \chi_{\hat{\rho}}(u_A,0)\,.
\end{align}
In that case, we write for any $Z_A\in \mathfrak{S}(\cH_r)$:
\begin{align}\label{defrhoA}
    \hat{\rho}_A(Z_A)\coloneqq \int \frac{d^{2r}u_A}{(2\pi)^r}\,\hat{\chi}_{A}(u_A)^*\,\chi_{Z_A}(u_A)\, ,
\end{align}
whenever the function $u_A\mapsto \hat{\chi}_{A}(u_A)^*\,\chi_{Z_A}(u_A)$ is integrable.

The following lemma further justifies the claim made in~\eqref{averageshadowstate} that the shadow $\hat{\rho}$ has 
average $\rho$ by construction.

\begin{lemma}\label{partialparsevalaverage}
For any subset $A$ of $|A|=r$ modes, and all $Z_A\in \mathfrak{S}(\cH_r)_0$ with corresponding characteristic function $\chi_{Z_A}$, the random variable $ \hat{\rho}_A(Z_A)$ defined via~\eqref{defrhoA} is integrable and 
\begin{align*}
\mathbb{E}\big[   \hat{\rho}_A( Z_A)\big]\,= \tr[\rho_A Z_A]\,,
\end{align*}
where the conditional expectation is taken with respect to the probability density function $p_T(S,x)\coloneqq \braket{Tx|U_{TS}^{\phantom{\dag}}\rho U_{TS}^\dag|Tx}$ with respect to 
$\mu\otimes \frac{d^{2m}x}{(2\pi)^m}$ on $\operatorname{Sp}(2m) \times \mathbb{R}^{2m}$. The result extends to $Z_A\in\mathfrak{S}(\cH_r)$ under the condition of integrability with respect to 
$\mu\otimes \frac{d^{2m}x}{(2\pi)^m} \otimes \frac{d^{2r}u}{(2\pi)^r}$ of the function
\bb
(S,x,u)\mapsto \frac{\chi_{Z_A}(u)\, \chi_{\ket{Tx}\!\bra{Tx}}(TS(u,0)^\intercal)\, p_T(S,x)}{f_{\mu,T}(u,0)}\, .
\ee
\end{lemma}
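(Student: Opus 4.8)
\medskip

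The plan is to unfold the definition \eqref{defrhoA} of $\hat{\rho}_A(Z_A)$, insert the explicit improper characteristic function \eqref{chi_sigma} of the shadow restricted to $A$ via \eqref{partialtraceshadow}, and then compute the expectation over $(S,x)$ against the probability density $p_T(S,x) = \braket{Tx|U_{TS}^{\phantom{\dag}}\rho U_{TS}^\dag|Tx}$. Concretely, by \eqref{partialtraceshadow} and \eqref{chi_sigma} we have $\hat{\chi}_A(u_A) = \frac{1}{f_{\mu,T}(u_A,0)}\, e^{-\frac14\|TS(u_A,0)^\intercal\|^2 - i(u_A,0)^\intercal\Omega S^{-1}x} = \frac{1}{f_{\mu,T}(u_A,0)}\,\chi_{\ket{Tx}\!\bra{Tx}}\big(TS(u_A,0)^\intercal\big)$, so that
\begin{align*}
\hat{\rho}_A(Z_A) = \int\frac{d^{2r}u_A}{(2\pi)^r}\,\frac{\chi_{Z_A}(u_A)\,\overline{\chi_{\ket{Tx}\!\bra{Tx}}\big(TS(u_A,0)^\intercal\big)}}{f_{\mu,T}(u_A,0)}\,.
\end{align*}
First I would check, in the case $Z_A\in\mathfrak{S}(\cH_r)_0$, that this integral is absolutely convergent and that the triple integral over $(S,x,u_A)$ against $\mu\otimes\frac{d^{2m}x}{(2\pi)^m}\otimes\frac{d^{2r}u_A}{(2\pi)^r}$ is absolutely convergent as well; here compact support of $\chi_{Z_A}$ together with the lower bound $f_{\mu,T}>0$ (continuous, hence bounded below on compacts) controls the $u_A$-integral uniformly, and the Gaussian factor $|\chi_{\ket{Tx}\!\bra{Tx}}|\le 1$ together with $\int p_T(S,x)\,\frac{d^{2m}x}{(2\pi)^m}\,d\mu(S)=\tr\rho=1$ controls the rest. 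This justifies applying Fubini to interchange $\mathbb{E}_{S,X}$ with $\int du_A$.

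\medskip

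The heart of the computation is then the inner conditional expectation over $x$ at fixed $S$. After the interchange, $\mathbb{E}\big[\hat{\rho}_A(Z_A)\big] = \int\frac{d^{2r}u_A}{(2\pi)^r}\,\frac{\chi_{Z_A}(u_A)}{f_{\mu,T}(u_A,0)}\int d\mu(S)\int\frac{d^{2m}x}{(2\pi)^m}\,\overline{\chi_{\ket{Tx}\!\bra{Tx}}\big(TS(u_A,0)^\intercal\big)}\,\braket{Tx|U_{TS}\rho U_{TS}^\dag|Tx}$. I would substitute $y=Tx$ (Jacobian $1$ since $T$ is symplectic), write $\overline{\chi_{\ket{y}\!\bra{y}}(TS(u_A,0)^\intercal)} = \braket{y|\D(-TS(u_A,0)^\intercal)|y}$, and recognise the $y$-integral $\int\frac{d^{2m}y}{(2\pi)^m}\,\ketbra{y}\braket{y|A|y}$-type expression as an application of \eqref{noise_1}, i.e.\ $\int\frac{d^{2m}y}{(2\pi)^m}\,\braket{y|U_{TS}\rho U_{TS}^\dag|y}\,\D(-v)$ where $v = TS(u_A,0)^\intercal$; pairing against $\D(-v)$ and using the reconstruction formula this equals $\tr\big[U_{TS}\rho U_{TS}^\dag\,\D(v)\big]\cdot(\text{up to normalisation})$ — more cleanly, $\int\frac{d^{2m}y}{(2\pi)^m}\braket{y|\Lambda|y}\braket{y|\D(-v)|y}$ is $\tr[\cN_1(\Lambda)\D(-v)] = \chi_{\cN_1(\Lambda)}(-v)$ with $\Lambda = U_{TS}\rho U_{TS}^\dag$, and by \eqref{noise_chi} this is $\chi_\Lambda(-v)e^{-\frac12\|v\|^2}$. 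Using \eqref{S}, $\chi_{U_{TS}\rho U_{TS}^\dag}(-v) = \chi_\rho(-(TS)^{-1}v) = \chi_\rho(-(u_A,0)^\intercal)$, and $\|v\|^2 = \|TS(u_A,0)^\intercal\|^2$, so the $y$-integral collapses to $\chi_\rho(-(u_A,0)^\intercal)\,e^{-\frac12\|TS(u_A,0)^\intercal\|^2}$.

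\medskip

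Now the $S$-integral of $e^{-\frac12\|TS(u_A,0)^\intercal\|^2}$ against $\mu$ is exactly $f_{\mu,T}(u_A,0)$ by definition \eqref{f_mu_T}, which cancels the denominator. We are left with $\mathbb{E}\big[\hat{\rho}_A(Z_A)\big] = \int\frac{d^{2r}u_A}{(2\pi)^r}\,\chi_{Z_A}(u_A)\,\chi_\rho(-(u_A,0)^\intercal) = \int\frac{d^{2r}u_A}{(2\pi)^r}\,\chi_{Z_A}(u_A)\,\overline{\chi_{\rho_A}(u_A)}$, where the last equality uses $\chi_\rho((u_A,0)^\intercal) = \chi_{\rho_A}(u_A)$ (the characteristic function of the reduced state is the restriction to the relevant coordinates) and $\chi_\rho(-u) = \overline{\chi_\rho(u)}$ for self-adjoint $\rho$. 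By the quantum Plancherel theorem \eqref{Plancherel} this is precisely $\tr[\rho_A Z_A]$ — modulo checking the conjugation/ordering conventions match \eqref{Plancherel} (one may need $\chi_{Z_A}^*$ versus $\chi_{Z_A}$, resolved by self-adjointness of the Fock dyadics or by the symmetry $\chi_Z(-u)=\overline{\chi_{Z^\dag}(u)}$). For the extension to general $Z_A\in\mathfrak{S}(\cH_r)$, the same chain of equalities goes through verbatim once the stated integrability hypothesis on $(S,x,u)\mapsto \chi_{Z_A}(u)\chi_{\ket{Tx}\!\bra{Tx}}(TS(u,0)^\intercal)p_T(S,x)/f_{\mu,T}(u,0)$ licenses the use of Fubini; no compact-support argument is then needed.

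\medskip

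The step I expect to be the main obstacle is not any single algebraic identity but the careful handling of the (improper) object $\hat\rho_A$ as a distribution: one must make sure that every interchange of integrals is justified by genuine absolute convergence and that the reconstruction step via \eqref{noise_1} is applied to the legitimate trace-class operator $U_{TS}\rho U_{TS}^\dag$ rather than to the formal shadow. The compact-support case is clean; the general case hinges entirely on the integrability hypothesis being exactly the right one to run Fubini, so the bookkeeping of which variables are integrated in which order — and verifying that $1/f_{\mu,T}$ never spoils integrability once multiplied back by the $f_{\mu,T}$ coming out of the $S$-integral — is where the care is needed.
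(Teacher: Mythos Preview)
Your proposal is correct and follows essentially the same route as the paper's proof: justify absolute integrability (compact support of $\chi_{Z_A}$ plus $f_{\mu,T}>0$ bounded below on compacts, or the stated hypothesis in the general case), apply Fubini, change variables $y=Tx$, recognise the $y$-integral as $\chi_{\cN_1(U_{TS}\rho U_{TS}^\dag)}$ via~\eqref{noise_1}, use~\eqref{noise_chi} and~\eqref{S} so that the $S$-integral reproduces $f_{\mu,T}$ and cancels the denominator, and finish with Plancherel. Your handling of the complex-conjugation bookkeeping is in fact slightly more explicit than the paper's, which silently drops the conjugate on $\chi_{\ket{Tx}\!\bra{Tx}}$ and writes $\int \chi_Z\,\chi_\rho$ at the end.
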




\begin{proof}
We present the proof for $|A|=m$ since the case $|A|=r<m$ follows the exact same strategy using that $\hat{\chi}(u_A,0)=\hat{\chi}_{A}(u_A)$ by definition. Now, on the one hand, if $Z\in \mathfrak{S}(\cH_m)_0$ then the characteristic function $\chi_Z$ of $Z$ is compactly supported; therefore, owing to the boundedness of $\chi_Z$ we deduce that
the function
\bb
(S,x,u)\mapsto \frac{\chi_Z(u)\chi_{\ket{Tx}\!\bra{Tx}}(TSu)\,p_T(S,x)}{f_{\mu,T}(u)}
\ee
is integrable with respect to $\mu\otimes \frac{d^{2m}x}{(2\pi)^m} \otimes \frac{d^{2m}u}{(2\pi)^m}$. On the other hand, if only $Z\in \mathfrak{S}(\cH_m)$ such integrability is assumed by hypothesis. Therefore, in both cases thanks to Fubini's theorem and~\eqref{chi_sigma} we have that
\begin{align*}
&\mathbb{E}\big[\hat{\rho}( Z)\big] \\
&\ =\! \int \!\!\!\frac{d^{2m}u}{(2\pi)^m}\! \int\!\!\! d\mu(S) \!\int\!\!\! \frac{d^{2m}x}{(2\pi)^m} \,\chi_Z(u)\,\frac{\chi_{\ket{Tx}\!\bra{Tx}}(TSu)\, p_T(S,x)}{f_{\mu,T}(u)} \\
&\ \texteq{(i)}\! \int\!\!\! \frac{d^{2m}u}{(2\pi)^m} \! \int\!\!\! d\mu(S)\, \chi_Z(u)\,\frac{\chi_{\cN_1\left(U_{TS}^{\phantom{\dag}}\rho U_{TS}^\dag\right)}(TSu)}{f_{\mu,T}(u)} \\
&\ \texteq{(ii)}\! \int\!\!\! \frac{d^{2m}u}{(2\pi)^m} \! \int\!\!\! d\mu(S)\, \frac{\chi_Z(u)\,\chi_{U_{TS}^{\phantom{\dag}}\rho U_{TS}^\dag}(TSu)\,e^{-\frac{1}{2}\|TSu\|^2}}{f_{\mu,T}(u)} \\
&\ \texteq{(iii)}\! \int\!\!\! \frac{d^{2m}u}{(2\pi)^m} \! \int\!\!\! d\mu(S)\, \frac{\chi_Z(u)\,\chi_{\rho}(u)\,e^{-\frac{1}{2}\|TSu\|^2}}{f_{\mu,T}(u)} \\
&\ =\! \int\!\!\! \frac{d^{2m}u}{(2\pi)^m}\, \chi_Z(u)\,\chi_{\rho}(u)\\
&\ =\tr[Z\rho]\,.
\end{align*}
In~(i) we changed variable, defining $y\coloneqq Tx$, used the fact that $T$, being symplectic, has determinant $1$, and employed the representation in~\eqref{noise_1}; in~(ii) we used~\eqref{noise_chi}; finally, in~(iii) we leveraged~\eqref{S}.
\end{proof}

In this section, we have made rigorous our first intuitive notion of a CV shadow by means of its characteristic function. We have also seen that the latter reduces to the characteristic function 
of the original unknown state $\rho$ on average when integrated against sufficiently smooth functions. The goal of the next section is to prove that under some physically relevant conditions such as energy boundedness of the state $\rho$, these integrals are enough to estimate expected values of observables with respect to the state $\rho$ to high accuracy.

\subsection{Finite moments assumption}

As we saw in the previous section, the shadow $\hat{\rho}$, which we formally defined through its characteristic function, is in general unbounded. In this section, we show that this issue can be fixed if we further assume that the photon number distribution of the unknown state $\rho$ satisfies some 
moment constraints. This allows us to show that the projection of $\rho$ onto a certain finite subspace of Hilbert--Schmidt operators is enough to get a sufficiently good approximation of it while running the shadow tomography protocol.

More precisely, we argue that all we need is to ensure finite rank convergence of density operators and an energy constraint on the initial state $\rho$.




Let $Z\in\cT_2(\cH_m)$ be a Hilbert Schmidt operator and denote by $\chi_Z$ its characteristic function. By density, one can find a function $\widetilde \chi_Z \in C_c^{\infty}(\mathbb R^{2m})$ such that $\Vert \chi_Z-\widetilde \chi_Z \Vert_{L^2(\mathbb R^{2m})}<\varepsilon$ and $\widetilde \chi_Z$ is the characteristic function of some operator $\widetilde{Z}$, i.e. $\chi_{\widetilde{Z}}=\widetilde{\chi}_Z$. The operator $\widetilde Z$ is a Schwartz operator, and is in particular trace class, and clearly $\Vert Z -\widetilde Z \Vert_2 <\varepsilon$~\cite{Keyl2016}. Thus, $\hat{\rho}( \widetilde{Z})$ is now necessarily a well-defined quantity by the quantum Plancherel formula (cf.~\eqref{Plancherel}). In particular, we have that, for any $Z\in\cT_2(\cH_m)$,
\begin{align*}
 \vert \Tr(Z\rho)-\hat{\rho}(\widetilde Z)\vert &\le \vert \Tr((Z-\widetilde Z) \rho) + \vert \Tr(\rho \widetilde Z )-\hat{\rho}(\widetilde{Z})\vert \\
 &\le \varepsilon +\vert \Tr(\rho \widetilde Z )-\hat{\rho}(\widetilde{Z})\vert\,.
 \end{align*}
This can be used in the above context by choosing $Z=Z_{\mathbf{n}_1\mathbf{n}_2} = \ketbraa{\mathbf{n}_1}{\mathbf{n}_2}$, given two multi-mode Fock states $\ketbraa{\mathbf{n}_1}{\mathbf{n}_2}\in \cH_m$, and defining
\begin{equation}
\label{eq:projection1}
 \mathcal P_{M}(T) \coloneqq  \sum_{\mathbf{n}_1,\mathbf{n}_2 \in \{0,..,M\}^m}  \langle \mathbf{n}_1|T|\mathbf{n}_2\rangle\, \ketbraa{\mathbf{n}_1}{\mathbf{n}_2}\equiv P_M T P_M\,.
 \end{equation}
Using the above approximation scheme, we can approximate the Schwartz operators $Z_{\mathbf{n}_1\mathbf{n}_2}$ by Schwartz operators $\widetilde Z_{\mathbf{n}_1\mathbf{n}_2}$ with smooth compactly supported characteristic functions and define an auxiliary map
\begin{equation}
\label{eq:projection2}
\widetilde{\mathcal P}_{M}(T)\coloneqq \sum_{\mathbf{n}_1,\mathbf{n}_2 \in \{0,..,M\}^m}  \Tr\left[\widetilde{Z}_{\mathbf{n}_2\mathbf{n}_1}T\right] \ketbraa{\mathbf{n}_1}{\mathbf{n}_2}\, .
\end{equation}
For sake of simplicity, we will use the same notation for projected reduced states on subsets of $r<m$ modes. We further assume that the characteristic function of $\widetilde Z_{\mathbf{n}_1\mathbf{n}_2}$ is obtained from that of $\ketbraa{\mathbf{n}_1}{\mathbf{n}_2}$ by point-wise multiplication by a simple compactly supported function. Namely, we set
\begin{align}\label{tildechin1n2}
\widetilde{\chi}_{\mathbf{n}_1\mathbf{n}_2}(u) \coloneqq \widetilde{\chi}_{Z_{\mathbf{n_1}\mathbf{n}_2}}(u) = \chi_{\ket{\mathbf{n}_1}\!\bra{\mathbf{n}_2}}(u) \prod_{j=1}^m \xi_{\eta,R}\big(u^{(j)}\big)\, ,
\end{align}
where as usual $u^{(j)}\coloneqq (u_j, u_{j+m})^\intercal \in \R^2$, and, given some parameters $0<\eta<R$, the smooth function $\xi_{\eta,R}:\R^2\to [0,1]$ satisfies that
\bb \label{xi}
\xi_{\eta,R}(z) = \left\{\begin{array}{ll} 1, & \qquad \text{if $\|z\|\le \eta$,} \\[1ex] 0, & \qquad \text{if $\|z\|\ge R$.} \end{array} \right.
\ee
In particular, this allows us to evaluate $\rho_\chi \big(\widetilde Z_{\mathbf{n}_2\mathbf{n}_1}\big)$ for functionals $\rho_\chi$ defined as in~\eqref{symbol}. 



\begin{lemma} \label{double_truncation_lemma}
For all $\alpha>0$, all non-negative integers $m$ (number of modes) and $M$ (Fock truncation number), and all $\eta\geq 0$, it holds that
\bb
&\left\|\mathcal{L}_{(I+N_m)^\alpha} \mathcal{R}_{(I+N_m)^{\alpha}}\left(\mathcal{P}_M-\widetilde{\mathcal{P}}_{{M}}\right)\right\|_{2\to 1} \leq \delta_0(\eta, M, \alpha, m)\, ,
\ee
where the rapidly vanishing function $\delta_0$ is defined by
\bb
&\delta_0(\eta, M, \alpha, m) \\[-1ex]
&\quad \coloneqq (m M+1)^{2\alpha} (M+1)^m\, 3^{mM} e^{-\frac{m}{4} \eta^2} \left(\sum_{p=0}^{2M} \frac{\eta^{2p}}{2^p p!} \right)^{m/2} .
\ee
\end{lemma}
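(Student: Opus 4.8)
The quantity to bound is an operator norm $\|\cdot\|_{2\to1}$ of the difference between two maps acting on Hilbert--Schmidt operators, sandwiched between left and right multiplication by $(I+N_m)^\alpha = H_m^\alpha$. The first thing I would do is make the difference explicit: both $\mathcal{P}_M$ and $\widetilde{\mathcal{P}}_M$ output operators supported on the same finite-dimensional Fock block $\mathrm{span}\{\ket{\mathbf{n}}:\mathbf{n}\in\{0,\dots,M\}^m\}$, with coefficients $\langle\mathbf{n}_1|T|\mathbf{n}_2\rangle$ versus $\Tr[\widetilde Z_{\mathbf{n}_2\mathbf{n}_1}T]$ respectively. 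By the quantum Plancherel theorem~\eqref{Plancherel}, $\Tr[\widetilde Z_{\mathbf{n}_2\mathbf{n}_1}T] = \langle \chi_{\widetilde Z_{\mathbf{n}_2\mathbf{n}_1}}, \chi_T\rangle$, and using the factorised form~\eqref{tildechin1n2} of $\widetilde\chi_{\mathbf{n}_1\mathbf{n}_2}$, the coefficient difference is $\langle \mathbf{n}_1|T|\mathbf{n}_2\rangle - \langle \chi_{\ket{\mathbf{n}_2}\!\bra{\mathbf{n}_1}}\prod_j\xi_{\eta,R}(u^{(j)}), \chi_T\rangle$, which I can rewrite as $\langle \chi_{\ket{\mathbf{n}_2}\!\bra{\mathbf{n}_1}}(1-\prod_j\xi_{\eta,R}), \chi_T\rangle$. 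So the whole map difference, after the $H_m^\alpha$-conjugation, is a finite sum over $\mathbf{n}_1,\mathbf{n}_2$ of rank-one maps $T\mapsto c_{\mathbf n_1\mathbf n_2}(T)\, H_m^\alpha\ketbraa{\mathbf n_1}{\mathbf n_2}H_m^\alpha$; note $H_m^\alpha\ket{\mathbf n} = (1+\sum_j n_j)^\alpha\ket{\mathbf n}$, contributing a factor $\le (1+mM)^{2\alpha}$.

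Next I would bound, for each fixed pair $\mathbf n_1,\mathbf n_2$ and each HS operator $T$ with $\|T\|_2\le 1$, the scalar $|c_{\mathbf n_1\mathbf n_2}(T)| = \big|\int \frac{d^{2m}u}{(2\pi)^m}\,\chi_{\ket{\mathbf n_2}\!\bra{\mathbf n_1}}(u)^*\big(1-\prod_j\xi_{\eta,R}(u^{(j)})\big)\chi_T(u)\big|$. By Cauchy--Schwarz on the Plancherel inner product this is at most $\big\|\chi_{\ket{\mathbf n_2}\!\bra{\mathbf n_1}}(1-\prod_j\xi_{\eta,R})\big\|_{L^2}\cdot\|\chi_T\|_{L^2} \le \big\|\chi_{\ket{\mathbf n_2}\!\bra{\mathbf n_1}}\cdot\mathbbm 1\{\|u^{(j)}\|\ge\eta \text{ for some }j\}\big\|_{L^2}$, using $0\le\xi_{\eta,R}\le1$ and $\xi_{\eta,R}\equiv1$ on the $\eta$-ball, plus $\|\chi_T\|_{L^2}=\|T\|_2\le1$. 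This reduces the problem to an $L^2$-tail estimate of the explicit Fock characteristic function~\eqref{eq.charfock}, which factorises over modes: $\chi_{\ket{\mathbf n_2}\!\bra{\mathbf n_1}}(u) = \prod_j \chi_{\ket{(n_2)_j}\!\bra{(n_1)_j}}(u^{(j)})$. Summing over the $\le (M+1)^{2m}$ pairs and the mode-indices where the tail event occurs (a union bound giving the $m$-dependent combinatorial prefactors), the task is to control single-mode integrals $\int_{\|z\|\ge\eta} |\chi_{\ket k\!\bra j}(z)|^2\,d^2z$ for $0\le j,k\le M$.

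For the single-mode tail I would use the Laguerre form~\eqref{eq.charfock}: with $\omega = -x'+ix''/(2\pi)$ one has $|\chi_{\ket k\!\bra j}(z)|^2 \le \frac{\max(k,j)!}{\min(k,j)!}\,e^{-\pi|\omega|^2}(\pi|\omega|^2)^{|j-k|}\,|L_{\min(k,j)}^{(|j-k|)}(\pi|\omega|^2)|^2$. Using the crude bound $|L_p^{(q)}(t)|\le \binom{p+q}{p}\,\frac{(\text{something})}{}$ — more precisely, from~\eqref{eq.Laguerre}, $|L_p^{(q)}(t)|\le \sum_{l=0}^p\frac{(p+q)!}{(p-l)!(q+l)!\,l!}t^l$, which for $t\ge0$ and $p,q\le M$ is crudely $\le 3^M (1+t)^M$ or similar — the integrand is dominated by $C_M\,e^{-\pi|\omega|^2}(1+|\omega|^2)^{2M}$ for an explicit $C_M$ (this is where the $3^{mM}$ factor, raised to the mode count, will come from). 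Then $\int_{\|z\|\ge\eta}e^{-c|\omega|^2}(1+|\omega|^2)^{2M}d^2z$, after passing to the variable $|\omega|^2$ and noting $|\omega|^2\asymp\|z\|^2$ up to constants, is bounded by a tail of the form $e^{-c\eta^2/2}\sum_{p=0}^{2M}\frac{(\eta^2)^p}{2^p p!}\cdot(\text{const})$ via repeated integration by parts or the incomplete-Gamma tail bound $\int_t^\infty s^{N}e^{-s}ds = N!\,e^{-t}\sum_{p=0}^N t^p/p!$. Taking the $m$-th power (from the product over modes, with the squared $L^2$ norm giving the $m/2$ exponent on the sum) and collecting the prefactors $(mM+1)^{2\alpha}$, $(M+1)^m$, $3^{mM}$ produces exactly $\delta_0(\eta,M,\alpha,m)$.

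\textbf{Main obstacle.} The genuinely delicate point is getting the Laguerre-polynomial tail estimate with a clean enough constant that the final product over $m$ modes still collapses to the stated closed form with prefactor $3^{mM}$ and the single sum $\big(\sum_{p=0}^{2M}\eta^{2p}/(2^p p!)\big)^{m/2}$; a lazy bound on $L_k^{(j)}$ would produce an $M$-dependent constant of the wrong shape (e.g. with extra factorials) and wreck the polynomial-in-$M$ scaling needed downstream. One must track that $|j-k|\le M$, $\min(k,j)\le M$, and that the worst case in $\frac{\max!}{\min!}$ is $M!\le$ (absorbed into $3^{mM}$-type factors), and that the Gaussian weight $e^{-\pi|\omega|^2}$ with $\pi|\omega|^2\ge$ const$\cdot\|z\|^2$ survives the change of variables cleanly; handling the normalisation constants $2\pi$ in $\omega$ versus $d^2z$ carefully is what pins down the precise $e^{-m\eta^2/4}$ and $2^p p!$ in $\delta_0$.
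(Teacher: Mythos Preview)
Your overall architecture matches the paper's: expand $(\mathcal{P}_M-\widetilde{\mathcal{P}}_M)(T)$ as a sum over $\mathbf{n}_1,\mathbf{n}_2\in\{0,\dots,M\}^m$, pull out $(1+|\mathbf{n}_i|)^\alpha\le(1+mM)^\alpha$ from the $H_m^\alpha$-sandwich, bound each coefficient by Cauchy--Schwarz in the Plancherel inner product as $\big\|\widetilde{Z}_{\mathbf{n}_1\mathbf{n}_2}-\ketbraa{\mathbf{n}_1}{\mathbf{n}_2}\big\|_2\,\|T\|_2$, and reduce to single-mode tail integrals $\int_{\|u\|\ge\eta}|\chi_{\ket{n}\!\bra{n'}}(u)|^2\,\tfrac{d^2u}{2\pi}$. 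For your ``main obstacle'' the paper takes a cleaner route than bounding the associated Laguerre polynomials from~\eqref{eq.charfock}: it uses the normal-ordered form~\eqref{D_split_complex} together with Cauchy--Schwarz to obtain $|\braket{n'|\D(u)|n}|^2\le e^{-\|u\|^2/2}\,\|e^{-\alpha^*a}\ket{n}\|^2\,\|e^{-\alpha a}\ket{n'}\|^2 = e^{-\|u\|^2/2}L_n\!\big(-\tfrac12\|u\|^2\big)L_{n'}\!\big(-\tfrac12\|u\|^2\big)$ with the \emph{ordinary} Laguerre polynomials, which are monotone in the index for nonnegative argument. This gives the uniform bound $|\chi_{\ket{n}\!\bra{n'}}(u)|^2\le e^{-\|u\|^2/2}L_M\!\big(-\tfrac12\|u\|^2\big)^2$ for all $n,n'\le M$, with no $\max!/\min!$ factor to absorb; expanding $L_M(-s)^2$, integrating term by term, and recognising the incomplete Gamma function then produces $3^{2M}e^{-\eta^2/2}\sum_{p=0}^{2M}\eta^{2p}/(2^pp!)$ directly, resolving the constant-tracking issue you flagged.

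Your multi-mode reduction, however, is internally inconsistent: you first invoke a union bound over the mode index where $\|u^{(j)}\|\ge\eta$, but then also claim to take an $m$-th power ``from the product over modes''. These cannot both hold. Since $1-\prod_j\xi_{\eta,R}(u^{(j)})$ does \emph{not} factor as $\prod_j(1-\xi_{\eta,R}(u^{(j)}))$, a genuine union bound would give at most $m$ times a single-mode tail (the remaining modes integrate to~$1$), not the tail raised to the $m$-th power. The paper obtains the $m/2$ exponent in $\delta_0$ by asserting the factorised form $\big\|\widetilde{Z}_{\mathbf{n}_1\mathbf{n}_2}-\ketbraa{\mathbf{n}_1}{\mathbf{n}_2}\big\|_2^2=\prod_{j=1}^m\int\tfrac{d^2u_j}{2\pi}\,|\chi_{\ket{\mathbf{n}_1(j)}\!\bra{\mathbf{n}_2(j)}}(u_j)|^2(1-\xi_{\eta,R}(u_j))^2$ in its steps~(i)--(ii); if you want to reproduce the stated $\delta_0$ you must follow (or scrutinise) that product step rather than your union bound, since the two give bounds of different orders in $m$.
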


Next, we consider an approximation of $\rho$ in the following norm: given $0\le \alpha<n$ and two trace class operators $\rho,\rho'\in\cT_1(\cH_m)$ with $\tr(\rho (I+N_m)^n),\tr(\rho' (I+N_m)^n)<\infty$, we denote $H_m\coloneqq I+N_m$ and
 \begin{align*}
     \|\rho-\rho'\|^{(\alpha)}_1 \coloneqq \left\|H_m^{\alpha/2}(\rho-\rho')H_m^{\alpha/2}\right\|_1\,.
 \end{align*}
Similar norms were previously defined in~\cite{QCLT} under the name of m-mode bosonic Sobolev norms.

\begin{prop} \label{prop:EC}
Let $N_m$ be the number operator on $\cH_m$, $H_m = I+N_m$, and $\rho$ a state such that for some $n>0$ we have $E_m^{(n)}\coloneqq\Tr[\rho\, H_m^n]<\infty$. Define $\mathcal P_M$ and $\widetilde{\mathcal P}_{ M}$ as in~\eqref{eq:projection1} and~\eqref{eq:projection2}, respectively. Then for any $0\le  \alpha<n$
\begin{align}
& \left\|\rho-\mathcal{P}_M(\rho)\right\|_1^{(\alpha)}\le 2
(M+2)^{-\frac{n-\alpha}{2}}\,E_m^{(n)}, \label{eq:EC_1}\\
& \left\|\rho-\widetilde{\mathcal{P}}_{{M}}(\rho)\right\|^{(\alpha)}_1\le 2
(M+2)^{-\frac{n-\alpha}{2}}\,E_m^{(n)} + \delta_0\big(\eta,M,\tfrac{\alpha}{2},m\big) \,, \label{eq:EC_2}
\end{align}
where $\delta_0$ was introduced in 
Lemma~\ref{double_truncation_lemma}.
\end{prop}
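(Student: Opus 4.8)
The plan is to prove \eqref{eq:EC_1} first, by reducing it to a truncation estimate for a single positive trace-class operator, and then to obtain \eqref{eq:EC_2} from it via the triangle inequality together with Lemma~\ref{double_truncation_lemma}.

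For \eqref{eq:EC_1}, the key move is to absorb the weight $H_m^{\alpha/2}$ symmetrically into the state. I would set $\widetilde\rho\coloneqq H_m^{\alpha/2}\rho\,H_m^{\alpha/2}$, which is positive and, since $H_m\ge I$ and $\alpha\le n$, has $\Tr\widetilde\rho=\Tr[\rho H_m^\alpha]\le\Tr[\rho H_m^n]=E_m^{(n)}<\infty$, hence is trace class. Because $P_M$ is diagonal in the Fock basis it commutes with every power of $H_m$, so $H_m^{\alpha/2}\mathcal{P}_M(\rho)H_m^{\alpha/2}=P_M\widetilde\rho P_M$ and therefore $\|\rho-\mathcal{P}_M(\rho)\|_1^{(\alpha)}=\|\widetilde\rho-P_M\widetilde\rho P_M\|_1$. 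Writing $Q_M\coloneqq I-P_M$ and $\widetilde\rho-P_M\widetilde\rho P_M=Q_M\widetilde\rho+P_M\widetilde\rho Q_M$, and using $\|P_M\widetilde\rho Q_M\|_1\le\|\widetilde\rho Q_M\|_1=\|Q_M\widetilde\rho\|_1$ (self-adjointness of $\widetilde\rho$), this is at most $2\|Q_M\widetilde\rho\|_1$. To bound $\|Q_M\widetilde\rho\|_1$ I would factor $\rho=\rho^{1/2}\rho^{1/2}$ and apply the Schatten Hölder inequality $\|AB\|_1\le\|A\|_2\|B\|_2$, obtaining $\|Q_M\widetilde\rho\|_1\le\|Q_M H_m^{\alpha/2}\rho^{1/2}\|_2\,\|\rho^{1/2}H_m^{\alpha/2}\|_2$. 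The second factor squared equals $\Tr[\rho H_m^\alpha]\le E_m^{(n)}$; the first squared equals $\Tr[\rho\,Q_M H_m^\alpha Q_M]$, and here I would use that $H_m\ge(M+2)I$ on $\ran Q_M$ together with $\alpha-n<0$ to get, by the spectral theorem, the operator inequality $Q_M H_m^\alpha Q_M\le(M+2)^{\alpha-n}Q_M H_m^n Q_M\le(M+2)^{\alpha-n}H_m^n$, hence $\Tr[\rho\,Q_M H_m^\alpha Q_M]\le(M+2)^{\alpha-n}E_m^{(n)}$. Multiplying the two factors gives $\|Q_M\widetilde\rho\|_1\le(M+2)^{(\alpha-n)/2}E_m^{(n)}$ and hence \eqref{eq:EC_1}.

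For \eqref{eq:EC_2} I would split $\|\rho-\widetilde{\mathcal P}_M(\rho)\|_1^{(\alpha)}\le\|\rho-\mathcal{P}_M(\rho)\|_1^{(\alpha)}+\|(\mathcal{P}_M-\widetilde{\mathcal P}_M)(\rho)\|_1^{(\alpha)}$, bound the first term as in \eqref{eq:EC_1}, and rewrite the second as $\|\mathcal{L}_{H_m^{\alpha/2}}\mathcal{R}_{H_m^{\alpha/2}}(\mathcal{P}_M-\widetilde{\mathcal P}_M)(\rho)\|_1\le\|\mathcal{L}_{H_m^{\alpha/2}}\mathcal{R}_{H_m^{\alpha/2}}(\mathcal{P}_M-\widetilde{\mathcal P}_M)\|_{2\to1}\,\|\rho\|_2$, which is $\le\delta_0(\eta,M,\tfrac{\alpha}{2},m)$ by Lemma~\ref{double_truncation_lemma} applied with exponent $\alpha/2$ in place of $\alpha$ together with $\|\rho\|_2\le\|\rho\|_1=1$.

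I expect the only real care to be needed on two fronts: first, checking that every operator that appears ($\widetilde\rho$, $\rho^{1/2}H_m^{\alpha/2}$, $Q_M H_m^{\alpha/2}\rho^{1/2}$, etc.) is genuinely trace-class or Hilbert--Schmidt, so that the cyclicity-of-trace and Hölder manipulations involving the unbounded operator $H_m$ are licit --- this is exactly where the moment hypothesis $E_m^{(n)}<\infty$ is used; second, justifying the operator inequality $Q_M H_m^\alpha Q_M\le(M+2)^{\alpha-n}H_m^n$ by functional calculus on the $H_m$-invariant subspace $\ran Q_M$, since this is what delivers the sharp exponent $(n-\alpha)/2$. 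It is worth noting that the \emph{symmetric} absorption of $H_m^{\alpha/2}$ into $\rho$ before truncating is what keeps the final constant equal to $2$ rather than $3$.
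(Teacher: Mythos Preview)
Your proof is correct and follows essentially the same route as the paper's: the same decomposition $\rho-P_M\rho P_M=Q_M\rho+P_M\rho Q_M$, the same use of $[P_M,H_m]=0$, and the same spectral bound $H_m\ge M+2$ on $\ran Q_M$, followed by the triangle inequality plus Lemma~\ref{double_truncation_lemma} for~\eqref{eq:EC_2}. The only difference is cosmetic --- the paper applies H\"older as $\|\cdot\|_\infty\|\cdot\|_1\|\cdot\|_\infty$ with $H_m^{n/2}\rho\,H_m^{n/2}$ in the middle, whereas you factor $\rho=\rho^{1/2}\rho^{1/2}$ and use $\|\cdot\|_2\|\cdot\|_2$; both yield the identical constant $2(M+2)^{-(n-\alpha)/2}E_m^{(n)}$.
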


\begin{proof}
We use a duality argument for this proof. For any $O$ such that $\big\|H_m^{-\alpha/2} O H_m^{-\alpha/2}\big\|_\infty \le 1$, we find
\begin{align*}
& \left| \tr\big[O(\rho-\mathcal{P}_M(\rho))\big] \right| \\
&\quad = \left| \Tr \left[H_m^{-\frac{\alpha}{2}}OH_m^{-\frac{\alpha}{2}}H_m^{\frac{\alpha}{2}}(\rho-\mathcal P_M (\rho))H_m^{\frac{\alpha}{2}}\right] \right| \\
&\quad \le \left\|H_m^{\frac{\alpha}{2}}(\rho-\mathcal{P}_M(\rho))H_m^{\frac{\alpha}{2}} \right\|_1 \\
&\quad \le \left\|H_m^{-\frac{n-\alpha}{2}} H_m^{\frac{n}{2}}(I-P_M)\rho \,H_m^{\frac{n}{2}} H_m^{-\frac{n-\alpha}{2}} \right\|_1 \\
&\quad \quad + \left\|H_m^{-\frac{n-\alpha}{2}}H_m^{\frac{n}{2}}P_M\rho(I-P_M) \,H_m^{\frac{n}{2}}H_m^{-\frac{n-\alpha}{2}} \right\|_1 \\
&\quad \textleq{(i)} \left\|H_m^{-\frac{n-\alpha}{2}} (I-P_M)\right\|_\infty \left\| H_m^{\frac{n}{2}} \rho \,H_m^{\frac{n}{2}} \right\|_1 \left\|H_m^{-\frac{n-\alpha}{2}} \right\|_\infty \\
&\quad \quad + \left\|H_m^{-\frac{n-\alpha}{2}}\right\|_\infty \|P_M\|_\infty \left\| H_m^{\frac{n}{2}}\rho \,H_m^{\frac{n}{2}} \right\|_1 \left\| (I-P_M) H_m^{-\frac{n-\alpha}{2}} \right\|_\infty \\
&\quad \textleq{(ii)} 2 
(M+2)^{-\frac{n-\alpha}{2}}E_m^{(n)}\,.
\end{align*}
Here, (i)~comes from a repeated application of H\"older's inequality, and we also observed that $P_M$ and $H_m$ commute; (ii)~is because
\bb
\left\|H_m^{-\frac{n-\alpha}{2}} (I-P_M)\right\|_\infty &= \left\| \sumno_{\mathbf{n}\in \N^m \setminus [M]^m} (1+|\mathbf{n}|)^{-\frac{n-\alpha}{2}} \ketbra{\mathbf{n}}\right\|_\infty \\
&= \max_{\mathbf{n}\in \N^m \setminus [M]^m} (1+|\mathbf{n}|)^{-\frac{n-\alpha}{2}} \\
&= (2+M)^{-\frac{n-\alpha}{2}} .
\ee
This proves~\eqref{eq:EC_1}. Next, 
we have that
\begin{align*}
&\left| \tr\left[O\,(\mathcal{P}_M-\widetilde{\mathcal{P}}_{{M}})(\rho)\right] \right| \\
&\quad = \left| \Tr\left[H_m^{-\frac{\alpha}{2}}OH_m^{-\frac{\alpha}{2}}H_m^{\frac{\alpha}{2}}(\mathcal P_M (\rho)-\widetilde{\mathcal{P}}_{M}(\rho))H_m^{\frac{\alpha}{2}}\right] \right| \\
&\quad \textleq{(iii)} \left\|H_m^{-\frac{\alpha}{2}}OH_m^{-\frac{\alpha}{2}} \right\|_\infty \left\|H_m^{\frac{\alpha}{2}}\,(\mathcal{P}_M-\widetilde{\mathcal{P}}_{M})(\rho)H_m^{\frac{\alpha}{2}}\right\|_1 \\
&\quad \textleq{(iv)} \delta_0\left(\eta,M,\tfrac{\alpha}{2},m\right) ,
\end{align*}
where (iii)~is again H\"older's inequality, and (iv)~comes from our assumptions together with Lemma~\ref{double_truncation_lemma}. Combining this with~\eqref{eq:EC_1} yields~\eqref{eq:EC_2} and concludes the proof.
\end{proof}
In order to use the approximation bounds of~\ref{prop:EC}, it remains to estimate how well the empirical average of the shadows
\begin{align}\label{empaverage}
{\hat{\sigma}}^{(N)} \coloneqq \frac{1}{N} \sum_{i=1}^N \hat{\rho}^{(i)}\, ,
\end{align}
where $\hat{\rho}^{(i)}$ are $N$ i.i.d.~copies of the shadow $\hat{\rho}$ constructed in~\eqref{hatrho}, approximates $\mathcal{P}_{M}(\rho)$ and $\widetilde{\mathcal{P}}_{{M}}(\rho)$ depending on the choice of the Gaussian shadow tomography scheme. For this, we introduce, for any set $A$ of $|A|=r$ modes, the operators
\begin{align}
   & {\sigma}^{(N)}_{A}(M)\coloneqq \sum_{\mathbf{n}_1,\mathbf{n}_2 \in \{0,..,M\}^{r}}\,\hat{\sigma}_A^{(N)}(\ketbraa{\mathbf{n}_2}{\mathbf{n}_1})\,\ketbraa{\mathbf{n}_1}{\mathbf{n}_2},\nonumber \\
    & 
 \widetilde{\sigma}_{A}^{(N)}(M)\coloneqq \sum_{\mathbf{n}_1,\mathbf{n}_2 \in \{0,..,M\}^r}  \hat{\sigma}_A^{(N)}(\widetilde  Z_{\mathbf{n}_2\mathbf{n}_1}) \, \ketbraa{\mathbf{n}_1}{\mathbf{n}_2}\,.\label{empiricalshadows}
\end{align}
Since both operators are supported on a finite dimensional subspace, we can resort to the matrix Bernstein inequality~\eqref{matrixBern} in order to prove that, with high probability, $\|{\sigma}_{A}^{(N)}(M)-{\mathcal{P}}_{{M}}(\rho_A)\|_1,\|\widetilde{\sigma}_{A}^{(N)}(M)-\widetilde{\mathcal{P}}_{{M}}(\rho_A)\|_1\le \epsilon$ for $N$ large enough. In the next section, we explain in more detail how we use the matrix Bernstein inequality in the case of both the homodyne and the heterodyne detection strategies.

\section{Homodyne and heterodyne shadow tomography}

\subsection{Local homodyne detection}\label{homodyne}

Let us first consider the scenario in which one performs a homodyne detection along a random direction in phase space. Here, $m$ independent random matrices $S_1,\dots , S_m$ are distributed uniformly (according to the Haar measure) on the intersection $\operatorname{Sp}(2) \cap \operatorname{SO}(2)$ between the symplectic and the orthogonal group. It is useful to note that $\operatorname{Sp}(2) \cap \operatorname{SO}(2) \simeq \operatorname{U}(1)$, where on the right-hand side we have the unitary group of $1\times 1$ matrices, so that the Haar measure on $\operatorname{Sp}(2) \cap \operatorname{SO}(2)$ is essentially that on $\operatorname{U}(1)$. In this simple homodyne case the algorithm for shadow tomography is summarised as follows:
\begin{enumerate}[(1)]
\item A copy of $\rho$ is loaded, and $m$ matrices $S_1,\dots, S_m\in \operatorname{U}_1$ are drawn at random according to the Haar measure. In other words $U_j=R_{\theta_j}$ for some angle $\theta_j\in[-\pi,\pi]$, where
\begin{align*}
R_\theta \coloneqq \begin{pmatrix} \cos\theta & -\sin \theta \\ \sin\theta & \cos\theta  \end{pmatrix} \,.
\end{align*}
\item The rotation $U_{S}\equiv U_{S_1}\otimes \dots\otimes U_{S_m} $ is applied to $\rho$, where $S=S_1\oplus\dots\oplus S_m$, obtaining the state $U_{S}\rho\, U_{S}^\dag$.
\item The output state of (2) is subjected to a homodyne measurement along the position axis, yielding the classical outcome $x=(x_1,\dots,x_m)^\intercal\in\mathbb{R}^{2m}$. 
\item We construct the classical shadow $\hat{\rho}$, which is the final output of this round of the protocol.
\end{enumerate}
After several rounds have been conducted, we can process the classical shadows as we prefer. A typical method would be that of computing the empirical average $\hat{\sigma}^{(N)}$ as defined in~\eqref{empaverage} for the above protocol. We can then use that operator e.g.\ for computing expected values of observables, or else reduced density operators, etc. To model  homodyning in a rigorous way, let us introduce a parameter $s>0$, which we will later take to infinity, and let us set
\bb
T^{\oplus m} = T^{\oplus m}_s \coloneqq \begin{pmatrix} e^{-s}  & 0 \\ 0 & e^{s}  \end{pmatrix}^{\bigoplus m} ,
\ee
These choices of $\mu\sim \operatorname{Haar}(\operatorname{U}_1)$ and $T$ completely determine our shadow tomographic setting. In the limit $s\to\infty$, the measurement will reproduce a homodyne measurement along the position axis of each mode. To make things more concrete, given a threshold $M\in \mathbb{N}$ and $\alpha\ge 0$ we define
\begin{equation}
\begin{aligned}\label{homodyneshadow}
\hat{\rho}_A(M) &\coloneqq \bigotimes_{j\in A}\,\Big(\sumno_{n_1,n_2=0}^M\,\hat{\rho}_j(M)_{n_1,n_2} \ketbraa{n_1}{n_2} \Big) ,\\
\hat{\rho}_j(M)_{n_1,n_2} &\coloneqq \int\!\! dy\ |y|\,  e^{- i (y,0) S_j^{-\intercal}\Omega S_j x_j}\,\chi_{\ket{n_2}\!\bra{n_1}}(S_j(y,0)^\intercal )\,,
\end{aligned}
\end{equation}
By (\ref{eq.charfock}), the absolute value of the characteristic function $\chi_{|n_2(j)\rangle\langle n_1(j)|}(S_j(y,0)^T)$ is upper bounded by
\begin{align*}
  \sqrt{\frac{n_2(j)!}{n_1(j)!}}\, e^{-\frac{\pi}{2}|\omega|^2}(\sqrt{\pi}|\omega|)^{M_j-m_j}\,L_{m_j}^{(M_j-m_j)}(\pi|\omega|^2)\,,
\end{align*}
where $\omega\coloneqq -x'+i\frac{x''}{2\pi}$ with $x=x'\oplus x''=S_j(y,0)^T$, $m_j\coloneqq \min\{n_1(j),n_2(j)\}$, $M_j\coloneqq \max\{n_1(j),n_2(j)\}$, and where the Laguerre polynomials $L_j^{(k)}$ were defined in~\eqref{eq.Laguerre}. Since $\frac{|y|}{2\pi}\le |\omega|\le |y|$, we therefore have that $\left\|H_r^{\frac{\alpha}{2}}\hat{\rho}_A(M)H_r^{\frac{\alpha}{2}}\right\|_\infty$ is almost surely bounded by 
\bb
&\Sigma^{(\alpha)}_r(M) \\
&\ \coloneqq \Bigg\|\, \Bigg(\!\left((1+|\mathbf{n}_1|)(1+|\mathbf{n}_2|)\right)^{\frac{\alpha}{2}}\prod_{j=1}^r\,\sqrt{\frac{n_2(j)!}{n_1(j)!}} \\
&\qquad \times\int\!\! dy\, |\sqrt{\pi}y|^{1+M_j-m_j}\,\,e^{-\frac{|y|^2}{8\pi}}\left|L_{M_j}^{M_j-m_j}\big(\pi|y|^2\big)\right| \Bigg)_{\mathbf{n}_1, \mathbf{n}_2} \Bigg\|_\infty ,
\ee
{where $\mathbf{n}_1, \mathbf{n}_2 \in\{0,\dots ,M\}^r$,} by using the simple fact that given two matrices $A=\{(a_{ij})\}_{ij}$ and $B=\{(b_{ij})\}_{ij}$ with $|a_{ij}|\le b_{ij}$ for all $i,j$, then $\|A\|_\infty\le \|B\|_{\infty}$. This can be proved as follows. For a vector $v$ with entries $v_i$ and norm $\|v\|_2^2 = \sum_i |v_i|^2$, define the new vector $w$ with entries $w_i\coloneqq |v_i|$, so that $\|w\|^2=1$. Using the triangle inequality, it is straightforward to verify that $\|A v\|^2 \leq \|B w\|^2$; this entails that $\|A\|_\infty = \sup_{\|v\|=1} \|Av\| \leq \sup_{\|w\|=1} \|B w\| = \|B\|_\infty$. {Continuing, as before} we then denote 
\begin{align*}
    \sigma_A^{(N)}(M)\coloneqq\frac{1}{N}\sum_{i=1}^N\,\hat{\rho}^{(i)}_A(M)\,,
\end{align*}
where $\hat{\rho}^{(i)}_A(M)$ are i.i.d.~random matrices of law the one of $\hat{\rho}_A(M)$. With this, we are ready to state our first main result:

\begin{thm}\label{thmhomod}
With the above notation,
given $0\le \alpha <n$ such that $E^{(n)}_r\coloneqq \max_{|A|\le r}\tr(\rho_A H_r^n)<\infty$ and  
\begin{equation}
\begin{aligned}
N&\ge \frac{(M+1)^{2r}}{3\epsilon^2} \Big\{24\,\Sigma^{(\alpha)}_r(M)^2+4\Big(\Sigma_r^{(\alpha)}(M)+E_r^{(\alpha)}\Big)\epsilon\Big\} \\
&\quad\times \log\left(\frac{2\big[m(M+1)\big]^r}{\delta}\right) ,
\end{aligned}
\end{equation}
where
\begin{equation}
M=\Big\lceil \big(4E_r^{(n)}/\epsilon\big)^{\frac{2}{n-\alpha}}\Big\rceil ,
\end{equation}
we have that for any region $A$ of size $|A|\le r$, it holds that
\begin{equation}
\left\|\sigma_A^{(N)}(M)-\rho_A\right\|^{(\alpha)}_1\le \epsilon
\end{equation}
with probability at least $1-\delta$. Similarly, for 
\begin{equation}
\begin{aligned}
N &\ge \frac{(M+1)^{2r}}{3\epsilon^2} \Big\{24\,\Sigma^{(\alpha)}_r(M)^2+4\Big(\Sigma_r^{(\alpha)}(M)+E_r^{(\alpha)}\Big)\epsilon\Big\} \\
&\quad \times \log\left(\frac{2L(M+1)^r}{\delta}\right)\,,
\end{aligned}
\end{equation}
we have that for any set of $L$ observables $O_j$ on regions $A_j$ of size at most $r$ and with $\left\|H_{r}^{-\frac{\alpha}{2}}O_jH_{r}^{-\frac{\alpha}{2}}\right\|_\infty\le 1$, 
\begin{equation}
\max_j \big|\tr\big[O_j\,(\sigma_{A_j}^{(N)}(M)-\rho_{A_j})\big]\big|\le \epsilon
\end{equation}
with probability at least $1-\delta$.

\end{thm}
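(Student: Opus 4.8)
The plan is to bound $\|\sigma_A^{(N)}(M)-\rho_A\|_1^{(\alpha)}$ by a \emph{bias} term and a \emph{statistical} term via the triangle inequality, $\|\sigma_A^{(N)}(M)-\rho_A\|_1^{(\alpha)} \le \|\sigma_A^{(N)}(M)-\mathcal P_M(\rho_A)\|_1^{(\alpha)} + \|\mathcal P_M(\rho_A)-\rho_A\|_1^{(\alpha)}$, and to show that each summand is at most $\epsilon/2$. The bias term is deterministic: Proposition~\ref{prop:EC}, applied to the reduced state $\rho_A$, gives $\|\mathcal P_M(\rho_A)-\rho_A\|_1^{(\alpha)}\le 2(M+2)^{-(n-\alpha)/2}E_r^{(n)}$, and the prescribed $M=\lceil(4E_r^{(n)}/\epsilon)^{2/(n-\alpha)}\rceil$ forces $(M+2)^{-(n-\alpha)/2}\le \epsilon/(4E_r^{(n)})$, so this term is $\le \epsilon/2$, uniformly over all regions of size $\le r$. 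Everything then reduces to showing that, with probability $\ge 1-\delta$, the statistical term is $\le \epsilon/2$ simultaneously for every relevant $A$.

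For the statistical term I would set $X_i\coloneqq H_r^{\alpha/2}\,\hat\rho_A^{(i)}(M)\,H_r^{\alpha/2}\in\mathbb M_{(M+1)^{|A|}}(\mathbb C)$; this is well defined and supported on the Fock-truncated subspace since $H_r$ commutes with $P_M$, and the $X_i$ are Hermitian and i.i.d. By Lemma~\ref{partialparsevalaverage} applied with $Z=\ketbraa{\mathbf n_2}{\mathbf n_1}$ one has $\mathbb E[\hat\rho_A^{(i)}(M)]=\mathcal P_M(\rho_A)$, hence $\frac1N\sum_i(X_i-\mathbb E X_i)=H_r^{\alpha/2}\big(\sigma_A^{(N)}(M)-\mathcal P_M(\rho_A)\big)H_r^{\alpha/2}$, and since this matrix has rank $\le(M+1)^r$ its trace norm is at most $(M+1)^r\,\big\|\frac1N\sum_i(X_i-\mathbb E X_i)\big\|_\infty$. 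So it suffices to make the operator-norm deviation $\le \epsilon'\coloneqq\epsilon/(2(M+1)^r)$, which I obtain from the matrix Bernstein inequality~\eqref{matrixBern}. The almost-sure bound $\|X_i\|_\infty\le \Sigma_r^{(\alpha)}(M)$ is exactly the estimate established just before the theorem from the explicit form~\eqref{homodyneshadow} and the Laguerre bounds~\eqref{eq.charfock}--\eqref{eq.Laguerre}; moreover $\|\mathbb E X_i\|_\infty=\|P_MH_r^{\alpha/2}\rho_A H_r^{\alpha/2}P_M\|_\infty\le\Tr(\rho_A H_r^{\alpha})=E_r^{(\alpha)}$, so $R\coloneqq \Sigma_r^{(\alpha)}(M)+E_r^{(\alpha)}$ is an a.s.\ bound on $\|X_i-\mathbb E X_i\|_\infty$, and $\Sigma^2=\|\mathbb E[X_1^2]\|_\infty\le\mathbb E\|X_1\|_\infty^2\le\Sigma_r^{(\alpha)}(M)^2$. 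Plugging $\epsilon'$, $R$, $\Sigma^2$ into~\eqref{matrixBern}, taking a union bound over all regions $A$ of size $\le r$ (at most $m^r$ of them, each contributing a dimension factor $(M+1)^r$), and setting the total failure probability to $\delta$ gives exactly the stated lower bound on $N$ with the $\log(2[m(M+1)]^r/\delta)$ factor.

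For the second statement I would use that $\sigma_{A_j}^{(N)}(M)-\mathcal P_M(\rho_{A_j})$ lives on the truncated subspace, so $\big|\Tr[O_j(\sigma_{A_j}^{(N)}(M)-\mathcal P_M(\rho_{A_j}))]\big|=\big|\Tr[(P_MH_r^{-\alpha/2}O_jH_r^{-\alpha/2}P_M)\,H_r^{\alpha/2}(\sigma_{A_j}^{(N)}(M)-\mathcal P_M(\rho_{A_j}))H_r^{\alpha/2}]\big|$, which by H\"older is at most $\|H_r^{-\alpha/2}O_jH_r^{-\alpha/2}\|_\infty\,\|\sigma_{A_j}^{(N)}(M)-\mathcal P_M(\rho_{A_j})\|_1^{(\alpha)}\le 1\cdot(\epsilon/2)$ on the Bernstein event; likewise $\big|\Tr[O_j(\mathcal P_M(\rho_{A_j})-\rho_{A_j})]\big|\le\|H_r^{-\alpha/2}O_jH_r^{-\alpha/2}\|_\infty\,\|\mathcal P_M(\rho_{A_j})-\rho_{A_j}\|_1^{(\alpha)}\le\epsilon/2$ by the bias bound. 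Thus the only change is that the union bound runs over the $L$ prescribed regions $A_j$ instead of all size-$\le r$ regions, replacing $m^r$ by $L$ in the logarithm.

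The substantive input — already supplied in the excerpt preceding the statement — is the almost-sure operator-norm bound $\|H_r^{\alpha/2}\hat\rho_A(M)H_r^{\alpha/2}\|_\infty\le\Sigma_r^{(\alpha)}(M)$, which is where the structure of the homodyne shadow (collapse of the integral by the Dirac-$\delta$'s, Fock-basis matrix entries, Gaussian-times-Laguerre integrals) genuinely enters; given it, the rest is bookkeeping. The one delicate point is the unbiasedness $\mathbb E[\hat\rho_A(M)]=\mathcal P_M(\rho_A)$: because the reduced shadow has only a distributional characteristic function, one must verify the integrability hypothesis of Lemma~\ref{partialparsevalaverage} for $Z=\ketbraa{\mathbf n_1}{\mathbf n_2}$ — which follows from the same decay that makes $\Sigma_r^{(\alpha)}(M)$ finite — so that Fubini legitimately exchanges $\mathbb E$ with the $y$-integral in~\eqref{homodyneshadow}. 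Tracking constants through Bernstein then yields the displayed prefactor $24\,\Sigma_r^{(\alpha)}(M)^2+4(\Sigma_r^{(\alpha)}(M)+E_r^{(\alpha)})\epsilon$ together with the $(M+1)^{2r}/(3\epsilon^2)$ scaling.
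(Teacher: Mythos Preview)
Your proposal is correct and follows essentially the same approach as the paper: split into bias and statistical terms, control the bias via Proposition~\ref{prop:EC} with the stated choice of $M$, control the statistical term via matrix Bernstein~\eqref{matrixBern} applied to $X_i=H_r^{\alpha/2}\hat\rho_A^{(i)}(M)H_r^{\alpha/2}$ with the same $R$ and $\Sigma^2$ bounds, convert the operator-norm deviation to trace norm via the rank bound $(M+1)^r$, and union-bound over regions (resp.\ over the $L$ observables). The one noteworthy difference is in how unbiasedness $\mathbb E[\hat\rho_A(M)]=\mathcal P_M(\rho_A)$ is justified: the paper first works at finite squeezing $s>0$, where $\chi_{\hat\rho}$ is a genuine function integrable against Fock characteristic functions so that Lemma~\ref{partialparsevalaverage} applies directly, and only then takes the distributional limit $s\to\infty$; you propose instead to verify the integrability hypothesis of Lemma~\ref{partialparsevalaverage} directly at $s=\infty$. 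Both routes are valid; the paper's limiting argument has the side benefit of deriving the explicit form~\eqref{homodyneshadow} (and hence the $\Sigma_r^{(\alpha)}(M)$ bound) from the general framework~\eqref{chi_sigma} within the proof, rather than taking it as a black-box input.
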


\begin{rem}
We essentially recover the same dependence of the number of samples in terms of the logarithm of the number of observables/total number of modes and on the exponential of the size of the regions $A$ as in the qubit setting of~\cite{Huang2020}.
\end{rem}

\begin{proof}
The first part of the proof consists in computing the function $f_{\mu^{\otimes m},T^{\oplus m}}$ defined as in~\eqref{f_mu_T} in the limit $s \to\infty$. It is given by

\begin{align}
 f_{\mu^{\otimes m}, T^{\oplus m}}(u)  &\coloneqq \int \!\! d\mu^{\otimes m}(S_1,\dots , S_m)\, e^{-\frac12\, u^\intercal (T^{\oplus m}S)^{^\intercal} (T^{\oplus m}S)\, u}\nonumber\\ &=\prod_{j=1}^m f_{\mu,T}(u_j)\,,\nonumber 
\end{align}
with $u=(u_1,\dots, u_m)^T\in\mathbb{R}^{2m}$, where for each mode $j$ we have 
\bb
f_{\mu,T}(u_j) &\texteq{(i)} \int_{-\pi}^{+\pi} \frac{d\theta}{2\pi}\, e^{-\frac12 u_j^\intercal R_\theta^\intercal D_s R_\theta u_j} \\
&\texteq{(ii)} \int_{-\pi/2}^{+\pi/2} \frac{d\theta}{\pi}\, e^{-\frac12 u_j^\intercal R_\theta^\intercal D_s R_\theta u_j} \\
&\texteq{(iii)} \int_{-\pi/2}^{+\pi/2} \frac{d\theta}{\pi}\, e^{-\frac{\|u_j\|^2}{2}\, e_1^\intercal R_\theta^\intercal D_s R_\theta e_1} \\
&= \int_{-\pi/2}^{+\pi/2} \frac{d\theta}{\pi}\, e^{-\frac{\|u_j\|^2}{2}\, \left( e^{-2s} \cos^2\theta + e^{2s} \sin^2\theta \right)} \\
&\texteq{(iv)} e^{-\frac{\Vert u_j \Vert^2 \cosh(2s) }{2}}  I_0\Big(-\tfrac{\Vert u_j \Vert^2 \sinh(2s) }{2}\Big)\\
&= e^{-\frac{\Vert u_j \Vert^2 \cosh(2s) }{2}}  I_0\Big(\tfrac{\Vert u_j \Vert^2 \sinh(2s) }{2}\Big).
\ee
Here, in~(i) we introduced the matrix $D_s=T_s^2$; in~(ii) we noted that the integrand is invariant under rotations of $\pi$, in~(iii) we observed that the integral is invariant under rotations of $u_j$, and thus chose to compute it for $u_j = \|u_j\| e_1 \coloneqq \|u_j\| (1,0)^\intercal$; and in~(iv) we recognised the integral representation for the modified Bessel function $$I_0(x)\coloneqq\frac{1}{\pi}\,\int_{0}^\pi e^{\pm x\cos\theta}d\theta\,.$$ Plugging this into~\eqref{chi_sigma}, we have found that
\begin{align*}
\chi_{\hat{\rho}}(u) &= \frac{e^{-\frac14 \left\|T^{\oplus m}Su\right\|^2 - i u^\intercal \Omega S x}}{f_{\mu^{\otimes m},T^{\oplus m}}(u)}\, \\
&=\prod_{j=1}^m\,\frac{e^{\frac{\|u_j\|^2\,\operatorname{cosh}(2s)}{2}}}{I_0\Big(\frac{\|u_j\|^2\operatorname{sinh}(2s)}{2}\Big)}e^{-\frac14 \left\|TS_ju_j\right\|^2 - i u_j^\intercal \Omega S_j^{-1} x_j}\\
&\equiv \prod_{j=1}^m\,\chi_{\hat{\rho}_j}(u_j)\,.
\end{align*}
In other words, $\chi_{\hat{\rho}}$ is formally the characteristic function of a tensor product of forms $\hat{\rho}=\hat{\rho}_1\otimes \dots \otimes \hat{\rho}_m$. 
Moreover, by the asymptotic expansion of the modified Bessel function
\begin{align*}
 &I_0\Big(\tfrac{\Vert u_j \Vert^2 \sinh(2s) }{2}\Big) \nonumber\\
 &= \frac{e^{\tfrac{\Vert u_j \Vert^2 \sinh(2s) }{2}}}{\sqrt{\sinh(2s)\pi  }\Vert u_j \Vert}\Big(1+\mathcal O((\Vert u_j \Vert^{2}\sinh(2s))^{-1}) \Big),
 \end{align*}
we see that 
\begin{align*}
&\chi_{\hat{\rho}_j}(u_j) = \frac{\sqrt{2\sinh(2s)\pi}\Vert u_j \Vert\,e^{\frac{\Vert u_j \Vert^2 e^{-2s}}{2}}e^{-\frac14 \|T_s S_j u_j\|^2 - i u_j^\intercal \Omega S_j^{-1} x_j} }{1+\mathcal O((\Vert u_j \Vert^{2}\sinh(2s))^{-1})} 
\end{align*}
which shows that $\chi_{\hat{\rho}}(u)$ can be integrated against any element of the Fock basis as soon as $s>0$. Taking the distributional limit, we find
\begin{align*}
\chi_{\hat{\rho}_j}(u_j)\tends{}{s\to\infty}  2\pi \|u_j\|\, \delta \left( (S_j u_j)_2 \right) e^{- i u_j^\intercal \Omega S_j^{-1} x_j}\,.
\end{align*}
Next, we fix a region $A$ of size $|A|=r$ and consider for a fixed squeezing $s$
\begin{align*}
    \hat{\rho}_{A}(s,M)\coloneqq \bigotimes_{j\in A}\,\Big(\sum_{n_1,n_2=1}^M\, \hat{\rho}_j(\ketbraa{n_2}{n_1}|)\,\ketbraa{n_1}{n_2}\Big)\,.
\end{align*}
We have from Lemma~\ref{partialparsevalaverage} that
\begin{align*}
    \mathbb{E}\big[\hat{\rho}_{A}(s,M)\big]=\mathcal{P}_M(\rho_A)
\end{align*}
which holds for any squeezing parameter $s$. Therefore, we can take the limit $s\to\infty$ above. We find
\bb
&\hat{\rho}_{A}(s,M) = \bigotimes_{j\in A}\!\bigg(\sum_{n_1,n_2=1}^M\!\!\ketbraa{n_1}{n_2}\int\!\!\chi_{\hat{\rho}_j}(u_j)\,\chi_{\ket{n_2}\!\bra{n_1}}(u_j)\,\frac{d^2 u_j}{(2\pi)^{r}}\bigg)
\ee
{so that}
\bb
\hat{\rho}_{A}(s,M) &\tends{}{s\to\infty} \hat{\rho}_A(M) \coloneqq \bigotimes_{j\in A} \hat{\rho}_j(M) ,\\
\hat{\rho}_j(M) &\coloneqq \sum_{n_1,n_2=1}^M\!\!\ketbraa{n_1}{n_2}\\
&\qquad \times \int\!\! |y|\, e^{- i (y,0) S_j^{-\intercal}\Omega S_j x_j}\,\chi_{\ket{n_2}\!\bra{n_1}}(S_j(y,0)^\intercal )\,d y ,
\label{sigmaANM}
\ee
with $\mathbb{E}[\hat{\rho}_A(M)]=\mathcal{P}_M(\rho_A)$. Then, we have seen below Equation~\eqref{homodyneshadow} that, almost surely, $\|H_r^{\frac{\alpha}{2}}\hat{\rho}_A(M)H_r^{\frac{\alpha}{2}}\|_\infty\le 
     \Sigma^{(\alpha)}_r(M)$. Moreover, we have that $\|H_r^{\frac{\alpha}{2}}\mathcal{P}_M(\rho_A)H_r^{\frac{\alpha}{2}}\|_\infty\le E_r^{(\alpha)}$, therefore almost surely $\|H_r^{\frac{\alpha}{2}}(\hat{\rho}_A(M)-\mathcal{P}_M(\rho_A))H_r^{\frac{\alpha}{2}}\|_\infty\le E_r^{(\alpha)}+\Sigma^{(\alpha)}_r(M)$. Finally, $\|\mathbb{E}[(H_r^{\frac{\alpha}{2}}\hat{\rho}_A(M)H_r^{\frac{\alpha}{2}})^2]\|_\infty\le\Sigma_r^{(\alpha)}(M)^2$. We can hence use the equivalence of matrix norms together with the matrix Bernstein inequality~\eqref{matrixBern} to get
\begin{align}\label{matrixBernhomodyne}
    & \mathbb{P}\Big(\big\|\sigma_A^{(N)}(M)-\mathcal{P}_M(\rho_A)\big\|^{(\alpha)}_1\ge {\epsilon}\Big)\nonumber\\
    &\le\mathbb{P}\Big(\big\|H_r^{\frac{\alpha}{2}}\big(\sigma_A^{(N)}(M)-\mathcal{P}_M(\rho_A)\big)H_r^{\frac{\alpha}{2}}\big\|_\infty\ge {\epsilon}(M+1)^{-r}\Big) \nonumber\\
    &\le 2(M+1)^r\,\operatorname{exp}\left(-\frac{3N\epsilon^2(M+1)^{-2r}}{6\Sigma_r^{(\alpha)}(M)^2+2(\Sigma_r^{(\alpha)}(M)+E_r^{(\alpha)})\epsilon}\right).
\end{align}
 Therefore, by a union bound, we get that 
\begin{align}\label{unionbound}
    & \mathbb{P}\Big(\exists A, \,|A|\le r :\, \big\|\sigma_A^{(N)}(M)-\mathcal{P}_M(\rho_A)\big\|^{(\alpha)}_1\ge {\epsilon}\Big)\nonumber\\
    & \le 2\big[m(M+1)\big]^r\nonumber\\
    & \quad \times \operatorname{exp}\left(-\frac{3N\epsilon^2(M+1)^{-2r}}{6\Sigma_r^{(\alpha)}(M)^2+2(\Sigma_r^{(\alpha)}(M)+E_r^{(\alpha)})\epsilon}\right)\,.
\end{align}
Next, we use Proposition~\ref{prop:EC} and choose $M=\Big\lceil \big(4E_r^{(n)}/\epsilon\big)^{\frac{2}{n-\alpha}}\Big\rceil$, so that $\|\mathcal{P}_M (\rho_A)-\rho_A\|_1^{(\alpha)}\le \epsilon/2$ and
\begin{align*}
 & \mathbb{P}\Big(\exists A, \,|A|\le r :\,\big\|\sigma_A^{(N)}(M)-\rho_A\big\|^{(\alpha)}_1\ge {\epsilon}\Big)\nonumber\\
 &\le \mathbb{P}\Big(\exists A, \,|A|\le r :\,\big\|\sigma_A^{(N)}(M)-\mathcal{P}_M(\rho_A)\big\|^{(\alpha)}_1\ge \frac{\epsilon}{2}\Big) \nonumber\\
  &\le 2\big[m\big(M+1\big)\big]^r\nonumber\\
  & \quad \times \operatorname{exp}\left(-\frac{3N\epsilon^2(M+1)^{-2r}}{24\Sigma^{(\alpha)}_r(M)^2+4(\Sigma^{(\alpha)}_r(M)+E_r^{(\alpha)})\epsilon}\right)\,.
\end{align*}
Therefore, choosing $N$ as in the statement of the theorem, we obtain that the probability that on any subset $A$ of at most $r$ modes $\|\sigma_A^{(N)}(M)-\rho_A\big\|^{(\alpha)}_1\ge {\epsilon}$ is at least $1-\delta$.

The results for a fixed number of $L$ observables $O_1,\dots, O_L$ supported on regions $A_1,\dots, A_L$ of size at most $r$ follow after replacing the above union bounds over regions $A$ by a union bound over the observables $O_j$:
\begin{align*}
  &  \mathbb{P}\Big(\exists j: \, |\tr(O_j\sigma_{A_j}^{(N)}(M))-\tr(O_j\,\mathcal{P}_M(\rho_{A_j}))|\ge {\epsilon}\Big)\nonumber\\
  & \le 2L\,\big[(M+1)\big]^r\, \operatorname{exp}\left(-\frac{3N\epsilon^2(M+1)^{-2r}}{6\Sigma_r^{(\alpha)}(M)^2+2(\Sigma_r^{(\alpha)}(M)+E_r^{(\alpha)})\epsilon}\right)
\end{align*}
and the result follows after replacing $m^r$ by $L$ in the estimate for $N$. 

\end{proof}

\subsection{Local heterodyne detection}

Even simpler than homodyne detection, the simplest Gaussian shadow tomographic setting is that where the measurement employed is a heterodyne detection
\bb
\left\{ \frac{1}{(2\pi)^{m/2}} \ketbra{x}\right\}_{x\in \R^{2m}}\, ,
\label{heterodyne}
\ee
i.e.\ $T=I$, and all unitaries employed are passive, i.e.\ such that $\big[ \gu{S}, \frac12 R^\intercal R\big] =0$. Naturally, this is the same as requiring that $S$ be (not only symplectic but also) orthogonal. Since passive unitaries send coherent states to coherent states, amounting to a rotation in that space, the effective measurement being carried out on $\rho$ is the same irrespectively of $S$ --- the only thing changing is that the outcome is rotated. Mathematically, this means that the probability distribution of the random variable $\gud{S} \D(x) \psi \D(-x) \gu{S}$, where $x$ is the outcome of the heterodyne detection~\eqref{heterodyne} on $\gu{S}\psi \gud{S}$, is the same as that of the random variable $\D(y) \psi \D(-y)$, where $y$ is the outcome of the heterodyne on $\psi$. This means that in the shadow tomography protocol we can skip the unitary operation altogether without losing any data. In what follows we will therefore set $S=I$ (deterministically) without loss of generality. 

With the above simplifications, one can see that $f_{\mu,T}(u) = e^{-\frac12 \|u\|^2}$, so that $\MM=\mathcal{N}_1$ (cf.~\eqref{noise_chi}). Therefore, the classical shadow will have improper characteristic function
\begin{align*}
\chi_{\hat{\rho}}(u) = e^{  \frac14 \|u\|^2-iu^\intercal \Omega x}\,.
\end{align*}
Once again, the characteristic function tensorises: given $u=(u_1,\dots, u_m)^\intercal$ and $x=(x_1,\dots, x_m)^\intercal$:
\begin{align}
    &\chi_{\hat{\rho}}(u)=\prod_{j=1}^m\chi_{\hat{\rho}_j}(u_j)\end{align}
    where
    \begin{align} \chi_{\hat{\rho}_j}(u_j)\coloneqq  e^{  \frac14 \|u_j\|^2-iu_j^\intercal \Omega x_j}\,.
\end{align}
Now, if we want to use the classical shadow to compute expectation values, we can formally use Plancherel's relation 
\bb
\hat{\rho}( O) &= \int \frac{d^{2m}u}{(2\pi)^m}\, \chi_{\hat{\rho}}(u)\, \chi_O(u)\\
&=  \int \frac{d^{2m}u}{(2\pi)^m}\, e^{+\frac14 \|u\|^2 -iu^\intercal \Omega x }\, \chi_O(u) \, .
\label{hetero_expectation_value_shadow}
\ee
In order for this to make sense, we should make sure that not only $O$ is trace class (instead of bounded), but also that $\chi_O(u)$ decays sufficiently rapidly, for instance like $\sim e^{-\frac{\lambda}{4}\|u\|^2}$, with $\lambda>1$. This decay is too fast --- but \emph{barely} too fast --- to be useful in practice. For instance, if $O$ has a finite expansion in the Fock basis then $\chi_O(u) \sim p(u) e^{-\frac{1}{4}\|u\|^2}$ as $\|u\|\to \infty$, where $p(u)$ is some polynomial of the entries of $u$. We get rid of the diverging Gaussian in~\eqref{hetero_expectation_value_shadow}, but not of the diverging {polynomial.} In order to take care of this issue, we make use of the approximations $\widetilde{Z}_{\mathbf{n}_1\mathbf{n}_2}$ of the Schwartz operators $\ketbraa{\mathbf{n}_1}{\mathbf{n}_2}$ as well as of the auxiliary map $\widetilde{\mathcal{P}}_M$ and the corresponding matrices $\widetilde{\sigma}_A^{(N)}(M)$ introduced in~\eqref{eq:projection2}, resp. in~\eqref{empiricalshadows}. We consider the matrices 
\begin{align*}
    &\hat{\rho}_A(M)\coloneqq  \sum_{\mathbf{n}_1,\mathbf{n}_2\in\{0,...,M\}^{|A|}}\, \hat{\rho}(\widetilde{Z}_{\mathbf{n}_2\mathbf{n}_1})\,\ketbraa{\mathbf{n}_1}{\mathbf{n}_2},\quad \text{where}\\
    &\hat{\rho}(\widetilde{Z}_{\mathbf{n}_2\mathbf{n}_1})\coloneqq \int_{\mathbb{R}^{2r}} \widetilde{\chi}_{\mathbf{n}_2\mathbf{n}_1}(u) \,e^{  \frac14 \|u\|^2-iu^\intercal \Omega x}\,\frac{d^{2r}u}{(2\pi)^r}\,
\end{align*}
where $\widetilde{\chi}_{\mathbf{n}_2\mathbf{n}_1}$ is defined in~\eqref{tildechin1n2}. From Lemma~\ref{partialparsevalaverage}, $\mathbb{E}\big[\hat{\rho}_{A}(M)\big]=\widetilde{\mathcal{P}}_M(\rho_A)$. For $\alpha\ge 0$, $R>0$ and $r,M\in\mathbb{N}$, we introduce the matrix norm 
\begin{align*}
    &\widetilde{\Sigma}^{(\alpha)}_r(M,R)\nonumber \coloneqq \left\| K_r(M,R) \right\|_\infty\, ,
\end{align*}
where the entries of the $(M+1)^r\times (M+1)^r$ matrix $K_r(M,R)$ are defined by
\begin{align*}
    &K_r(M,R)_{\mathbf{n_1}, \mathbf{n_2}} \\
    &\ \, \coloneqq \left((1\!+\!|\mathbf{n}_1|)(1\!+\!|\mathbf{n}_2|)\right)^{\frac{\alpha}{2}} \int_{|u|\le R}\!\! |\widetilde{\chi}_{\mathbf{n}_2\mathbf{n}_1}(u)|\,e^{\frac{1}{4}\|u\|^2} \frac{d^{2r}u}{(2\pi)^r}\, ,
\end{align*}
and $\mathbf{n_1}, \mathbf{n_2} \in \{0,\ldots, M\}^r$.
With these definitions, we can write the inequality
\bb
\left\|H_r^{\frac{\alpha}{2}} \hat{\rho}_A(M)\, H_r^{\frac{\alpha}{2}}\right\|_\infty\le \widetilde{\Sigma}^{(\alpha)}_r(M,R)\, .
\ee
As before, we then denote 
\begin{align*}
    \sigma_A^{(N)}(M)\coloneqq\frac{1}{N}\sum_{i=1}^N\,\hat{\rho}^{(i)}_A(M)\,,
\end{align*}
where $\hat{\rho}^{(i)}_A(M)$ are i.i.d.~random matrices of law the one of $\hat{\rho}_A(M)$. Our second main result is stated below: 
     
\begin{thm}
With the notation introduced above, given $0\le \alpha <n$, $r\in\mathbb{N}$ such that $E^{(n)}_r\coloneqq \max_{|A|\le r}\tr(\rho_A H_r^n)<\infty$, and $0<\eta<R$, set
\bb
M\!\coloneqq \min\!\Big\{\!&M'\!\!\in\!\mathbb{N}:\eta^2 > 2M'^2\\
&2(1\!+\!M')^{\frac{\alpha-n}{2}} \!E_r^{(n)} + \delta_0\big(\eta,M'\!\!,\tfrac{\alpha}{2},r\big)\le \tfrac{\epsilon}{2}\Big\}\,,
\ee
with $\delta_0$ defined as in Proposition~\ref{prop:EC}. Then for any

\bb
N\ge\inf_{0<\eta<R} &\frac{(M+1)^{2r}}{3\epsilon^2}\log\left(\frac{2\big[m(M+1)\big]^r}{\delta}\right)\\
&\Big[24\,\widetilde{\Sigma}^{(\alpha)}_r(M\!,R)^2\\
&\ +4\Big(\widetilde{\Sigma}_r^{(\alpha)}(M\!,R)+E_r^{(\alpha)}+\delta_0(\eta,M,\tfrac{\alpha}{2},r)\Big)\epsilon\Big]\,,
\ee

we have that for any region $A$ of size $|A|\le r$, it holds that
\bb
\left\|\sigma_A^{(N)}(M)-\rho_A\right\|^{(\alpha)}_1\le \epsilon
\ee
with probability at least $1-\delta$. Similarly, for
\bb
N\ge \inf_{0<\eta<R} &\frac{(M+1)^{2r}}{3\epsilon^2}\log\left(\frac{2L(M+1)^r}{\delta}\right) \\
&\Big[24\,\widetilde{\Sigma}^{(\alpha)}_r(M\!,R)^2\\
&\ +4\Big(\widetilde{\Sigma}_r^{(\alpha)}(M\!,R)+E_r^{(\alpha)}+\delta_0(\eta,M,\tfrac{\alpha}{2},r)\Big)\epsilon\Big]\,,
\ee

we have that for any set of $L$ observables $O_j$ on regions $A_j$ of size at most $r$ and with $\left\|H_{r}^{-\frac{\alpha}{2}}O_jH_{r}^{-\frac{\alpha}{2}}\right\|_\infty\le 1$, it holds that
\bb
\max_j \big|\tr\big[O_j\,(\sigma_{A_j}^{(N)}(M)-\rho_{A_j})\big]\big|\le \epsilon
\ee
with probability at least $1-\delta$.
\end{thm}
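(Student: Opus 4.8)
The plan is to run the same argument as in the proof of Theorem~\ref{thmhomod}, substituting the homodyne ingredients by their heterodyne analogues. First I would record that with $T=I$ and all symplectics taken to be the identity, the function $f_{\mu,T}$ of~\eqref{f_mu_T} degenerates to $f_{\mu,T}(u)=e^{-\frac12\|u\|^2}$, so that $\MM=\cN_1$ and, by~\eqref{chi_sigma}, the shadow has improper characteristic function $\chi_{\hat\rho}(u)=e^{\frac14\|u\|^2-iu^\intercal\Omega x}$, which factorises over the modes. This function grows like a Gaussian at infinity, hence cannot be paired with a generic trace-class operator; the crucial point is that it \emph{can} be paired with the smoothly truncated Fock operators $\widetilde Z_{\mathbf{n}_1\mathbf{n}_2}$ of~\eqref{tildechin1n2}, whose characteristic functions vanish outside $\|u^{(j)}\|\le R$ for every $j$. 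This makes $\hat\rho(\widetilde Z_{\mathbf{n}_2\mathbf{n}_1})$, and therefore the random matrix $\hat\rho_A(M)$, well defined, and Lemma~\ref{partialparsevalaverage} then yields $\mathbb E[\hat\rho_A(M)]=\widetilde{\mathcal P}_M(\rho_A)$.

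Next I would assemble the almost-sure bounds that feed Bernstein's matrix inequality~\eqref{matrixBern}. The definition of $\widetilde{\Sigma}^{(\alpha)}_r(M,R)$ through the entrywise majorant $K_r(M,R)$, combined with the elementary observation (already used in the homodyne case) that $\|A\|_\infty\le\|B\|_\infty$ whenever $|A_{ij}|\le B_{ij}$, gives $\|H_r^{\alpha/2}\hat\rho_A(M)H_r^{\alpha/2}\|_\infty\le\widetilde{\Sigma}^{(\alpha)}_r(M,R)$ almost surely. Together with $\|H_r^{\alpha/2}\mathcal P_M(\rho_A)H_r^{\alpha/2}\|_\infty\le E_r^{(\alpha)}$ and the operator-norm version of the truncation estimate of Lemma~\ref{double_truncation_lemma} (applied to the state $\rho_A$, so that $\|\cdot\|_\infty\le\|\cdot\|_1$ suffices), this controls both the almost-sure deviation and the second-moment quantity $\Sigma^2$ appearing in~\eqref{matrixBern}, with the extra term $\delta_0(\eta,M,\tfrac{\alpha}{2},r)$ accounting for the replacement of $\mathcal P_M$ by $\widetilde{\mathcal P}_M$. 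Plugging these into~\eqref{matrixBern}, converting the operator norm on $(M+1)^r\times(M+1)^r$ matrices into the $\|\cdot\|^{(\alpha)}_1$ norm at the cost of a factor $(M+1)^r$, and taking a union bound over the at most $m^r$ regions $A$ of size $\le r$ (respectively over the $L$ observables $O_j$), bounds the probability that $\|\sigma_A^{(N)}(M)-\widetilde{\mathcal P}_M(\rho_A)\|^{(\alpha)}_1\ge\epsilon/2$ somewhere by $2[m(M+1)]^r\exp(-\cdots)$.

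It then remains to control the deterministic approximation error. By estimate~\eqref{eq:EC_2} of Proposition~\ref{prop:EC}, $\|\rho_A-\widetilde{\mathcal P}_M(\rho_A)\|^{(\alpha)}_1\le 2(1+M)^{\frac{\alpha-n}{2}}E_r^{(n)}+\delta_0(\eta,M,\tfrac{\alpha}{2},r)$, so, for each fixed $\eta\in(0,R)$, taking $M=M(\eta)$ to be the smallest integer satisfying both $\eta^2>2M^2$ (which makes $\delta_0$ genuinely small) and $2(1+M)^{\frac{\alpha-n}{2}}E_r^{(n)}+\delta_0(\eta,M,\tfrac{\alpha}{2},r)\le\epsilon/2$ --- exactly the prescription in the statement --- makes this error $\le\epsilon/2$. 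A triangle inequality combining this with the concentration bound, and then optimising the resulting sufficient lower bound on $N$ over $\eta\in(0,R)$, yields the claimed sample complexity with probability at least $1-\delta$; the observable version follows verbatim upon replacing the union bound over regions by one over the $O_j$ and $m^r$ by $L$, using $\big|\tr[O_j(\sigma_{A_j}^{(N)}(M)-\rho_{A_j})]\big|\le\|H_r^{-\alpha/2}O_jH_r^{-\alpha/2}\|_\infty\,\|\sigma_{A_j}^{(N)}(M)-\rho_{A_j}\|^{(\alpha)}_1$.

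I expect the main obstacle to be the joint bookkeeping of the phase-space cutoff $R$ and the Fock/energy cutoff $\eta$. In contrast to the homodyne case, the divergent Gaussian $e^{\frac14\|u\|^2}$ in $\chi_{\hat\rho}$ must here be neutralised by a \emph{finite} integration domain, so $\widetilde{\Sigma}^{(\alpha)}_r(M,R)$ --- the quantity governing the variance in Bernstein's bound --- grows with $R$, while the energy-truncation error $\delta_0(\eta,M,\tfrac{\alpha}{2},r)$ demands that $\eta$, and hence $M$, be large (with $\eta^2>2M^2$); one must check that these two requirements are simultaneously feasible and that the infimum over $\eta<R$ in the final bound on $N$ is not vacuous. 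Once this interplay is handled, the remainder is a line-by-line transcription of the homodyne proof.
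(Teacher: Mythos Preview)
Your proposal is correct and follows essentially the same route as the paper: establish $\mathbb{E}[\hat\rho_A(M)]=\widetilde{\mathcal P}_M(\rho_A)$ via Lemma~\ref{partialparsevalaverage}, feed the almost-sure bound $\|H_r^{\alpha/2}\hat\rho_A(M)H_r^{\alpha/2}\|_\infty\le\widetilde{\Sigma}^{(\alpha)}_r(M,R)$ together with $\|H_r^{\alpha/2}\widetilde{\mathcal P}_M(\rho_A)H_r^{\alpha/2}\|_\infty\le E_r^{(\alpha)}+\delta_0(\eta,M,\tfrac{\alpha}{2},r)$ into Bernstein's inequality, convert norms at cost $(M+1)^r$, take the union bound, and close with Proposition~\ref{prop:EC} and a triangle inequality. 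The paper does not engage with the feasibility concern you raise at the end---it simply records the formal constants and the $\inf_{0<\eta<R}$---so your extra bookkeeping remark is a useful caveat rather than a point of divergence.
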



\begin{proof}

By construction, we have almost surely that $\|H_r^{\frac{\alpha}{2}}\hat{\rho}_A(M)H_r^{\frac{\alpha}{2}}\|_\infty\le 
     \widetilde{\Sigma}^{(\alpha)}_r(M,R)$. Moreover, we have that
     \begin{align*}
     & \|H_r^{\frac{\alpha}{2}}\widetilde{\mathcal{P}}_M(\rho_A)H_r^{\frac{\alpha}{2}}\|_\infty\nonumber\\
     & \le \|\widetilde{\mathcal{P}}_M(\rho_A)\|_1^{(\alpha)}\\
     & \le \|(\widetilde{\mathcal{P}}_M-\mathcal{P}_M)(\rho_A)\|_1^{(\alpha)}+ \|\mathcal{P}_M(\rho_A)\|_1^{(\alpha)}\\
     & \le  \delta_0(\eta,M,\frac{\alpha}{2},r)+ E_r^{(\alpha)}
     \end{align*}
    for $\eta^2> 2M^2$, by~Lemma~\ref{double_truncation_lemma}. Therefore almost surely $\|H_r^{\frac{\alpha}{2}}(\hat{\rho}_A(M)-\widetilde{\mathcal{P}}_M(\rho_A))H_r^{\frac{\alpha}{2}}\|_\infty\le E_r^{(\alpha)}+\delta_0(\eta,M,\frac{\alpha}{2},r)+\widetilde{\Sigma}^{(\alpha)}_r(M,R)$. Finally, $\|\mathbb{E}[(H_r^{\frac{\alpha}{2}}\hat{\rho}_A(M)H_r^{\frac{\alpha}{2}})^2]\|_\infty\le\widetilde{\Sigma}_r^{(\alpha)}(M,R)^2$. We hence use the matrix Bernstein inequality~\eqref{matrixBern} to get in terms of a constant $C=C(N,\epsilon,R,\alpha,r)$ with \[C\coloneqq \frac{3N\epsilon^2(M+1)^{-2r}}{6\widetilde{\Sigma}_r^{(\alpha)}(M,R)^2+2(\widetilde{\Sigma}_r^{(\alpha)}(M,R)+\delta_0(\eta,M,\frac{\alpha}{2},r)+E_r^{(\alpha)})\epsilon}\]
    the estimate
\begin{align}\label{matrixBernhetero}
    &\mathbb{P}\Big(\big\|\widetilde{\sigma}_A^{(N)}(M)-\widetilde{\mathcal{P}}_M(\rho_A)\big\|^{(\alpha)}_1\ge {\epsilon}\Big)\\
    &\le 2(M+1)^r\,\operatorname{exp}\left(-C(N,\epsilon,R,\alpha,r)\right)\,.\nonumber
\end{align}
Therefore, by a union bound, we get that 
\begin{align*}
   & \mathbb{P}\Big(\exists A, \,|A|\le r :\, \big\|\widetilde{\sigma}_A^{(N)}(M)-\widetilde{\mathcal{P}}_M(\rho_A)\big\|^{(\alpha)}_1\ge {\epsilon}\Big)\\
   &\le 2\big[m(M+1)\big]^r\,\operatorname{exp}\left(-C(N,\epsilon,R,\alpha,r)\right)\,.
\end{align*}
Next, we use Proposition~\ref{prop:EC} and choose $M=\min\{M'\in\mathbb{N}:\,2(1+M')^{\frac{\alpha-n}{2}}\,E_r^{(n)}+\delta_0\big(\eta,M',\frac{\alpha}{2},r\big)\le \frac{\epsilon}{2}\}$, so that $\|\widetilde{\mathcal{P}}_M (\rho_A)-\rho_A\|_1^{(\alpha)}\le \epsilon/2$, and
\begin{align*}
& \mathbb{P}\Big(\exists A, \,|A|\le r :\,\big\|\widetilde{\sigma}_A^{(N)}(M)-\rho_A\big\|^{(\alpha)}_1\ge {\epsilon}\Big) \nonumber\\
&\le \mathbb{P}\Big(\exists A, \,|A|\le r :\,\big\|\widetilde{\sigma}_A^{(N)}(M)-\widetilde{\mathcal{P}}_M(\rho_A)\big\|^{(\alpha)}_1\ge \frac{\epsilon}{2}\Big) \,.
\end{align*}
Therefore, choosing $N$ as in the statement of the theorem, we obtain that the probability that on any subset $A$ of at most $r$ modes $\|\widetilde{\sigma}_A^{(N)}(M)-\rho_A\big\|^{(\alpha)}_1\ge {\epsilon}$ is at least $1-\delta$.

The results for a fixed number of $L$ observables $O_1,\dots, O_L$ supported on regions $A_1,\dots, A_L$ of size at most $r$ follow after replacing the above union bounds over regions $A$ by a union bound over the observables $O_j$:
\begin{align*}
&    \mathbb{P}\Big(\exists j: \, |\tr(O_j\widetilde{\sigma}_{A_j}^{(N)}(M))-\tr(O_j\,\widetilde{\mathcal{P}}_M(\rho_{A_j}))|\ge {\epsilon}\Big)\\
    &\le 2L\,\big[(M+1)\big]^r\,\operatorname{exp}\left(-C(N,\epsilon,R,\alpha,r)\right)
\end{align*}
and the result follows after replacing $m^r$ by $L$ in the estimate for $N$.

\end{proof}

\subsection{Comparison to related work}
\label{compare}
In a concurrent and independent work~\cite{michael2016new}, the authors also developed a shadow tomography protocol for bosonic systems. Their framework is equivalent to ours in the homodyne and heterodyne settings. Let us consider the former in the one-mode setting: the shadows in~\cite{michael2016new} are constructed in the Fock basis by taking $\check{\rho}_{mn}\coloneqq e^{i(m-n)\theta}f_{mn}(x)$, after making a random rotation $\theta$ and obtaining $x$ from a measurement in the position axis. The functions $f_{mn}=f_{nm}$ are the so-called pattern functions~\cite{lvovsky2009continuous,PhysRevA.50.4298} and are defined using Fock state wavefunctions $\psi_m$ ($m^{th}$ energy eigenstate of the harmonic
oscillator) and $\phi_n$ ($n^{th}$ non-normalizable solution of
the Schr\"{o}dinger equation of a harmonic oscillator) as
\begin{align*}
f_{mn}(x)\coloneqq \frac{\partial}{\partial x} (\psi_m(x)\phi_n(x)),\qquad n\ge m\,.
\end{align*}
Then, they construct the following estimator from $N$ copies of the unknown state:
\begin{align*}
\check{\sigma}^{(N)}_{mn}\coloneqq \frac{1}{N}\sum_{i=1}^N\,\check{\rho}^{(i)}_{mn}=\frac{1}{N}\sum_{i=1}^N e^{i(m-n)\theta_i}f_{mn}(x_{\theta_i})\,,
\end{align*}
where the angles $\theta_i$ are picked uniformly at random, and $x_{\theta}=x'_\theta\oplus x''_\theta=\sqrt{2\pi}x'\oplus \frac{1}{\sqrt{2\pi}}x''$ in the notations of Section~\ref{homodyne}. It turns out that these pattern functions can be equivalently defined in terms of their Fourier transforms (eq.~(31) of~\cite{PhysRevA.61.063819}, see also eqs.~(20)-(21) of~\cite{aubry2008state}): for $n\ge m$:
\begin{align*}
\widetilde{f}_{mn}(t)&\coloneqq \int e^{-itx}\,f_{mn}(x)\,dx\\
&=\pi (-i)^{n-m}\,\sqrt{\frac{2^{m-n}m!}{n!}}\,|t|\,t^{n-m}e^{-\frac{t^2}{4}}\,L_{m}^{(n-m)}\Big(\frac{t^2}{2}\Big)\,.
\end{align*}
Therefore $|\widetilde{f}_{mn}(x_\theta)|$ corresponds to the coordinate $\Sigma_1^{(0)}(M)_{mn}$ for $m,n\le M$. As observed in~\cite{Gandhari2022}, the pattern functions have been already studied for tomography purposes (see e.g.~Lemma 7.1 in~\cite{Artiles2005}). In particular, 
\begin{align*}
    \sum_{0\le j\le k\le M}\|f_{kj}\|^2_\infty =\mathcal{O}(M^{7/3})\,.
\end{align*}
These bounds can be directly used to get control over $\|\Sigma_1^{(0)}(M)\|_\infty$. That way, we recover Theorem III.1 of~\cite{michael2016new}. Slightly better bounds can be achieved using estimates in Lemma 4 of~\cite{aubry2008state}. 
It is also worth observing that, from the proof of Theorem~\ref{thmhomod}, one can extend these results to more realistic models of homodyning with finitely squeezed resources.

\section{Learning non-linear functionals of the states}

So far, we considered properties of the quantum system which could be related to local linear functionals of the unknown state. In finite dimensions, a simple trick permits the estimation of non-linear functionals, e.g.~the entropy of entanglement~\cite{Huang2020,Huang2021}. Here, we show that the technique developed in these works combined with recent energy-constrained continuity bounds provide us with a similar extension. For sake of conciseness, we will only consider the entropy of a reduced CV state over $r\le m$ modes of the $m$-mode state:
\begin{align*}
    S(\rho_A)\coloneqq -\tr(\rho_A\ln\rho_A)\,,
\end{align*}
where we denote by $A$ the set of those modes, so that $|A|=r$. We also assume that the unknown state $\rho$ has locally finite energy: $\tr((I+N^{(j)})\rho_j)\coloneqq E<\infty$, where $N^{(j)}$ corresponds to the number operator on mode $j$, and where $\rho_j$ is the reduced state on mode $j\in A$. Here, we also restrict ourselves to the shadows constructed in our local homodyne detection scheme of~\eqref{homodyne}, although similar conclusions can be drawn from shadows arising from a local heterodyne detection and states with higher moment constraints. Given the shadow 
$\sigma_A^{(N)}(M)$ arising from the homodyne scheme of~\eqref{homodyne}, and $d_p\in\mathbb{N}$, we denote the matrix polynomial
\begin{align*}
H^{(d_p)}(\sigma_A^{(N)}(M))&\coloneqq -\tr(\sigma_A^{(N)}(M)-P_{M})\\
&\qquad \qquad +\sum_{k=2}^{d_p}\frac{\tr\big[(P_M-\sigma_A^{(N)}(M))^k\big]}{k(k-1)}\,.
\end{align*}
\begin{thm}
With the notation of the previous paragraph, we have that for any $\epsilon>0$ and $M\in\mathbb{N}$,
\begin{align*}
&\mathbb{P}\Big(\exists A,\,|A|\le r:\, |S(\rho_A)-H^{\big(\big\lceil \frac{3(M+1)^r}{\epsilon}\big\rceil\big)}(\sigma_A^{(N)}(M))|\ge \epsilon\Big)\\
&\le 2\big[m(M+1)\big]^r\,\operatorname{exp}\left(-\frac{3N\epsilon'^2(M+1)^{-2r}}{6\Sigma_r^{(0)}(M)^2+2(\Sigma_r^{(0)}(M)+1)\epsilon'}\right)\,,
\end{align*}
where $\epsilon'=\frac{\epsilon^2}{12(M+1)^re}2^{-4(M+1)^r/\epsilon}$.


\end{thm}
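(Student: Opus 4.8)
The plan is to control $\big|S(\rho_A)-H^{(d_p)}(\sigma_A^{(N)}(M))\big|$, with $d_p=\big\lceil 3(M+1)^r/\epsilon\big\rceil$, by inserting the two intermediate quantities $S\big(\mathcal{P}_M(\rho_A)\big)$ and $H^{(d_p)}\big(\mathcal{P}_M(\rho_A)\big)$ and making each of the three resulting terms at most $\epsilon/3$: the first (an energy--truncation bias) and the second (a polynomial--truncation error) deterministically, and the third (the statistical fluctuation) on a high--probability event via the matrix Bernstein inequality, exactly along the lines of the proof of Theorem~\ref{thmhomod}.

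\emph{Polynomial truncation.} The starting point is the Taylor expansion of $x\mapsto -x\ln x$ around $x=1$, which yields, for any Hermitian $\tau$ supported on $\ran P_M$ whose spectrum is (essentially) contained in $[0,1]$, the matrix identity $-\tr(\tau\ln\tau)=\tr(P_M-\tau)-\sum_{k\ge 2}\tr\big[(P_M-\tau)^k\big]/(k(k-1))$; thus $H^{(d_p)}(\tau)$ is the degree-$d_p$ partial sum of (the matrix lift of) this series. Since $\sum_{k>d_p}1/(k(k-1))=1/d_p$ and $\tr|P_M-\tau|\le\rk(P_M)=(M+1)^r$ on the range of spectra we use, the truncation error is at most $(M+1)^r/d_p\le\epsilon/3$ by the choice of $d_p$. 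Applied with $\tau=\mathcal{P}_M(\rho_A)$ (eigenvalues in $[0,1]$), this bounds $\big|S(\mathcal{P}_M(\rho_A))-H^{(d_p)}(\mathcal{P}_M(\rho_A))\big|$.

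\emph{Bias.} From the local first-moment hypothesis $\tr\big((I+N^{(j)})\rho_j\big)=E$ and Proposition~\ref{prop:EC} with $\alpha=0$ and $n=1$ one gets $\|\rho_A-\mathcal{P}_M(\rho_A)\|_1=\mathcal{O}\big((M+1)^{-1/2}\big)$ uniformly over $|A|\le r$; combined with boundedness of the local energy of the renormalised $\mathcal{P}_M(\rho_A)/\tr\mathcal{P}_M(\rho_A)$, the energy-constrained continuity bound for the von Neumann entropy forces $\big|S(\rho_A)-S(\mathcal{P}_M(\rho_A))\big|\le\epsilon/3$ once $M$ is large, which is automatic in the regime in which the claimed probability bound is non-vacuous.

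\emph{Statistics.} Here $\sigma_A^{(N)}(M)=\frac1N\sum_{i=1}^N\hat\rho_A^{(i)}(M)$ with $\mathbb{E}[\hat\rho_A(M)]=\mathcal{P}_M(\rho_A)$ by Lemma~\ref{partialparsevalaverage}, and almost surely $\|\hat\rho_A(M)\|_\infty\le\Sigma_r^{(0)}(M)$, $\|\mathbb{E}[\hat\rho_A(M)^2]\|_\infty\le\Sigma_r^{(0)}(M)^2$, while $\|\mathcal{P}_M(\rho_A)\|_\infty\le 1$. The tolerance $\epsilon'=\frac{\epsilon^2}{12(M+1)^r e}\,2^{-4(M+1)^r/\epsilon}$ is chosen so that on the event $\{\|\sigma_A^{(N)}(M)-\mathcal{P}_M(\rho_A)\|_1^{(0)}\le\epsilon'\ \text{for all }|A|\le r\}$ the propagation of this trace-norm error through the degree-$d_p$ polynomial $\tau\mapsto H^{(d_p)}(\tau)$ stays below $\epsilon/3$; on that event all three terms are controlled, so $\big|S(\rho_A)-H^{(d_p)}(\sigma_A^{(N)}(M))\big|\le\epsilon$. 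Finally, the matrix Bernstein inequality~\eqref{matrixBern} applied with accuracy $\epsilon'$, followed by a union bound over the $\le[m(M+1)]^r$ subsets $A$ (respectively over the $L$ observables), reproduces verbatim the stated tail bound with $\epsilon$ replaced by $\epsilon'$ in the exponent.

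The main obstacle is the third step: one must quantify how the statistical error propagates through the high-degree polynomial $H^{(d_p)}$ (equivalently, bound the trace-norm Lipschitz constant of $\tau\mapsto\tr\,f_{d_p}(P_M-\tau)$ on the relevant spectral window), and one must ensure that, on the good event, the spectrum of $\sigma_A^{(N)}(M)$ remains inside the interval on which both the truncated-series estimate and this Lipschitz bound are valid --- this is exactly what dictates the (conservative, exponentially small) choice of $\epsilon'$ and hence the exponential sample overhead. The bias step is routine given the energy-constrained entropy continuity bound together with Proposition~\ref{prop:EC}, and the concentration step is a direct repetition of the homodyne argument.
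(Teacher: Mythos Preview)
Your three–term decomposition (bias, polynomial truncation, statistical fluctuation) and the identification of the statistics step as the main obstacle are correct and match the paper's argument. Two points deserve comment.

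First, in the statistics step you only state what must be shown and hint at a mechanism --- a trace-norm Lipschitz bound for $\tau\mapsto\tr f_{d_p}(P_M-\tau)$ on a spectral window, together with spectral control of $\sigma_A^{(N)}(M)$ on the good event --- that is \emph{not} the one the paper uses. The paper instead linearises $H^{(d_p)}$ as a functional of tensor powers,
\[
H^{(d_p)}(\tau)=-1+(M+1)^r+\sum_{j=0}^{d_p}\frac{C_j}{j!}\,\tr\big(S_j\,\tau^{\otimes j}\big),
\]
bounds $\big\|\tau_1^{\otimes j}-\tau_2^{\otimes j}\big\|_1\le(1+\epsilon')^j-1$ by a telescoping argument (using $\|\mathcal{P}_M(\rho_A)\|_1\le 1$ and $\|\sigma_A^{(N)}(M)\|_1\le 1+\epsilon'$ on the good event), and combines this with $\sum_{j\le d_p}|C_j|/j!\le 2^{d_p}$ to get
\[
\big|H^{(d_p)}(\mathcal{P}_M(\rho_A))-H^{(d_p)}(\sigma_A^{(N)}(M))\big|\le\big((1+\epsilon')^{d_p}-1\big)\,2^{d_p}\,.
\]
This route requires \emph{no} spectral assumption on $\sigma_A^{(N)}(M)$, so your worry about keeping its spectrum inside a window is unnecessary; the stated choice of $\epsilon'$ then falls out directly from $(1+\epsilon')^{d_p}-1\le e\,\epsilon' d_p$ and $2^{d_p}\le 2^{4(M+1)^r/\epsilon}$.

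Second, your bias step is slightly too quick: $\mathcal{P}_M(\rho_A)$ is only sub-normalised, so one cannot directly invoke an entropy continuity bound between $S(\rho_A)$ and $S(\mathcal{P}_M(\rho_A))$. The paper first compares $S(\rho_A)$ with $S\big(\mathcal{P}_M(\rho_A)/\tr\mathcal{P}_M(\rho_A)\big)$ via the energy-constrained continuity bound, and then separately controls the gap to $S(\mathcal{P}_M(\rho_A))$ by $\gamma\, r\ln(M+1)$.
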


\begin{proof}
By~\cite[Theorem 3]{becker2021classical}, we have that for all $\gamma\in [0,rE/(1+rE)]$, and any state $\rho_A'$ on subsystem $A$ with $\|\rho_{A}-\rho_A'\|_1\le 2\gamma$,
\begin{align}\label{ECcontinuity}
    |S(\rho_A)-S(\rho_A')|\le h(\gamma)+rE\,h\big(\gamma/(rE)\big)\,,
\end{align}
where $h(.)$ denotes the binary entropy. Next, we pick $\rho_A'\coloneqq \mathcal{P}_M(\rho_A)/\tr(\mathcal{P}_M(\rho_A))$. By Proposition~\ref{prop:EC}, we have that, assuming $1+M>4r^2E^2$,
\begin{align*}
   &  \|\rho_A-\rho_A'\|_1=\frac{\|\tr(\mathcal{P}_M(\rho_A))\rho_A-\mathcal{P}_M(\rho_A)\|_1}{\tr(\mathcal{P}_M(\rho_A))}\,\\
    & \le \frac{|\tr(\mathcal{P}_M(\rho_A))-1|+\|\rho_A-\mathcal{P}_M(\rho_A)\|_1}{1-|\tr(\mathcal{P}_M(\rho_A))-1|}\nonumber\\
    & \le \frac{2\|\rho_A-\mathcal{P}_M(\rho_A)\|_1}{1-\|\mathcal{P}_M(\rho_A)-\rho_A\|_1}\,\\
    & \le \frac{4{rE}}{(1+M)^{\frac{1}{2}}-2rE}\equiv 2\gamma\,.
\end{align*}
Next, we approximate $S(\rho_A')$ in terms of $S(\mathcal{P}_M(\rho_A))$, where the entropy of a sub-normalised positive trace class operator $A$ is defined as $S(A)\coloneqq -\tr A\ln A$: denoting $\tr(\mathcal{P}_M(\rho_A))\coloneqq \lambda\le 1$,
\begin{align}
    & S(\rho_A')-S(\mathcal{P}_M(\rho_A))\nonumber\\
    &=-\tr\frac{\mathcal{P}_M(\rho_A)}{\lambda}\ln\frac{\mathcal{P}_M(\rho_A)}{\lambda}+\tr\mathcal{P}_M(\rho_A)\ln\mathcal{P}_M(\rho_A)\nonumber\\
    &=\frac{1-\lambda}{1-(1-\lambda)}\,S(\mathcal{P}_M(\rho_A))+\ln\lambda\nonumber\\
    &\le \frac{\|\rho_A-\mathcal{P}_M(\rho_A)\|_1}{1-\|\rho_A-\mathcal{P}_M(\rho_A)\|_1}\,S(\mathcal{P}_M(\rho_A))\nonumber\\
    &\le \gamma\,S(\mathcal{P}_M(\rho_A))\nonumber\\
    &\le r\ln(M+1)\,\gamma\label{ECcontinuitybound}\,.
\end{align}
Next, we use a polynomial approximation of $S(\mathcal{P}_M(\rho_A))$ as was already done in~\cite{Huang2021}: 
\begin{align*}
    &H^{(d_p)}(\mathcal{P}_M(\rho_A))
    \nonumber\\ &\coloneqq -\tr(\mathcal{P}_M(\rho_A)-P_{M})+\sum_{k=2}^{d_p}\frac{\tr\big[(P_M-\mathcal{P}_M(\rho_A))^k\big]}{k(k-1)}
\end{align*}
where $d_p$ is the truncation degree which we will choose later. Since $P_M$ projects onto a subspace of dimension $(M+1)^r$, a simple extension of the proof leading to~\cite[Equation (K46)]{Huang2021} gives us
\begin{align}\label{Sfinal}
    |S(\mathcal{P}_M(\rho_A))-H^{(d_p)}(\mathcal{P}_M(\rho_A))|\le \frac{(M+1)^r}{d_p}\,.
\end{align}
Now, we re-express the function $H^{(d_p)}(\mathcal{P}_M(\rho_A))$ as a linear function of $(\mathcal{P}_M(\rho_A))^{\otimes d_p}$:
\bb
&H^{(d_p)}(\mathcal{P}_M(\rho_A)) \\
&=-1+(M\!+\!1)^r+\sum_{k=2}^{d_p}\frac{1}{k(k-1)}\,\sum_{j=0}^k\binom{k}{j}\,(-1)^j\tr((\mathcal{P}_M(\rho_A))^j)\\
&=-1+(M\!+\!1)^r \\
&\qquad + \sum_{j=0}^{d_p}\frac{1}{j!}\,\tr(S_j(\mathcal{P}_M(\rho_A))^{\otimes j})\,(-1)^j \!\!\sum_{k=\max\{2,j\}}^{d_p}\frac{(k\!-\!2)!}{(k\!-\!j)!}\\
&\equiv -1+(M\!+\!1)^r+\sum_{j=0}^{d_p}\frac{1}{j!}\,C_j\,\tr(S_j(\mathcal{P}_M(\rho_A))^{\otimes j})\,,
\ee
where $S_j$ is a generalised swap operator over $j$ subsystems. Let us now chose $\sigma_A^{(N)}(M)$ to be the classical shadow arising from the homodyne scheme of~\eqref{homodyne}. We have, similarly to~\cite[Lemma 11]{Huang2021}
\begin{align}\label{eqhdp}
   &| H^{(d_p)}(\mathcal{P}_M(\rho_A))-H^{(d_p)}(\sigma_A^{(N)}(M))|\\
   &\le \sum_{j=0}^{d_p}\,\frac{1}{j!}|C_j|\,\|(\mathcal{P}_M(\rho_A))^{\otimes j}-(\sigma_A^{(N)}(M))^{\otimes j}\|_1\,.\nonumber 
\end{align}
Next, we estimate the trace distances in the above summand. By an arbitrary labelling $A_1,\dots,A_j$ of the subsystems so that $(\mathcal{P}_M(\rho_A))^{\otimes j}\coloneqq \rho_{A_1}'\otimes \dots\otimes\rho_{A_j}'$ and $(\sigma_A^{(N)}(M))^{\otimes j}\coloneqq \sigma_{A_1}\otimes\dots\otimes\sigma_{A_j}$, we have
\bb
&\|(\mathcal{P}_M(\rho_A))^{\otimes j}-(\sigma_A^{(N)}(M))^{\otimes j}\|_1 \\
&=\|\rho_{A_1}'\otimes\dots\otimes \rho_{A_j}'-\sigma_{A_1}\otimes\dots\otimes \sigma_{A_j}\|_1 \\
&=\big\|(\rho_{A_1}'-\sigma_{A_1})\otimes \rho_{A_2}'\otimes\dots\otimes\rho_{A_j}'\\
&\qquad +\sigma_{A_1}\otimes (\rho_{A_2}'\otimes\dots\otimes \rho_{A_j}'-\sigma_{A_2}\otimes\dots\otimes\sigma_{A_j})\big\|_1 \\
&\le \|\rho_{A_1}'-\sigma_{A_1}\|_1\|\rho_{A_2}'\|_1\dots\|\rho_{A_j}'\|_1 \\
&\qquad +\|\sigma_{A_1}\|_1\,\big\|\rho_{A_2}'\otimes\dots\otimes \rho_{A_j}'-\sigma_{A_2}\otimes\dots\otimes\sigma_{A_j}\big\|_1 \\
&\le \|\rho_{A_1}'-\sigma_{A_1}\|_1 \\
&\qquad +(1+\|\rho_{A_1}'\!-\!\sigma_{A_1}\|_1)\,\big\|\rho_{A_2}'\!\otimes\!\dots\!\otimes\! \rho_{A_j}'-\sigma_{A_2}\!\otimes\!\dots\!\otimes\!\sigma_{A_j}\big\|_1 .
\label{multitriangle}
\ee
Now, we recall that the homodyne tomography protocol provides us with the following concentration bound borrowed from~\eqref{unionbound}
\begin{align}
    & \mathbb{P}\Big(\exists A, \,|A|\le r :\, \big\|\sigma_A^{(N)}(M)-\mathcal{P}_M(\rho_A)\big\|_1\ge {\epsilon'}\Big)\nonumber\\
    &\le 2\big[m(M+1)\big]^r\,\nonumber\\
    & \quad \times \operatorname{exp}\left(-\frac{3N\epsilon'^2(M+1)^{-2r}}{6\Sigma_r^{(0)}(M)^2+2(\Sigma_r^{(0)}(M)+1)\epsilon'}\right)\,.
\end{align}
Plugging this bound into~\eqref{multitriangle}, we have that with high probability
\begin{align*}
&\|(\rho_A')^{\otimes j}-(\sigma_A^{(N)}(M))^{\otimes j}\|_1 \nonumber\\&\le \epsilon'+\big(1+\epsilon'\big)\|(\mathcal{P}_M(\rho_A))^{\otimes j-1}-(\sigma_A^{(N)}(M))^{\otimes j-1}\|_1\,.
\end{align*}
After iterating the procedure $j$ times, we find 
\begin{align*}
  &  \|(\mathcal{P}_M(\rho_A))^{\otimes j}-(\sigma_A^{(N)}(M))^{\otimes j}\|_1 \nonumber\\&\le \epsilon'\sum_{i=0}^{j-1}\big(1+\epsilon'\big)^i\le \big(1+\epsilon'\big)^j-1\,.
\end{align*}
Into~\eqref{eqhdp}, we have that
\begin{align}
    & | H^{(d_p)}(\mathcal{P}_M(\rho_A))-H^{(d_p)}(\sigma_A^{(N)}(M))|\nonumber\\ & \le \sum_{j=0}^{d_p}\frac{1}{j!}|C_j|\,\Big(\Big(1+\epsilon'\Big)^j-1\Big)\nonumber\\
    &\le \big(\big(1+\epsilon'\big)^{d_p}-1\big)\,\sum_{j=0}^{d_p}\,\frac{|C_j|}{j!}\nonumber\\
    & \le  \big(\big(1+\epsilon'\big)^{d_p}-1\big)\,2^{d_p}\,.\label{hpfinal}
\end{align}
Then, combining~\eqref{ECcontinuity},~\eqref{ECcontinuitybound},~\eqref{Sfinal} and~\eqref{hpfinal}, we end up with 
\begin{align*}
    & |S(\rho_A)-H^{(d_p)}(\sigma_A^{(N)}(M))|\nonumber\\
    &\le h(\gamma)+rEh\Big(\frac{\gamma}{rE}\Big)+2r\ln(M+1)\gamma\nonumber\\
    & \quad +\frac{(M+1)^r}{d_p}+ \big(\big(1+\epsilon'\big)^{d_p}-1\big)\,2^{d_p}\,.
\end{align*}
Now, the first three terms on the right-hand side above are small for $M$ large enough, the fourth term is small for $d_p$ large enough, and the last term is small for $N$ large enough. The result follows after choosing $M$ so that the first three terms are smaller than $\epsilon/3$, $d_p=\Big\lceil \frac{3(M+1)^r}{\epsilon}\Big\rceil$ so that the fourth term is smaller than $\epsilon/3$, and finally $\epsilon'=\frac{\epsilon^2}{12(M+1)^re}2^{-4(M+1)^r/\epsilon}$ so that the last term is smaller than $\epsilon/3$.
\end{proof}

\section{Discretisation scheme via quasi Monte-Carlo integration}

Since the objects we are manipulating in this paper are defined on continuous metric spaces, we need to explain how to devise an efficient description $\sigma^{\operatorname{dis}}_N$ of the estimator state $\sigma_N$ in terms of a number of discrete parameters which scales at most polynomially with the number $m$ of modes and such that $\tr(O(\sigma^{\operatorname{dis}}_N-\sigma_N))$ can be controlled for the observables $O$ for which we want to learn the average $\tr(\rho O)$. One natural strategy consists e.g.~in approximating the coefficients $\tr(\widetilde{Z}_{\mathbf{n}_1\mathbf{n}_2}\sigma_N)$ of the matrix $\widetilde{\mathcal{P}}_{{M}}(\sigma_N)$. Since the characteristic function of $\widetilde{Z}_{\mathbf{n}_1\mathbf{n}_2}$ is compactly supported in a ball $B(0,R)$ around the origin, it suffices to store its values on a net of small enough mesh.


In order to evaluate traces  $\tr(\tilde Z_{\mathbf n_1 \mathbf n_2} \sigma_N),$ in an efficient way, we shall employ quasi Monte-Carlo techniques using the isometry~\eqref{symbol}.
We start by recalling that an $L^1$ function $u: \Omega \to \mathbb R$ is of bounded variation if
\begin{align*}
 &\operatorname{TV}(u)\nonumber\\& \coloneqq \sup \left\{ \int_{\Omega} u(x) \operatorname{div}(\phi(x)); \phi \in C_c^1(\Omega, \mathbb R^n); \Vert \phi \Vert_{\infty}\le 1 \right\} \nonumber\\
 & \quad <\infty\,
 \end{align*}
 and denote by $\operatorname{TV}(K,\mathbb{R})$ the space of functions of bounded variation. In particular, for $u \in C^1(\Omega, \mathbb R^n)$ one just has 
\begin{align*}
 \operatorname{TV}(u) = \int_{\Omega} \vert \nabla u(x) \vert \ dx.
\end{align*}

For a complex valued function to be of bounded variation, we accordingly require that both its real and imaginary part are of bounded variation. 
Let $K=[-l_j,l_j]^m$ be a compact set and $\ell: [0,1]^m \to K$ a linear map. We set $S =\{t_k\}_{k \in \mathbb N},$ where $t_k \in [0,1]^m$ is a \emph{Halton sequence}, see~\cite{Niederreiter} for a definition, in a pairwise prime basis $b_1,..,b_m.$ 
We then introduce for $f \in \operatorname{TV}(K,\mathbb R)$, the numerical integral 
\[I_K(f,k) = \frac{1}{k} \sum_{i=1}^k f(\ell(t_i)).\]
It then follows that there exists a constant $C(b_1,...,b_d)$ independent of both $f$ and $N$ such that, see~\cite[Theorem $2.11$]{Niederreiter},
\begin{align*}
 &\Big \vert I_K(f,k)-\int_{K} f(x) \ dx \Big \vert \nonumber\\ &\le \operatorname{TV}(f) \,C(\ell(t_1),...,\ell(t_m))\, \frac{\log(k)^m}{k}.
 \end{align*}
Thus, this technique allows us to approximate high-dimensional integrals with errors that depend only very mildly on the number of modes $m.$ We can then apply this construction to $$f(x)=\frac{\chi_{\vert \mathbf n_1\rangle \langle \mathbf n_2 \vert}(x) \chi_{\sigma_N}(x) \prod_{j=1}^m\xi_{\eta,R}(x_j)}{(2\pi)^m},$$ with $K = [-R,R]^m.$

\section{Examples}

In this section, we test our homodyne classical shadow tomography method by means of numerical simulations.

\begin{figure}[h!]
\includegraphics[width=4.2cm]{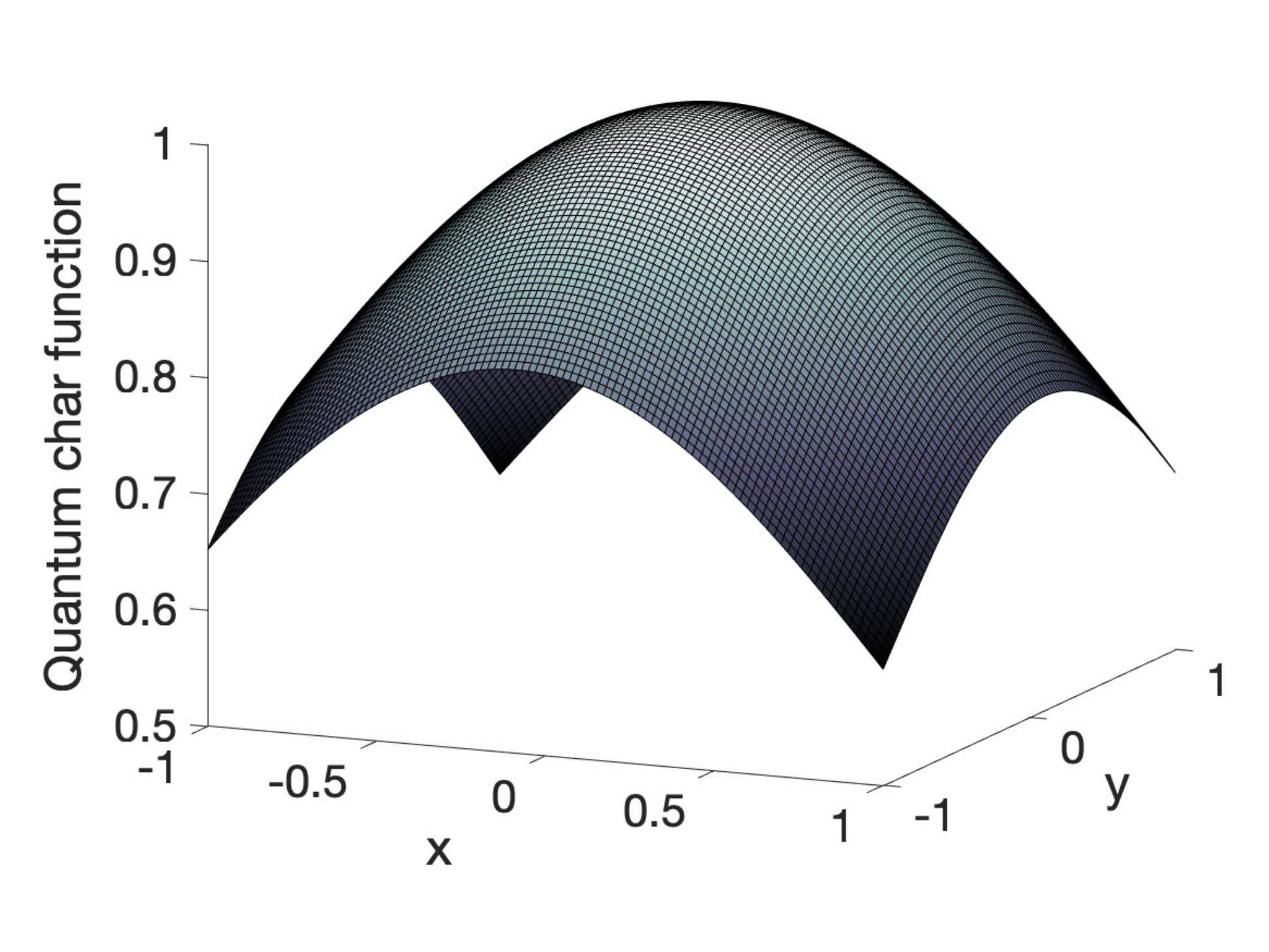}
\includegraphics[width=4.2cm]{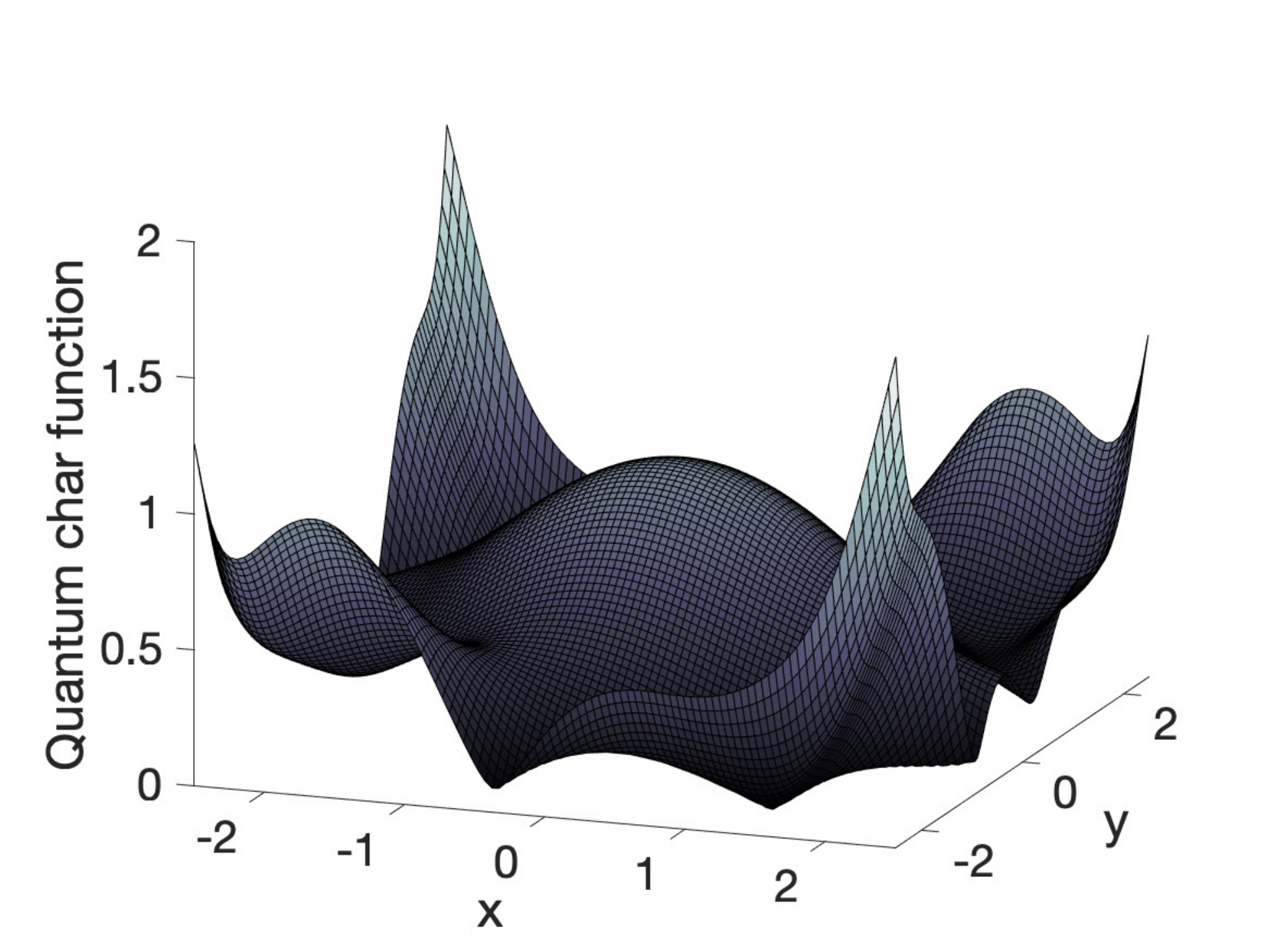}\\
\includegraphics[width=4.2cm]{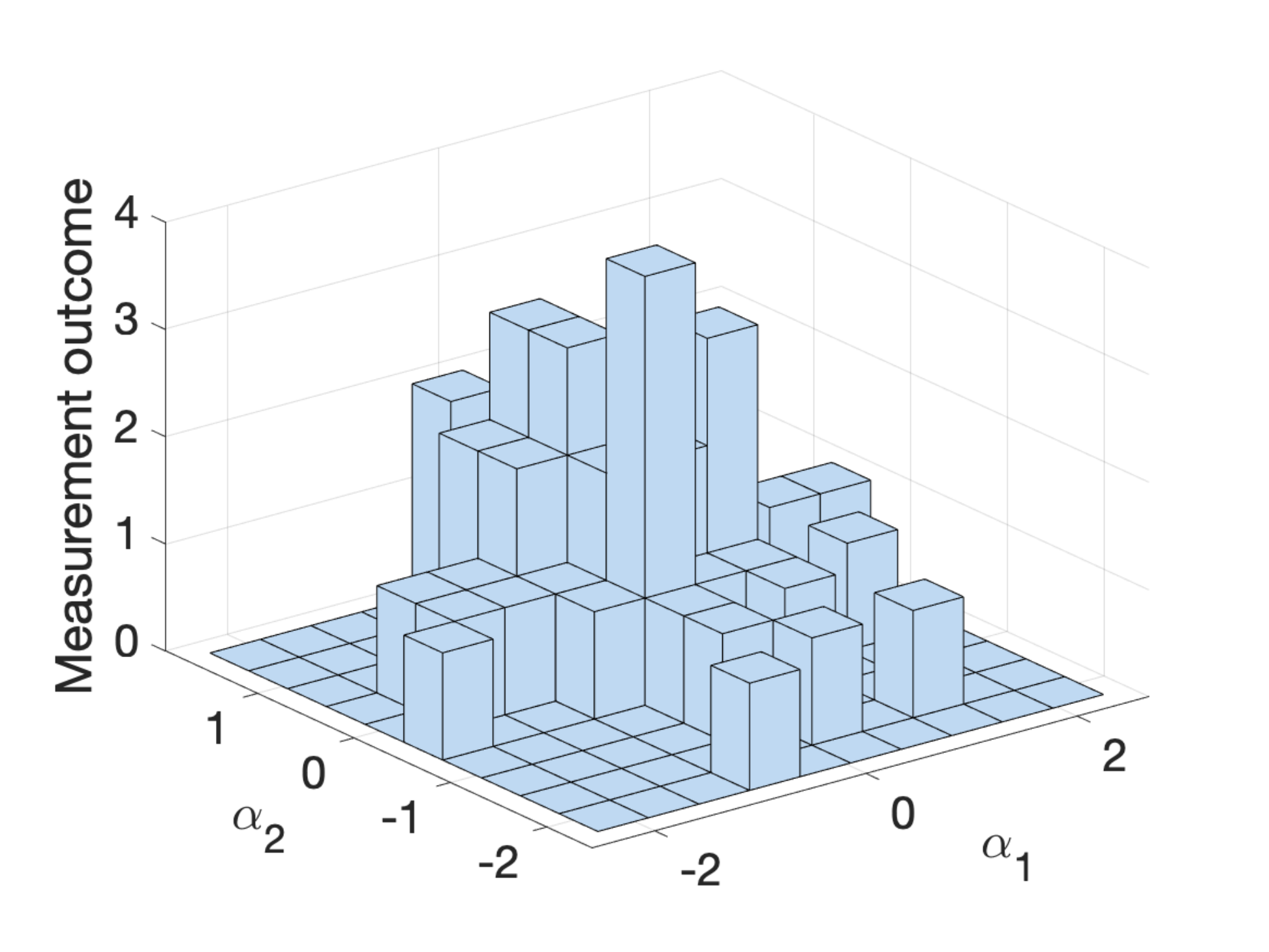}
\includegraphics[width=4.2cm]{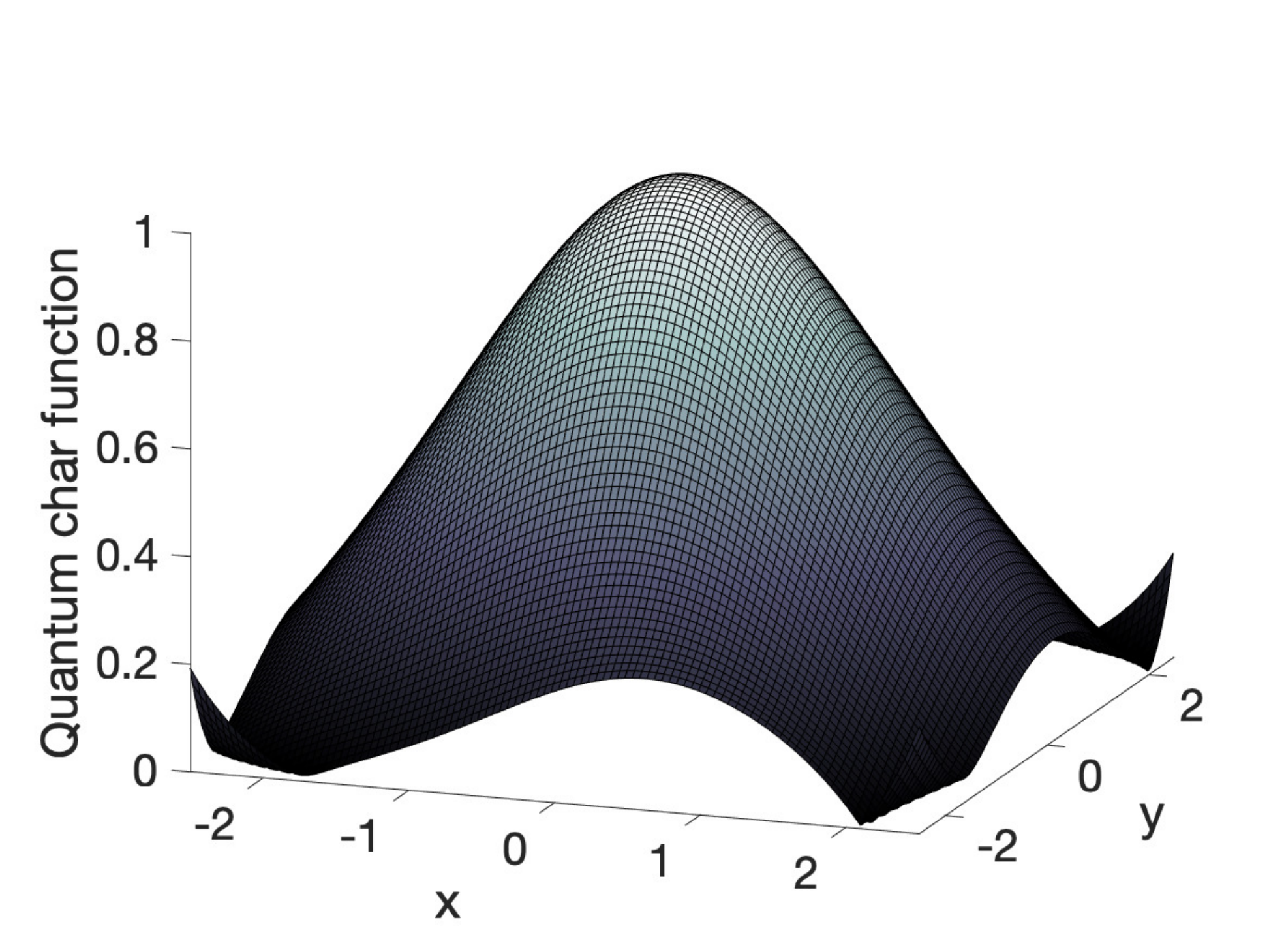}\\
\includegraphics[width=4.2cm]{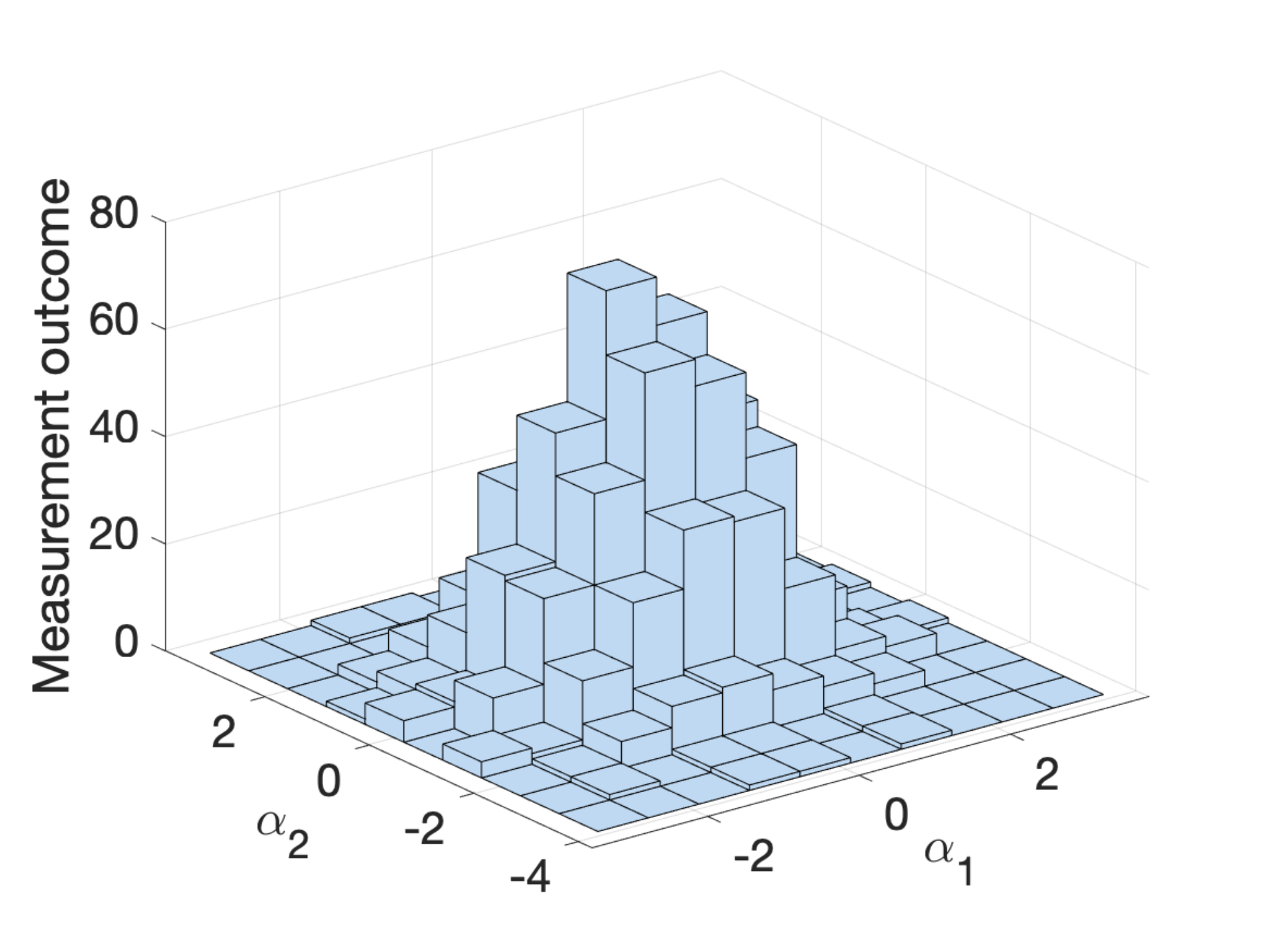} 
\includegraphics[width=4.2cm]{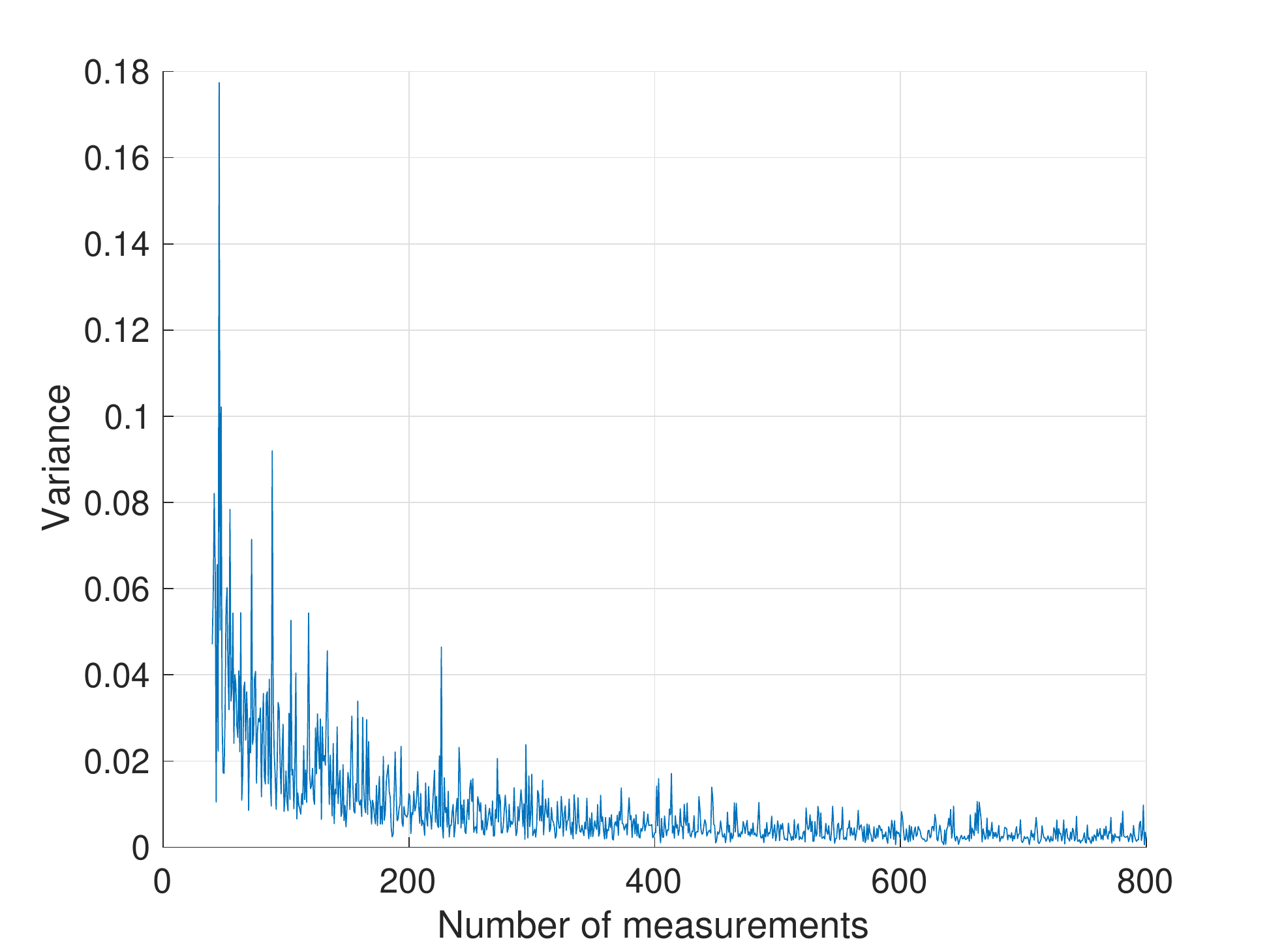}
\caption{{The quantum characteristic function of the actual state is depicted in the first plot on the top left, while that of the reconstructed state is given in the plot beside it (for $N=50$) and the one below it (for $N=1000$).} For $N=50$ (above) the quantum characteristic function gets almost perfectly approximated within the unit square but diverges outside of it. On the right, we show the histogram of numerical measurement outcomes generating our approximation. For $N=1000$ (below) the quantum characteristic function is well approximated inside the square $[-2,2]^2$. In the center, we show the histogram of numerical measurement outcomes generating our approximation. On the bottom right, we show the variance of the approximated characteristic function to the real characteristic function inside $[-2,2]^2.$ \label{fig:approx}}
\end{figure}
\begin{figure}[h!!]
\includegraphics[width=4cm]{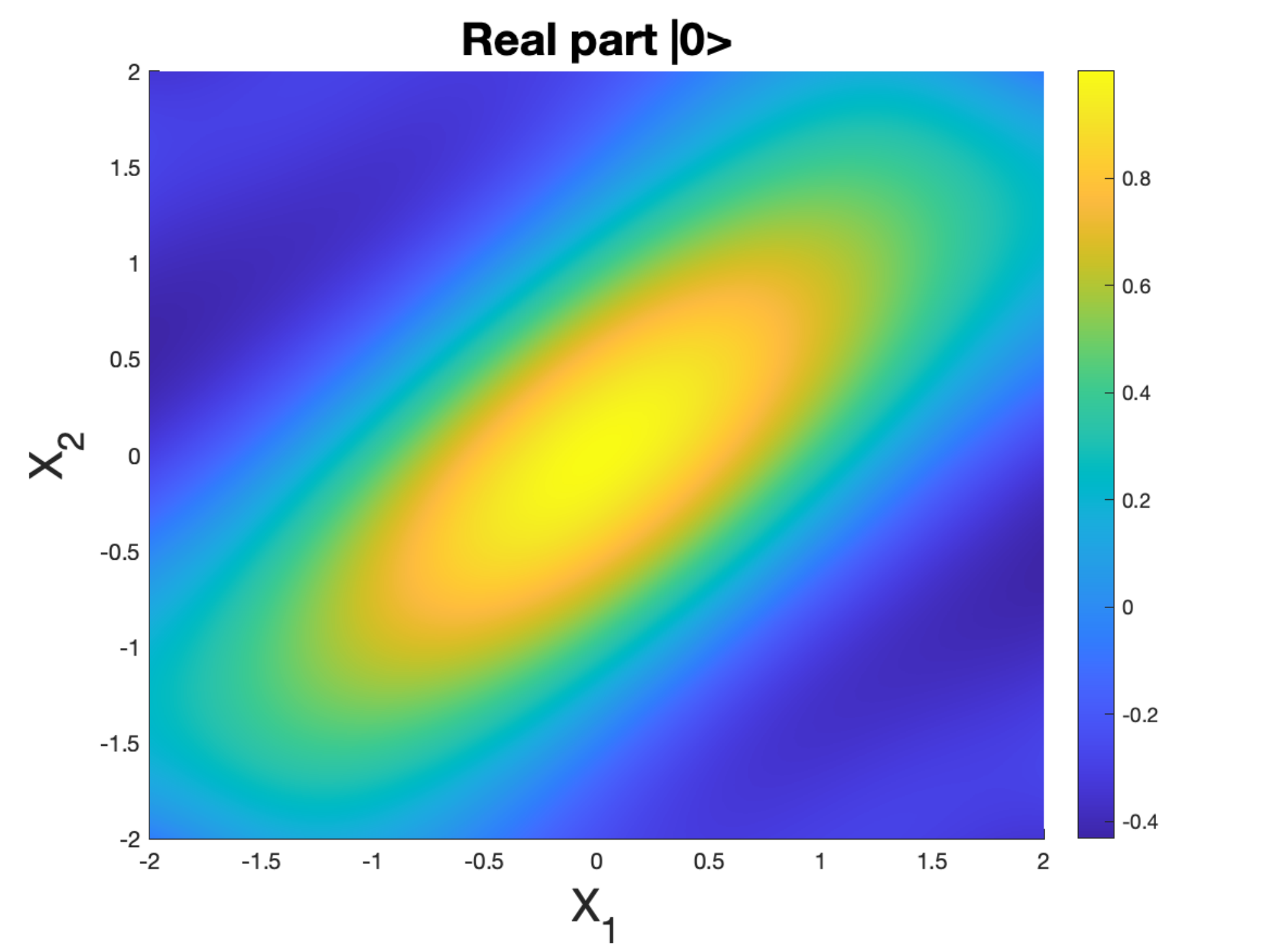}
\includegraphics[width=4cm]{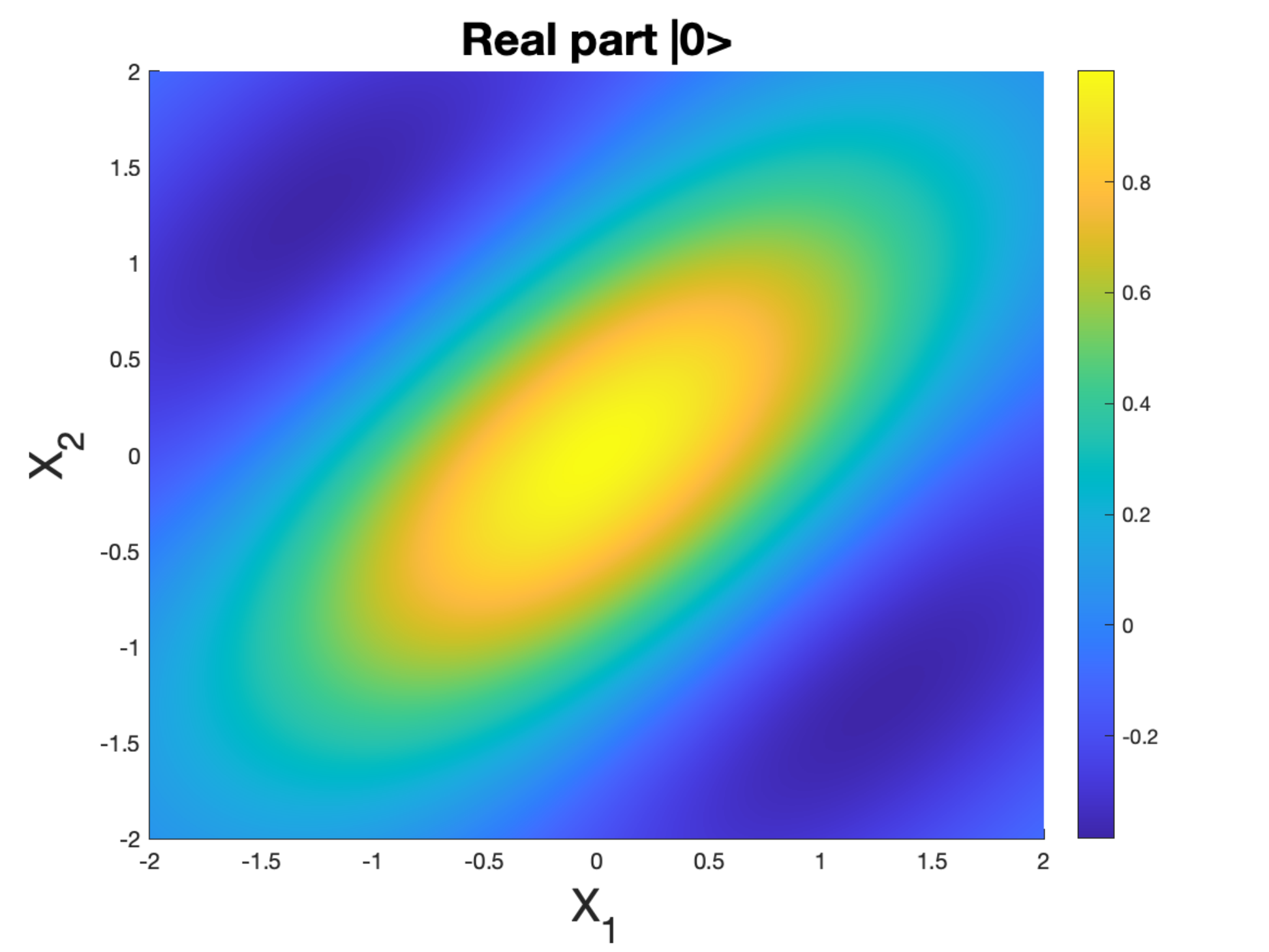} 
\includegraphics[width=4cm]{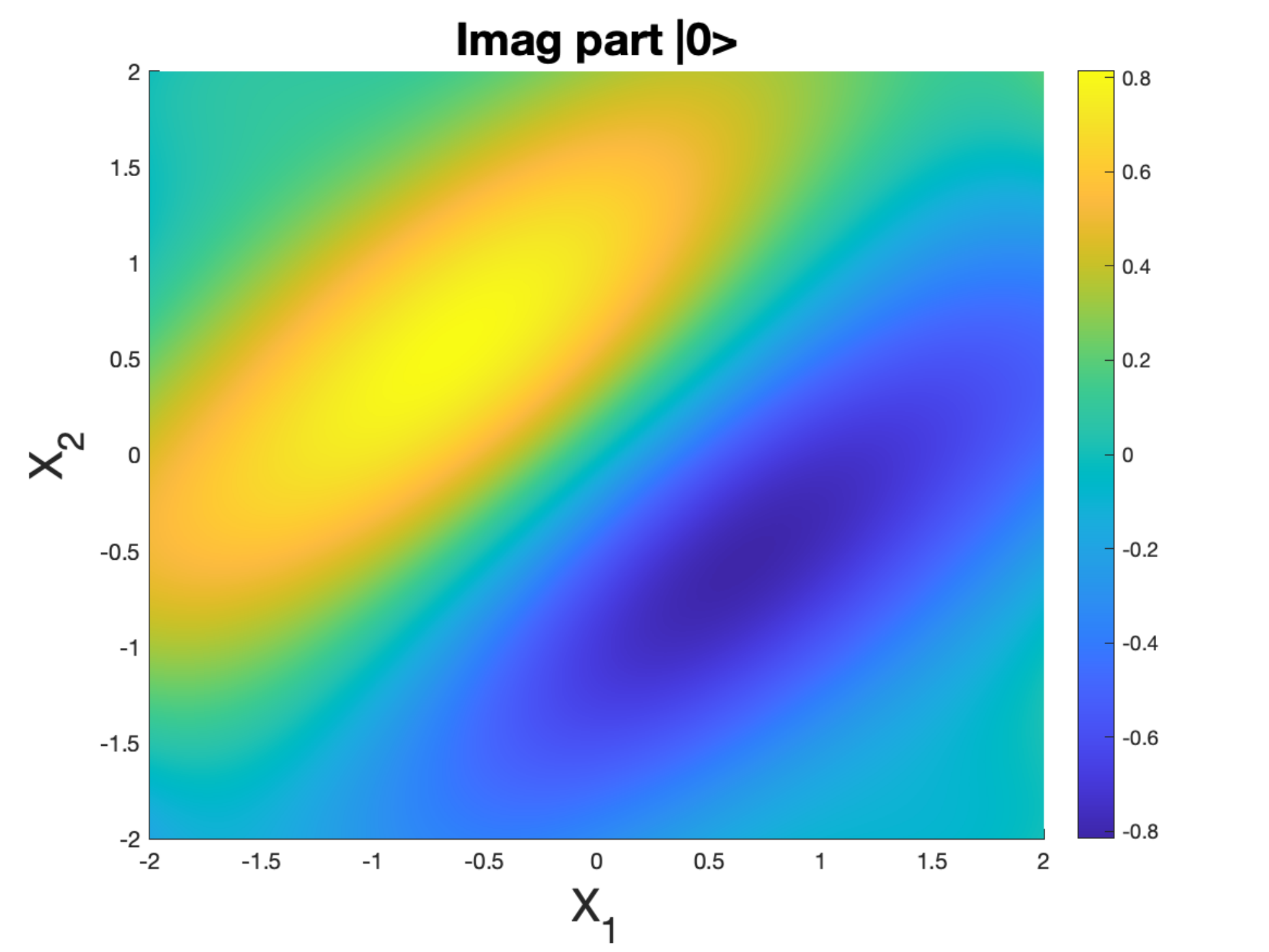}
\includegraphics[width=4cm]{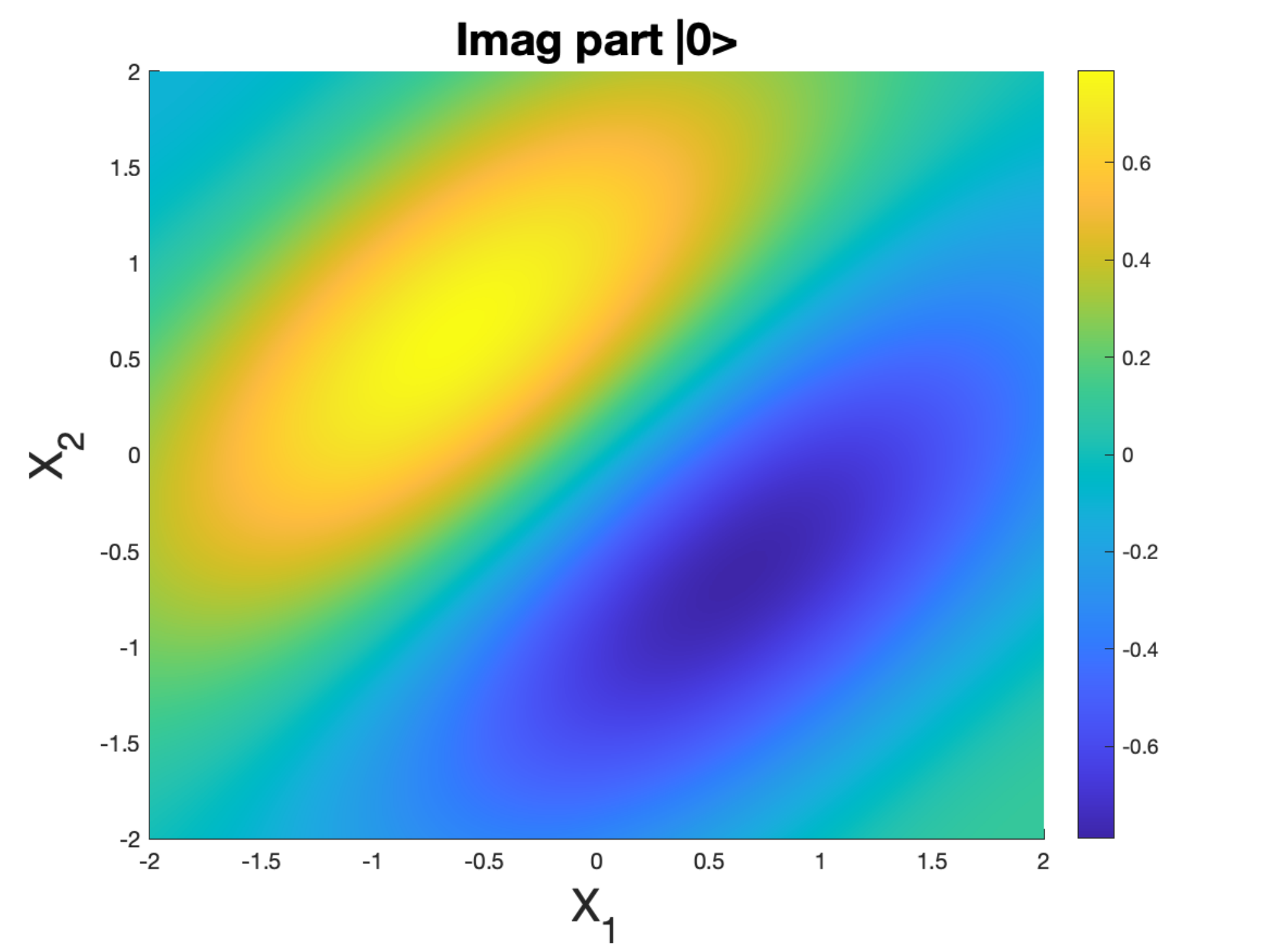}\\
\includegraphics[width=4cm]{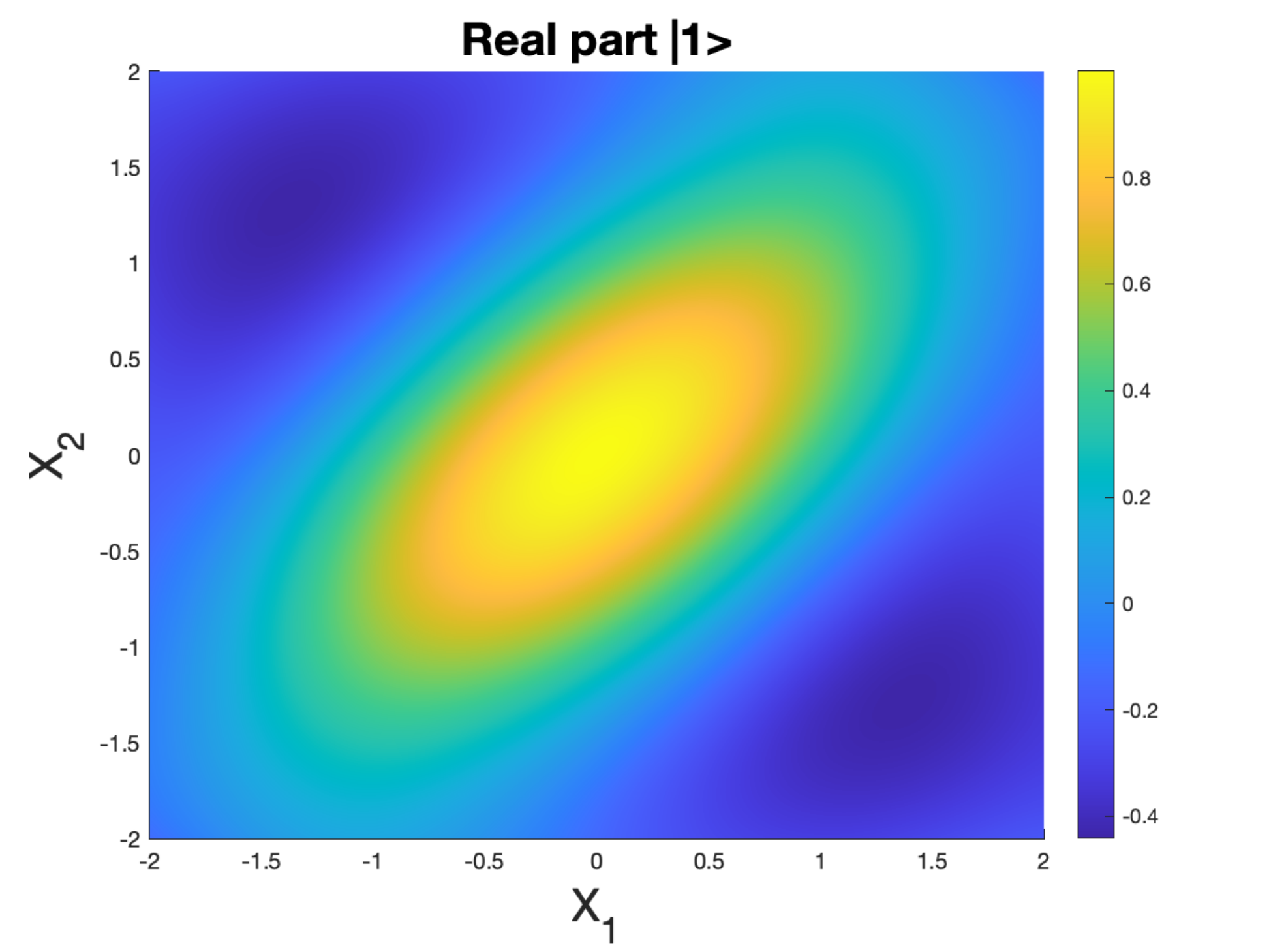}
\includegraphics[width=4cm]{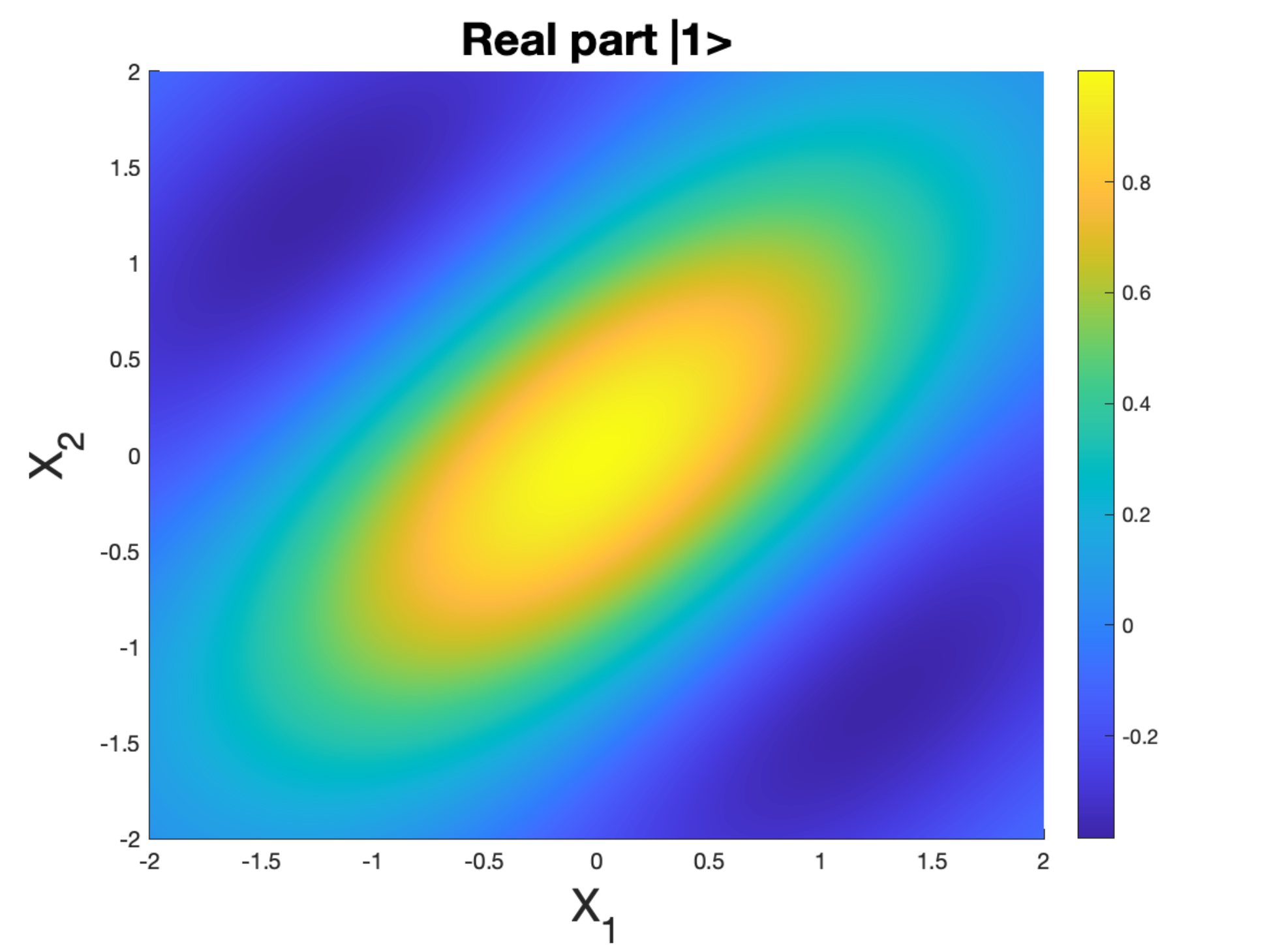} 
\includegraphics[width=4cm]{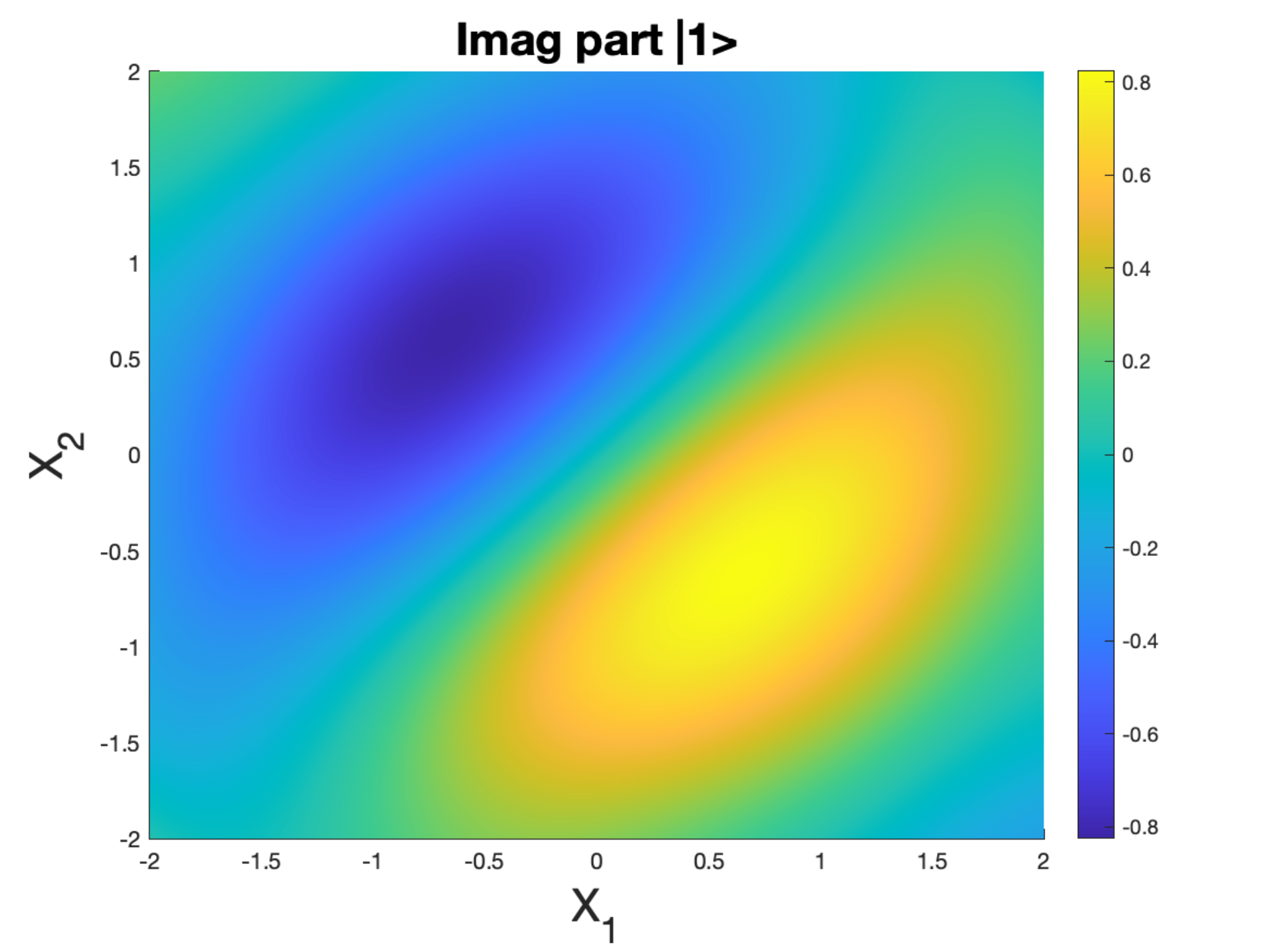}
\includegraphics[width=4cm]{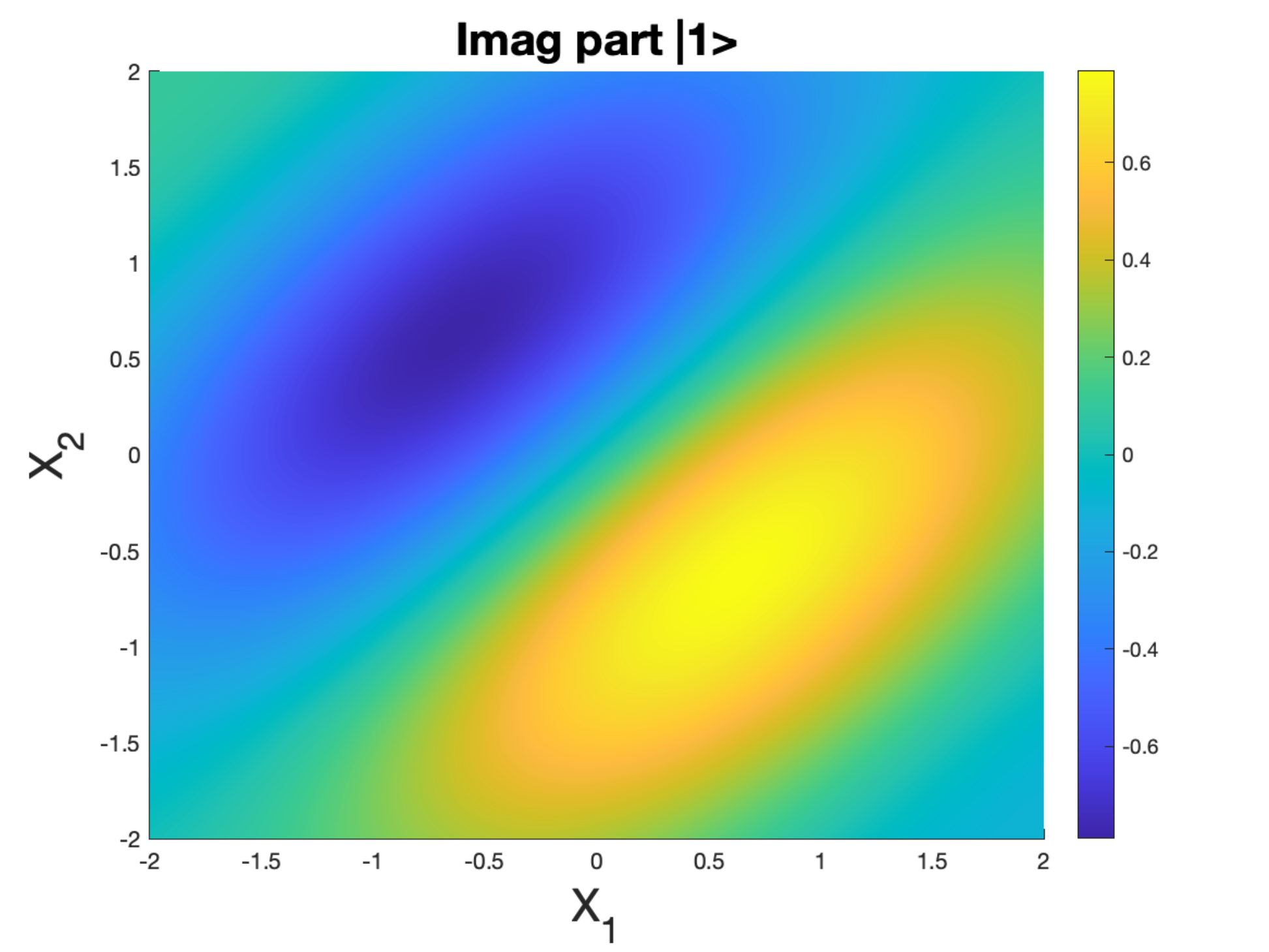}\\
\includegraphics[width=4cm]{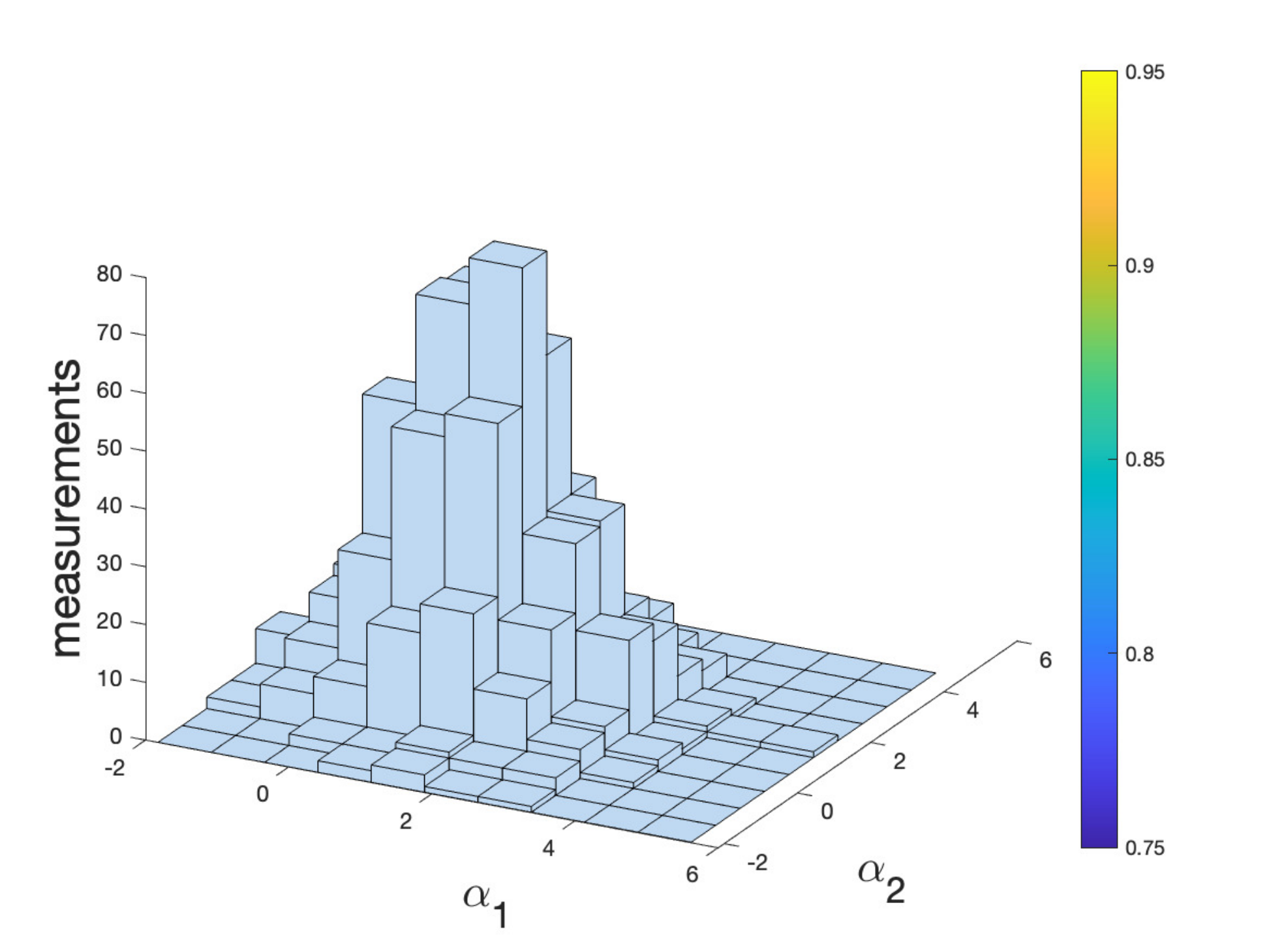}
\includegraphics[width=4cm]{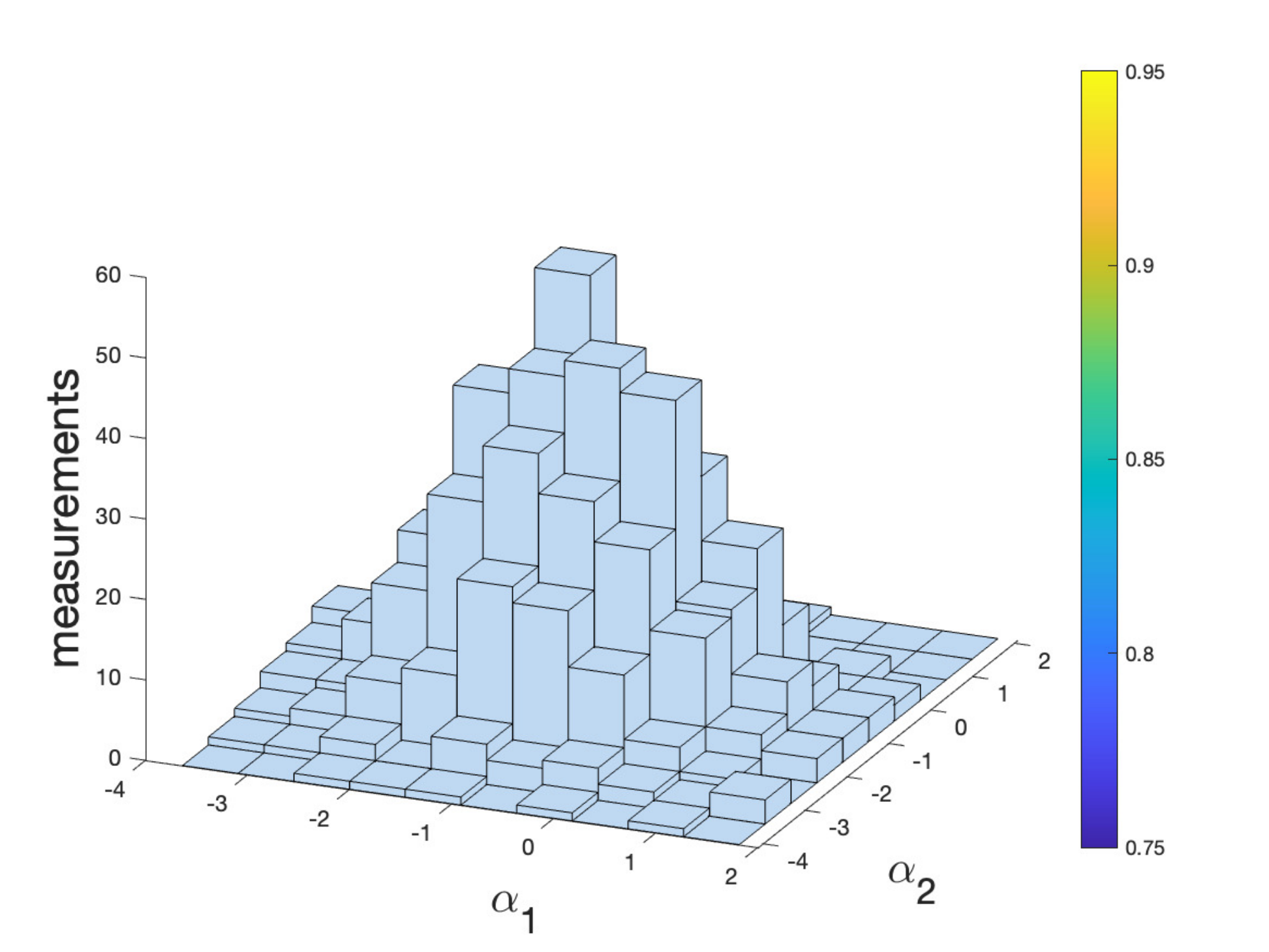} 
\includegraphics[width=4cm]{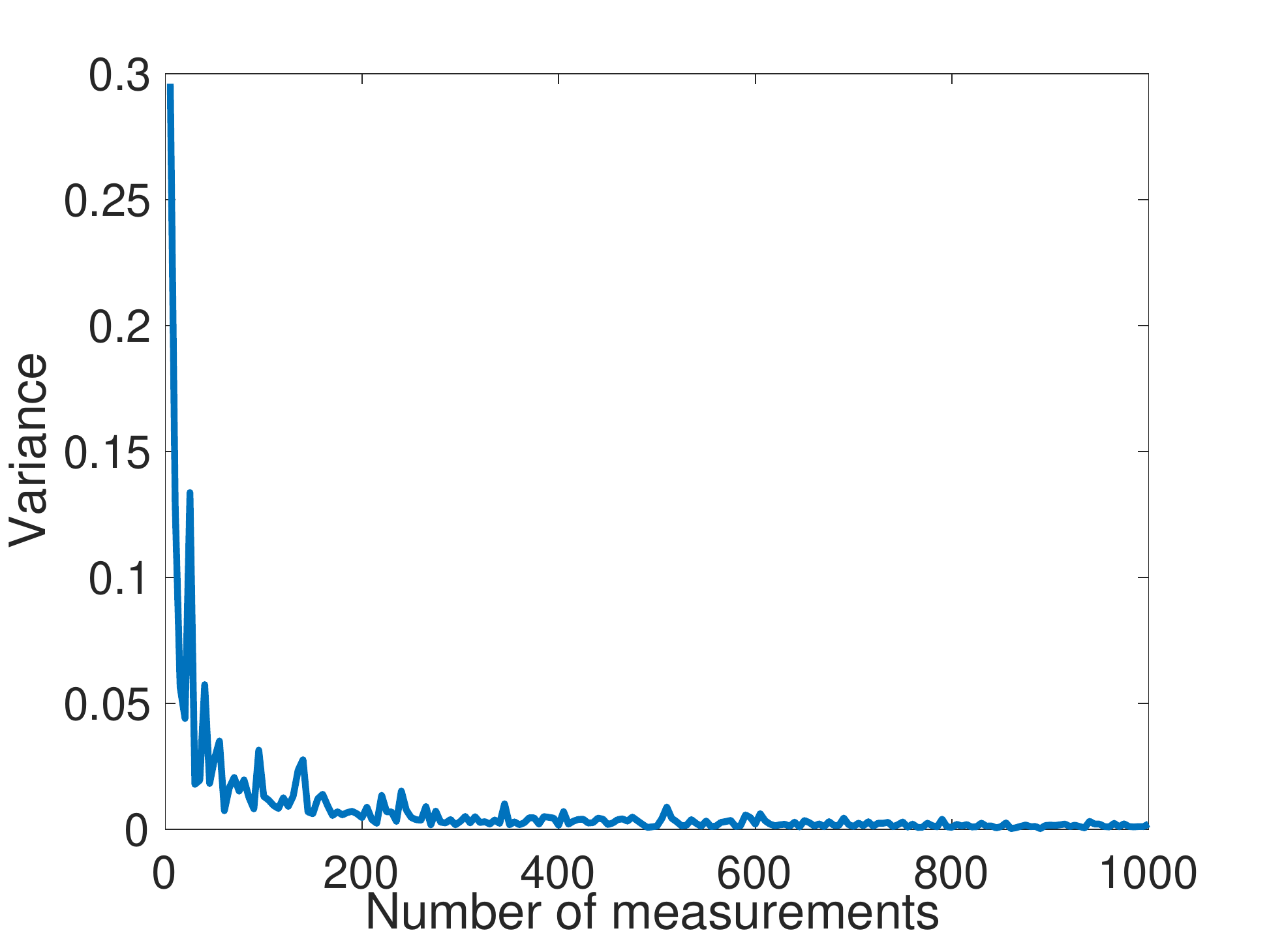}
\includegraphics[width=4cm]{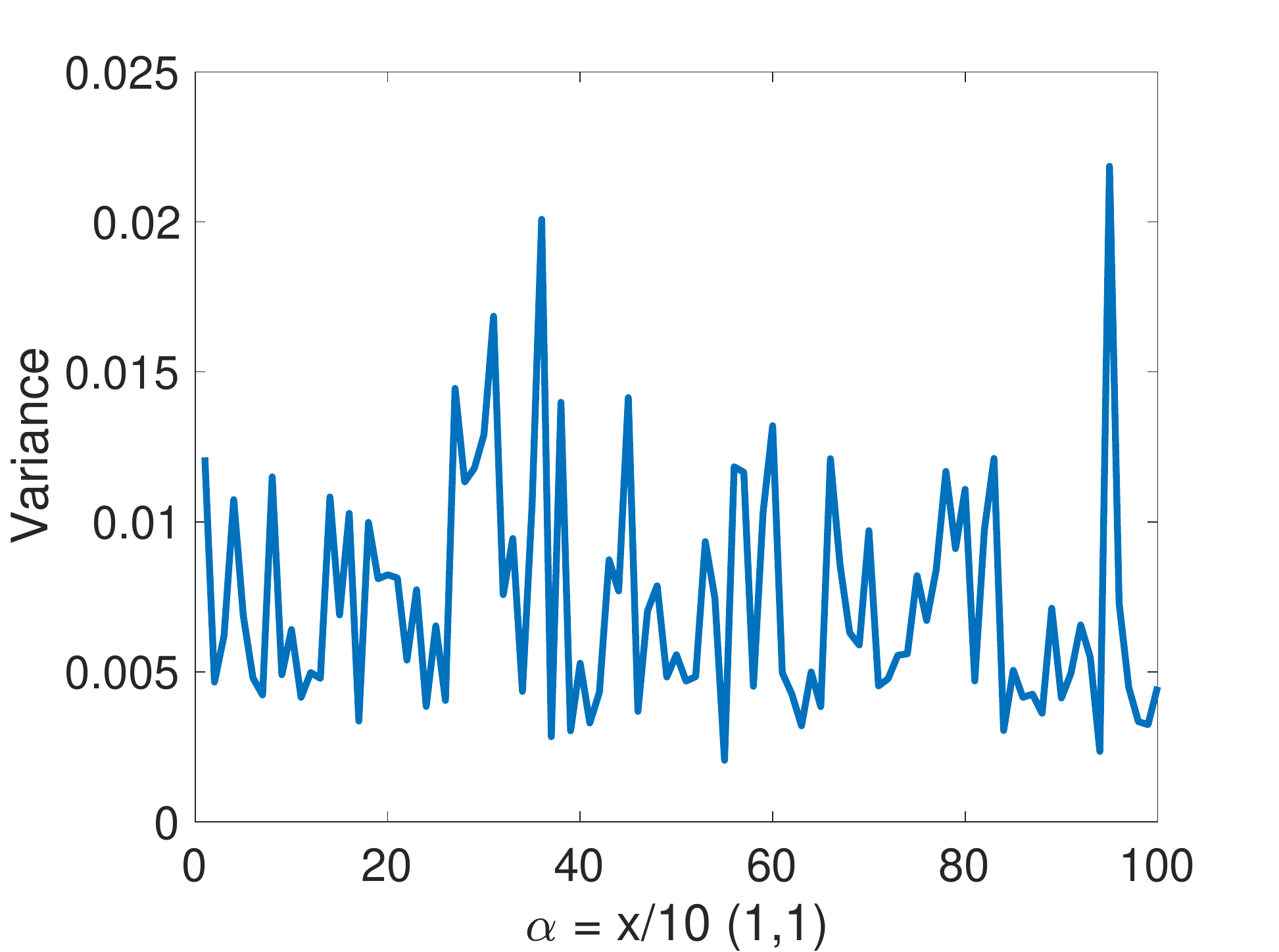}
\caption{Real and imaginary parts of the quantum characteristic functions of $\ket 0_{\operatorname{Cat}}$(top line) and $\ket 1_{\operatorname{Cat}}$ (center line), respectively, reconstructed with $N=200$, with reconstructed function (left) and true one (right) on $[-2,2]$ for $\alpha=(1,1)$\label{fig:qcf}. 
At the bottom, we illustrate the numerical measurement outcomes for the above reconstructions ($\ket 0_{\operatorname{Cat}}$ left, $\ket 1_{\operatorname{Cat}}$ right) and the variance (for $\ket{0}_{\operatorname{Cat}}$) for fixed $\alpha = (1,1)$ and varying number of measurements (left) and fixed number of measurements, $N=200$, and varying $\alpha$ (right). }
\end{figure}

\begin{figure}[h!!]
\includegraphics[width=4cm]{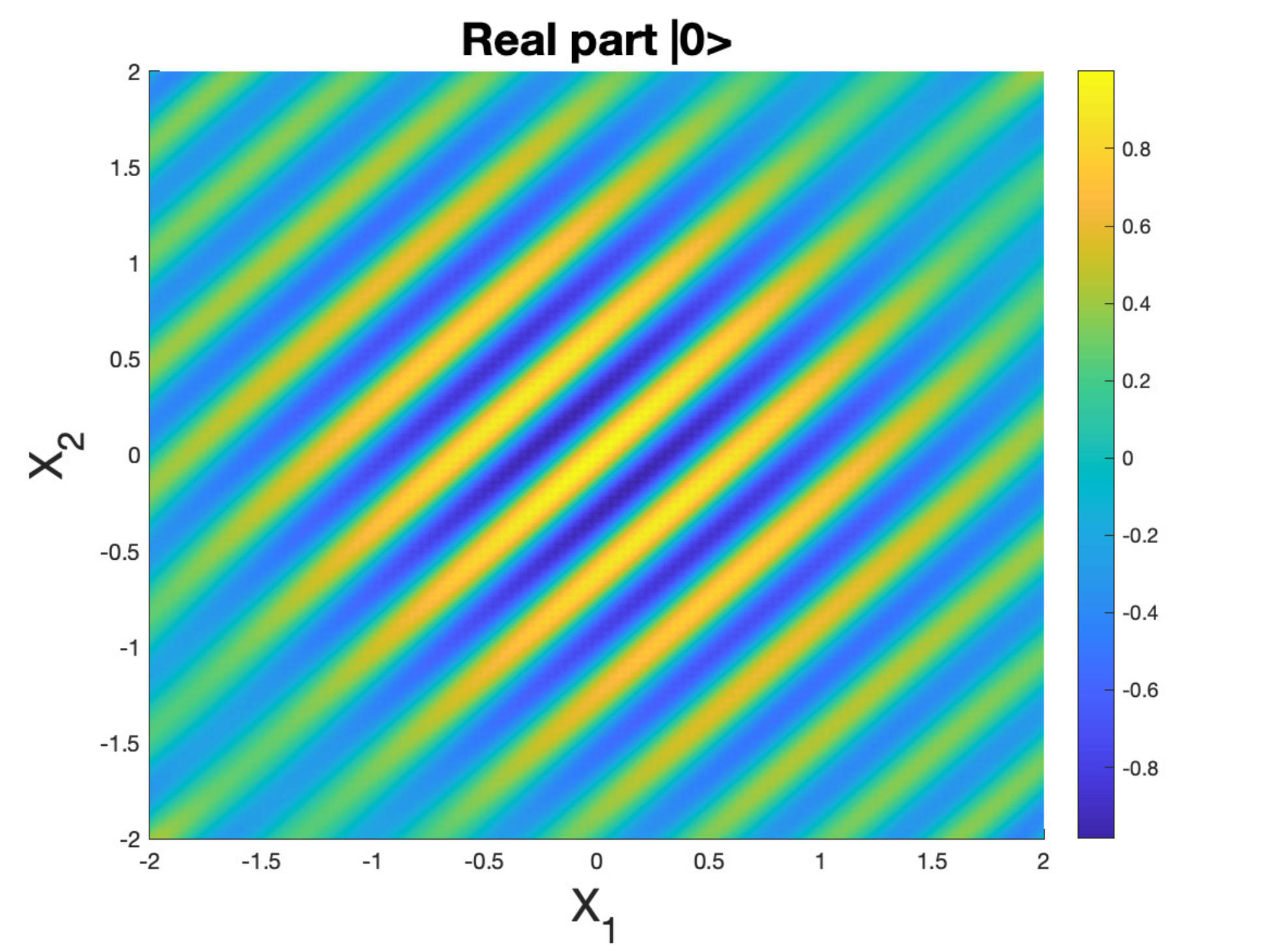}
\includegraphics[width=4cm]{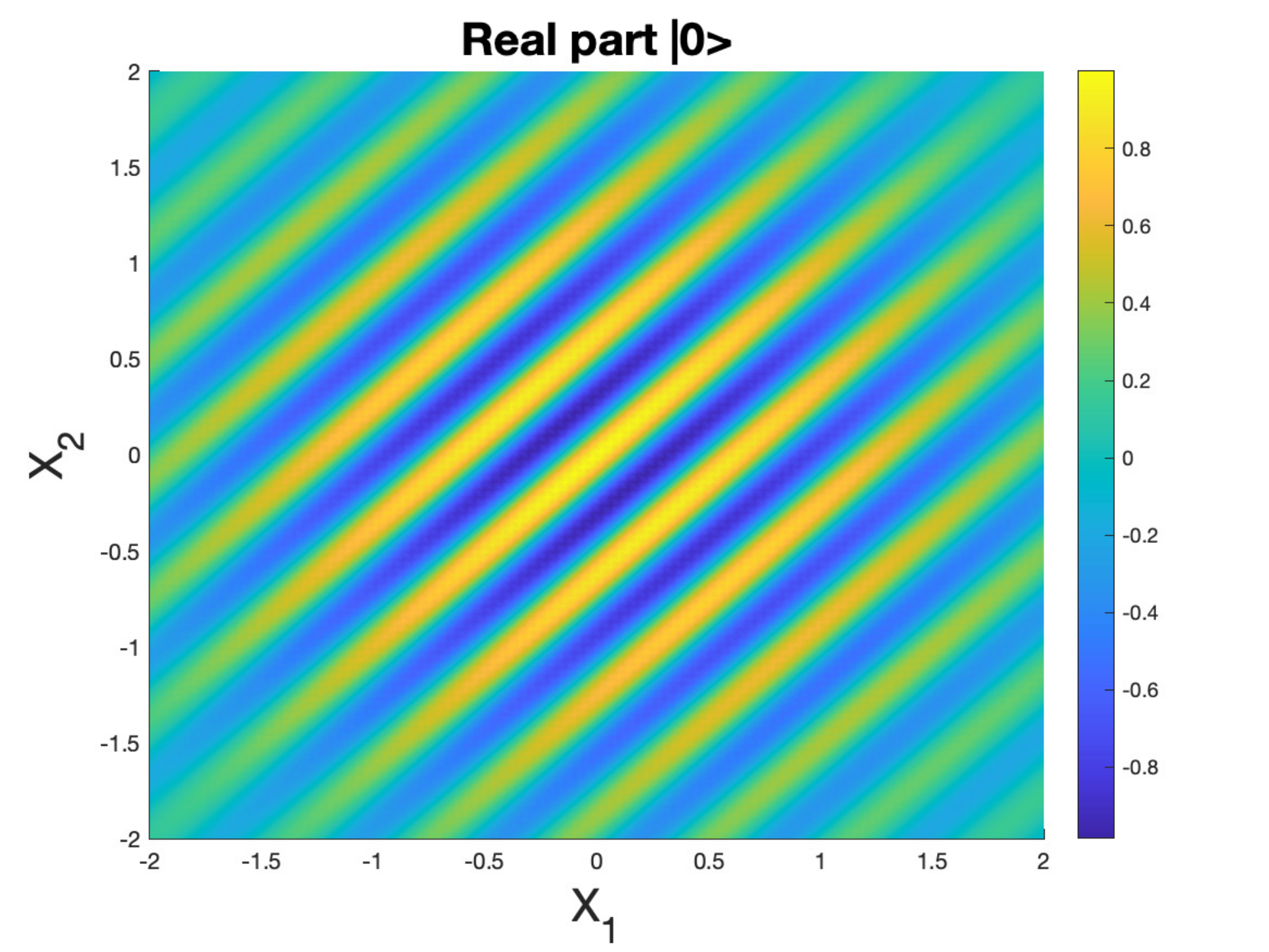} 
\includegraphics[width=4cm]{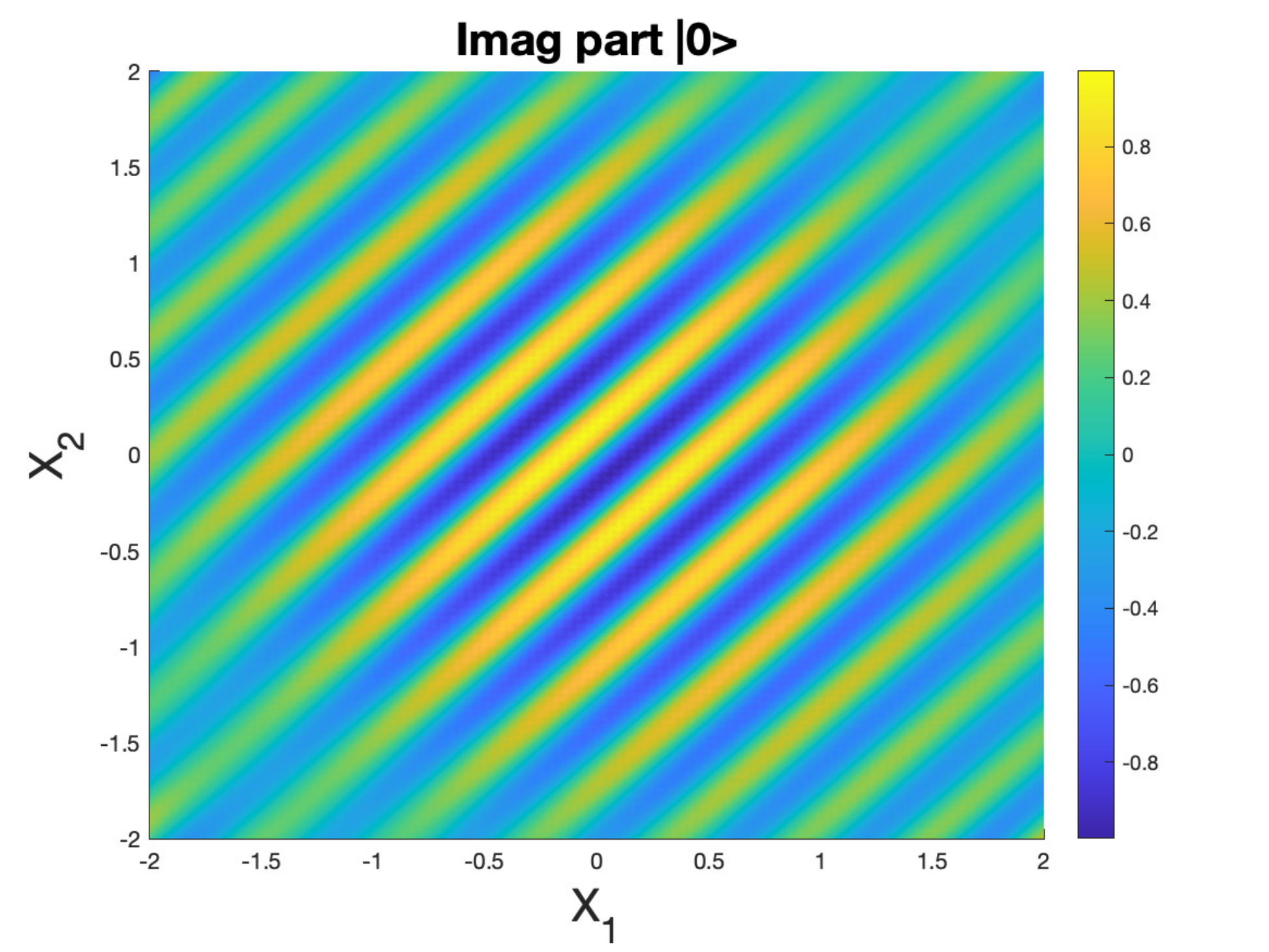}
\includegraphics[width=4cm]{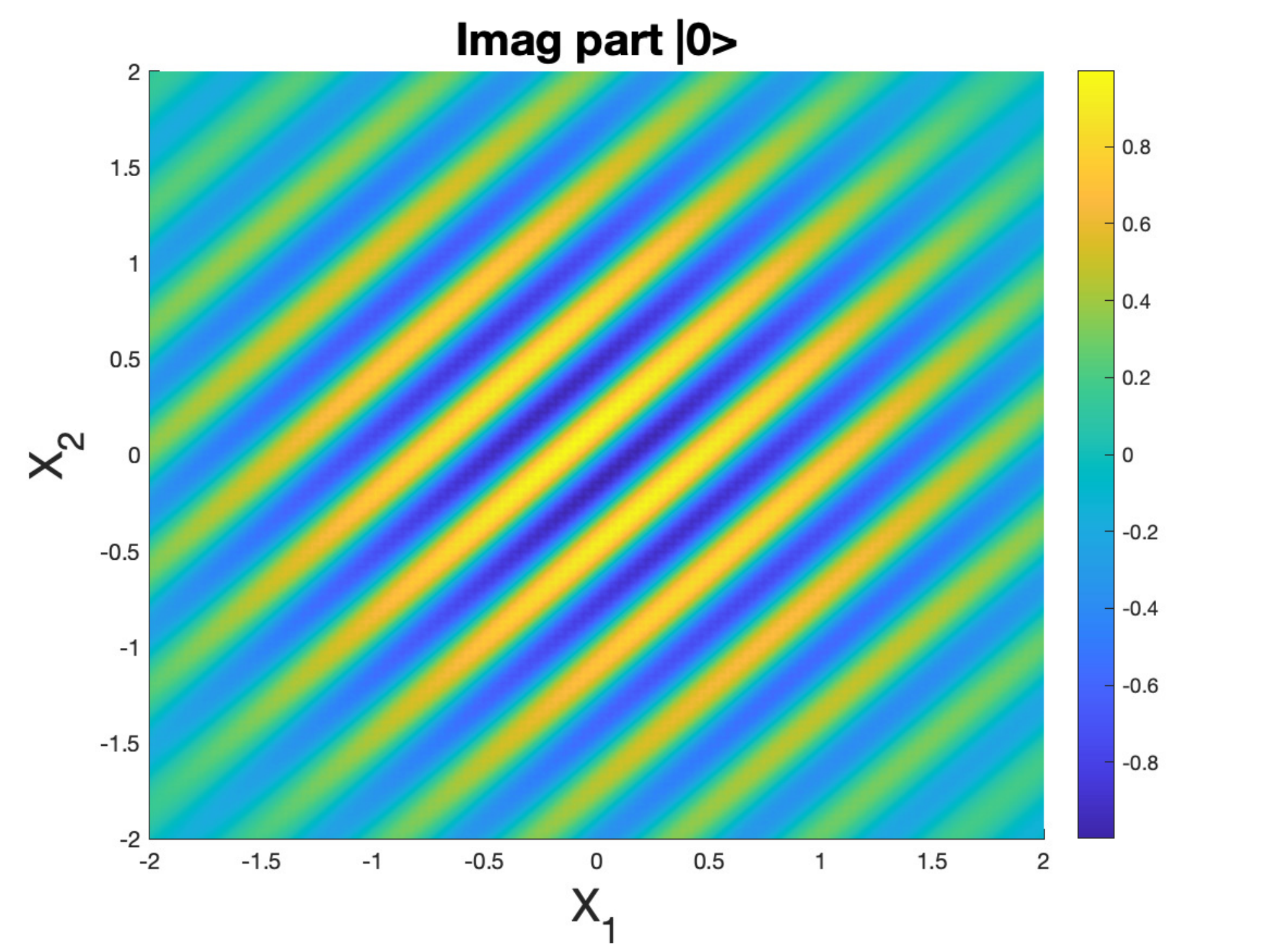}\\
\includegraphics[width=4cm]{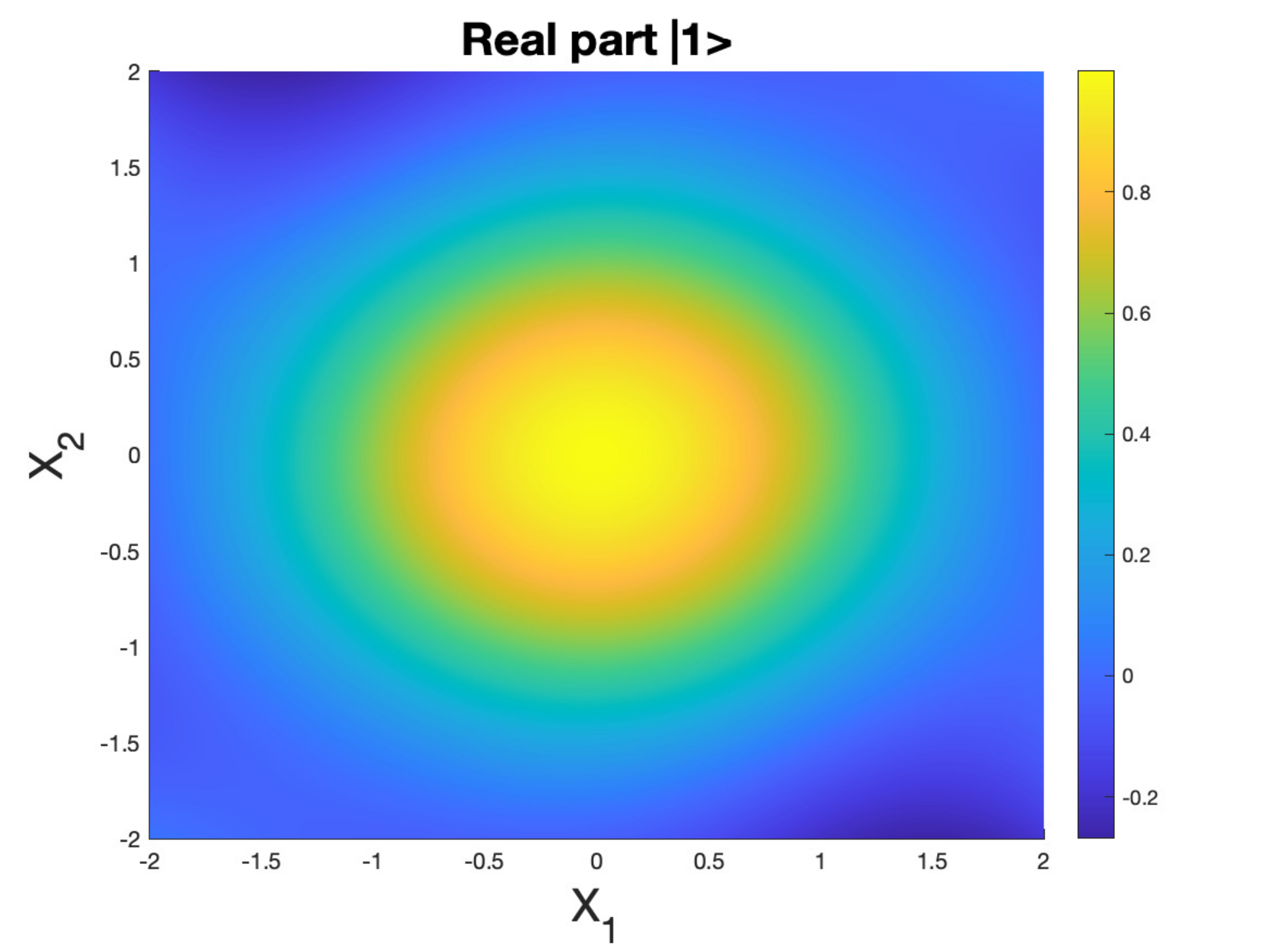}
\includegraphics[width=4cm]{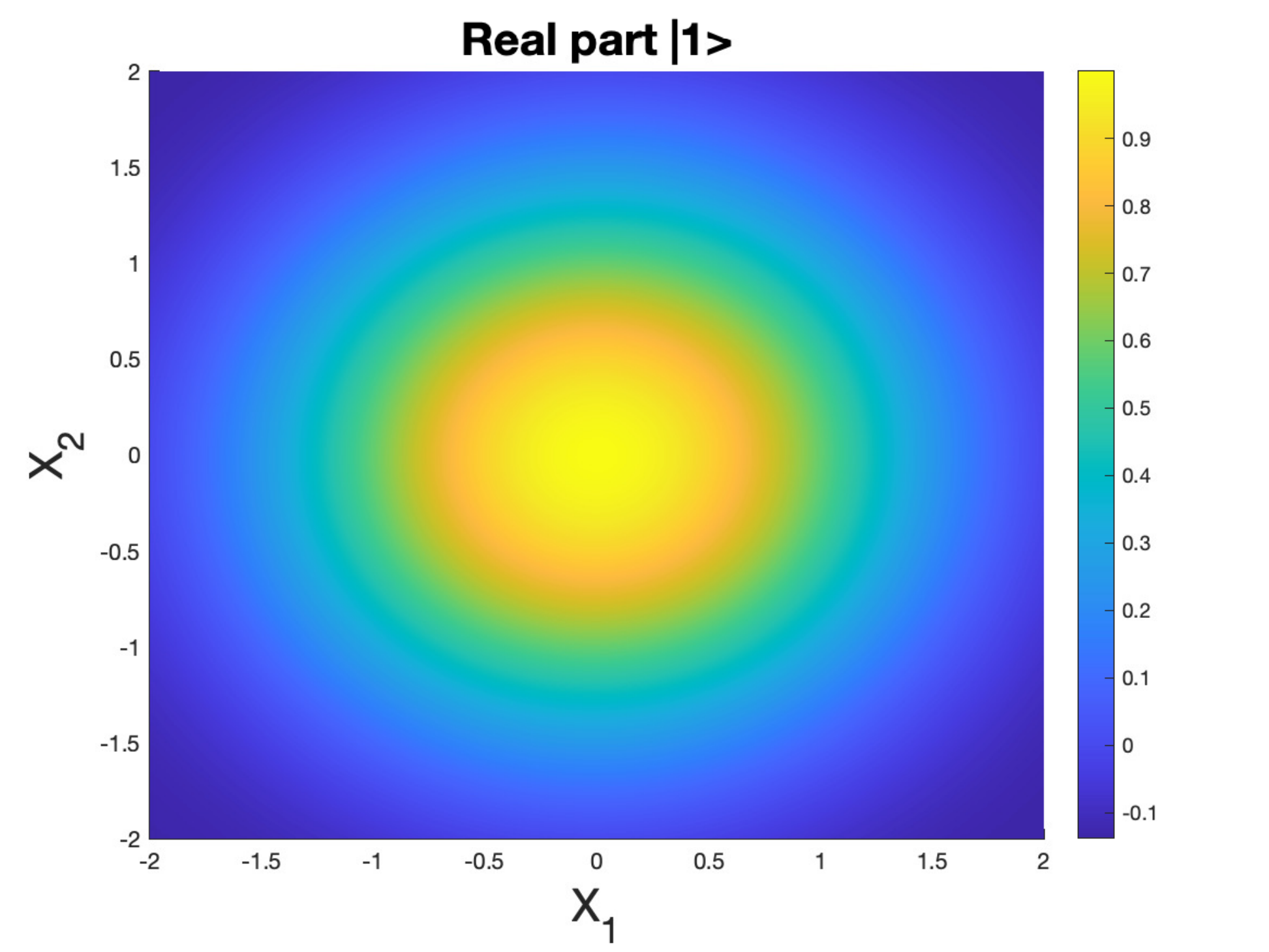} 
\includegraphics[width=4cm]{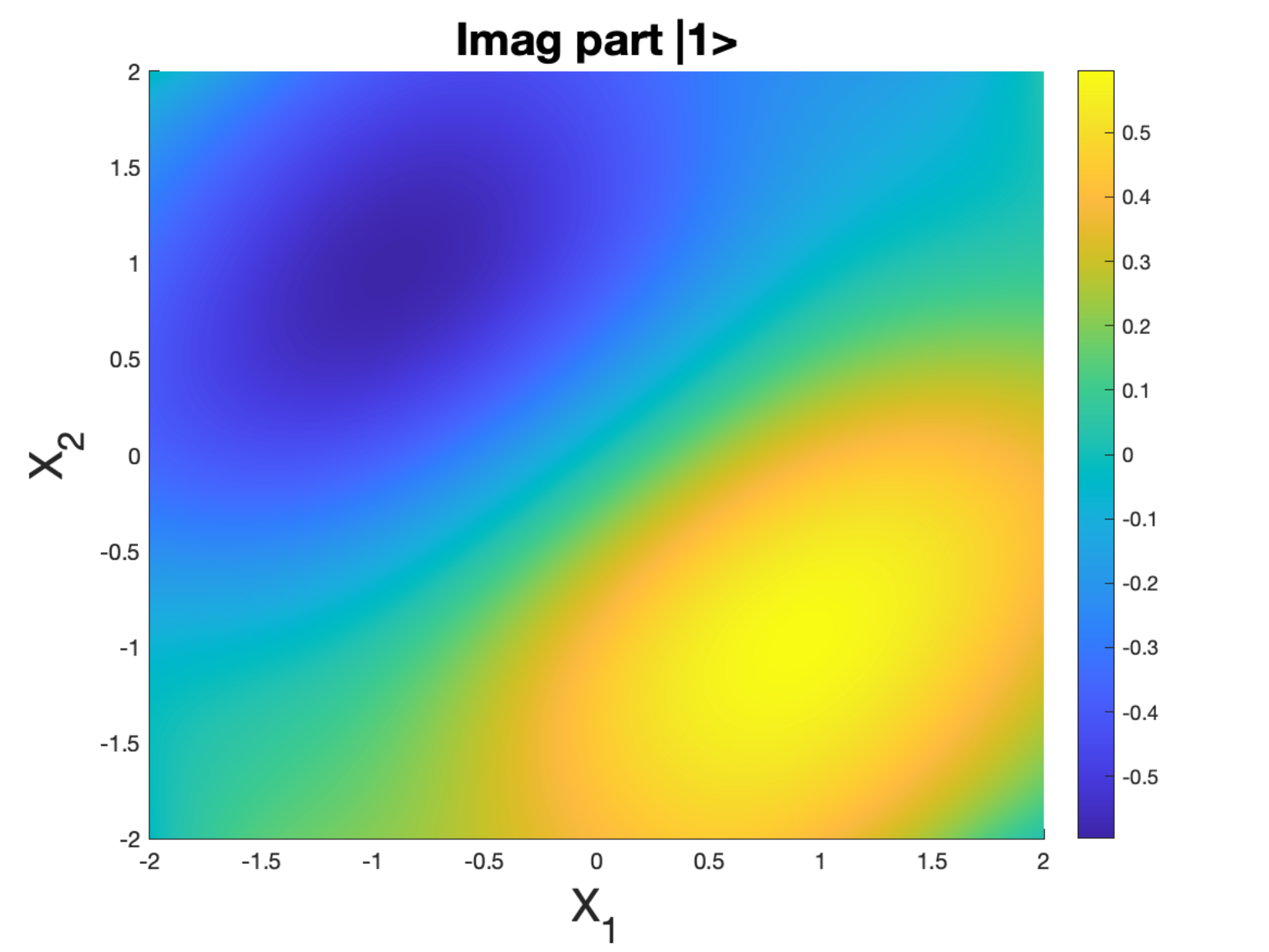}
\includegraphics[width=4cm]{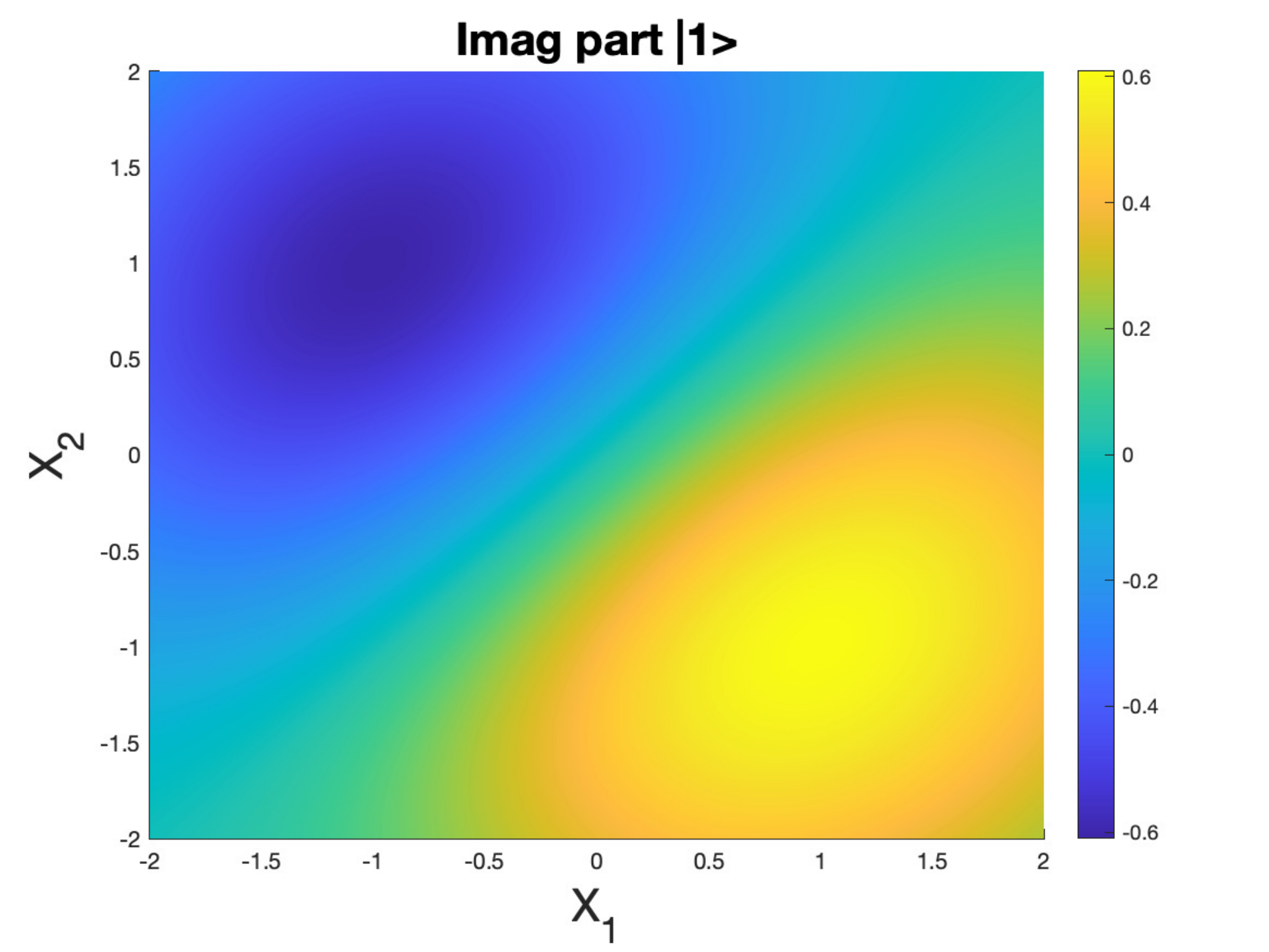}
\caption{Real and imaginary part of the quantum characteristic function of $\ket 0_{\operatorname{Cat}}$(top) with $\alpha = 10\times (1,1)$ and $\ket 1_{\operatorname{Cat}}$ (bottom), with $\alpha = \frac{1}{10}\times (1,1)$, respectively reconstructed with $N=200$ with reconstructed function (left) and true one (right)) on $[-2,2]$.}
\end{figure}

\begin{figure}[h!]
\includegraphics[width=4.2cm]{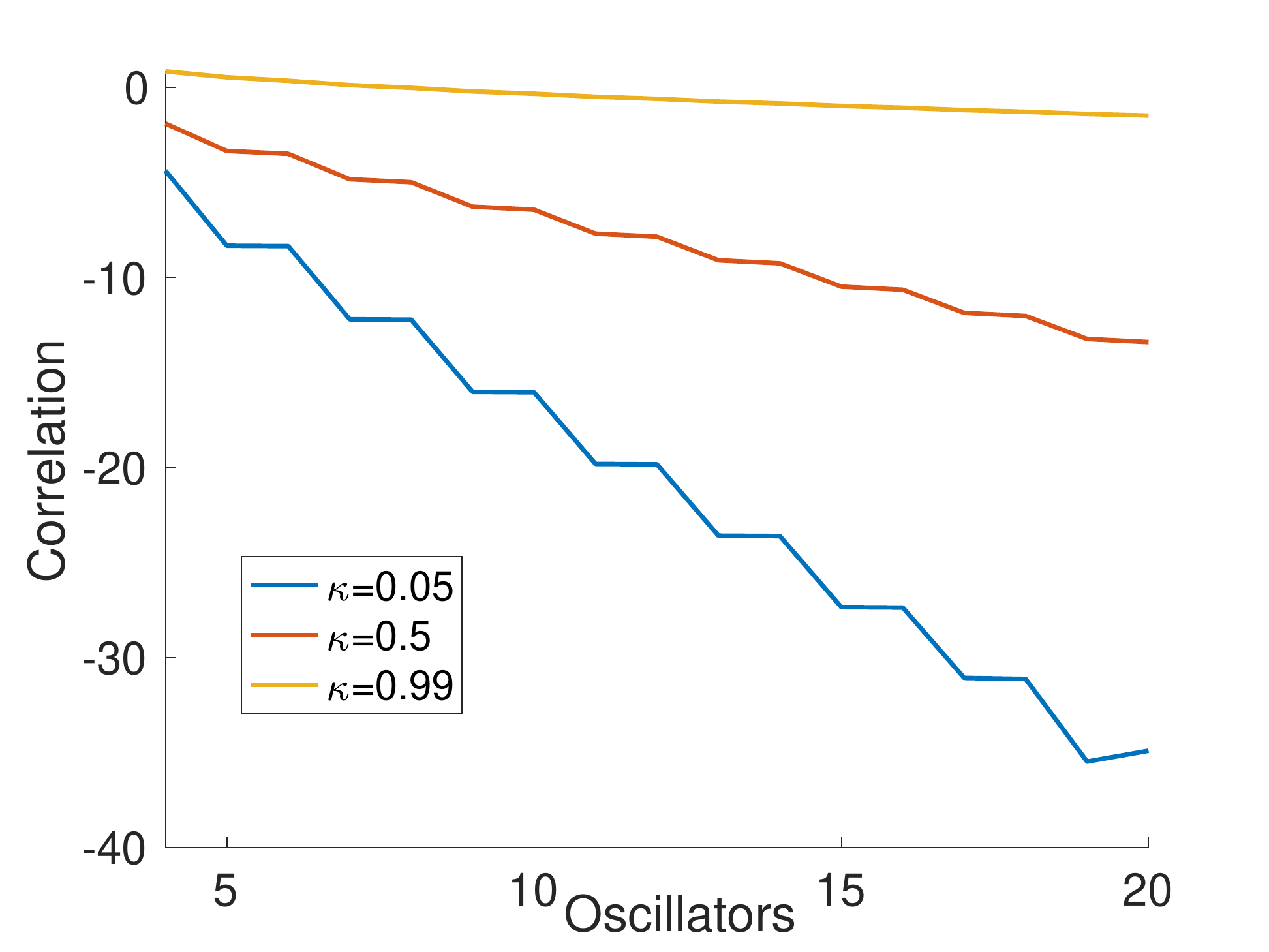} 
\includegraphics[width=4.2cm]{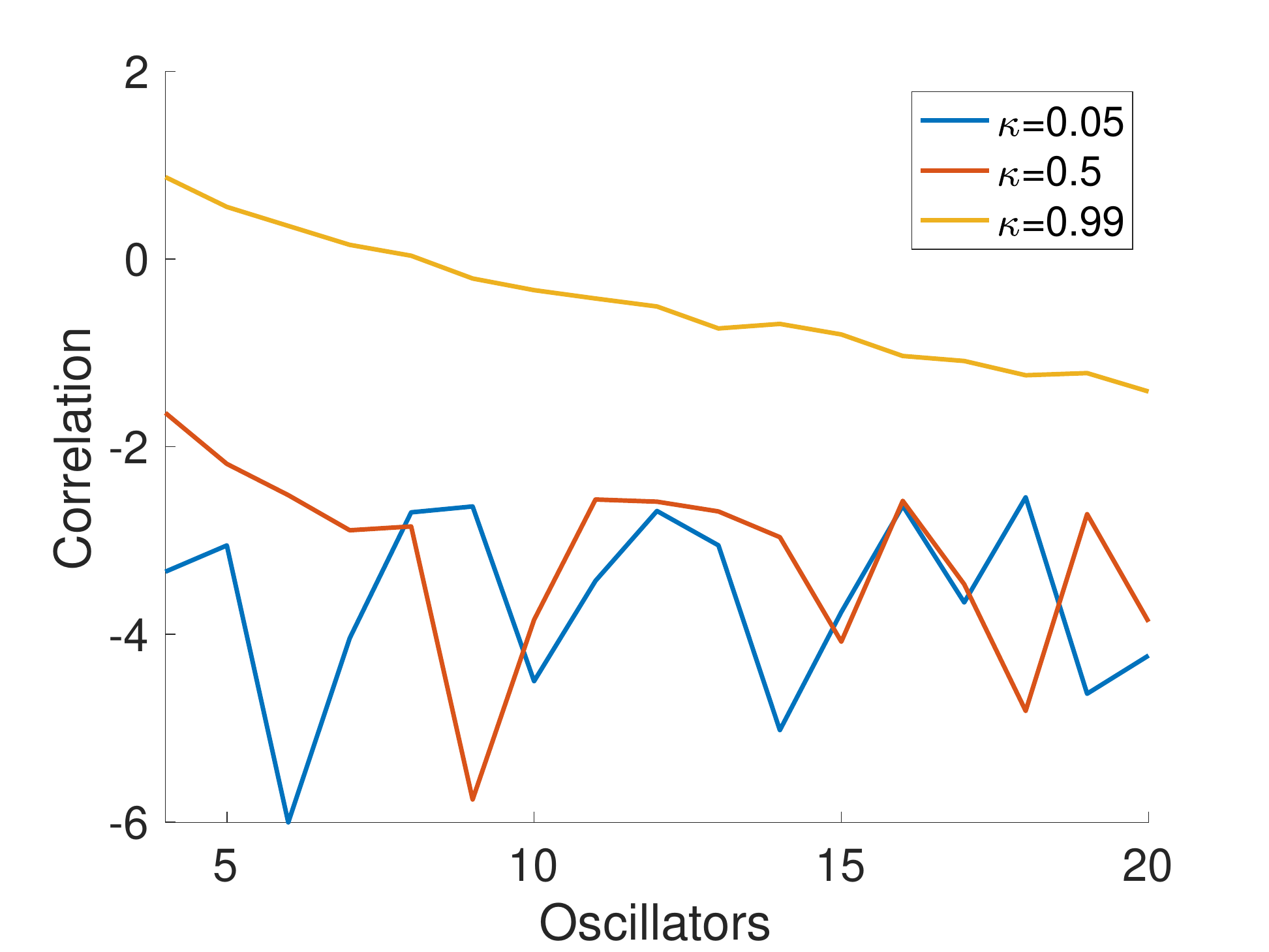}\\
\includegraphics[width=4.2cm]{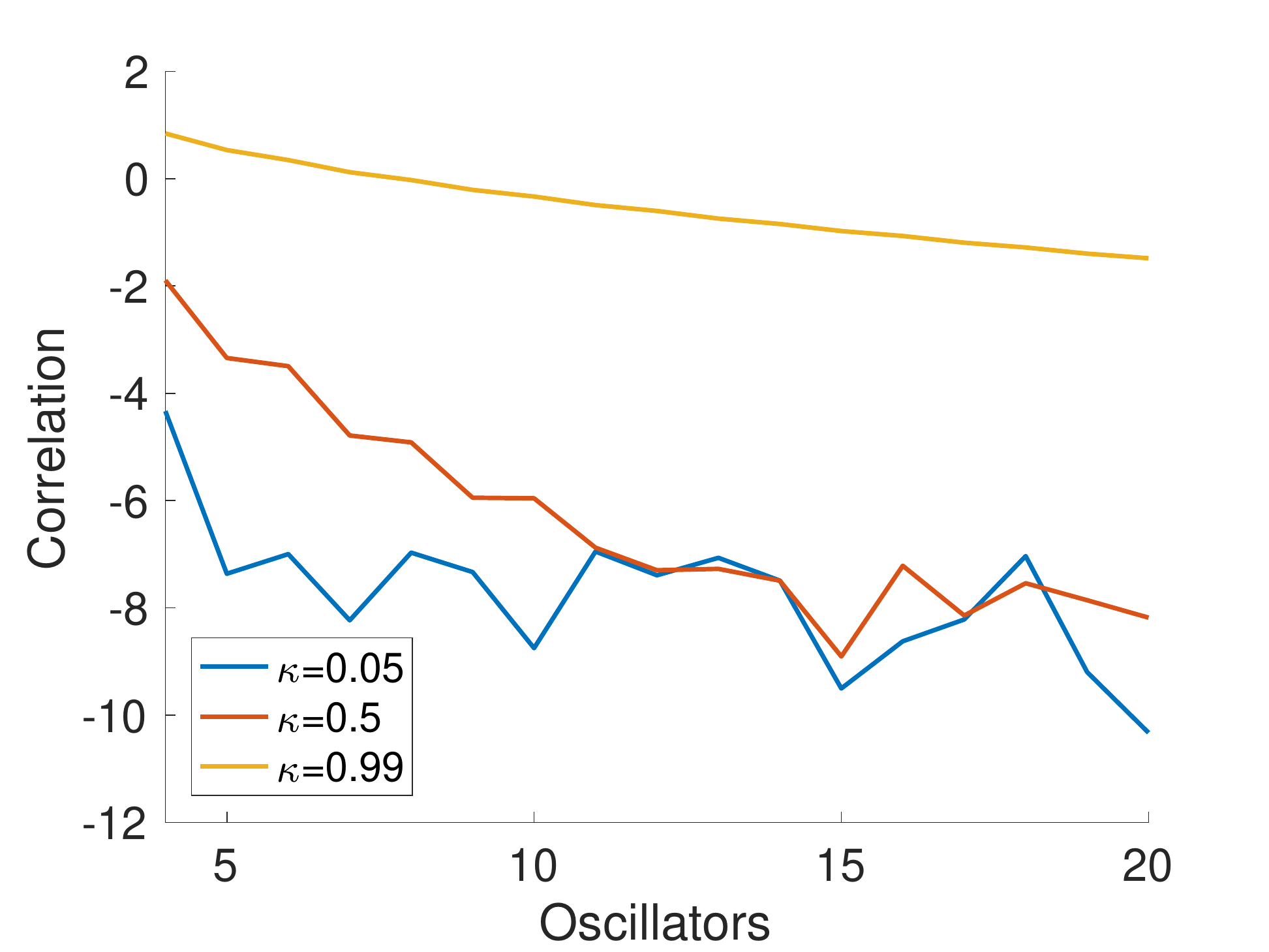}
\includegraphics[width=4.2cm]{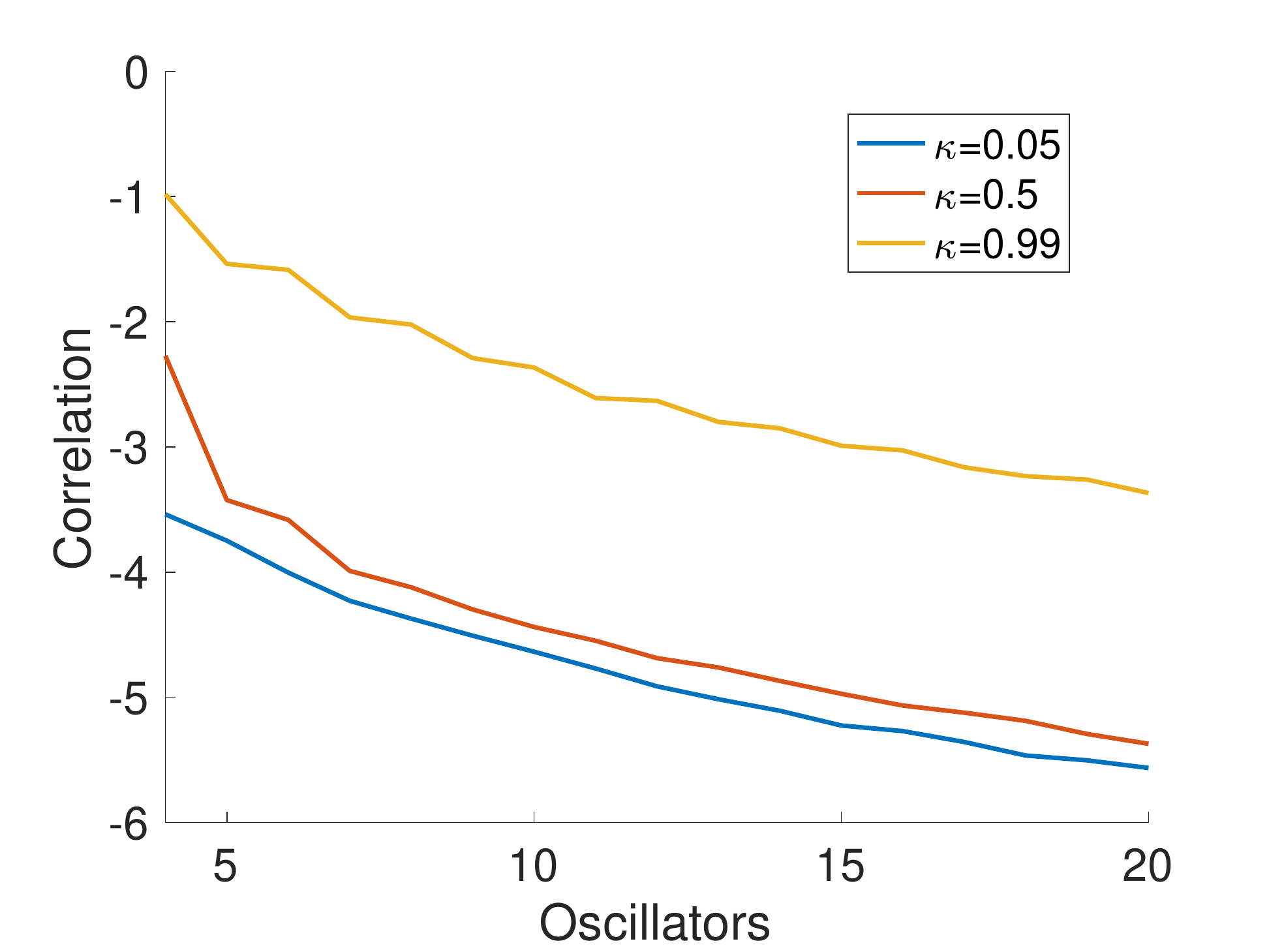}\\
\includegraphics[width=4.2cm]{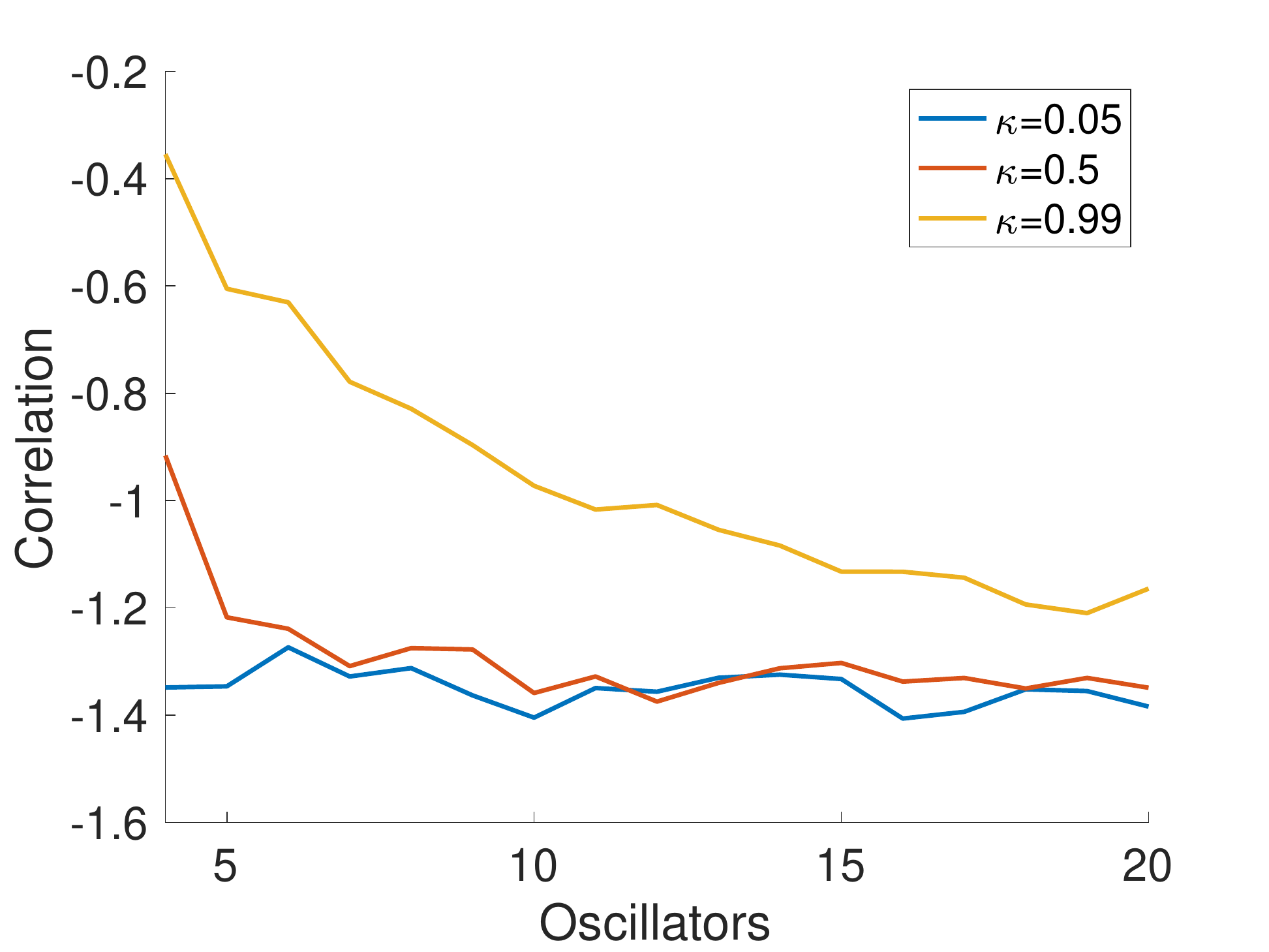}
\includegraphics[width=4.2cm]{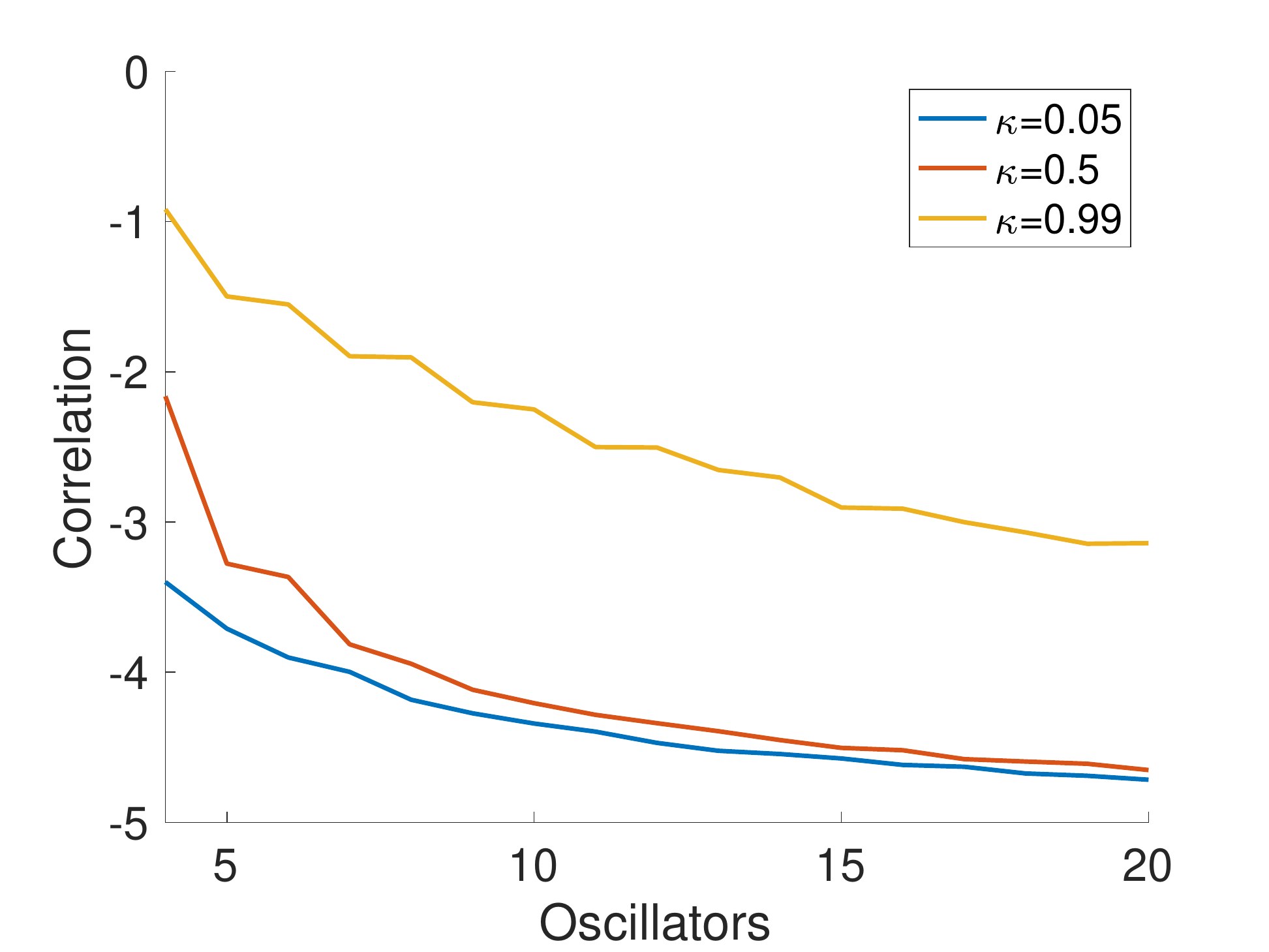}
\caption{On the left, we see the correlation coefficient $\log(\gamma_{1,\lceil m/2 \rceil})$ (in the bottom row with $h_{XX}$ replaced by $h_{XX}+\frac{1}{2m}Q^2$, where $Q$ is a GUE Matrix and $m$ is the number of oscillators). Here $\kappa$ refers to the interaction strength between neighboring oscillators. In the center and on the right, we see the same correlation coefficient recovered from the shadow with 1000 and 100,000 numerical experiments sampled from~\eqref{eq:gaussian}.   }\label{fig:decay_corr}
\end{figure}

\begin{figure}
    \centering
    \includegraphics[width=4.2cm]{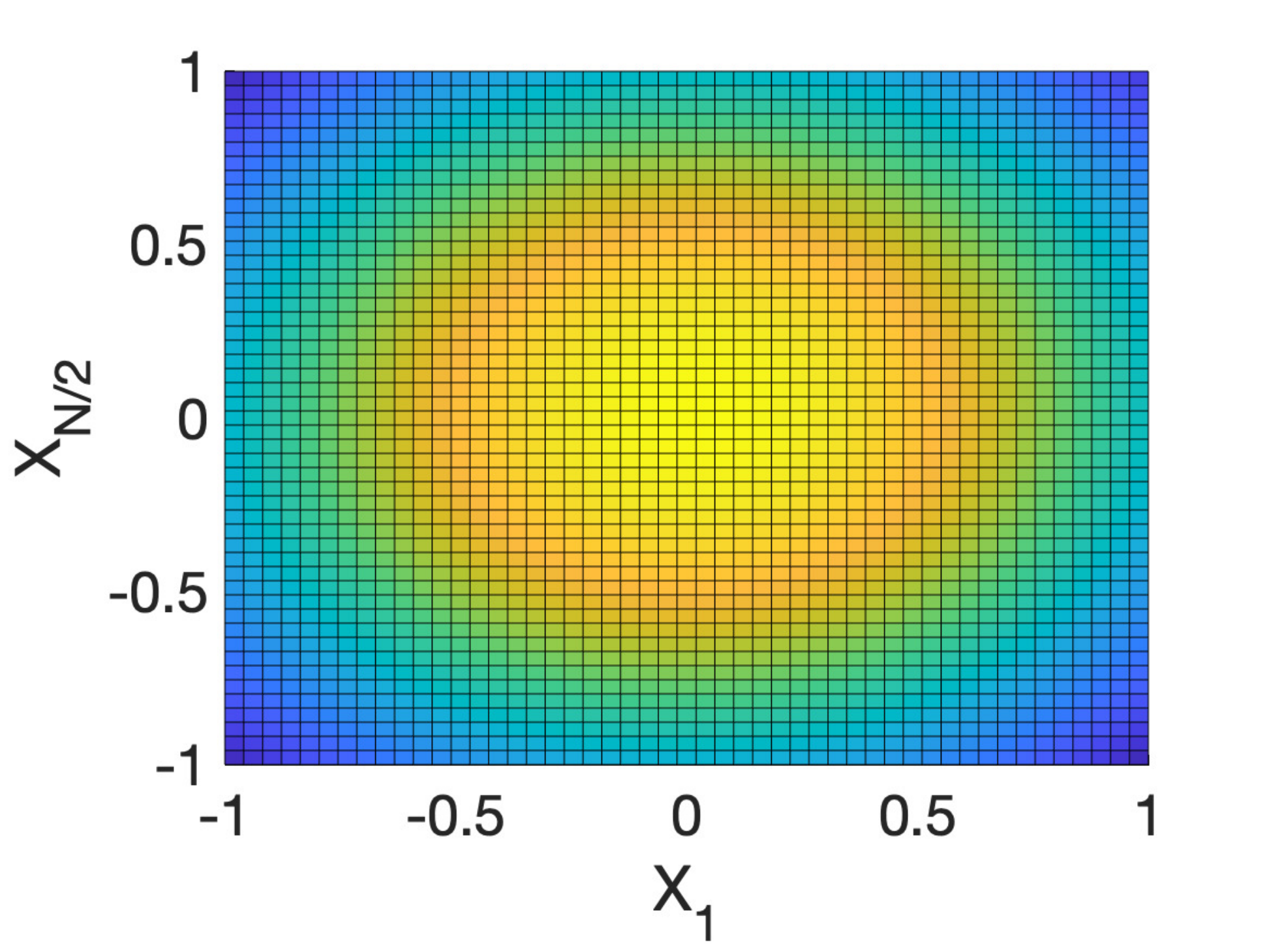}
    \includegraphics[width=4.2cm]{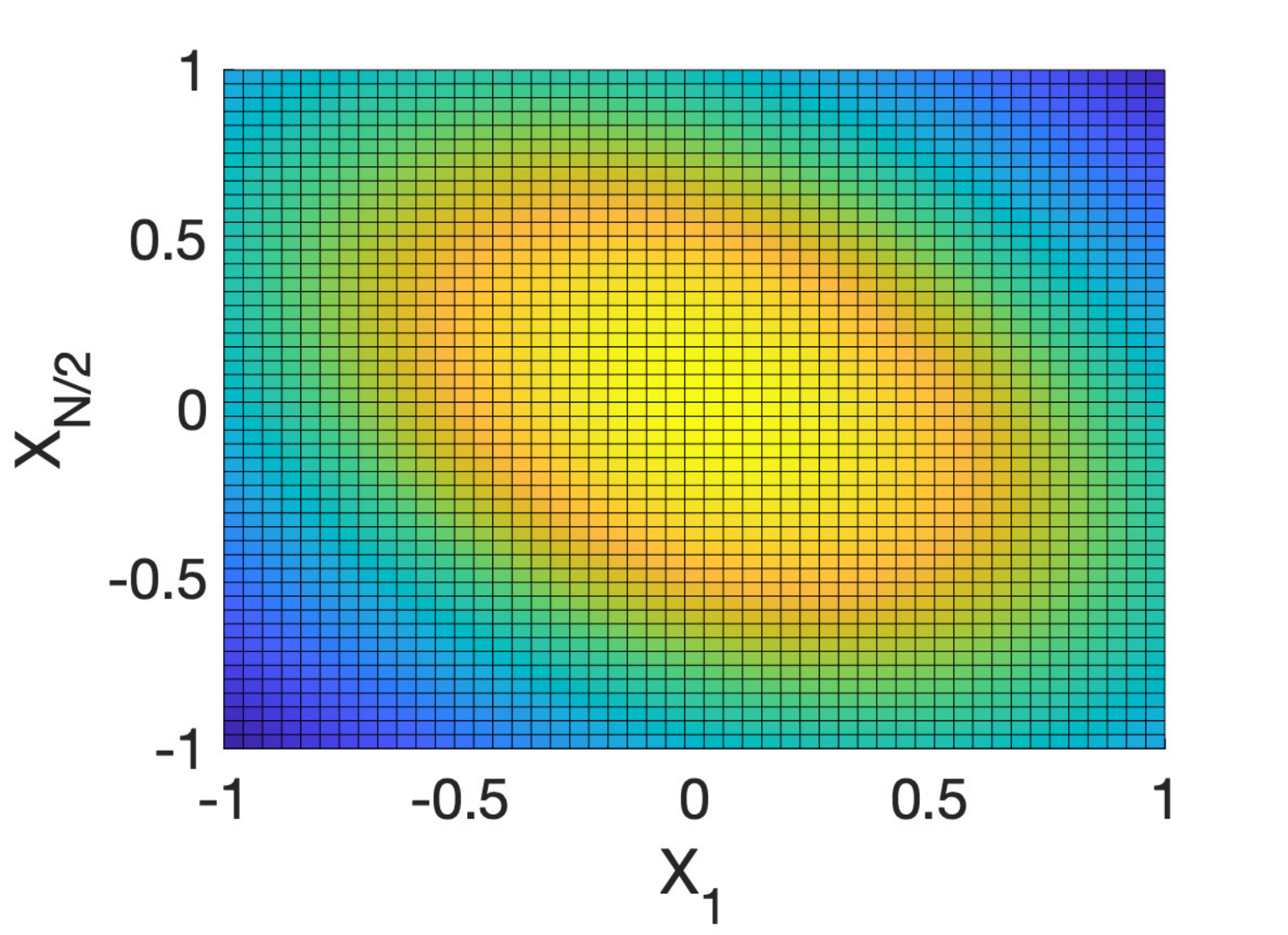}\\
    \includegraphics[width=4.2cm]{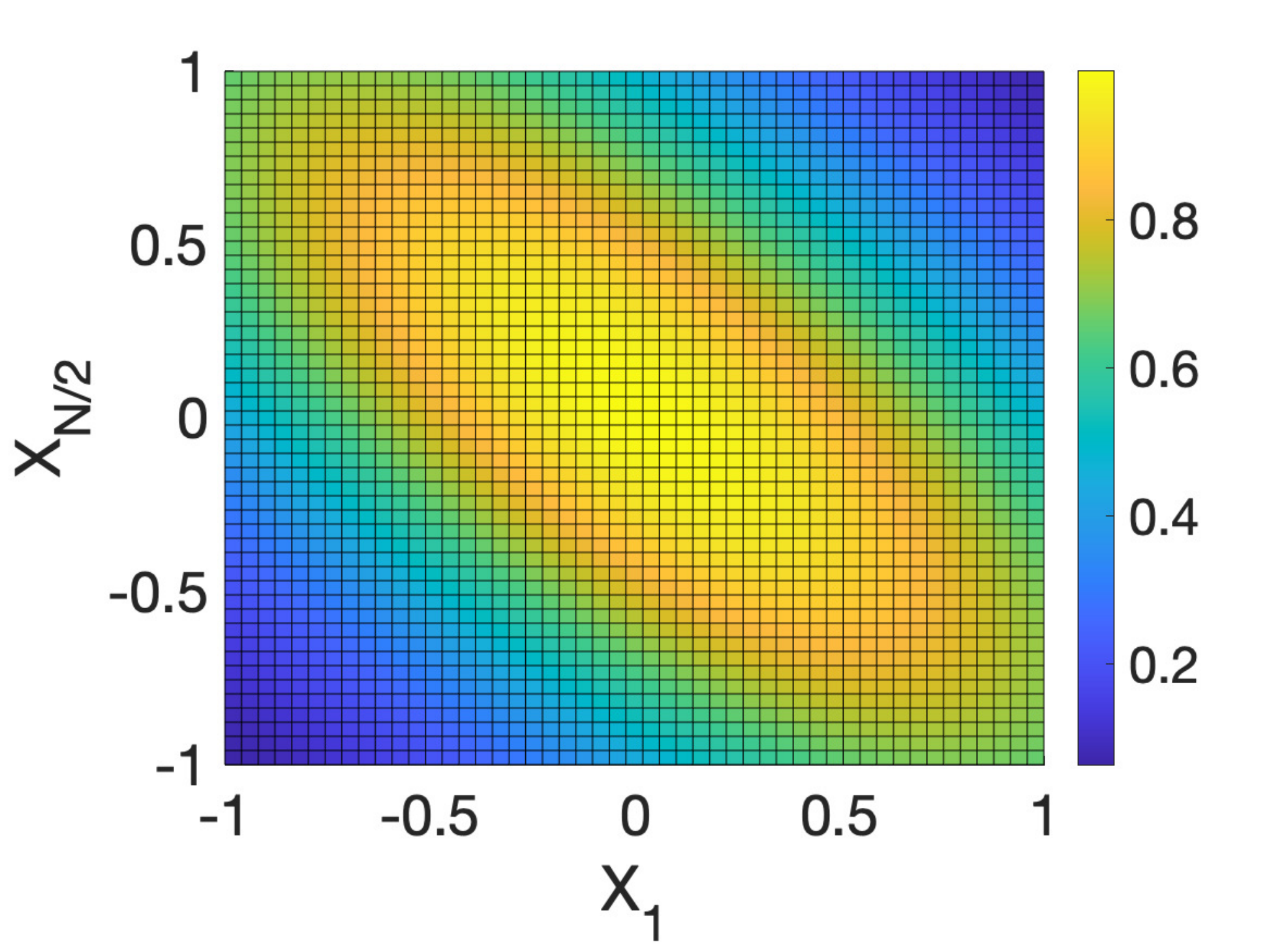}
     \includegraphics[width=4.2cm]{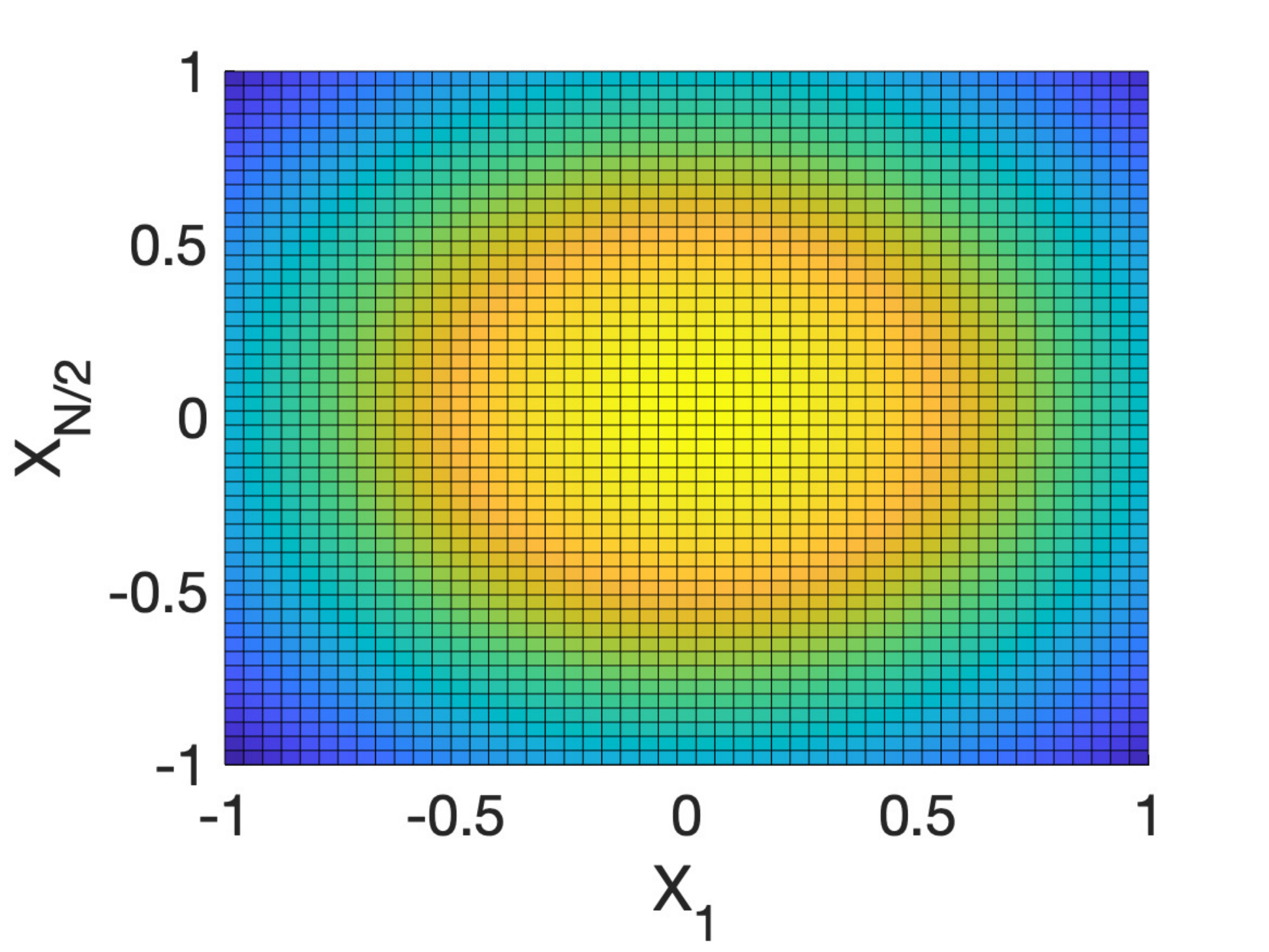}\\
    \includegraphics[width=4.2cm]{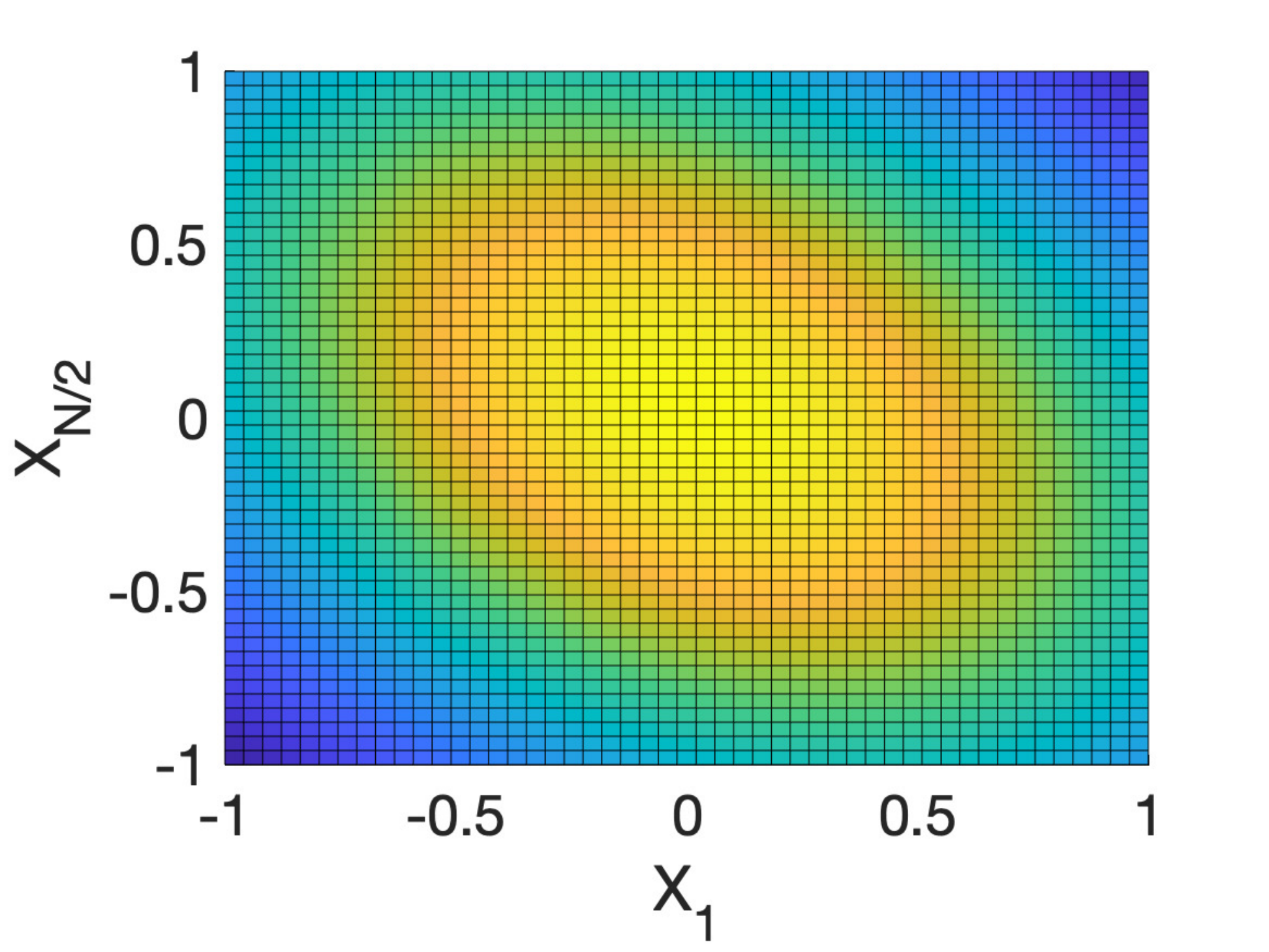}
    \includegraphics[width=4.2cm]{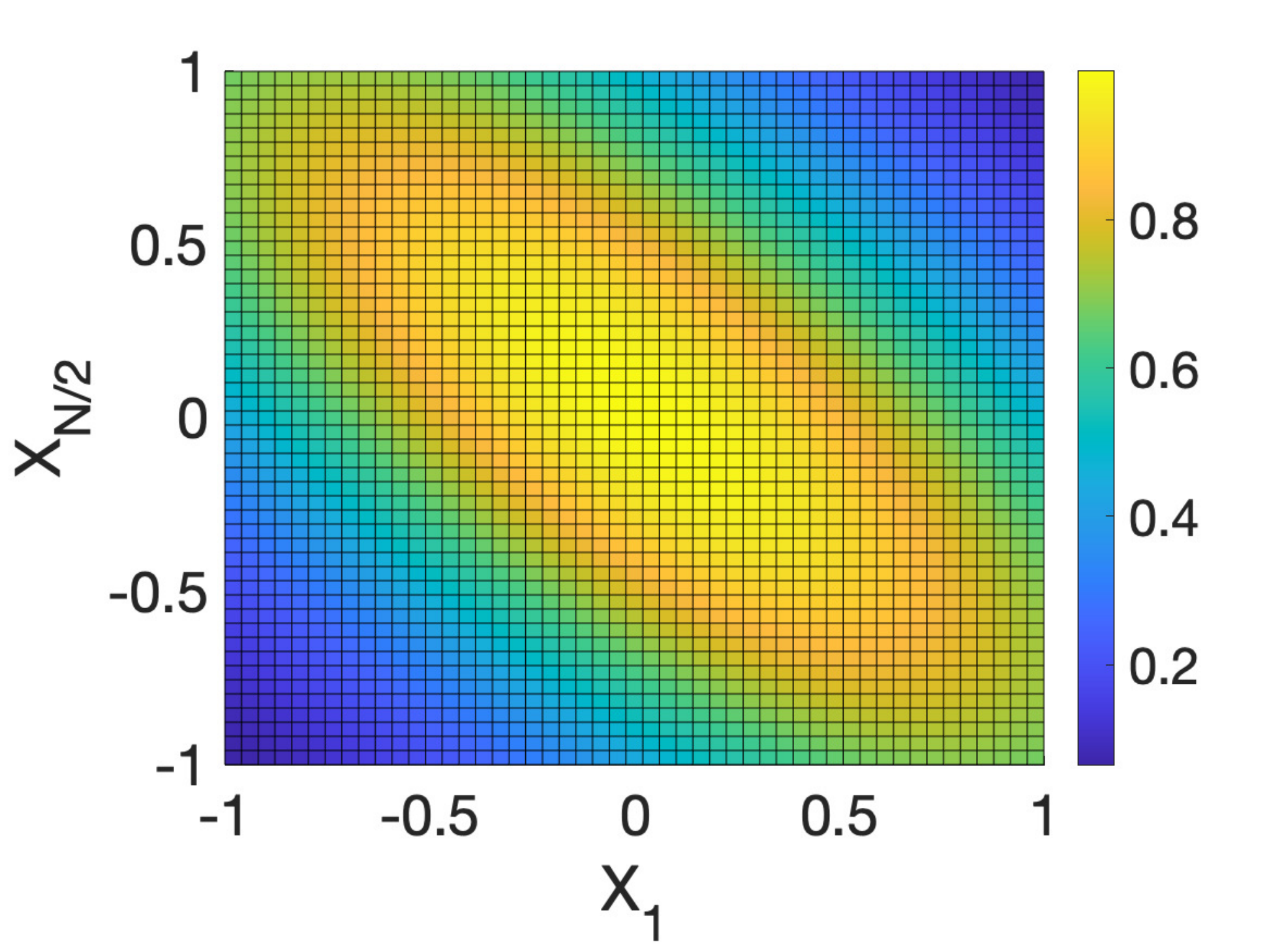}
    \caption{Plot of $\chi_{1,N/2}((x_1,0),(x_{N/2},0))$ with 1000 \emph{numerical measurements} sampled from the multivariate Gaussian distribution~\eqref{eq:gaussian}. From left to right we see simulations for $(\kappa,N) = (0.99,1000), (0.99,10),(0.99,4)$ respectively, with recovered characteristic functions in the top row and true ones in the bottom row. We see that correlations between particles are almost perfectly reproduced.}
    \label{fig:my_label}
\end{figure}
\subsection{Vacuum state}
We start with the simplest example $\rho =\ketbra{0}$ to fix ideas. In this case $\vert \langle 0 | \alpha\rangle  \vert^2 = e^{-\vert \alpha \vert^2/2}$.
Measurements of $\alpha$ are then standard normally distributed $\alpha \sim \mathcal N_{\mathbb R^2}(0,1).$
Considering then $N=50$ experimental realisations, $\alpha_1,...,\alpha_{N} \in \mathbb R^2,$ the trial quantum characteristic function is 
\[ \chi_{\sigma_N}(u)= \frac{e^{\frac{\vert u \vert^2}{4}}}{N}\sum_{i=1}^N e^{-i  u^T \Omega \alpha_i }.\]
The function is illustrated in Figure~\ref{fig:approx}. Reduced particle characteristic functions for up to $1000$ oscillators are illustrated in Figure ~\ref{fig:my_label}. This yields an approximation of the quantum characteristic function on compact sets around zero. Due to the exponentially increasing function $e^{\frac{\vert u \vert^2}{4}}$, this approximation is only local for fixed $N$. We also plot the variance of the true characteristic function $\chi$ vs the reconstructed characteristic function $\tilde \chi$ at $N$ grid points $(x_n)_{i=1}^N$ with $x_n \in D$ in some domain $D$
\[ \mathbf V_{N,D} = \frac{\sum_{i=1}^{N} \vert \chi_i(x_n)-\tilde \chi_i(x_n)\vert^2}{\operatorname{vol}(D) N}.\]

\subsection{Cat qubit states}
For our second example, we consider a non-Gaussian, one-mode pure quantum state called cat state. These states are used in quantum error correction~\cite{mirrahimi2014dynamically,Mazyar2019}. Given a coherent state $|\alpha\rangle$, we denote the cat states 
\begin{align*}
    |+\rangle_{\operatorname{Cat}}=\frac{|\alpha\rangle+|-\alpha\rangle}{\mathcal{N}_+},\quad     |-\rangle_{\operatorname{Cat}}=\frac{|\alpha\rangle-|-\alpha\rangle}{\mathcal{N}_-}\,,
\end{align*}
with normalisation constants \begin{align}
    & \mathcal{N}_{\pm}\coloneqq\sqrt{ (\bra{\alpha} \pm \bra{-\alpha})(\ket{\alpha} \pm \ket{-\alpha})} \nonumber\\
    &= \sqrt{2 \pm (\langle \alpha\vert -\alpha \rangle+\langle-\alpha \vert \alpha \rangle)  } = \sqrt{2(1\pm e^{-2\vert \alpha\vert^2})}.
    \end{align}
Here, $\langle x \vert \pm  \rangle_{\operatorname{Cat}}=\frac{\langle x \vert \alpha \rangle \pm \langle x \vert -\alpha \rangle}{\mathcal{N}_{\pm}},$ where $\langle x \vert \alpha \rangle = e^{-\frac{\vert x \vert^2 + \vert \alpha \vert^2}{4}+\frac{\alpha \bar x}{2}}$ such that $\langle -\alpha \vert \alpha \rangle = \langle \alpha \vert -\alpha \rangle=e^{-\vert \alpha\vert^2}.$ This way, we can define $$|0\rangle_{\operatorname{Cat}}\coloneqq\frac{|+\rangle_{\operatorname{Cat}}+|-\rangle_{\operatorname{Cat}}}{\sqrt{2}} \text{ and }
 |1\rangle_{\operatorname{Cat}}\coloneqq\frac{|+\rangle_{\operatorname{Cat}}-|-\rangle_{\operatorname{Cat}}}{\sqrt{2}}.$$ For $N=100$ we are able to reconstruct the characteristic function of states $\ket 0_{\operatorname{Cat}} $ and $\ket 1_{\operatorname{Cat}}$ respectively on $[-2,2]$, for $\alpha=(1,1)$, which is illustrated in Figure~\ref{fig:qcf}.

 The probability distribution is then 
 \begin{align}\big\vert \langle x \vert 0 \rangle_{\operatorname{Cat}} \big\vert^2 = \frac{\vert \langle x \vert + \rangle \vert^2 +\vert \langle x \vert - \rangle \vert^2 +2 \Re ( \overline{\langle x \vert - \rangle} \langle x \vert + \rangle) }{2}
 \end{align}
  and
  \begin{align}
 \big\vert \langle x \vert 1 \rangle_{\operatorname{Cat}} \big\vert^2 = \frac{\vert \langle x \vert + \rangle \vert^2 +\vert \langle x \vert - \rangle \vert^2 -2 \Re ( \overline{\langle x \vert - \rangle} \langle x \vert + \rangle) }{2}
 \end{align}

This implies that 
\begin{align}
&\chi_{\ket 0_{\operatorname{Cat}} \bra 0_{\operatorname{Cat}}} \nonumber\\
&= \frac12 \left(\chi_{\ket + \bra +}+ \chi_{\ket + \bra -} + \chi_{\ket - \bra +} +\chi_{\ket - \bra -}\right)\\
&= \frac12 
(\frac{1}{\mathcal N_+^2}+\frac{1}{\mathcal N_-^2}\Bigg) (\chi_{\ket \alpha \bra \alpha}+ \chi_{\ket {-\alpha} \bra {-\alpha}}) \nonumber\\
& \quad + \frac12 \Bigg(\frac{1}{\mathcal N_+^2}-\frac{1}{\mathcal N_-^2}\Bigg) (\chi_{\ket \alpha \bra {-\alpha}}+ \chi_{\ket {-\alpha} \bra \alpha})\nonumber\\
&\quad + \frac{\chi_{\ket \alpha \bra \alpha}- \chi_{\ket {-\alpha} \bra {-\alpha}}}{\mathcal N_+ \mathcal N_-} 
\end{align}
and 
\begin{align}& \chi_{\ket 1_{\operatorname{Cat}} \bra 1_{\operatorname{Cat}}} \nonumber\\
&= \frac12 \left(\chi_{\ket + \bra +} -\chi_{\ket + \bra -} - \chi_{\ket - \bra +} +\chi_{\ket - \bra -}\right)\\
&= \frac12 
\Bigg(\frac{1}{\mathcal N_+^2}+\frac{1}{\mathcal N_-^2}\Bigg) (\chi_{\ket \alpha \bra \alpha}+ \chi_{\ket {-\alpha} \bra {-\alpha}}) \nonumber\\
& \quad +\frac12 \Bigg(\frac{1}{\mathcal N_+^2}-\frac{1}{\mathcal N_-^2}\Bigg) (\chi_{\ket \alpha \bra {-\alpha}}+ \chi_{\ket {- \alpha} \bra \alpha}) \nonumber\\
& -\frac{\chi_{\ket \alpha \bra \alpha}- \chi_{\ket {-\alpha} \bra {-\alpha}}}{\mathcal N_+ \mathcal N_-} .
\end{align}
The characteristic functions on the right hand side are explicitly given by~\eqref{eq:char_func2}.

\subsection{Chain of quadratic harmonic oscillators}
We consider a hermitian Hamiltonian matrix $ h = \begin{pmatrix} h_{XX} &h_{XP} \\ h_{PX} & h_{PP} \end{pmatrix}$
to which we associate a quadratic chain with Hamiltonian
\[ H =\langle R, h R \rangle \text{ with }R=(X,P), \]
where $h_{XP}=h_{PX}=0$, $h_{PP}= \frac{\operatorname{id}}{2}$ and  $$h_{XX}=\begin{pmatrix}
1/2      & -\frac{\kappa}{4}  & \cdots  & 0    & -\frac{\kappa}{4}     \\
-\frac{\kappa}{4}     & 1/2     & -\frac{\kappa}{4}  &         & 0    \\
\vdots   & -\frac{\kappa}{4}    & 1/2     & \ddots  & \vdots  \\
0  &         & \ddots  & \ddots  & -\frac{\kappa}{4}  \\
-\frac{\kappa}{4}  & 0 & \cdots  & -\frac{\kappa}{4}      & 1/2     \\
\end{pmatrix}$$ for some $\kappa \in [-1,1].$ Here, $\kappa$ refers to the interaction strength between neighboring oscillators: small $\kappa$ means weak correlations whereas large $\kappa$ means strong correlations. Let 
\[ X = h_{XX}^{-1/2} \sqrt{\sqrt{h_{XX}} h_{PP}\sqrt{h_{XX}}} h_{XX}^{-1/2} .\]
The ground state $|\psi\rangle$ of $H$ is a Gaussian state with covariance matrix given by $\gamma= \operatorname{diag}(X,X^{-1})$~\cite{schuch2006quantum}. The decay of spatial correlations is illustrated in Figure~\ref{fig:decay_corr}. We will again use our heterodyne tomography protocol to reconstruct that decay. The distribution of the random variable arising from the heterodyne detection takes the following expression:
\begin{equation}
\label{eq:gaussian}
    |\langle x|\psi\rangle|^2\propto e^{- x^T\Omega^T\Big(\frac{I+\gamma}{2}\Big)^{-1}\Omega x}\,.
\end{equation}
We recall that for any two modes $i,j$, the reduced shadow characteristic function after sampling $(x_k)_{k=1}^N$ from the above distribution is defined as
\begin{align*}
    &\hat{\chi}(u_i,u_j)\coloneqq \frac{1}{N}\sum_{k=1}^N e^{  \frac14(\|u_i\|^2+ \|u_j\|^2)-i(u_i^\intercal \Omega (x_k)_i+u_j^\intercal \Omega (x_k)_j)}\,,
\end{align*}
where $(x_k)_i,(x_k)_j\in\mathbb{R}^2$ are the components of vector $x_k\in\mathbb{R}^{2m}$ corresponding to modes $i$ and $j$. The reduced characteristic functions of the marginal of $|\psi\rangle$ over modes $i$ and $j$ takes the form
\begin{align*}
    \chi(u_i,u_j)=e^{-\frac{1}{4}(u_i,u_j)^\intercal \gamma (u_i,u_j)}
\end{align*}
From which the correlations coefficients corresponding to these two modes can be read-off.

\section{Acknowledgements}

LL acknowledges financial support from the Alexander von Humboldt Foundation.
The authors are also grateful to Pembroke College for funding a workshop in Cambridge in October 2022 where we had some useful discussions.

 \subsection{Application of shadow tomography to photonic crystals and 2D materials}
 Photonic crystals are dielectric solids with a periodically modulated refraction index. One of their many exciting features is that the photonic bandstructure in photonic crystals, which is accessible to optical measurements, provides insights into the electronic bandstructure in periodic solids. Just like electrons in solids, photons are prohibited from propagating at band-gap frequencies inside the medium. While the band structure of photonic crystals is obtained from the classical Maxwell equations, the filling of bands by photons relies on a quantum mechanical description of the electromagnetic field. Such a description is also necessary to understand effects such as spontaneous emission of photons in photonic crystals~\cite{PhysRevLett.84.4341,PhysRevLett.64.2418}. We shall demonstrate now how shadow tomography can be applied to exhibit the dispersion surface of the Bloch-Floquet bands. Our description here is semiclassical, as we treat the photons quantum-mechanically and the crystallic band structure classically.

 In our discussion of applications of shadow tomography, we will then focus on optical analogues of graphene called \emph{photonic graphene}

 We start by using classical electrodynamics to describe the propagation of electromagnetic waves in photonic crystals. We will then quantise the electromagnetic fields to convert the classical picture to a quantum picture involving photons and illustrate how the methods developed in this article can be used in photonic crystals. This semiclassical approach is used to simplify the analysis and get explicit formulas for photonic states and electromagnetic fields.

 \subsubsection{Derivation of the Helmholtz equation}
A mathematical account of the electromagnetic structure of photonic crystals can be found in~\cite{Photon_crystals}.
We shall start by arguing that the propagation of TE polarised light, with $E= (E_1,E_2,0)$ and $H=H_3 \hat{e}_3$, where $\hat{e}_i$ is the $i$-th unit vector, in two dimensional photonic crystals can be reduced to a periodic eigenvalue problem with Helmholtz operator $\mathcal L = \nabla \cdot D_e \nabla$ 
 \begin{equation}
 \label{eq:Bloch-Floquet}
 \mathcal L \psi= \omega^2 \psi
 \end{equation}
 with $\psi=H_3.$ Here, $D_e = \frac{J^T \kappa J}{\mu_3}$ is a periodic function, with $J=\begin{pmatrix} 0 & -1 \\ 1& 0 \end{pmatrix},$ that only depends on the permittivity and permeability of the material and will be specified below. 
 We start from the Maxwell equations
 \[ \begin{split}
     \nabla \times E &= -\partial_t B,\  \nabla \times H = \partial_t D, \\
     \nabla \cdot D &= 0, \  \nabla \cdot B=0. 
 \end{split}\]
 We then find that Maxwell's equation can be written as a Schr\"odinger equation with 
 \[ i \partial_t \Psi =M \Psi \]
 with $\Psi = (E,H)^t$ and 
 \[ M = R^{-1} \begin{pmatrix} 0 &i \nabla \times \\ -i \nabla \times & 0   \end{pmatrix}.\]
 Here, $R= \begin{pmatrix} \varepsilon & \zeta \\ \zeta^* & \mu\end{pmatrix}$ with permittivity tensor $\varepsilon$, permeability tensor $\mu$, and bianisotropy tensor $\zeta.$

 For photonic analogues of 2D material, we shall only consider matrix entries of $R$ that vary in the two-dimensional plane. In addition, we assume that the coupling between the longitudinal and transversal direction is zero. 

 This way, we may decouple the $6 \times 6$ Maxwell equations into two coupled equations with matrices
 \[ R_{\perp} = \begin{pmatrix} \varepsilon_{\perp} & \xi_{\perp} \\ \xi_{\perp} & \mu_{\perp} \end{pmatrix} \text{ and } R_{||} = \operatorname{diag}(\varepsilon_{||},\mu_{||})\]
 that are the form
 \begin{equation}
 \begin{split}
 \label{eq:EM_rel}
 \partial_t(E_{\perp},H_{\perp}) &= R_{\perp}^{-1}(-J \nabla_{\perp} H_3,J \nabla_{\perp} E_3)  \text{ and }\\
  \partial_t(E_3,H_3) &= R_{||}^{-1}(-J \nabla_{\perp} \cdot H_3,J \nabla_{\perp} \cdot  E_3).\end{split}
  \end{equation}
 Here $\nabla_{\perp} = (\partial_{x_1}, \partial_{x_2}).$ Eliminating $(E_{\perp},H_{\perp})$ and using $R_{\perp}^{-1} =: \begin{pmatrix} \kappa& \zeta \\ \zeta^* & \iota \end{pmatrix},$ we find under the assumption of pure TE waves, that is $E_3=0$, the wave equation
 \[  \partial_t^2 H_3(x_1,x_2,t) = -\mathcal L H_3(x_1,x_2,t).\]
 Observe that by~\eqref{eq:EM_rel},  $(E_{\perp},H_{\perp})$ are directly determined from $H_3.$
 By looking at planar waves $H_3(x_1,x_2,t)= H_3(x_1,x_2) e^{\pm i \omega t},$
 we reduce the wave equation to~\eqref{eq:Bloch-Floquet}.

 In magneto-optic materials, one has 
 \[ R_{\perp} =\operatorname{diag}(\varepsilon \sigma_0-\gamma \sigma_2,\mu \sigma_0 ),\]
 where $\gamma$, with $\vert \gamma \vert<\varepsilon,$ is the strength of the Faraday-rotation. This way, $\kappa=\frac{\varepsilon}{\varepsilon^2-\gamma^2} \sigma_0+\frac{\mu}{\varepsilon^2-\gamma^2}  \sigma_2.$

 \subsubsection{Bloch-Floquet theory and band structure of photonic crystals}
 Photonic crystals are periodic with respect to some lattice $\Gamma.$ This is reflected in the periodicity of the operator $\mathcal L$. We can thus apply the standard Gelfand transform and find that that $\mathcal L$ is unitarily equivalent to the direct integral operator 
 \[ \int_{\mathbb C/\Gamma}^{\oplus} \mathcal L(k) \frac{dk}{\vert \mathbb C/\Gamma \vert}
\text{ with }\mathcal L(k)=-(\nabla +ik)\cdot D_e(\nabla +ik)\] for which the eigenvalue problem ~\eqref{eq:Bloch-Floquet} converts into 
 \[ \mathcal L(k)H_3(x,k)=\omega^2(k)H_3(x,k), \quad H_3(x+\Gamma,k)=H_3(x,k) \] 
 for $k \in \CC/\Gamma^*$, the Brillouin zone. By standard arguments, the set of admissible $\omega$ for given $k$ is discrete $(\omega_n(k)^2)_{n \in \mathbb{N}}$ with associated fields $(H_{3,n}(x,k)).$ 

The function $\omega(k)^2$ is also called the dispersion relation and $\partial_k \omega(k)$ is the speed of propagation.

 Thus, any eigenfunction of $\mathcal L(k)$ has a Fourier expansion 
 \[ \phi(x) = \sum_{m \in \mathbb{Z}^2} \phi_m e^{i(m_1k_1+m_2k_2)\cdot x},\]
 where $\Gamma^* = k_1 \mathbb Z +k_2 \mathbb Z.$

 \subsubsection{The quantisation of the EM field}
In order to study quantum effects such as spontaneous emission, it is essential to use a field theoretic description of the electromagnetic field inside a photonic crystal\cite{PhysRevLett.84.4341,PhysRevLett.64.2418}. 

The filling of bands in photonic crystals by photons can be described by quantizing the magnetic field and consider field operators
\[\begin{split}
 \widehat{H_{3}}(x,k,t) &= \sum_{n \in \mathbb N} h_{n}(k)\Big(H_{3,n}(x,k) a_{n}(k)e^{-i\omega_n(k)t}\\
 &+\overline{H_{3,n}(x,k)} a_{n}(k)^*e^{i\omega_n(k)t}\Big)\end{split},\]
 where $h_{n}(k) = \frac{\hbar \omega_n(k)}{2\mu \vert \mathbb C/\Gamma \vert}$ and $a_n(k), a_n(k)^*$ are annihilation and creation operators of photons with wavevector $k$ in band $n$. The macroscopic limit is then obtained by taking the expectation value of the field vector $\langle \widehat{H_{3}}(x,k,t) \rangle$ with respect to the photonic state. 

\subsubsection{Dirac points}
As a simple example, we shall consider the case that $D_e$, appearing in~\eqref{eq:Bloch-Floquet}, has honeycomb-lattice symmetries, i.e. $\Gamma $ is an equilateral triangular lattice with $2\pi/3$-rotational symmetry  In this case, it is well-known, see e.g.~\cite{Photon_crystals}, that the spectrum of $\mathcal L(k)$ at suitable energies can be effectively described, close to some positive energy $E_D>0$, by a two-dimensional Dirac operator 
\[ H(k) = \begin{pmatrix}0 & 2D_{\bar z}+k \\ 2D_{z}+\bar k & 0  \end{pmatrix} \text{ with } z\in \CC/(\mathbb Z+i \mathbb Z) \]
and $k \in \mathbb C$ with periodic boundary conditions where $D_{z} = \frac{1}{i} \partial_z.$
This operator can be diagonalised with Bloch functions
$
\left( \begin{array}{c} 1 \\ \pm \frac{\bar k }{\vert k \vert } \end{array} \right)$ and eigenvalues  $\pm \vert k \vert.
$
This means that the original frequencies satisfy for $k$ small enough $\omega(k)^2 = E_D \pm \vert k \vert,$ where $E_D$ is the energy level of the Dirac cones in the spectrum of $\mathcal L(k).$ 

\subsubsection{Strain-induced pseudomagnetic fields}
Since photons do not directly interact with electromagnetic fields, the effect of electromagnetic fields on the electronic band structure cannot be modelled directly using external fields. However, it has been observed that physical strain can be used to imitate the effect of electromagnetic fields, see e.g.~\cite{PhysRevA.103.013505}. Indeed, consider the displacement field $T(z) = z+ u(z)$ with $u$ a displacement vector. In the case of a honeycomb-lattice symmetries, the effective strain-induced magnetic potential is then given as 
\[ A(z) \propto \tr(U(z)\sigma_3) -i 
\tr(U(z)\sigma_1)\]
where $U(z) = \frac{1}{2}(Du+Du^t),$ with Jacobi matrices $Du.$
This way, by applying suitable strain, we obtain a pseudo-magnetic potential $A(z)= \frac{iBz}{2}$ associated with a constant magnetic field $B>0$ and the effective Hamiltonian is the magnetic Dirac operator
\begin{equation}
\label{eq:Hamiltonian_B}
H = \begin{pmatrix} 0 & 2D_{\bar z}- \frac{1}{2}i B z \\ 
2D_{z} + \frac{1}{2}i B \bar z & 0 \end{pmatrix},
\end{equation}
with constant magnetic field $B>0.$

An example of, up to a change of gauge, pseudomagnetic field inducing strain is $$T(z) = z + \kappa^2 \Im(z)^2.$$

\begin{figure}
\includegraphics[width=2.5cm]{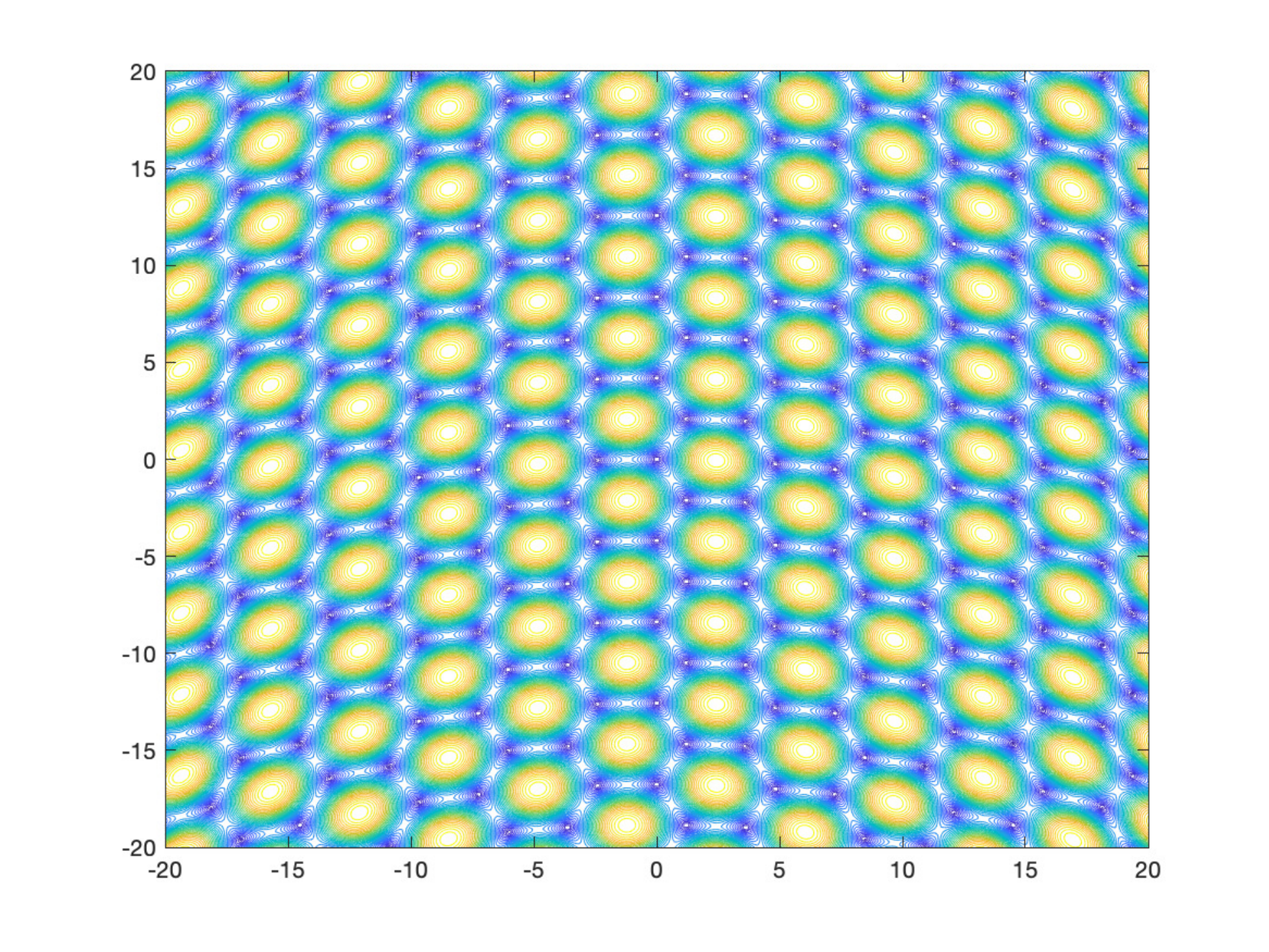}
\includegraphics[width=2.5cm]{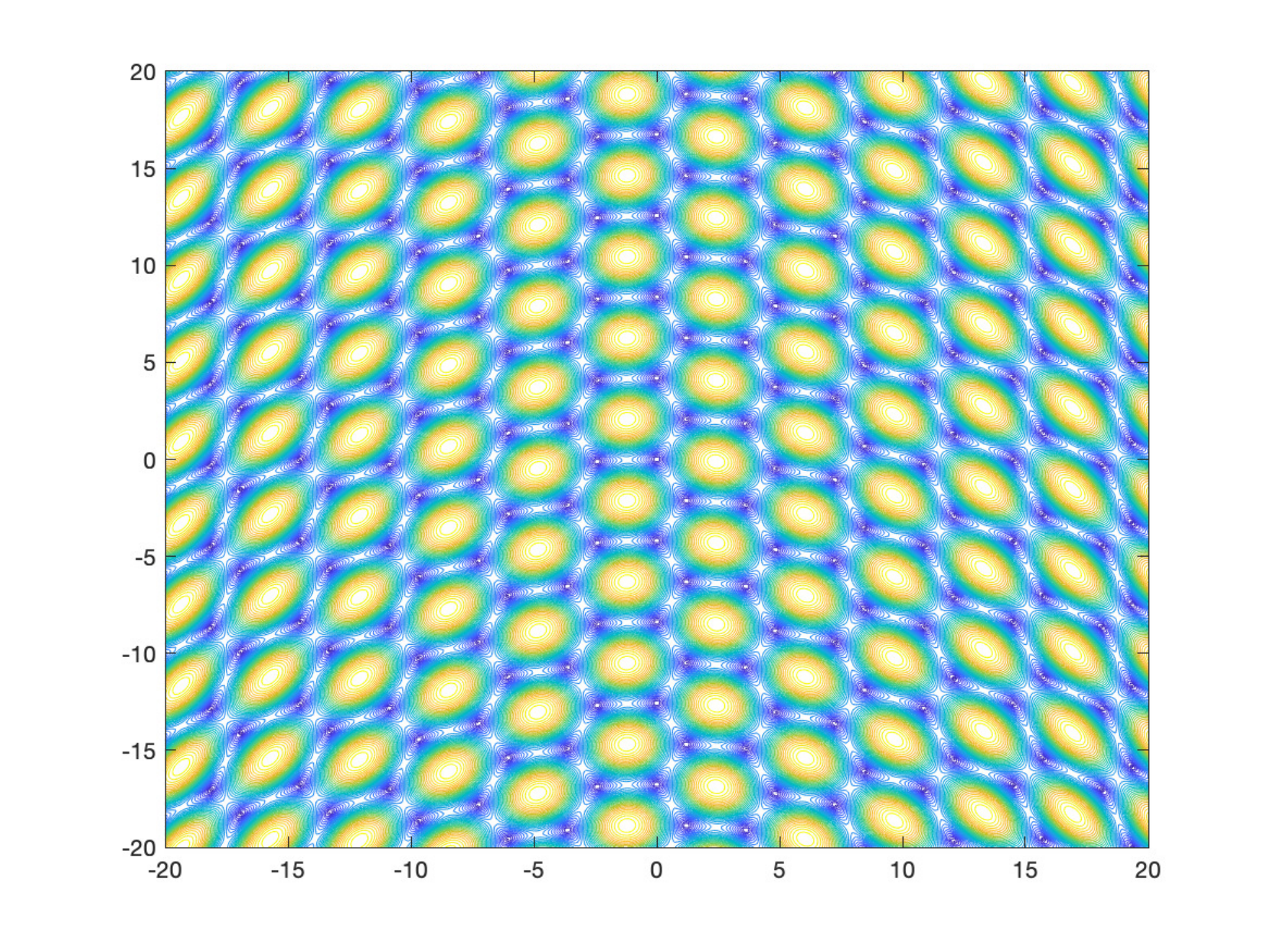}
\includegraphics[width=2.5cm]{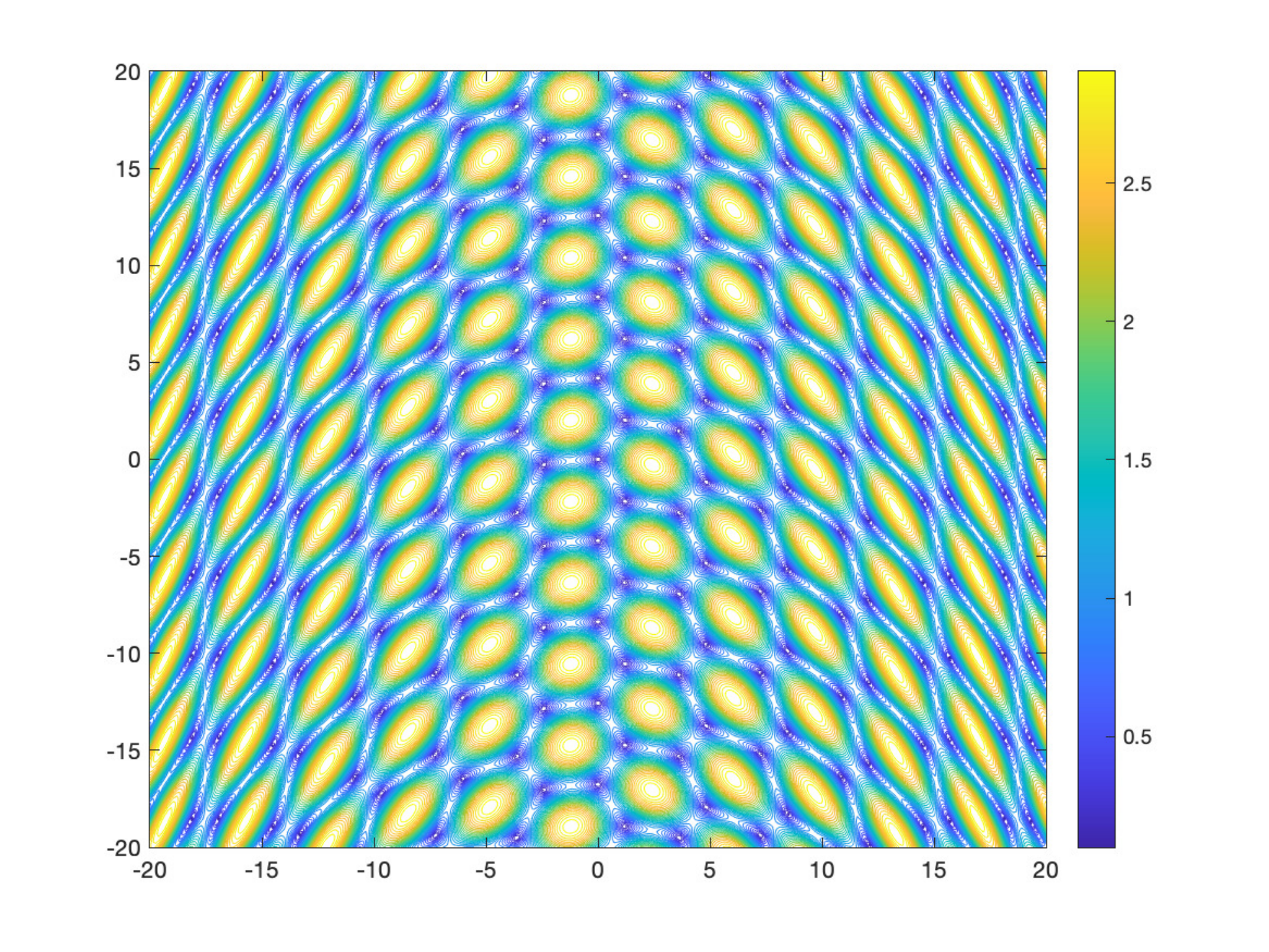}
\caption{Different strain profiles leading to effective constant magnetic fields. Increasing field strenghts from left to right.}
\end{figure}

We then have the conjugation relation 
\[ e^{-B \vert z \vert^2/4} (2D_{\bar z}) e^{B\vert z \vert^2/4} = 2D_{\bar z}-\frac{i}{2}Bz.\]
The infinitely degenerate ground state satisfies then 
\[ u(z) = f(z) e^{-\frac{B\vert z\vert^2}{4}}, f \in \mathscr O(\CC), \int_{\CC} \vert f(z)\vert^2 e^{-B\vert z \vert^2/2} dz<\infty,\]
where $\mathscr O(\CC)$ is the space of entire functions. The infinitely degenerate ground state can be equivalently interpreted as a flat band, see e.g.~\cite{PhysRevB.31.2529}.

Let $a = 2D_{z}-\overline{A(z)} ,$ one then has that $[a^*,a]=2B$ which means that to find the zero energy band for the Hamiltonian~\eqref{eq:Hamiltonian_B}, we have to find zero modes to $a^*$ satisfying Bloch-Floquet boundary conditions.

Let $E_D>0$ be the energy of the Dirac point, then the lowest band is just $\omega(k)^2=E_D$ for all $k$. Correspondingly, by the commutation relation of the ladder operators, we find that the other (also flat) bands are of the form $\omega(k)^2 =E_D^2 \pm \sqrt{2n B}$ with $n \in \{0,...,N\}$ for some large enough $N$ which is in fact independent of $k$, since the band is flat.

As a simple example, we may choose photons in the strained crystal with state $$\vert \psi \rangle  = \prod_{i=1}^n \vert 0 \rangle_{k_i}$$ with harmonic oscillator frequencies $\omega(k_i)\equiv E_D.$ In Figure~\ref{fig:reference} we see that high-energy states are in general harder to reconstruct as their characteristic function is more extended in phase space. 
\begin{figure}
    \centering
    \includegraphics[width=4cm]{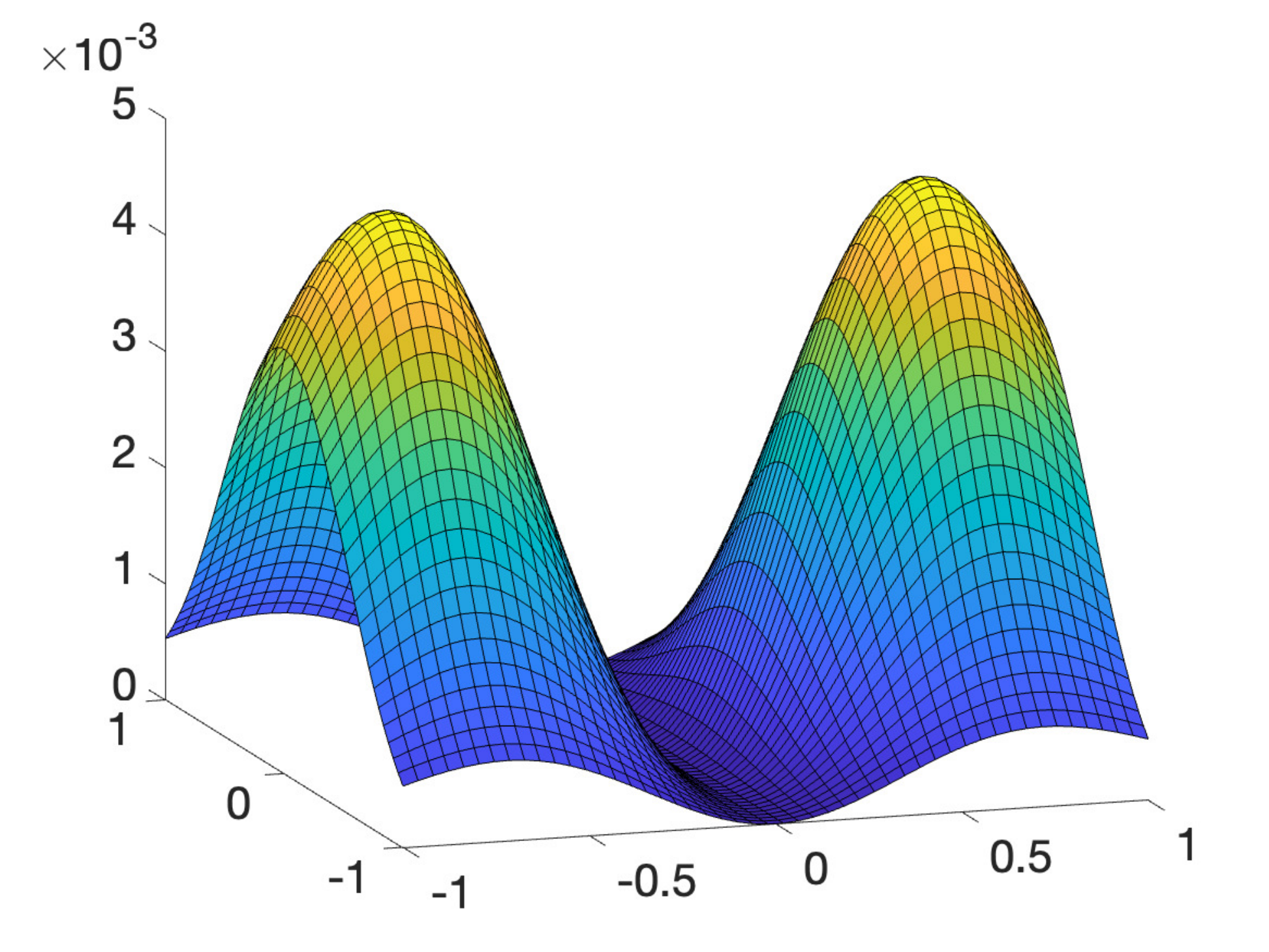}
    \includegraphics[width=4cm]{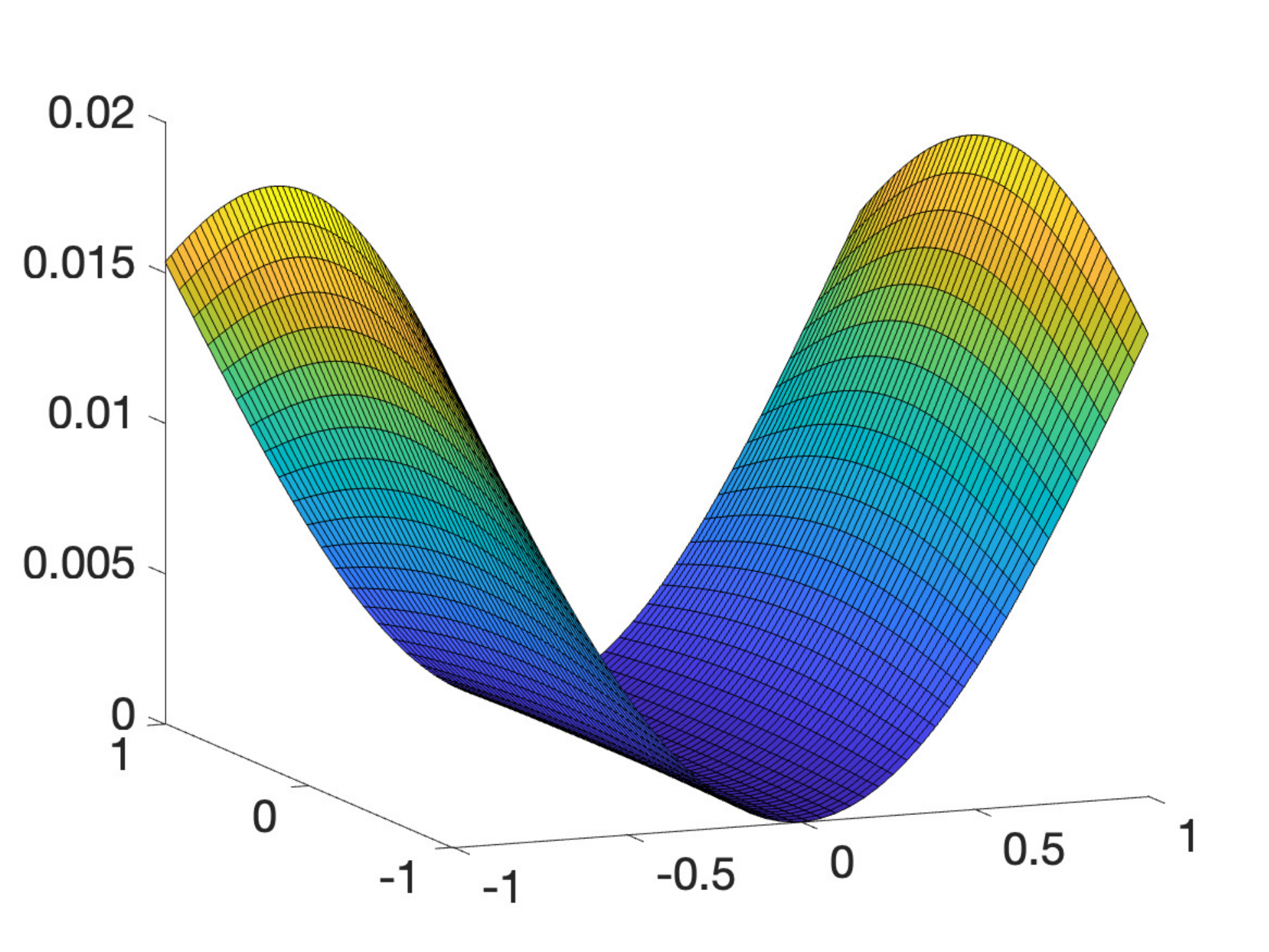}
    \caption{\label{fig:reference} Reconstruction error for single particle characteristic function with $n=5$ particles for $E_D=0.5$ and $E_D=5$ with $N=150$. High energies are harder to reconstruct, as they are more  }
\end{figure}




\bibliographystyle{unsrt}
\bibliography{Unified-Biblio}

\begin{thebibliography}{10}

\bibitem{ODonnell2016}
R.~O{\textquotesingle}Donnell and J.~Wright.
\newblock Efficient quantum tomography.
\newblock In {\em Proc.\ 48th ACM Symp.\ Theory Comput.} {ACM}, 2016.

\bibitem{Haah2017}
J.~Haah, A.~W. Harrow, Z.~Ji, X.~Wu, and N.~Yu.
\newblock Sample-optimal tomography of quantum states.
\newblock {\em {IEEE} Trans. Inf. Theory}, pages 1--1, 2017.

\bibitem{Aaronson2018}
S.~Aaronson.
\newblock Shadow tomography of quantum states.
\newblock In {\em Proc.\ 50th ACM Symp.\ Theory Comput.}, STOC 2018, pages
  325--338, New York, NY, USA, 2018. Association for Computing Machinery.

\bibitem{aaronson2019gentle}
S.~Aaronson and G.~N. Rothblum.
\newblock Gentle measurement of quantum states and differential privacy.
\newblock In {\em Proc.\ 51st ACM Symp.\ Theory Comput.}, pages 322--333, 2019.

\bibitem{Brandao2019}
F.~G. L.~S. Brand\~{a}o, A.~Kalev, T.~Li, C.~Yen-Yu~Lin, K.~M. Svore, and
  X.~Wu.
\newblock Quantum {SDP} solvers: large speed-ups, optimality, and applications
  to quantum learning.
\newblock In {\em 46th International Colloquium on Automata, Languages and
  Programming}, ICALP 2019, 2019.
\newblock Preprint arXiv:1710.02581.

\bibitem{Huang2020}
H.-Y. Huang, R.~Kueng, and J.~Preskill.
\newblock Predicting many properties of a quantum system from very few
  measurements.
\newblock {\em Nat. Phys.}, 16(10):1050--1057, 2020.

\bibitem{Braunstein-review}
S.~L. Braunstein and P.~van Loock.
\newblock Quantum information with continuous variables.
\newblock {\em Rev. Mod. Phys.}, 77:513--577, 2005.

\bibitem{holwer}
A.~S. Holevo and R.~F. Werner.
\newblock Evaluating capacities of bosonic {G}aussian channels.
\newblock {\em Phys. Rev. A}, 63:032312, 2001.

\bibitem{Wolf2007}
M.~M. Wolf, D.~P{\'e}rez-Garc{\'\i}a, and G.~Giedke.
\newblock Quantum capacities of bosonic channels.
\newblock {\em Phys. Rev. Lett.}, 98:130501, 2007.

\bibitem{TGW}
M.~Takeoka, S.~Guha, and M.~M. Wilde.
\newblock Fundamental rate-loss tradeoff for optical quantum key distribution.
\newblock {\em Nat. Commun.}, 5(1):5235, 2014.

\bibitem{PLOB}
S.~Pirandola, R.~Laurenza, C.~Ottaviani, and L.~Banchi.
\newblock Fundamental limits of repeaterless quantum communications.
\newblock {\em Nat. Commun.}, 8(1):15043, 2017.

\bibitem{MMMM}
M.~M. Wilde, M.~Tomamichel, and M.~Berta.
\newblock Converse bounds for private communication over quantum channels.
\newblock {\em IEEE Trans. Inf. Theory}, 63(3):1792--1817, 2017.

\bibitem{Rosati2018}
M.~Rosati, A.~Mari, and V.~Giovannetti.
\newblock Narrow bounds for the quantum capacity of thermal attenuators.
\newblock {\em Nat. Commun.}, 9(1):4339, 2018.

\bibitem{exact-solution}
L.~Lami and M.~M. Wilde.
\newblock Exact solution for the quantum and private capacities of bosonic
  dephasing channels.
\newblock {\em Preprint arXiv:2205.05736}, 2022.

\bibitem{aasi2013enhanced}
J.~Aasi, J.~Abadie, B.~P. Abbott, R.~Abbott, T.~D. Abbott, M.~R. Abernathy,
  C.~Adams, T.~Adams, P.~Addesso, R.~X. Adhikari, et~al.
\newblock Enhanced sensitivity of the {LIGO} gravitational wave detector by
  using squeezed states of light.
\newblock {\em Nat. Photon.}, 7(8):613--619, 2013.

\bibitem{zhang2018noon}
J.~Zhang, M.~Um, D.~Lv, J.-N. Zhang, L.-M. Duan, and K.~Kim.
\newblock {NOON} states of nine quantized vibrations in two radial modes of a
  trapped ion.
\newblock {\em Phys. Rev. Lett.}, 121(16):160502, 2018.

\bibitem{meyer2001experimental}
V.~Meyer, M.~A. Rowe, D.~Kielpinski, C.~A. Sackett, W.~M. Itano, C.~Monroe, and
  D.~J. Wineland.
\newblock Experimental demonstration of entanglement-enhanced rotation angle
  estimation using trapped ions.
\newblock {\em Phys. Rev. Lett.}, 86(26):5870, 2001.

\bibitem{mccormick2019quantum}
K.~C. McCormick, J.~Keller, S.~C. Burd, S.~J. Wineland, A.~C. Wilson, and
  D.~Leibfried.
\newblock Quantum-enhanced sensing of a single-ion mechanical oscillator.
\newblock {\em Nature}, 572(7767):86--90, 2019.

\bibitem{flurin2017observing}
E.~Flurin, V.~V. Ramasesh, S.~Hacohen-Gourgy, L.~S Martin, N.~Y. Yao, and
  I.~Siddiqi.
\newblock Observing topological invariants using quantum walks in
  superconducting circuits.
\newblock {\em Phys. Rev. X}, 7(3):031023, 2017.

\bibitem{Gottesman2001}
D.~Gottesman, A.~Kitaev, and J.~Preskill.
\newblock Encoding a qubit in an oscillator.
\newblock {\em Phys. Rev. A}, 64:012310, 2001.

\bibitem{mirrahimi2014dynamically}
M.~Mirrahimi, Z.~Leghtas, V.~V. Albert, S.~Touzard, R.~J. Schoelkopf, L.~Jiang,
  and M.~H. Devoret.
\newblock Dynamically protected cat-qubits: a new paradigm for universal
  quantum computation.
\newblock {\em New J. Phys.}, 16(4):045014, 2014.

\bibitem{Ofek2016}
N.~Ofek, A.~Petrenko, R.~Heeres, P.~Reinhold, Z.~Leghtas, B.~Vlastakis, Y.~Liu,
  L.~Frunzio, S.~M. Girvin, L.~Jiang, M.~Mirrahimi, M.~H. Devoret, and R.~J.
  Schoelkopf.
\newblock Extending the lifetime of a quantum bit with error correction in
  superconducting circuits.
\newblock {\em Nature}, 536(7617):441--445, 2016.

\bibitem{michael2016new}
M.~H. Michael, M.~Silveri, R.~T. Brierley, V.~V. Albert, J.~Salmilehto,
  L.~Jiang, and S.~M. Girvin.
\newblock New class of quantum error-correcting codes for a bosonic mode.
\newblock {\em Phys. Rev. X}, 6(3):031006, 2016.

\bibitem{Mazyar2019}
J.~Guillaud and M.~Mirrahimi.
\newblock Repetition cat qubits for fault-tolerant quantum computation.
\newblock {\em Phys. Rev. X}, 9:041053, 2019.

\bibitem{ourjoumtsev2006generating}
A.~Ourjoumtsev, R.~Tualle-Brouri, J.~Laurat, and P.~Grangier.
\newblock Generating optical {S}chr\"{o}dinger kittens for quantum information
  processing.
\newblock {\em Science}, 312(5770):83--86, 2006.

\bibitem{kurochkin2014distillation}
Y.~Kurochkin, A.~S. Prasad, and A.~I. Lvovsky.
\newblock Distillation of the two-mode squeezed state.
\newblock {\em Phys. Rev. Lett.}, 112(7):070402, 2014.

\bibitem{huang2015optical}
K.~Huang, H.~Le~Jeannic, J.~Ruaudel, V.~B. Verma, M.~D. Shaw, F.~Marsili, S.~W.
  Nam, E.~Wu, H.~Zeng, Y.-C. Jeong, et~al.
\newblock Optical synthesis of large-amplitude squeezed coherent-state
  superpositions with minimal resources.
\newblock {\em Phys. Rev. Lett.}, 115(2):023602, 2015.

\bibitem{reimer2016generation}
C.~Reimer, M.~Kues, P.~Roztocki, B.~Wetzel, F.~Grazioso, B.~E. Little, S.~T.
  Chu, T.~Johnston, Y.~Bromberg, L.~Caspani, et~al.
\newblock Generation of multiphoton entangled quantum states by means of
  integrated frequency combs.
\newblock {\em Science}, 351(6278):1176--1180, 2016.

\bibitem{eichler2011observation}
C.~Eichler, D.~Bozyigit, C.~Lang, M.~Baur, L.~Steffen, J.~M. Fink, S.~Filipp,
  and A.~Wallraff.
\newblock Observation of two-mode squeezing in the microwave frequency domain.
\newblock {\em Phys. Rev. Lett.}, 107(11):113601, 2011.

\bibitem{zhong2013squeezing}
L.~Zhong, E.~P. Menzel, R.~Di~Candia, P.~Eder, M.~Ihmig, A.~Baust,
  M.~Haeberlein, E.~Hoffmann, K.~Inomata, T.~Yamamoto, et~al.
\newblock Squeezing with a flux-driven {J}osephson parametric amplifier.
\newblock {\em New J. Phys.}, 15(12):125013, 2013.

\bibitem{eichler2014quantum}
C.~Eichler, Y.~Salathe, J.~Mlynek, S.~Schmidt, and A.~Wallraff.
\newblock Quantum-limited amplification and entanglement in coupled nonlinear
  resonators.
\newblock {\em Phys. Rev. Lett.}, 113(11):110502, 2014.

\bibitem{toyli2016resonance}
D.~M. Toyli, A.~W. Eddins, S.~Boutin, S.~Puri, D.~Hover, V.~Bolkhovsky, W.~D.
  Oliver, A.~Blais, and I.~Siddiqi.
\newblock Resonance fluorescence from an artificial atom in squeezed vacuum.
\newblock {\em Phys. Rev. X}, 6(3):031004, 2016.

\bibitem{wallentowitz1995reconstruction}
S.~Wallentowitz and W.~Vogel.
\newblock Reconstruction of the quantum mechanical state of a trapped ion.
\newblock {\em Phys. Rev. Lett.}, 75(16):2932, 1995.

\bibitem{zheng2000motional}
S.-B. Zheng, X.-W. Zhu, and M.~Feng.
\newblock Motional quantum-state engineering and measurement in the
  strong-excitation regime.
\newblock {\em Phys. Rev. A}, 62(3):033807, 2000.

\bibitem{fluhmann2020direct}
C.~Fl{\"u}hmann and J.~P. Home.
\newblock Direct characteristic-function tomography of quantum states of the
  trapped-ion motional oscillator.
\newblock {\em Phys. Rev. Lett.}, 125(4):043602, 2020.

\bibitem{smithey1993measurement}
D.~T. Smithey, M.~Beck, M.~G. Raymer, and A.~Faridani.
\newblock Measurement of the {W}igner distribution and the density matrix of a
  light mode using optical homodyne tomography: application to squeezed states
  and the vacuum.
\newblock {\em Phys. Rev. Lett.}, 70(9):1244, 1993.

\bibitem{altepeter20044}
J.~B. Altepeter, D.~F.~V. James, and P.~G. Kwiat.
\newblock 4 qubit quantum state tomography.
\newblock In {\em Quantum state estimation}, pages 113--145. Springer, 2004.

\bibitem{lvovsky2009continuous}
A.~I. Lvovsky and M.~G. Raymer.
\newblock Continuous-variable optical quantum-state tomography.
\newblock {\em Rev. Mod. Phys.}, 81(1):299, 2009.

\bibitem{babichev2004homodyne}
S.~A. Babichev, J.~Appel, and A.~I. Lvovsky.
\newblock Homodyne tomography characterization and nonlocality of a dual-mode
  optical qubit.
\newblock {\em Phys. Rev. Lett.}, 92(19):193601, 2004.

\bibitem{casanova2012quantum}
J.~Casanova, C.~E. L{\'o}pez, J.~J. Garc{\'\i}a-Ripoll, C.~F. Roos, and
  E.~Solano.
\newblock Quantum tomography in position and momentum space.
\newblock {\em Eur. Phys. J. D}, 66(8):1--5, 2012.

\bibitem{gerritsma2011quantum}
R.~Gerritsma, B.~P. Lanyon, G.~Kirchmair, F.~Z{\"a}hringer, C.~Hempel,
  J.~Casanova, J.~J. Garc{\'\i}a-Ripoll, E.~Solano, R.~Blatt, and C.~F. Roos.
\newblock Quantum simulation of the {K}lein paradox with trapped ions.
\newblock {\em Phys. Rev. Lett.}, 106(6):060503, 2011.

\bibitem{gerritsma2010quantum}
R.~Gerritsma, G.~Kirchmair, F.~Z{\"a}hringer, E.~Solano, R.~Blatt, and C.~F.
  Roos.
\newblock Quantum simulation of the {D}irac equation.
\newblock {\em Nature}, 463(7277):68--71, 2010.

\bibitem{zahringer2010realization}
F.~Z{\"a}hringer, G.~Kirchmair, R.~Gerritsma, E.~Solano, R.~Blatt, and C.~F.
  Roos.
\newblock Realization of a quantum walk with one and two trapped ions.
\newblock {\em Phys. Rev. Lett.}, 104(10):100503, 2010.

\bibitem{johnson2015sensing}
K.~G. Johnson, B.~Neyenhuis, J.~Mizrahi, J.~D. Wong-Campos, and C.~Monroe.
\newblock Sensing atomic motion from the zero point to room temperature with
  ultrafast atom interferometry.
\newblock {\em Phys. Rev. Lett.}, 115(21):213001, 2015.

\bibitem{kirchmair2013observation}
G.~Kirchmair, B.~Vlastakis, Z.~Leghtas, S.~E. Nigg, H.~Paik, E.~Ginossar,
  M.~Mirrahimi, L.~Frunzio, S.~M. Girvin, and R.~J. Schoelkopf.
\newblock Observation of quantum state collapse and revival due to the
  single-photon {K}err effect.
\newblock {\em Nature}, 495(7440):205--209, 2013.

\bibitem{wang2009decoherence}
H.~Wang, M.~Hofheinz, M.~Ansmann, R.~C. Bialczak, E.~Lucero, M.~Neeley, A.~D.
  O'Connell, D.~Sank, M.~Weides, J.~Wenner, et~al.
\newblock Decoherence dynamics of complex photon states in a superconducting
  circuit.
\newblock {\em Phys. Rev. Lett.}, 103(20):200404, 2009.

\bibitem{vlastakis2013deterministically}
B.~Vlastakis, G.~Kirchmair, Z.~Leghtas, S.~E. Nigg, L.~Frunzio, S.~M. Girvin,
  M.~Mirrahimi, M.~H. Devoret, and R.~J. Schoelkopf.
\newblock Deterministically encoding quantum information using 100-photon
  {S}chr{\"o}dinger cat states.
\newblock {\em Science}, 342(6158):607--610, 2013.

\bibitem{bertet2002direct}
P.~Bertet, A.~Auffeves, P.~Maioli, S.~Osnaghi, T.~Meunier, M.~Brune, J.-M.
  Raimond, and S.~Haroche.
\newblock Direct measurement of the {W}igner function of a one-photon {F}ock
  state in a cavity.
\newblock {\em Phys. Rev. Lett.}, 89(20):200402, 2002.

\bibitem{deleglise2008reconstruction}
S.~Deleglise, I.~Dotsenko, C.~Sayrin, J.~Bernu, M.~Brune, J.-M. Raimond, and
  S.~Haroche.
\newblock Reconstruction of non-classical cavity field states with snapshots of
  their decoherence.
\newblock {\em Nature}, 455(7212):510--514, 2008.

\bibitem{hofheinz2009synthesizing}
M.~Hofheinz, H.~Wang, M.~Ansmann, R.~C. Bialczak, E.~Lucero, M.~Neeley, A.~D.
  O'Connell, D.~Sank, J.~Wenner, J.~M. Martinis, et~al.
\newblock Synthesizing arbitrary quantum states in a superconducting resonator.
\newblock {\em Nature}, 459(7246):546--549, 2009.

\bibitem{wang2016schrodinger}
C.~Wang, Y.~Y. Gao, P.~Reinhold, R.~W. Heeres, N.~Ofek, K.~Chou, C.~Axline,
  M.~Reagor, J.~Blumoff, K.~M. Sliwa, et~al.
\newblock A {S}chr{\"o}dinger cat living in two boxes.
\newblock {\em Science}, 352(6289):1087--1091, 2016.

\bibitem{shen2016optimized}
C.~Shen, R.~W. Heeres, P.~Reinhold, L.~Jiang, Y.-K. Liu, R.~J. Schoelkopf, and
  L.~Jiang.
\newblock Optimized tomography of continuous variable systems using excitation
  counting.
\newblock {\em Phys. Rev. A}, 94(5):052327, 2016.

\bibitem{leibfried1996experimental}
D.~Leibfried, D.~M. Meekhof, B.~E. King, C.~H. Monroe, W.~M. Itano, and D.~J.
  Wineland.
\newblock Experimental determination of the motional quantum state of a trapped
  atom.
\newblock {\em Phys. Rev. Lett.}, 77(21):4281, 1996.

\bibitem{ding2017quantum}
S.~Ding, G.~Maslennikov, R.~Habl{\"u}tzel, H.~Loh, and D.~Matsukevich.
\newblock Quantum parametric oscillator with trapped ions.
\newblock {\em Phys. Rev. Lett.}, 119(15):150404, 2017.

\bibitem{lv2017reconstruction}
D.~Lv, S.~An, M.~Um, J.~Zhang, J.-N. Zhang, M.~S. Kim, and K.~Kim.
\newblock Reconstruction of the {Jaynes-Cummings} field state of ionic motion
  in a harmonic trap.
\newblock {\em Phys. Rev. A}, 95(4):043813, 2017.

\bibitem{lutterbach1997method}
L.~G. Lutterbach and L.~Davidovich.
\newblock Method for direct measurement of the {W}igner function in cavity
  {QED} and ion traps.
\newblock {\em Phys. Rev. Lett.}, 78(13):2547, 1997.

\bibitem{guerlin2007progressive}
C.~Guerlin, J.~Bernu, S.~Deleglise, C.~Sayrin, S.~Gleyzes, S.~Kuhr, M.~Brune,
  J.-M. Raimond, and S.~Haroche.
\newblock Progressive field-state collapse and quantum non-demolition photon
  counting.
\newblock {\em Nature}, 448(7156):889--893, 2007.

\bibitem{landon2018quantitative}
O.~Landon-Cardinal, L.~C.~G. Govia, and A.~A. Clerk.
\newblock Quantitative tomography for continuous variable quantum systems.
\newblock {\em Phys. Rev. Lett.}, 120(9):090501, 2018.

\bibitem{Gandhari2022}
S.~Gandhari, V.~V. Albert, T.~Gerrits, J.~M. Taylor, and M.~J. Gullans.
\newblock Continuous-variable shadow tomography.
\newblock {\em Preprint arXiv:2211.05149}, 2022.

\bibitem{D'Ariano94}
G.~M. D'Ariano, C.~Macchiavello, and M.~G.~A. Paris.
\newblock Detection of the density matrix through optical homodyne tomography
  without filtered back projection.
\newblock {\em Phys. Rev. A}, 50:4298--4302, 1994.

\bibitem{Artiles2005}
L.~M. Artiles, R.~D. Gill, and M.~I. Gu{\c{t}}{\u{a}}.
\newblock An invitation to quantum tomography.
\newblock {\em J. R. Stat. Soc. Ser. B}, 67(1):109--134, 2005.

\bibitem{Richter2000}
T.~Richter.
\newblock Realistic pattern functions for optical homodyne tomography and
  determination of specific expectation values.
\newblock {\em Phys. Rev. A}, 61:063819, 2000.

\bibitem{Rosati2022}
M.~Rosati.
\newblock A learning theory for quantum photonic processors and beyond.
\newblock {\em Preprint arXiv:2209.03075}, 2022.

\bibitem{Gu2022}
T.~Gu, X.~Yuan, and B.~Wu.
\newblock Efficient measurement schemes for bosonic systems.
\newblock {\em Preprint arXiv:2210.13585}, 2022.

\bibitem{HALL}
B.~C. Hall.
\newblock {\em Quantum Theory for Mathematicians}.
\newblock Graduate Texts in Mathematics. Springer New York, 2013.

\bibitem{BARNETT-RADMORE}
S.~Barnett and P.~M. Radmore.
\newblock {\em Methods in Theoretical Quantum Optics}.
\newblock Oxford Series in Optical and Imaging Sciences. Clarendon Press, 2002.

\bibitem{HOLEVO}
A.~S. Holevo.
\newblock {\em Probabilistic and Statistical Aspects of Quantum Theory}.
\newblock Publications of the Scuola Normale Superiore. Scuola Normale
  Superiore, 2011.

\bibitem{folland2016harmonic}
G.~B. Folland.
\newblock {\em Harmonic Analysis in Phase Space}.
\newblock Princeton University Press, 1989.

\bibitem{Keyl2016}
M.~Keyl, J.~Kiukas, and R.~F. Werner.
\newblock Schwartz operators.
\newblock {\em Rev. Math. Phys.}, 28(03):1630001, 2016.

\bibitem{Tropp2011}
J.~A. Tropp.
\newblock User-friendly tail bounds for sums of random matrices.
\newblock {\em Found. Comput. Math.}, 12(4):389--434, 2011.

\bibitem{KK-VV}
K.~K. Sabapathy and A.~Winter.
\newblock Non-{G}aussian operations on bosonic modes of light: photon-added
  {G}aussian channels.
\newblock {\em Phys. Rev. A}, 95:062309, 2017.

\bibitem{G-dilatable}
L.~Lami, K.~K. Sabapathy, and A.~Winter.
\newblock All phase-space linear bosonic channels are approximately {G}aussian
  dilatable.
\newblock {\em New J. Phys.}, 20(11):113012, 2018.

\bibitem{QCLT}
S.~Becker, N.~Datta, L.~Lami, and C.~Rouz{\'{e}}.
\newblock Convergence rates for the quantum central limit theorem.
\newblock {\em Commun. Math. Phys.}, 383(1):223--279, 2021.

\bibitem{PhysRevA.50.4298}
G.~M. D'Ariano, C.~Macchiavello, and M.~G.~A. Paris.
\newblock Detection of the density matrix through optical homodyne tomography
  without filtered back projection.
\newblock {\em Phys. Rev. A}, 50:4298--4302, 1994.

\bibitem{PhysRevA.61.063819}
T.~Richter.
\newblock Realistic pattern functions for optical homodyne tomography and
  determination of specific expectation values.
\newblock {\em Phys. Rev. A}, 61:063819, 2000.

\bibitem{aubry2008state}
J.-M. Aubry, C.~Butucea, and K.~Meziani.
\newblock State estimation in quantum homodyne tomography with noisy data.
\newblock {\em Inverse Probl.}, 25(1):015003, 2008.

\bibitem{Huang2021}
H.-Y. Huang, R.~Kueng, G.~Torlai, V.~V. Albert, and J.~Preskill.
\newblock Provably efficient machine learning for quantum many-body problems.
\newblock {\em Preprint arXiv:2106.12627}, 2021.

\bibitem{becker2021classical}
S.~Becker, N.~Datta, and M.~G. Jabbour.
\newblock From classical to quantum: Uniform continuity bounds on entropies in
  infinite dimensions.
\newblock {\em Preprint arXiv:2104.02019}, 2021.

\bibitem{Niederreiter}
H.~Niederreiter.
\newblock {\em {Random Number Generation and Quasi-Monte Carlo Methods}}.
\newblock Society for Industrial and Applied Mathematics, 1992.

\bibitem{schuch2006quantum}
N.~Schuch, J.~I. Cirac, and M.~M. Wolf.
\newblock Quantum states on harmonic lattices.
\newblock {\em Commun. Math. Phys.}, 267(1):65--92, 2006.

\bibitem{PhysRevLett.84.4341}
Z.-Y. Li, L.-L. Lin, and Z.-Q. Zhang.
\newblock Spontaneous emission from photonic crystals: Full vectorial
  calculations.
\newblock {\em Phys. Rev. Lett.}, 84:4341--4344, 2000.

\bibitem{PhysRevLett.64.2418}
S.~John and J.~Wang.
\newblock Quantum electrodynamics near a photonic band gap: photon bound states
  and dressed atoms.
\newblock {\em Phys. Rev. Lett.}, 64:2418--2421, 1990.

\bibitem{Photon_crystals}
J.~Lee-Thorp, M.~Weinstein, and Y.~Zhu.
\newblock Elliptic operators with honeycomb symmetry: Dirac points, edge states
  and applications to photonic graphene.
\newblock {\em Arch. Ration. Mech. Anal.}, 232:1--63, 2019.

\bibitem{PhysRevA.103.013505}
J.~Guglielmon, M.~C. Rechtsman, and M.~I. Weinstein.
\newblock {L}andau levels in strained two-dimensional photonic crystals.
\newblock {\em Phys. Rev. A}, 103:013505, 2021.

\bibitem{PhysRevB.31.2529}
F.~D.~M. Haldane and E.~H. Rezayi.
\newblock Periodic {L}aughlin-{J}astrow wave functions for the fractional
  quantized {H}all effect.
\newblock {\em Phys. Rev. B}, 31:2529--2531, 1985.

\bibitem{BHATIA}
R.~Bhatia.
\newblock {\em Positive definite matrices}.
\newblock Princeton University Press, 2009.

\bibitem{ABRAMOWITZ}
M.~Abramowitz and I.~A. Stegun.
\newblock {\em Handbook of Mathematical Functions: With Formulas, Graphs, and
  Mathematical Tables}.
\newblock Applied mathematics series. Dover Publications, 1965.

\end{thebibliography}

\appendix

\section{Proof of~\eqref{noise_thermal_representation}} \label{folklore_app}

The purpose of this appendix is to provide a self-contained proof of the representation~\eqref{noise_thermal_representation} of the Gaussian white noise channel. We start by observing that it suffices to test~\eqref{noise_thermal_representation} on all projectors $\ketbra{s}$ on the coherent states, where $s\in \R^{2m}$. This is because the linear span of such projectors is dense in the space of Hilbert--Schmidt operators. To prove this latter claim, we can take an arbitrary Hilbert--Schmidt operator $X$ and assume that it is orthogonal to $\Span\{\ketbra{s}\}_{s\in \R^{2m}}$, in formula $\Tr \left[X \ketbra{s} \right] = \braket{s|X|s}=0$ for all $s\in \R^{2m}$. We can express this in words by saying that the Husimi Q-function $Q_X(s)\coloneqq \frac{1}{(2\pi)^m} \braket{s|X|s}$ of $X$ vanishes everywhere. Since the characteristic function~\eqref{chi} of $X$ is a point-wise multiple of the Fourier transform of $Q_X$, in formula~\cite[\S~4.5]{BARNETT-RADMORE}
\bb
\chi_X(t) = e^{\|t\|^2/4} \int d^{2m}s\, Q_X(s)\, e^{i s^\intercal \Omega t}\, ,
\ee
we have that also $\chi_X \equiv 0$ identically. Since the correspondence between Hilbert--Schmid operators and characteristic functions is an isometry and in particular injective~\cite[Theorem~5.3.3]{HOLEVO}, we conclude that $X=0$. Since $X\in \Span\{\ketbra{s}\}_{s\in \R^{2m}}^\perp$ was arbitrary, this entails that $\Span\{\ketbra{s}\}_{s\in \R^{2m}}$ is dense.

Therefore, let us verify~\eqref{noise_thermal_representation} by letting the right-hand side act on an arbitrary $\ketbra{s}$. We obtain that
\begin{align*}
\!\!&\int\!\! \frac{d^{2m} x}{(2\pi \lambda)^m}\, \D(x)\, \tau_{\!\frac{1}{2\lambda}-\frac12}^{\otimes m} \D(x)^\dag \ketbra{s} \D(x)\, \tau_{\!\frac{1}{2\lambda}-\frac12}^{\otimes m} \D(x)^\dag \\
&\ \texteq{(i)} \int\!\! \frac{d^{2m} x}{(2\pi \lambda)^m}\, \D(x)\, \tau_{\!\frac{1}{2\lambda}-\frac12}^{\otimes m} \ketbra{s-x}\, \tau_{\!\frac{1}{2\lambda}-\frac12}^{\otimes m} \D(x)^\dag \\
&\ \texteq{(ii)} \left( \!\frac{2\lambda}{1\!+\!\lambda}\! \right)^{\!\!2m}\!\!\! \int\!\! \frac{d^{2m} x}{(2\pi \lambda)^m}\, e^{-\frac{2\lambda}{(1+\lambda)^2}\, \|s-x\|^2} \\
&\hspace{15ex} \cdot \D(x)\, \Ketbra{\tfrac{1\!-\!\lambda}{1\!+\!\lambda}(s\!-\!x)}\, \D(x)^\dag \\
&\ \texteq{(iii)} \left( \!\frac{2\lambda}{1\!+\!\lambda}\! \right)^{\!\!2m}\!\!\! \int\!\! \frac{d^{2m} y}{(2\pi \lambda)^m}\, e^{-\frac{2\lambda}{(1+\lambda)^2} \|y\|^2} \\
&\hspace{15ex} \cdot \D(s-y)\, \Ketbra{\tfrac{1\!-\!\lambda}{1\!+\!\lambda} y}\, \D(s-y)^\dag \\
&\ \texteq{(iv)} \left( \!\frac{2\lambda}{1\!+\!\lambda}\! \right)^{\!\!2m}\!\!\! \int\!\! \frac{d^{2m} y}{(2\pi \lambda)^m}\, e^{-\frac{2\lambda}{(1+\lambda)^2} \|y\|^2} \Ketbra{s\! -\! \tfrac{2\lambda}{1\!+\!\lambda} y} \\
&\ \texteq{(v)} \int\!\! \frac{d^{2m} z}{(2\pi \lambda)^m}\, e^{-\frac{\|z\|^2}{2\lambda}} \Ketbra{s\! +\! z} \\
&\ \texteq{(vi)} \int\!\! \frac{d^{2m} z}{(2\pi \lambda)^m}\, e^{-\frac{\|z\|^2}{2\lambda}} \D(z) \ketbra{s} \D(z)^\dag \\
&\texteq{(vii)} \cN_\lambda(\ketbra{s})\, .
\end{align*}
Here, (i),~(iv), and~(vi)~follow from~\eqref{coherent} and~\eqref{Weyl}; in~(ii) we employed the identity
\bb
\tau_\nu \ket{t} = \frac{1}{\nu+1}\, e^{-\frac{2\nu+1}{4(\nu+1)^2}\, \|t\|^2} \Ket{\tfrac{\nu}{\nu+1}\,t} ,
\ee
which is readily verified in Fock basis by combining~\eqref{coherent_Fock} and~\eqref{thermal}; in~(iii) we performed the change of variables $y \coloneqq s-x$; in~(v) we set $z \coloneqq -\frac{2\lambda}{1+\lambda} y$; finally, (vii)~is simply~\eqref{noise_random_displacement}. The proof of~\eqref{noise_thermal_representation} is complete.

\section{Fourier representation for characteristic function} \label{Fourier_f_mu_T_app}

Throughout this appendix, we will prove rigorously that the function $f_{\mu,T}$ defined by~\eqref{f_mu_T} has a well-defined Fourier transform, which moreover coincides with itself. This will establish~\eqref{Fourier_f_mu_T}. We formalise these facts as follows:

\begin{lemma}
For every measure $\mu$ on the symplectic group and for every symplectic $T$, the function $f_{\mu,T}$ defined in~\eqref{f_mu_T} is in $L^1(\R^{2m})\cap L^2(\R^{2m})$; more precisely,

\noindent
\vspace{-0.35cm}
\begin{align}
& \int d^{2m} x\, \left| f_{\mu,T}(x) \right| = (2\pi)^m\, ,\qquad \int d^{2m} x\, f_{\mu,T}^2(x) \leq \pi^m\, .\\
& {\hbox{Moreover,}} \quad \widetilde{f}_{\mu,T}(x) = f_{\mu,T}(x).\nonumber
\end{align}
\end{lemma}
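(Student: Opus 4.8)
The plan is to compute $\int f_{\mu,T}$ and $\int f_{\mu,T}^2$ by Fubini together with the Gaussian integral formula, and then to establish $\widetilde f_{\mu,T}=f_{\mu,T}$ directly from the self-duality of Gaussians. Recall from~\eqref{f_mu_T} that $f_{\mu,T}(x)=\int d\mu(S)\,e^{-\frac12\|TSx\|^2}$, and that each matrix $TS$ is symplectic, hence has determinant $1$. Since $f_{\mu,T}\geq 0$, for the first estimate I would write $\int d^{2m}x\,|f_{\mu,T}(x)| = \int d^{2m}x\int d\mu(S)\,e^{-\frac12 x^\intercal (TS)^\intercal(TS)x}$; the integrand is nonnegative, so Tonelli applies and I may swap the integrals. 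The inner integral is a centred Gaussian integral over $\R^{2m}$ with covariance $\big((TS)^\intercal TS\big)^{-1}$, giving $(2\pi)^m \big(\det (TS)^\intercal TS\big)^{-1/2} = (2\pi)^m$, since $\det(TS)=1$. As this value is independent of $S$ and $\mu$ is a probability measure, integrating over $S$ gives $(2\pi)^m$, proving $f_{\mu,T}\in L^1$ with the stated value.

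For the $L^2$ bound, I would estimate $f_{\mu,T}(x)^2 = \big(\int d\mu(S)\,e^{-\frac12\|TSx\|^2}\big)^2 \leq \int d\mu(S)\,e^{-\|TSx\|^2}$ by Jensen's (or Cauchy--Schwarz's) inequality applied to the probability measure $\mu$. Then $\int d^{2m}x\,f_{\mu,T}(x)^2 \leq \int d\mu(S)\int d^{2m}x\,e^{-x^\intercal (TS)^\intercal TS\,x}$ by Tonelli again; the inner Gaussian integral is $\pi^m\big(\det (TS)^\intercal TS\big)^{-1/2}=\pi^m$, whence $\int f_{\mu,T}^2 \leq \pi^m$. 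This gives $f_{\mu,T}\in L^2(\R^{2m})$.

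For the self-duality, the cleanest route is to interchange the Fourier transform with the $\mu$-integral --- justified because $(S,u)\mapsto f_{\mu,T}(u)e^{ix^\intercal\Omega u}$ is dominated by the integrable function $e^{-\frac12\|TSu\|^2}$ in the sense just established --- so that $\widetilde f_{\mu,T}(x) = \int d\mu(S)\,\int \frac{d^{2m}u}{(2\pi)^m}\,e^{-\frac12\|TSu\|^2}e^{ix^\intercal\Omega u}$. For each fixed $S$, I substitute $v=TSu$, noting $\det(TS)=1$ so $d^{2m}u=d^{2m}v$ and $u=(TS)^{-1}v$, turning the inner integral into $\int\frac{d^{2m}v}{(2\pi)^m}\,e^{-\frac12\|v\|^2}e^{i((TS)^{-\intercal}\Omega^\intercal x)^\intercal v}$, which is the Fourier transform of a standard Gaussian and equals $e^{-\frac12\|(TS)^{-\intercal}\Omega^\intercal x\|^2}$. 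Finally $\|(TS)^{-\intercal}\Omega^\intercal x\|^2 = x^\intercal \Omega (TS)^{-1}(TS)^{-\intercal}\Omega^\intercal x$, and since $TS$ is symplectic one has $\Omega(TS)^{-1} = (TS)^\intercal\Omega$ (equivalently $(TS)\Omega(TS)^\intercal=\Omega$), so this equals $x^\intercal (TS)^\intercal(TS)x = \|TSx\|^2$. Hence the inner integral is $e^{-\frac12\|TSx\|^2}$, and integrating over $\mu$ yields $\widetilde f_{\mu,T}(x)=f_{\mu,T}(x)$.

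The only genuinely delicate point is the bookkeeping with the symplectic identities --- getting the transpose/inverse relations right so that the quadratic form $\|(TS)^{-\intercal}\Omega^\intercal x\|^2$ collapses back to $\|TSx\|^2$ --- together with making the Fubini/dominated-convergence swaps rigorous; but both the positivity in the $L^1$ computation and the explicit Gaussian domination make these swaps routine, so I expect no real obstacle.
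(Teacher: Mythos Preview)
Your proof is correct. The $L^1$ computation and the self-duality argument are essentially identical to the paper's (the paper applies the Gaussian integral formula directly rather than substituting $v=TSu$, but this is cosmetic). The one genuine difference is in the $L^2$ bound: you use Jensen's inequality $\big(\mathbb{E}_S[e^{-\frac12\|TSx\|^2}]\big)^2\leq \mathbb{E}_S[e^{-\|TSx\|^2}]$ and then integrate a single Gaussian, whereas the paper expands the square as a double integral $\int d\mu(S)\,d\mu(S')\,e^{-\frac12 x^\intercal((TS)^\intercal TS+(TS')^\intercal TS')x}$, integrates in $x$, and bounds the resulting determinant via the Minkowski determinant inequality $(\det(A+B))^{1/N}\geq (\det A)^{1/N}+(\det B)^{1/N}$. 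Both routes land on the same bound $\pi^m$; your Jensen argument is the more elementary of the two and avoids the appeal to matrix concavity.
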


\begin{proof}
We start with the first identity, which follows by writing
\begin{align*}
\int d^{2m} x\, \left| f_{\mu,T}(x)\right|&= \int d\mu(S) \int d^{2m} x\, e^{- \frac12\, x^\intercal (TS)^{^\intercal} (TS)\, x} \\
&\texteq{(ii)} (2\pi)^m\, .
\end{align*}
Here in~(ii) we combined the Gaussian integral formula
\bb
\int d^N x\, e^{-\frac12 x^\intercal A x + i t^\intercal x} = \sqrt{\frac{(2\pi)^N}{\det A}}\, e^{-\frac12 t^\intercal A^{-1} t}\, ,
\label{Gaussian_integral}
\ee
valid for $N\times N$ positive definite matrices $A>0$, with the observation that $\det (TS)=1$ because both $T$ and $S$ --- and hence $TS$, too --- are symplectic.

Since $f_{\mu,T}$ is in $L^1(\mathbb{R}^{2m})$, positive, and bounded by $1$, it is clear that it must be also in $L^2(\mathbb{R}^{2m})$. However, it requires little effort to prove directly the second inequality, which is anyway tighter than what one would obtain simply by the above observation. It suffices to compute
\bb
&\int d^{2m} x\, f_{\mu,T}^2(x) \\
&\ \texteq{(iii)} \int\!\! d\mu(S)\, d\mu(S') \int\!\! d^{2m}x\, e^{-\frac12 x^{\!\intercal} \left((TS)^{\!^\intercal}\! (TS) + (TS')^{\!^\intercal}\! (TS') \right) x} \\
&\ \texteq{(iv)} \int d\mu(S)\, d\mu(S')\, \frac{(2\pi)^m}{\sqrt{\det \left((TS)^{\!^\intercal}\! (TS) + (TS')^{\!^\intercal}\! (TS') \right)}} \\
&\ \stackrel{\text{(v)}}{\leq} \int\!\! d\mu(S)\, d\mu(S')\, (2\pi)^m \frac{1}{2^m} = \pi^m\, .
\ee
Here, (iii)~is just Tonelli's theorem, (iv)~is again an application of~\eqref{Gaussian_integral}, and~(v) follows from the Minkowski determinant inequality, which states that the function $A\mapsto (\det A)^{1/N}$ is concave on the set of positive semi-definite $N\times N$ matrices~\cite[Eq.~(4.21)]{BHATIA}, combined with the fact that $\det (TS)=1$.

Now that we know that $f_{\mu,T}$ is highly regular, we can manipulate the integrals in its Fourier transform more safely:
\begin{align*}
\widetilde{f}_{\mu,T}(x) &= \int \frac{d^{2m}u}{(2\pi)^m}\, f_{\mu,T}(u) \, e^{i x^\intercal \Omega u} \\
&= \int d\mu(S) \int \frac{d^{2m}u}{(2\pi)^m}\, e^{-\frac12 x^\intercal (TS)^\intercal (TS) x + i x^\intercal \Omega u} \\
&= \int d\mu(S)\, e^{-\frac12 x^\intercal \Omega (TS)^{-1} (TS)^{-\intercal} \Omega^\intercal x} \\
&= \int d\mu(S)\, e^{-\frac12 x^\intercal (TS)^{\intercal} (TS)  x} = f_{\mu,T}(x)\, .
\end{align*}
In the above derivation, we have used~\eqref{Gaussian_integral} together with the fundamental identity $W\Omega = \Omega W^{-\intercal}$, valid for any symplectic matrix $W$.
\end{proof}

\section{The double truncation lemma} \label{double_truncation_app}

\begin{manuallemma}{\ref{double_truncation_lemma}}
For all $\alpha>0$, all non-negative integers $m$ (number of modes) and $M$ (Fock truncation number), and all $\eta\geq 0$, it holds that
\bb
&\left\|\mathcal{L}_{(I+N_m)^\alpha} \mathcal{R}_{(I+N_m)^{\alpha}}\left(\mathcal{P}_M-\widetilde{\mathcal{P}}_{{M}}\right)\right\|_{2\to 1} \\
&\qquad \leq (m M+1)^{2\alpha} (M+1)^m\, 3^{mM} \left(\frac{\Gamma\left(2M+1, \tfrac{\eta^2}{2} \right)}{(2M)!}\right)^{m/2} \\
&\qquad = (m M+1)^{2\alpha} (M+1)^m\, 3^{mM} e^{-\frac{m}{4} \eta^2} \left(\sum_{p=0}^{2M} \frac{\eta^{2p}}{2^p p!} \right)^{m/2} ,
\ee
where in the second line we introduced the incomplete Gamma function, given by~\cite[\S~6.5]{ABRAMOWITZ}
\bb \label{incomplete_Gamma}
\Gamma(N,x)\coloneqq \int_x^\infty ds\, e^{-s} s^{N-1} = (N-1)!\, e^{-x} \sum_{p=0}^{N-1} \frac{x^p}{p!}
\ee
for $N>0$ and $x\geq 0$.
\end{manuallemma}

\begin{proof}
Let $T$ be an arbitrary Hilbert--Schmidt operator acting on $\cH_m$, the Hilbert space of an $m$-mode system. In what follows, we denote by $[M]^m\coloneqq \{0,\ldots, M\}^m$ the set of possible Fock number vectors with all entries bounded by $M$, and for $\mathbf{n}\in [M]^m$ we will use the notation $|\mathbf{n}|\coloneqq \sum_{j=0}^m \mathbf{n}(j)$ to denote the corresponding total photon number. We start by writing 
\bb
&\left\|\mathcal{L}_{(I+N_m)^\alpha} \mathcal{R}_{(I+N_m)^{\alpha}}\left(\mathcal{P}_M(T)-\widetilde{\mathcal{P}}_{{M}}(T)\right)\right\|_1 \\
&\ \le \sum_{\mathbf n_1, \mathbf n_2 \in [M]^m} \hspace{-2ex} (1\!+\!\vert \mathbf n_1 \vert )^{\alpha}(1\!+\!\vert \mathbf n_2 \vert)^{\alpha} \left| \Tr\left[\left(\widetilde{Z}_{\mathbf{n}_1\mathbf{n}_2}-\ketbraa{\mathbf{n}_1}{\mathbf{n}_2}\right) T\right] \right| \\
&\ \le \sum_{\mathbf n_1, \mathbf n_2 \in [M]^m} \hspace{-2ex} (1\!+\!\vert \mathbf n_1 \vert )^{\alpha}(1\!+\!\vert \mathbf n_2 \vert)^{\alpha} \left\|\widetilde{Z}_{\mathbf{n}_1\mathbf{n}_2}-\ketbraa{\mathbf{n}_1}{\mathbf{n}_2}\right\|_2 \|T\|_2 \\
&\ \le (m M\!+\!1)^{2\alpha} (M\!+\!1)^m\!\! \max_{\mathbf n_1, \mathbf n_2 \in [M]^m} \left\|\widetilde{Z}_{\mathbf{n}_1\mathbf{n}_2}-\ketbraa{\mathbf{n}_1}{\mathbf{n}_2}\right\|_2 \|T\|_2
\label{double_truncation_eq1}
\ee
We can then continue by estimating
\bb
&\left\|\widetilde{Z}_{\mathbf{n}_1\mathbf{n}_2} - \ketbraa{\mathbf{n}_1}{\mathbf{n}_2} \right\|^2_2 \\  
&\ \texteq{(i)} \prod_{j=1}^m \int \frac{d^{2}u_j}{2\pi} \left|\chi_{\widetilde{Z}_{\mathbf{n}_1(j) \mathbf{n}_2(j)}}(u_j) - \chi_{\ket{\mathbf{n}_1(j)}\!\bra{\mathbf{n}_2(j)}}\big(u^{(j)}\big) \right|^2 \\
&\ \texteq{(ii)} \prod_{j=1}^m \int \frac{d^{2}u_j}{2\pi} \left| \chi_{\ket{\mathbf{n}_1(j)}\!\bra{\mathbf{n}_2(j)}}(u_j) \right|^2 \left( 1 - \xi_{\eta,R}(u_j) \right)^2 \\
&\ \textleq{(iii)} \prod_{j=1}^m \int_{\|u_j\|\geq \eta} \frac{d^{2}u_j}{2\pi} \left| \chi_{\ket{\mathbf{n}_1(j)}\!\bra{\mathbf{n}_2(j)}}(u_j) \right|^2
\label{double_truncation_eq2}
\ee
Here, (i)~comes from~\eqref{Plancherel}, in~(ii) we remembered~\eqref{tildechin1n2}, and in~(iii) we observed that $1 - \xi_{\eta,R}(u_j)=0$ if $\|u_j\|\leq \eta$, and estimated $1 - \xi_{\eta,R}(u_j)\leq 1$ otherwise. In order to continue, we should upper bound $\chi_{\ket{n}\!\bra{n'}}(u)$ for vectors $u\in \R^2$ of sufficiently large modulus, and for arbitrary integers $n,n'\leq M$. To this end, let us write
\begin{align*}
\left|\chi_{\ket{n}\!\bra{n'}}(u)\right|^2 &= \left|\braket{n'|\D(u)|n}\right|^2 \\
&\texteq{(iv)} e^{-\frac12 \|u\|^2} \left|\braket{n'|e^{\alpha(u)a^\dag} e^{-\alpha(u)^* a}|n} \right|^2 \\
&\textleq{(v)} e^{-\frac12 \|u\|^2} \left\|e^{-\alpha(u)^* a}\ket{n}\right\|^2 \left\|e^{\alpha(u)^* a}\ket{n'}\right\|^2 \\
&\texteq{(vi)} e^{-\frac12 \|u\|^2} \left\| \sum_{k=0}^n \frac{(-\alpha(u))^k}{k!} \sqrt{\frac{n!}{(n-k)!}}\, \ket{n-k}\right\|^2 \\
&\quad \cdot \left\| \sum_{k=0}^{n'} \frac{(\alpha(u)^*)^k}{k!} \sqrt{\frac{n'!}{(n'-k)!}}\, \ket{n'-k}\right\|^2 \\
&= e^{-\frac12 \|u\|^2} \left(\sum_{k=0}^n \frac{|\alpha(u)|^{2k}}{(k!)^2} \frac{n!}{(n-k)!} \right) \\
&\quad \cdot \left(\sum_{k=0}^{n'} \frac{|\alpha(u)|^{2k}}{(k!)^2} \frac{n'!}{(n'-k)!} \right) \\
&\texteq{(vii)} e^{-\frac12 \|u\|^2} L_n\big(-|\alpha(u)|^2\big)\, L_{n'}\big(-|\alpha(u)|^2\big) \\
&= e^{-\frac12 \|u\|^2} L_n\big(-\tfrac12 \|u\|^2\big)\, L_{n'}\big(-\tfrac12 \|u\|^2\big) \\
&\textleq{(viii)} e^{-\frac12 \|u\|^2} L_M\big(-\tfrac12 \|u\|^2\big)^2 .
\end{align*}
The steps of the above derivation can be justified as follows: (iv)~is an application of~\eqref{D_split_complex}, and we recall that $\alpha(u) \coloneqq \frac{1}{\sqrt2} (u_1+u_2)$; (v)~is simply the Cauchy--Schwarz inequality; (vi)~can be verified by a repeated application of~\eqref{Fock_creation}; in~(vii) we introduced the Laguerre polynomials
\bb \label{Laguerre}
L_n(x) \coloneqq \sum_{k=0}^n \binom{n}{k} \frac{(-x)^k}{k!}\, ;
\ee
and in~(viii) we noted that $L_n(-x)$ is monotonically non-decreasing in the integer $n$ for $x\geq 0$, essentially because $\binom{n}{k}$ is monotonically non-decreasing in $n$ for fixed $k$. Continuing, we deduce that
\begin{align*}
&\int_{\|u\|\geq \eta} \frac{d^2u}{2\pi} \left|\chi_{\ket{n}\!\bra{n'}}(u)\right|^2 \\
&\quad \leq\ \int_{\|u\|\geq \eta} \frac{d^2u}{2\pi}\, e^{-\frac12 \|u\|^2} L_M\big(-\tfrac12 \|u\|^2\big)^2 \\
&\quad \texteq{(ix)}\ \int_{s\geq \eta^2/2} ds\, e^{-s} L_M\big(-s\big)^2 \\
&\quad \texteq{(x)}\ \sum_{h,k=0}^M \binom{M}{h} \binom{M}{k} \frac{1}{h! k!} \int_{s\geq \eta^2/2} ds\, e^{-s} s^{h+k} \\
&\quad \texteq{(xi)}\ \sum_{h,k=0}^M \binom{M}{h} \binom{M}{k} \frac{1}{h! k!}\, \Gamma\left(h+k+1, \tfrac{\eta^2}{2} \right) \\
&\quad =\ \sum_{h,k=0}^M \binom{M}{h} \binom{M}{k} \binom{h+k}{h} \frac{\Gamma\left(h+k+1, \tfrac{\eta^2}{2} \right)}{(h+k)!} \\
&\quad \textleq{(xii)}\ \sum_{h,k=0}^M \binom{M}{h} \binom{M}{k}\, 2^{h+k} \frac{\Gamma\left(2M+1, \tfrac{\eta^2}{2} \right)}{(2M)!} \\
&\quad \texteq{(xiii)}\ 3^{2M}\, \frac{\Gamma\left(2M+1, \tfrac{\eta^2}{2} \right)}{(2M)!}
\end{align*}
Here: in~(ix) we introduced the new variable $s\coloneqq \tfrac12 \|u\|^2$; in~(x) we expanded the square thanks to the expression~\eqref{Laguerre} for the Laguerre polynomials; in~(xi) we introduced the incomplete Gamma function~\eqref{incomplete_Gamma}; in~(xii), besides using that $\binom{N}{k}\leq 2^N$ for all non-negative integers $N,k$, we also observed that by virtue of the expansion in~\eqref{incomplete_Gamma} it is easy to verify that $\tfrac{1}{(N-1)!} \Gamma(N,x)$ is a monotonically non-decreasing function of the integer $N\geq 1$ for all fixed $x>0$; finally, (xiii)~is just the binomial theorem, applied twice to $3^M = (1+2)^M$.

Plugging the above estimate into~\eqref{double_truncation_eq2}, we find that
\bb
\left\|\widetilde{Z}_{\mathbf{n}_1\mathbf{n}_2} - \ketbraa{\mathbf{n}_1}{\mathbf{n}_2} \right\|_2 \leq 3^{mM}\, \left(\frac{\Gamma\left(2M+1, \tfrac{\eta^2}{2} \right)}{(2M)!}\right)^{m/2}
\ee
for all $\mathbf{n}_1,\mathbf{n}_2\in [M]^m$. From~\eqref{double_truncation_eq1} we then deduce that
\bb
&\left\|\mathcal{L}_{(I+N_m)^\alpha} \mathcal{R}_{(I+N_m)^{\alpha}}\left(\mathcal{P}_M(T)-\widetilde{\mathcal{P}}_{{M}}(T)\right)\right\|_1 \\
&\quad \leq (m M\!+\!1)^{2\alpha} (M\!+\!1)^m\, 3^{mM} \left(\frac{\Gamma\left(2M\!+\!1, \tfrac{\eta^2}{2} \right)}{(2M)!}\right)^{m/2}\!\! \|T\|_2 \, ,
\ee
concluding the proof.
\end{proof}

\end{document}